\newcommand{\easysum}[2]{\ensuremath{\underset{#1}{\overset{#2}{\sum}}}}
\newcommand{\es}[2]{\ensuremath{\underset{#1}{\overset{#2}{\sum}}}}
\newcommand*\colvec[1]{
        \global\colveccount#1
        \begin{pmatrix}
        \colvecnext
}
\def\colvecnext#1{
        #1
        \global\advance\colveccount-1
        \ifnum\colveccount>0
                \\
                \expandafter\colvecnext
        \else
                \end{pmatrix}
        \fi
}
\newtheorem{theorem}{Theorem}[section]
\newtheorem{lemma}[theorem]{Lemma}
\newtheorem{corollary}[theorem]{Corollary}
\newtheorem{example}[theorem]{Example}
\newtheorem{remark}[theorem]{Remark}
\newcommand{\R}{\mathbb R}
\newcommand{\msrX}{\mathcal{M}_+(\mathcal{X})}
\newcommand{\msrY}{\mathcal{M}_+(\mathcal{Y})}
\newcommand{\X}{\mathcal{X}}
\newcommand{\Y}{\mathcal{Y}}
\newcommand{\XC}{\mathcal{X}}
\newcommand{\dum}{\mathfrak{d}}
\newcommand{\KR}{\mathrm{KR}}
\newcommand{\dist}{\mathrm{dist}}
\renewcommand{\dist}{\Delta}
\DeclareMathOperator*{\ty}{\tilde{\mathcal{Y}}}
  \newcommand\lalign{\@fleqntrue}
\newcommand{\footremember}[2]{%
	\footnote{#2}
	\newcounter{#1}
	\setcounter{#1}{\value{footnote}}%
}
\newcommand{\footrecall}[1]{%
	\footnotemark[\value{#1}]%
}
\begin{document}
\title{Unbalanced Kantorovich-Rubinstein distance, plan, and barycenter on finite spaces: A statistical~perspective}

\author{Shayan Hundrieser \footremember{equ}{These authors contributed equally}\footremember{ims}{\scriptsize Institute for Mathematical
Stochastics, University of G\"ottingen,
Goldschmidtstra{\ss}e 7, 37077 G\"ottingen}
    \and Florian Heinemann\footrecall{equ} \footrecall{ims}
 		\and Marcel Klatt\footrecall{ims}
    \and Marina Struleva\footrecall{ims}
 		\and Axel Munk\footrecall{ims} \footnote{\scriptsize Max Planck Institute for Multidisciplinary Science, Am Fa{\ss}berg 11, 37077 G\"ottingen} \footnote{ \scriptsize University Medical Center Göttingen, Cluster of Excellence 2067 Multiscale Bioimaging - From molecular machines to networks of excitable cells}}
\maketitle

\begin{abstract}%
  We analyze statistical properties of plug-in estimators for unbalanced optimal transport quantities between finitely supported measures in different prototypical sampling models. Specifically, our main results provide non-asymptotic bounds on the expected error of empirical Kantorovich-Rubinstein (KR) distance, plans, and barycenters for mass penalty parameter $C>0$. The impact of the mass penalty parameter $C$ is studied in detail. Based on this analysis, we mathematically justify randomized computational schemes for KR quantities which can be used for fast approximate computations in combination with any exact solver. Using synthetic and real datasets, we empirically analyze the behavior of the expected errors in simulation studies and illustrate the validity of our theoretical bounds. 
\end{abstract}

\section{Introduction}
Optimal transport (OT) \citep[for a detailed mathematical discussion see e.g.][]{villani2008optimal,santambrogio2015optimal} has been a focus of attention in various research fields for a long time. More recently, its powerful geometric features promoted by improved computational tools \citep[see e.g.][]{chizat2018scaling,peyre2019computational,guo2020fast,lin2020fixed} have turned OT into a promising new tool for modern data analysis with applications in machine learning \citep{frogner2015learning,arjovsky2017wasserstein,schmitz2018wasserstein,yang2018low,vacher2021dimension}, computer vision \citep{baumgartner2018visual,kolkin2019style}, computational biology \citep{evans2012phylogenetic,gellert2019substrate,klatt2020empirical,tameling2021colocalization,bunne2023learning}, image processing \citep{pitie2007automated,bonneel2016wasserstein,tartavel2016wasserstein} and statistical inference \citep{sommerfeld2018inference,lee2018minimax,mena2019statistical,panaretos2019statistical,hallin2021distribution, hallin2022measure}, among others.

\begin{figure}[btp]
  \centering
  \subfloat[][]{\includegraphics[width=0.4\linewidth]{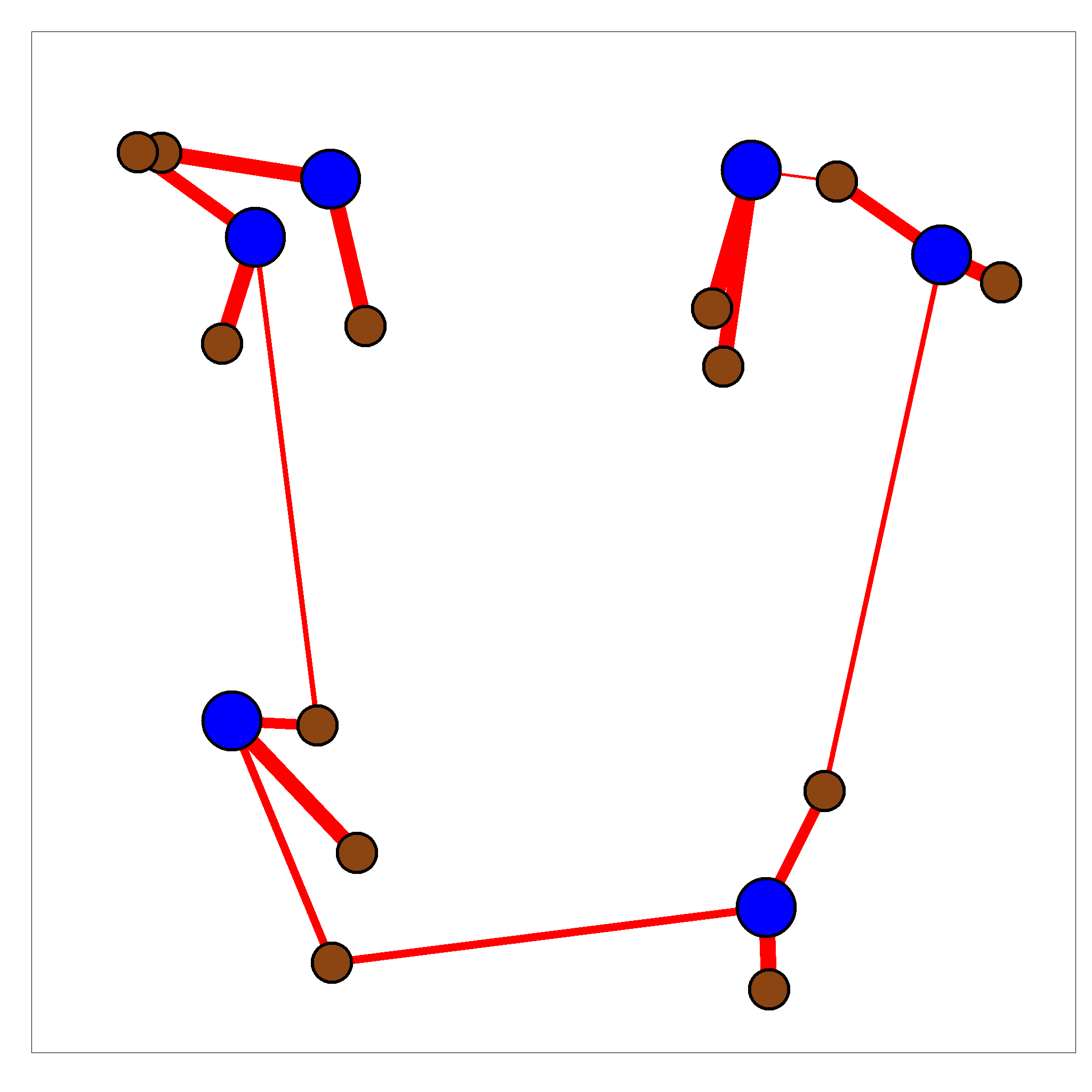}}%
  \qquad
  \subfloat[][]{\includegraphics[width=0.4\linewidth]{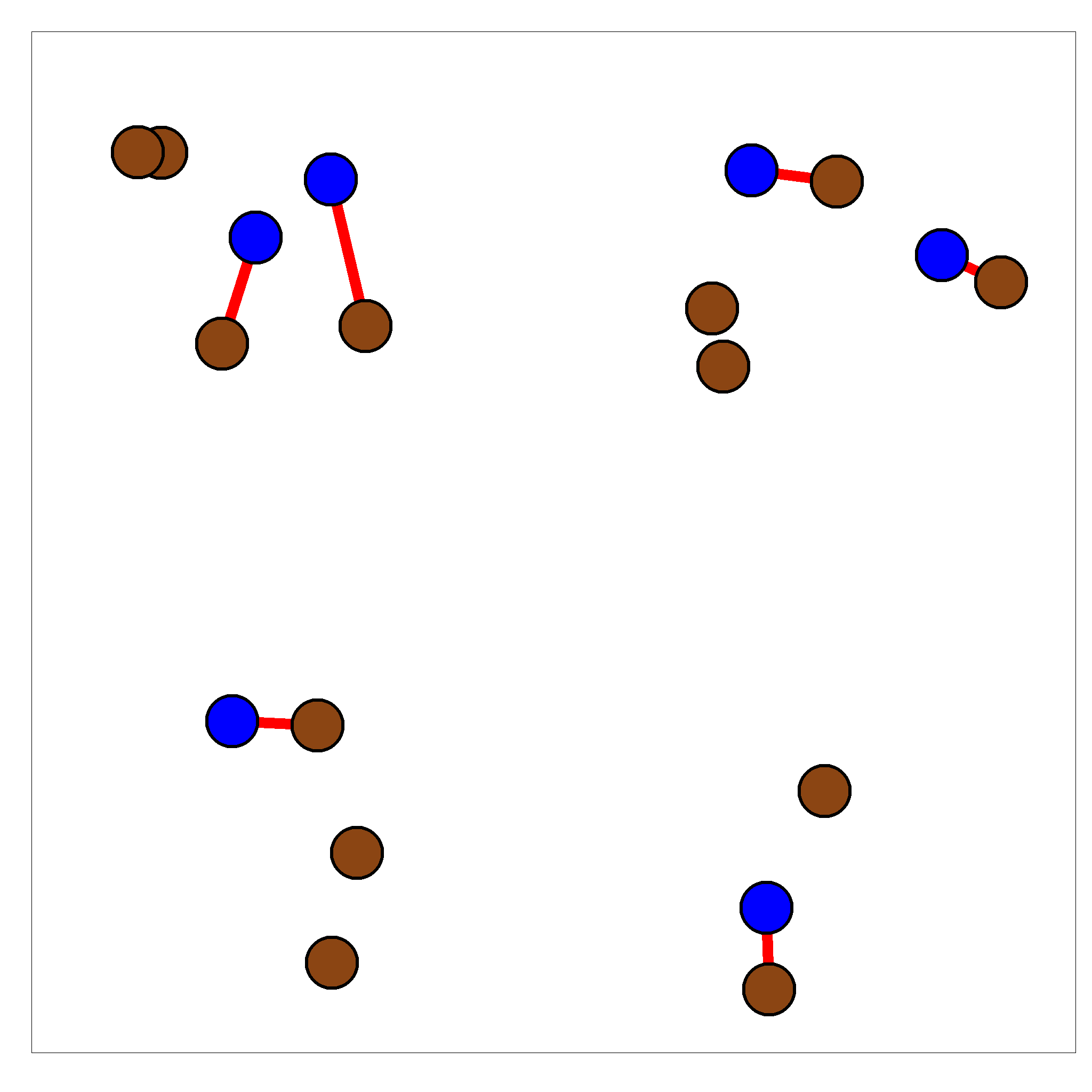}}%
  \caption{Transport between two measures (blue and brown) with their support points located in $[0,1]^2.$ The respective transport plans between them are displayed by red lines where the thickness of a line is proportional to the transported mass. \textbf{(a)} The measures have been normalized to probability measures (the blue points have mass $1/6$, the brown ones have mass $1/13$). \textbf{(b)} All points in the unnormalized measures have mass~1, therefore the UOT plan for the $(2,2)$-KRD yields a one-to-one matching between a sub-collection of points.}
  \label{fig:uotvsot}
\end{figure}

The wide range of applications also surfaced limitations of classical OT. In particular, the assumption of equal total mass intensity of the measures is often inappropriate, e.g., in the context of image processing and retrieval \citep{rubner1998metric, pele2008linear, pele2009fast, rabin2015convex}, for takings means of neuroimaging data \citep{gramfort2015fast}, when comparing radiation patterns of collider events \citep{komiske2019metric,manole2022background}, in recovery of developmental trajectories of embryonic stem cells \citep{schiebinger2019optimal,ventre2023trajectory}, and in multi-color super-resolution colocalization analysis \citep{Naas2024Multimatch}.
If standard OT methodology had been used in these contexts, all measures would have needed to be normalized in order to overcome the issue of different total mass. However, this preprocessing step would structurally change the problem and directly impact the data analysis, in particular the underlying transport plan. 
For example, when matching point clouds of different sizes the resulting plan distributes mass among several points, whereas often it is desired to match points one-to-one which is favorable in many applications (see \Cref{fig:uotvsot}). Attempts to circumvent this issue have led to a range of \emph{unbalanced optimal transport} (UOT) proposals \citep[see above applications and also][]{figalli2010optimal,liero2018optimal,chizat2018unbalanced,balaji2020robust, chapel2020, le2021,le2022,mukherjee2021outlier,heinemann2022kantorovich}. These formulations extend OT concepts to general positive measures by either fixing the total amount of mass to be transported in advance or by penalizing the hard marginal constraints inherent in OT. These approaches also give rise to  barycenters, generalizing the popular notion of \emph{OT barycenters} \citep{agueh2011barycenters} to measures of unequal mass \citep{gramfort2015fast,chizat2018scaling,friesecke2021barycenters,heinemann2022kantorovich}. 

The UOT formulation considered here is the $(p,C)$-Kantorovich-Rubinstein distance (KRD) (see \Cref{sec:krd}) whose structural properties and related $(p,C)$-barycenter have recently been studied in detail \citep{heinemann2022kantorovich}. The $(1,1)$-KRD essentially corresponds to the notion of extended Kantorovich norms \citep{kantorovich1958space,hanin1992kantorovich} considered in the context of Lipschitz spaces and signed measures. In this context, \cite{guittet2002extended} first introduced a discrete formulation of the problem, where he established a Linear Program (LP) formulation which carries over to the general $(p,C)$-KRD. 
For illustration, a comparison between the $(p,C)$-barycenter and the $p$-Wasserstein barycenter in a simple example is displayed in \Cref{fig:krvswsbary}. From a data analysis point of view we find it particularly appealing that for the $(p,C)$-KRD there is a clear geometrical connection between its penalty $C$ and the structural properties of the corresponding UOT plans and $(p,C)$-barycenter. More precisely, it is shown in \cite[Lemma 2.1]{heinemann2022kantorovich} that $C$ controls the largest scale at which mass transport is possible in an optimal plan. This interpretation of the $(p,C)$-KRD allows designing it to respect different structural properties of the data and thus makes it a prime candidate for statistical tasks in data analysis. In particular, this emphasizes the robustness of the $(p,C)$-KRD to spatial outliers, i.e., data points which are located far in the tails.  Furthermore, each support point of any $(p,C)$-barycenter is contained in a finite set characterized by the value of $C$. This allows to adapt OT solvers for the unbalanced problems.

Due to the unbalanced nature of the problem, however, the task of sampling from the underlying measures requires alternative sampling schemes and different statistical modelling. While for OT between probability measures there is a canonical sampling model by i.i.d. replications from the measures, this fails for UOT, since the considered measures are not necessarily probability measures. In this work, we address this issue and suggest a framework underpinning UOT based statistical data analysis. To this end, we analyze the $(p,C)$-KRD, the corresponding UOT plan and its barycenter in three specific statistical models motivated by applications in randomized algorithms and microscopy tasks. Notably, these models also provide a framework which potentially allows treating the alternative UOT models mentioned above. %
 Throughout, we focus on finite discrete domains, as it allows a complete analysis while imposing minimal assumptions on the ground space.

\subsection{Kantorovich-Rubinstein Quantities: Distance, Plan, and Barycenter}\label{sec:krd}

Let $(\X,d)$ be a finite metric space with finite cardinality $M\coloneqq |\XC|$ and denote by
$\msrX \coloneqq \left\lbrace \mu \in \mathbb{R}^{M} \,\mid\, \mu(x)\geq 0 \ \forall x \in \X \right\rbrace$
the set of non-negative measures\footnote{A non-negative measure on a finite space $\X$ is uniquely characterized by the values it assigns to each singleton $\{ x\}$. To ease notation we write $\mu(x)$ instead of $\mu(\{x\})$. The corresponding $\sigma$-field is always to be understood as the power set of $\mathcal{X}$.} on $\X$. The total mass of a measure $\mu\in\msrX$ is defined as $\mathbb{M}(\mu)\coloneqq \sum_{x\in\X}\mu(x)$ and the subset $\mathcal{P}(\X)\subset \msrX$ of measures with total mass one is the set of probability measures. If $\pi\in \mathcal{M}_+(\X\times\X)$ is a measure on the product space $\X \times \X$ its marginals are defined as $\pi(x,\X)\coloneqq \sum_{x^\prime} \pi(x,x^\prime)$ and $\pi(\X,x^\prime)\coloneqq \sum_{x^\prime \in \X}\pi(x,x^\prime)$, respectively. For two measures $\mu,\nu\in\msrX$ define the set of \emph{non-negative sub-couplings}~as
\begin{align}\label{eq:subcouplings}
\begin{split}
    \Pi_{\leq}(\mu,\nu)\coloneqq \lbrace \pi\in \mathcal{M}_+(\X\times\X) \mid \,&\pi(x,\X) \leq  \mu(x),\, \pi(\X,x^\prime)  \leq  \nu(x^\prime)\, \forall \, x,x^\prime\in\mathcal{X}  \rbrace.
\end{split}
\end{align}
Following \cite{heinemann2022kantorovich}, for $p\geq 1$ and a parameter $C>0$, the $(p,C)-$\emph{Kanto\-rovich-Rubinstein distance} (KRD) between two measures $\mu,\nu\in\msrX$ is defined as
\begin{align}\label{eq:krdistance}
\begin{split}
\KR_{p,C}(\mu,\nu)\coloneqq \Bigg( \min_{\pi\in \Pi_{\leq}(\mu,\nu)} &\sum\limits_{x,x^\prime \in \X}d^p(x,x^\prime)\pi(x,x^\prime)+C^p\left(\frac{\mathbb{M}(\mu)+\mathbb{M}(\nu)}{2}-\mathbb{M}(\pi)\right)\Bigg)^{\frac{1}{p}}.
\end{split}
\end{align}
For $p\geq 1$, it defines a distance on $\msrX$ robust to spatial outliers\footnote{As a toy example, let $\mu_{x, \alpha} \coloneqq \alpha\delta_{0} + (1-\alpha)\delta_{x}$ and $\nu_\alpha \coloneqq \alpha\delta_{0} + (1-\alpha)\delta_{1}$ for $x\geq 0$ and $\alpha \in [0,1]$. Then, $\textup{KRD}_{1,C}(\mu_{x,\alpha}, \nu_\alpha) = (1-\alpha)\min(|x-1|,C)$ and $W_1(\mu_{x,\alpha}, \nu_\alpha)= (1-\alpha)|x-1|,$
asserting for $x \to \infty$ and $\alpha \nearrow 1$ that 	$\textup{KRD}_{1,C}(\mu_{x,\alpha}, \nu_\alpha) \to 0$; 
but if $\alpha = 1- \min(1,x^{-1+\varepsilon})$ for $\varepsilon>0$, then $W_1(\mu_{x,\alpha}, \nu_\alpha) \to \infty$.} and naturally extends the well-known \emph{$p$-th order OT distance} defined for measures of equal total mass \citep{heinemann2022kantorovich}. Notably, by compactness of $\Pi_{\leq}(\mu, \nu)$ and continuity of the objective, an optimizer $\pi\in \Pi_{\leq}(\mu,\nu)$ of~\eqref{eq:krdistance} always exists and is called \emph{unbalanced optimal transport  plan} (UOT plan).

\label{sec:pcbary}
\begin{figure}
  \centering
  \subfloat[][]{\includegraphics[width=0.32\linewidth]{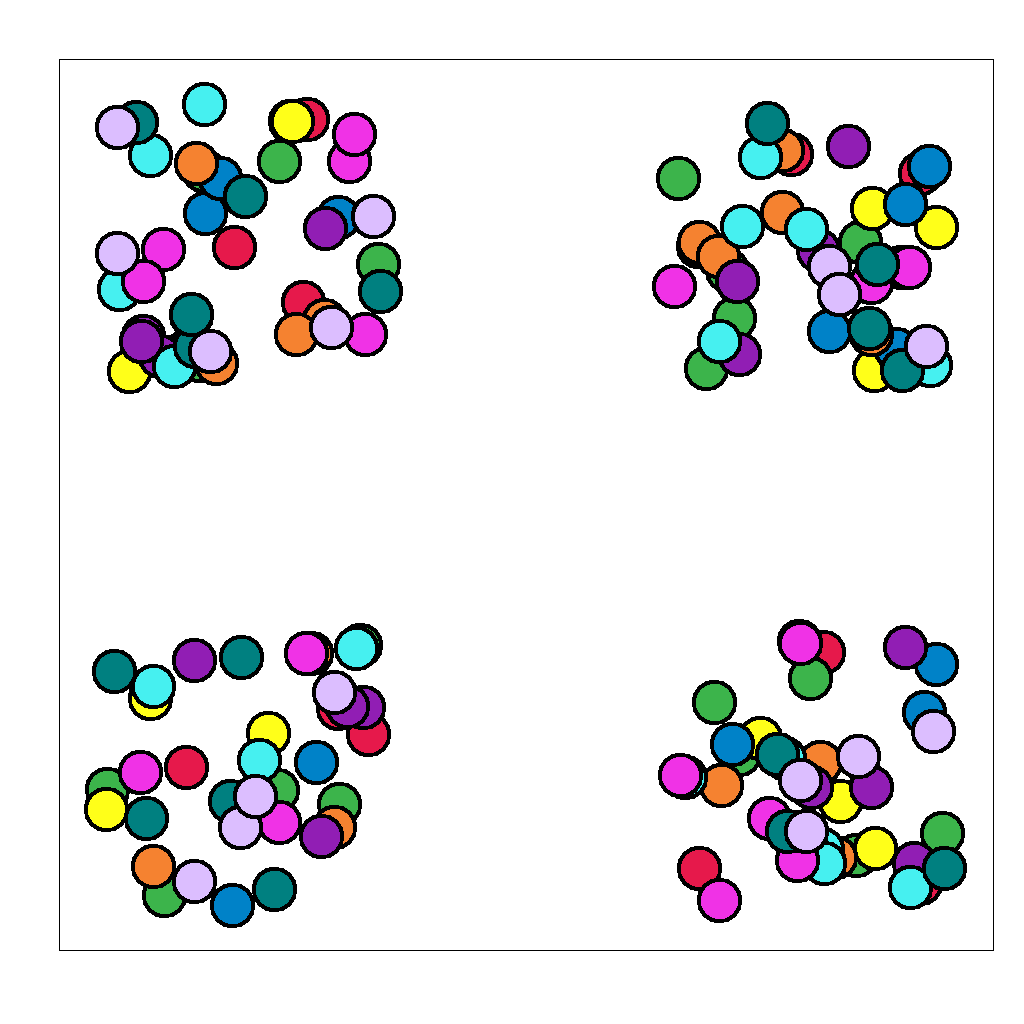}}%
  \
  \subfloat[][]{\includegraphics[width=0.32\linewidth]{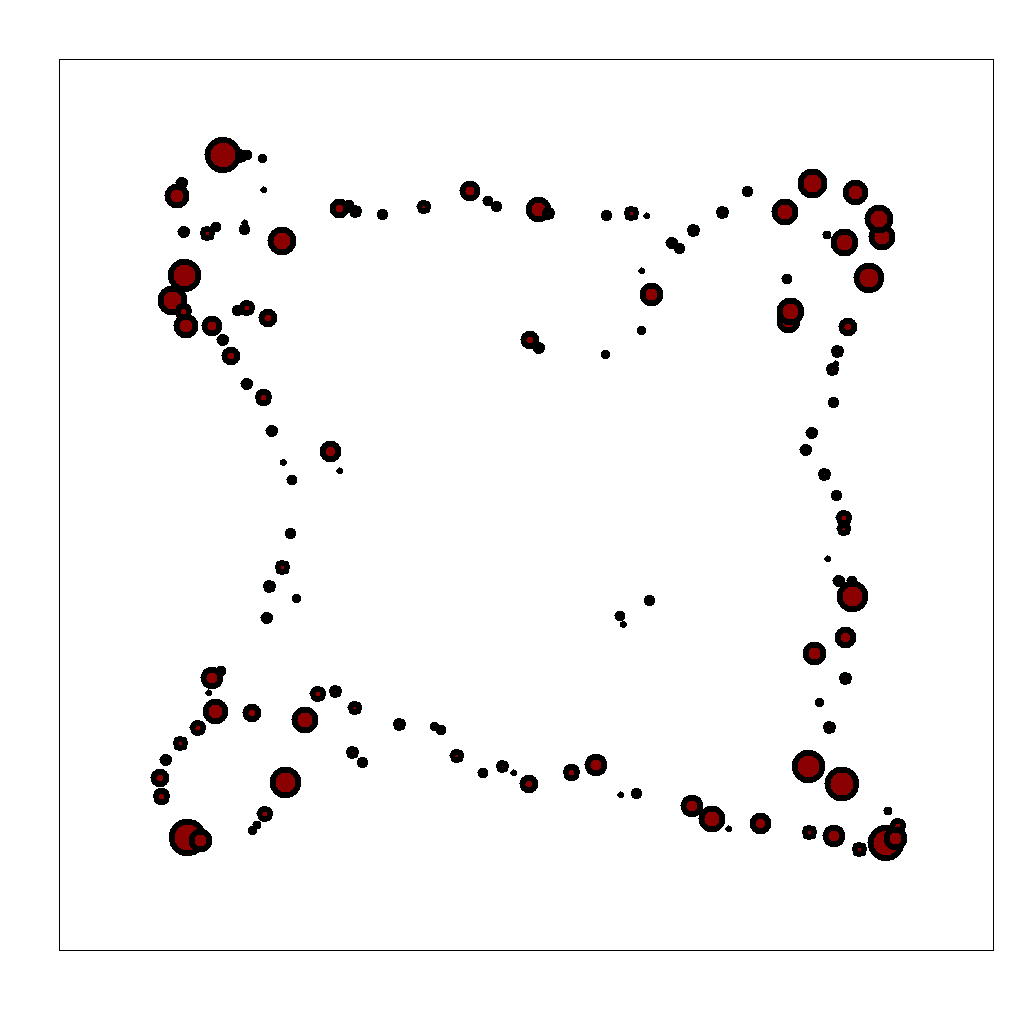}}%
    \
  \subfloat[][]{\includegraphics[width=0.32\linewidth]{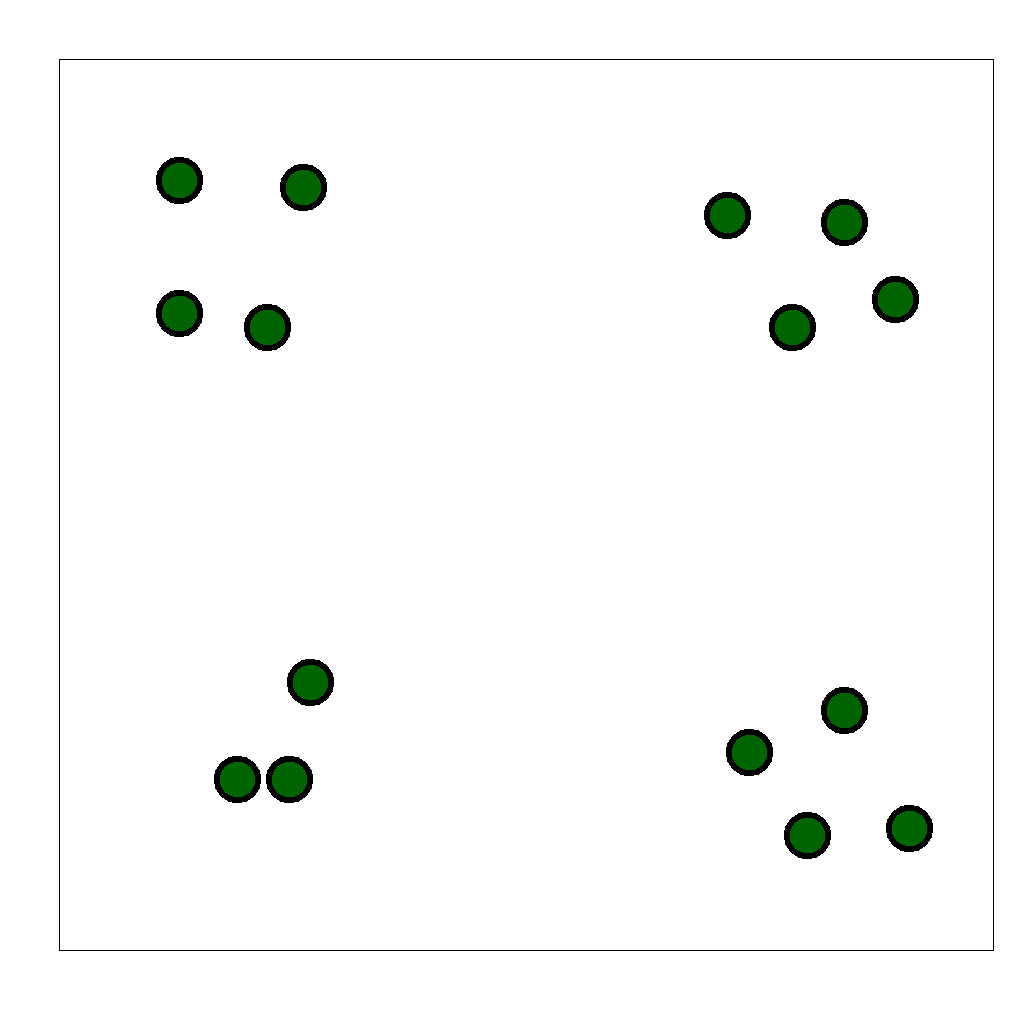}}%
  \caption{\textbf{(a)} $J=10$ measures with mass $1$ at each support point and different total mass intensities (each color corresponds to a different measure) superimposed on top of each other. \textbf{(b)} The OT barycenter (for squared Euclidean cost) of the normalized measures. \textbf{(c)} The $(2,0.3)$-barycenter of the unnormalized measures (see \eqref{eq:KRbarycenter} for a rigorous definition).}%
  \label{fig:krvswsbary}
\end{figure}
The $(p,C)$-KRD also allows defining a notion of a barycenter for a collection of measures with (potentially) different total masses. Assume $(\X,d)$ to be embedded in some connected ambient space\footnote{We assume the metric on $\X\subset \Y$ to be the metric of $\Y$ restricted to $\X$.} $(\Y,d)$, e.g., a Euclidean space, and define the \emph{$(p,C)$-Fr\'echet functional}
\begin{equation}\label{eq:KRbarycenter}
        F_{p,C}(\mu)=\frac{1}{J}\easysum{i=1}{J} \KR_{p,C}^p(\mu^i,\mu).
\end{equation}
Any minimizer of this functional in $\msrY$ is said to be a \emph{$(p,C)$-Kantorovich-Rubinstein barycenter} of $\mu_1,\dots,\mu_J$ or \emph{$(p,C)$-barycenter} for short\footnote{For the sake of readability, the weights in this definition are fixed to $1/J$. Adaptation of all results to arbitrary positive weights $\lambda_1,\dots ,\lambda_J$ summing to one is straightforward.}. The objective functional $F_{p,C}$ is referred to as (unbalanced) \emph{$(p,C)$-Fr\'echet functional} and the so-called \emph{Borel barycenter application} is defined as $T^{L,p}(x_1,\dots ,x_L) \in  \argmin_{y \in \Y} \sum_{i=1}^L d^p(x_i,y).$
Define the \emph{full centroid set}\footnote{There are scenarios where multiple sets fulfil the definition of the centroid set, since there might be multiple points that minimize the barycentric application. In this case, a fixed representative is chosen and there still exists a choice of centroid set which contains the support of the $(p,C)$-barycenter.} of the measures
\begin{align}\label{eq:KRfullcentroid}
\begin{split}
\mathcal{C}_{\KR}(J,p)=\Big\{ y \in \mathcal{Y} \ \vert \ \exists L\geq \lceil J/2 \rceil, \ \exists (i_1,\dots,i_L)\subset \{1,\dots,J\},\\ x_1,\dots,x_L: \ x_l\in \mathrm{supp}(\mu_{i_l}) \\ \forall l=1,\dots,L: \ 
y=T^{L,p}(x_1,\dots,x_L)\Big\},
 \end{split}
\end{align}
and based on it the \emph{restricted centroid set} 
\begin{align}
\begin{split}
\label{eq:KRCentroid}
\mathcal{C}_{\KR}(J,p,C)=\Big\{ &y=T^{L,p}(x_1,\dots,x_L)\in\mathcal{C}_{\KR}(J,p)\, \mid \forall 1\leq l \leq L: \\ &d^p(x_l,y)\leq C^p; \ \es{i=1}{L}d^p(x_l,y) \leq \frac{C^p (2L-J)}{2} \Big\}. 
\end{split}
\end{align}
According to \cite[Theorem 2.5]{heinemann2022kantorovich}, any $(p,C)$-barycenter is finitely supported, and its support is included in the restricted centroid set $\mathcal{C}_{KR}(J,p,C)$. This is critical, as it allows us to restrict theoretical analysis as well as computational methods to the scenario all measures involved have finite support. Moreover, it permits an additional degree of interpretability of the mass penalization parameter $C$ which might be of interest for specific applications.

\subsection{On Plug-in Estimators for Optimal Transport}\label{sec:sampling}
In practice, one often does not have access to the population measures $\mu,\nu,\mu^1,\dots ,\mu^J$, respectively, and they need to be estimated from data. For probability measures the most common statistical model assumes access to i.i.d.\ data $X_1,\dots,X_N\sim \mu$ (and similar for $\nu,\mu^1,\dots,\mu^J$). A commonly used estimator is the empirical measure $\hat{\mu}_N=(1/N)\sum_{k=1}^{N}\delta_{X_k}$, where $\delta_X$ denotes the (random) point measure at location $X$. In light of the ongoing field of statistical OT, there are various topics of interest which have been extensively analyzed. 

First, the metric property of the $p$-Wasserstein distance $W_p$ (see, e.g., \cite{villani2008optimal}) allows using it to evaluate the statistical accuracy of the estimator $\hat \mu_N$ in estimating $\mu$ \citep{dudley1969speed,fournier2015rate,weed2019sharp}. On the other hand, it is of similar interest to investigate the behavior of $W_p(\hat{\mu}_N, \hat{\nu}_N)$  as an estimator for the true functional $W_p(\mu, \nu)$, which in general yields statistically efficient estimators \citep{sommerfeld2019optimal,hundrieser2022empirical, manole2024sharp}, although for continuous settings under appropriate smoothness assumptions improvements are possible \citep{niles2022minimax}. In addition to that, considerable interest has been put in deriving distributional limits for the OT plan \citep{klatt2020limit,liu2023asymptotic} for the discrete setting as well as estimating the OT map in the continuous setting \citep{hutter2021minimax, deb2021rates, manole2021plugin} under smoothness assumptions and in the semi-discrete setting \citep{sadhu2023limit, hundrieser2022unifying,del2024central}.
For OT barycenters based on empirical measures significantly less is known, though recently some progress has been made in the context of finitely supported measures \citep{heinemann2022randomized}. Extending such results to general measures is not obvious and requires the need for alternative statistical modelling. %

\subsection{Contributions: Statistical Models and Deviation Bounds}

The key contributions are summarized as follows. First, we propose three prototypical statistical models for general measures on finite spaces: the \emph{multinomial model}, the \emph{Bernoulli model}, and the \emph{Poisson intensity model}. %
 Each model gives rise to a specific measure estimator. For each model, we then analyze the statistical performance of plug-in estimators in approximating one of the following four quantities:
\begin{enumerate}
  \item[(\textsf{A})] The population measure with respect to the $(p,C)$-Kantorovich-Rubinstein distance and in total variation norm (Section \ref{chp:sampbounds}).
  \item[(\textsf{B})] The $(p,C)$-Kantorovich-Rubinstein distance between two population measures in mean absolute deviation (Section \ref{chp:sampbounds}).
  \item[(\textsf{C})] The unbalanced optimal transport plan between  two population measures with respect to the Hausdorff distance induced by the total variation norm (Section \ref{sec:UOT_plans}).
  \item[(\textsf{D})] The $(p,C)$-barycenter of a finite collection of population measures in terms of the excess Fr\'echet function value and $(p,C)$-Kantorovich-Rubinstein distance (Section~\ref{sec:KR_barycenter}).
\end{enumerate}
In addition to our theoretical analysis we show how our results can be employed for randomized computation with statistical guarantees (\Cref{sec:applications}). %
Finally, we perform various simulations which showcase the performance of the introduced empirical estimators for different quantities above (Section \ref{sec:sims}). 

\subsubsection{Statistical models}
In the following, we formalize the three statistical models and provide specific motivations for each. For an illustration of different realizations of these estimators see  \Cref{fig:realization_estimators}.

\subsubsection*{Multinomial Model}%
For the \emph{multinomial model}, we %
consider i.i.d.\ random variables $X_1,\ldots,X_N {\sim} \frac{\mu}{\mathbb{M}(\mu)}$, where the total intensity $\mathbb{M}(\mu)$ is assumed to be known and strictly positive. The corresponding unbiased empirical estimator is then defined as 
\begin{align}\label{eq:multinomialmeasure}
\hat{\mu}_N\coloneqq \frac{\mathbb{M}(\mu)}{N} \sum_{x \in \X} \lvert\{ k\in \{1,\dots,N\} \ \vert \ X_k=x \}\rvert \ \delta_{x}.
\end{align}
This estimator is exactly the standard empirical measure associated to $\frac{\mu}{\mathbb{M}(\mu)}$ but rescaled with $\mathbb{M}(\mu)$. 
Insofar, this model extends the classical sampling approach for probability measures to measures of positive mass. A key motivation for introducing this model is resampling for randomized computation of UOT quantities. In real world data analysis it is common to encounter data (e.g.,  high-resolution images) which are out of reach for current state-of-the-art OT solvers. One idea in this scenario is to replace each measure by its empirical version and then compute the respective UOT quantity between these surrogates. 
For probability measures, this was introduced for the $p$-Wasserstein distance \citep{sommerfeld2019optimal} and the $p$-Wasserstein barycenter \citep{heinemann2022randomized}. Statistical deviation bounds allow balancing computational complexity and approximation accuracy in terms of the sample size $N$. For more details on randomized computation in the present context we refer to \Cref{sec:rand_comp}. %

\begin{figure}[h]
  \centering
  \includegraphics[width=0.95\textwidth]{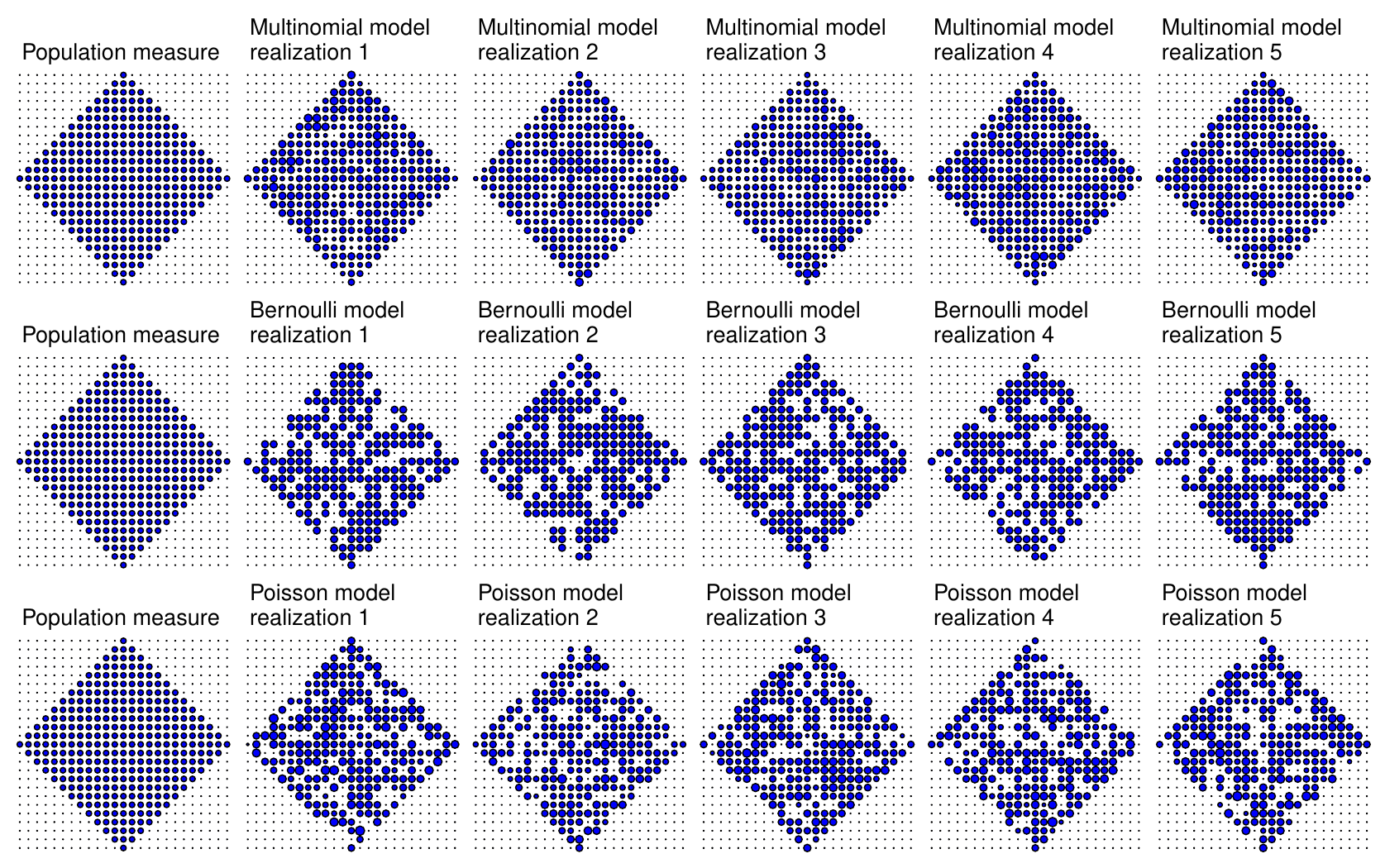}
  \caption{Realizations of measure estimators for the population measure (left column) based on the multinomial model with $N = 2000$ (top), Bernoulli model with homogeneous thinning $s = 0.75$ (middle) and Poisson intensity model with $s = 0.75$ and $t = 5$ (bottom). The thickness of each circle represents the mass assigned to the respective point.}
  \label{fig:realization_estimators}
\end{figure}

\subsubsection*{Bernoulli Model}
For the \emph{Bernoulli model} we consider measures $\mu$ with $\mu(x)=1$ for all $x \in \mathrm{supp}(\mu)$. Thus, the measure $\mu$ represents a point cloud in the ambient space $\Y$ with the total mass being the cardinality of the point cloud. We restrict the model to this setting, but we stress that generalizations to arbitrary masses at the individual locations are straightforward. For each location $x\in\mathcal{X}$ we assume to observe independent Bernoulli random variables $B_x{\sim} \mathrm{Ber}(s_x \mu(x) )$ with a fixed \emph{success probability} $s_x\in (0,1]$. We denote $s_\X:=(s_{x_1},\dots,s_{x_{M}})$ where $x_1, \dots, x_{M}$ is an enumeration of all elements of $\X$, $|\XC| = M$ and refer to $s_\X$ as \emph{success vector}. A suitable unbiased estimator for $\mu$ is defined by
\begin{align}\label{eq:bermeasure}
\hat{\mu}_{s_\X}:= \sum_{x\in\mathcal{X}} \frac{B_x}{s_x}\delta_x.
\end{align}
This estimator models a potentially \emph{spatio-heterogenous} thinning of the measure $\mu$ in terms of Bernoulli random variables such that the total mass of $\hat{\mu}_{s_\X}$ is close (but not always equal) to the total mass of $\mu$. Notably, the Bernoulli field $(B_x)_{x\in \X}$ is a prototypical model for incomplete data, where data is missing at random and the Bernoulli variables serve as labels for this. It further arises, e.g., in the context of generalized linear models where the regressor $X$ is linked to $B_x$ by a link function in a non-parametric fashion. The Bernoulli model also underlies the sampling scheme for the subsequent Poisson intensity model, which occurs, e.g.,  in various imaging devices, such as fluorescence cell microscopy. There, fluorescent markers are, e.g., chemically attached to each protein within a complex protein ensemble and then are excited with a laser beam. The resulting, emitted photons indicate the spatial position of the objects of interest in the proper experimental setup \citep{kulaitis2021resolution}. However, the marker has a limited \emph{labelling efficiency} $s_x \in (0,1]$ at each location $x\in \X$, and we only observe a location which has been labelled by the marker and finally emits photons. We refer to \cite{aspelmeier2015modern} for the further discussion on statistical aspects of high-resolution fluorescence microscopy.

\subsubsection*{Poisson Intensity Model}

For the \emph{Poisson intensity model} we fix a parameter $t>0$ and a success probability $s\in (0,1]$. Consider a collection of $\lvert \X \rvert$ independent Poisson random variables $P_x{\sim} \mathrm{Poi}(t\mu(x))$ with intensity $t\mu(x)$ at each location $x\in\mathcal{X}$ and independently from that Bernoulli random variables $B_x\sim \mathrm{Ber}(s)$ for each $x \in X$. A suitable unbiased estimator for $\mu$ is defined by
\begin{align}\label{eq:poissonmeasure}
\hat{\mu}_{t,s}:= \frac{1}{st}\sum_{x\in\mathcal{X}} B_x P_x\delta_x.
\end{align}
In contrast to the Bernoulli model, in this model the success probability is assumed to be homogeneous (as opposed to the inhomogeneous probabilities in the Bernoulli model, though such generalizations are straightforward) and the values of the population measures at each support point are not necessarily equal to one. Hence, we have two independent layers of randomness in the construction of this empirical measure. First, we draw a location $x$ with a certain probability $s$, then we observe random photon counts driven by a Poisson distribution based on the mass of $\mu$ and the value of $t$.\\
This model is motivated by various tasks in photonic imaging, for example, fluorescence microscopy, X-ray imaging and positron emission tomography (PET), see \cite{munk2020statistical} for a survey. The finite space $\X$ represents the center of bins of a detection interface used to measure the emitted photons. The value $\mu(x)$ corresponds to the integrated underlying photon intensity over its respective bin. This intensity is proportional to an external source, such as a laser duration in fluorescence microscopy and modelled by the parameter $t>0$. The Bernoulli random variable $B_x$ models the possibility that in the bin of $x$ a photon can not be recorded. This might be due to various effects that cause thinning, such as limited labelling efficiency, dead time of cameras or a loss of photons due to sparse detector tubes. The value of $P_x$ corresponds to the number of photons which have been measured at the bin of $x$. Note that besides $B_x$ there might be also additional effects present which do not disable the whole bin, but just prevent a single photon from being measured. All this causes a thinning of the process and is incorporated in the probability $s^\prime\in (0,1]$  that a single photon at any bin and any point in time can not be measured. In this case, the model can be shown to be equivalent to a Poisson model with parameter $ts^\prime >0$ instead of~$t$ \citep{aspelmeier2015modern}, %
 and is thus a special case of the general Poisson intensity model.

\subsubsection{Summary of Statistical Deviation Bounds}

Concerning approximation of population measure by its empirical counterpart, we show in \Cref{chp:sampbounds} that there exist constants $\mathcal{E}_{p,\mathcal{X},\mu}^{\mathrm{Mult}}(C), \mathcal{E}_{p,\mathcal{X},\mu}^{\mathrm{Ber}}(C),\mathcal{E}_{p,\mathcal{X},\mu}^{\mathrm{Poi}}(C)$ such that for any measure $\mu$ and its estimator $\hat{\mu}$, with $p\geq 1$,  in each of the three statistical models it holds, %
\begin{align}\label{eq:contr1}
    \mathbb{E}\left[ \KR_{p,C}\left( \hat{\mu},\mu\right)\right]\leq 
    \begin{cases}
      \begin{array}{lll}
        \mathcal{E}_{p,\mathcal{X},\mu}^{\mathrm{Mult}}(C)^{\frac{1}{p}} 
        \, N^{-\frac{1}{2p}}, &\text{ if }\hat{\mu}=\hat{\mu}_N, &\text{ (Multinomial)}\\[0ex]
        \mathcal{E}_{p,\mathcal{X},\mu}^{\mathrm{Ber}}(C)^{\frac{1}{p}} 
        \, \psi(s_\X)^{\frac{1}{p}}, &\text{ if }\hat{\mu}=\hat{\mu}_{s_\X}, &\text{ (Bernoulli)}\\[0ex]
        \mathcal{E}_{p,\mathcal{X},\mu}^{\mathrm{Poi}}(C)^{\frac{1}{p}} 
        \, \phi(t,s)^{\frac{1}{p}},  &\text{ if }\hat{\mu}=\hat{\mu}_{t,s}, &\text{ (Poisson)}    
      \end{array}
    \end{cases}
\end{align}
where for the \emph{Multinomial model} we obtain a scaling rate of $N^{-\frac{1}{2p}}$, for the \emph{Bernoulli model}
\begin{align*}
  \psi(s_\X)= \begin{cases}
  \left(2\sum_{x\in \X} (1-s_x) \right), &\text{ if } C\leq d_{\min }\coloneqq \min_{x\neq x^\prime}d(x,x^\prime), \\
 \left( \sum_{x\in \X}\frac{1-s_x}{s_x} \right)^{\frac{1}{2}}, &\text{ else}. \\
   \end{cases}
\end{align*}
and for \emph{Poisson intensity model}
\begin{align*}
  \phi(t,s)=\begin{cases}
 \left(2(1-s)\mathbb{M}(\mu)+\frac{s}{\sqrt{t}}\sum_{x\in \X}\sqrt{\mu(x)} \right), &\text{ if } C\leq d_{\min },\\
\left( \frac{1}{st}\mathbb{M}(\mu)+\frac{1-s}{s}\sum_{x\in \X}\mu(x)^2 \right)^{\frac{1}{2}}, &\text{ else}.
  \end{cases}
\end{align*}
Notably, in the multinomial model for $N\rightarrow\infty$, in the Poisson model for $t\rightarrow\infty$, $s\rightarrow 1$ and in the Bernoulli model for $s_{\X}\rightarrow \mathbf{1}_{\X}$, these upper bounds tend to zero. Our approach enables an explicit characterization of the constants $\mathcal{E}_{p,\mathcal{X},\mu}^{\mathrm{Mult}}(C), \mathcal{E}_{p,\mathcal{X},\mu}^{\mathrm{Ber}}(C),\mathcal{E}_{p,\mathcal{X},\mu}^{\mathrm{Poi}}(C)$ in terms of  structural properties of the measures and space, such as  total mass intensity and covering numbers(for details see \Cref{chp:sampbounds} and Appendices \ref{sec:multi} and \ref{sec:ber}). We believe this to be particularly relevant for statistical tasks surrounding the KRD. As an example, we comment on the behavior of the constants if $\XC$ is a subset of $\RR^d$ equipped with Euclidean metric.

\begin{example}[Compact ground space in $\RR^D$] 
For finitely supported measures on the unit ball in $\RR^D$ with $|\XC|$ support points and the Euclidean distance as the metric, it holds for $p = 2$ and $D\geq 1$  up a universal constant\footnote{We write $A \lesssim B$ if there exists a universal constant $\kappa >0$ such that $A \leq \kappa B$.} (see \Cref{sec:explicitB} and Appendices \ref{sec:multi} and \ref{sec:ber}) that %
\begin{align*}
  \frac{{\mathcal{E}}_{2,\XC, \mu}^{\mathrm{Mult}}(C)}{\mathbb{M}(\mu) + \mathbb{M}(\mu)^{1/2}}, \,{\mathcal{E}}_{2,\XC, \mu}^{\mathrm{Ber}}(C),\, {\mathcal{E}}_{2,\XC, \mu}^{\mathrm{Poi}}(C) \lesssim  \begin{cases}
 	C^2, & \text{ if } C \leq d_{\min} \text{ or } D < 4,\\
 	C^2 + \log_2(|\X|), &\text{ if } C >d_{\min} \text{ and } D = 4,\\
 	C^2 + |\X|^{\frac{1}{2} - \frac{2}{D}},  &\text{ if } C >d_{\min} \text{ and } D > 4.
 \end{cases}
\end{align*}
In particular, for $D<4$ the constants are independent of the cardinality of the space $\lvert \X \rvert$, for $D=4$ the dependence is logarithmic and for $D>4$ there is a polynomial dependency in $|\XC|$. %
We note however that the upper bound adapt to the intrinsic dimension of the domain on which the finite measure is supported, e.g.,  if a measure is supported on a $D^\prime<D$ dimensional subspace of $[0,1]^D$ (e.g.,  a submanifold), then in the upper bound the dependency reduces to $D^\prime$ and where the suppressed constant depends on the domain.
\end{example}

In addition, we also establish related convergence statements with respect to the total variation norm (\Cref{thm:TVbound}).
Controlling the empirical estimator with respect to the $(p,C)$-KRD and total variation norm enables us to draw conclusions on the performance of plug-in estimators for the $(p,C)$-KRD. 
Indeed, by  reverse triangle inequality and our  stability bound in \Cref{chp:sampbounds}, Lemma \ref{lem:stability_dual} it holds
\begin{equation}\label{eq:stability bound_intro}
\begin{aligned}
   \mathbb{E}\left[ \lvert \KR_{p,C}(\hat{\mu},\hat{\nu})-\KR_{p,C}({\mu},{\nu})\rvert \right] 
  \leq & \min\Big(\mathbb{E}\left[ \KR_{p,C}(\hat{\mu},{\mu})\right] + \mathbb{E}\left[ \KR_{p,C}(\hat{\nu},{\nu})\right], \\ 
  & \quad 2C^p \KR_{p,C}^{1-p}({\mu},{\nu}) \left(\mathbb{E}\left[ \TV(\hat{\mu},{\mu})\right] + \mathbb{E}\left[\TV(\hat{\nu},{\nu})\right]\right)\Big),
\end{aligned}
\end{equation}
where $\TV(\mu,\nu)=\sum_{x\in \X} \lvert \mu(x)-\nu(x) \rvert$ is the \emph{total variation distance}. This asserts that the $\KR_{p,C}({\mu},{\nu})$ is well-approximated by  $\KR_{p,C}(\hat{\mu},\hat{\nu})$ as soon as the empirical estimators $\hat \mu$ and $\hat \nu$ approximate the population measures well (Corollary \ref{cor:estimation_KRD} and Theorem \ref{thm:estimation_KRD_separate}). %

In addition to the UOT \emph{values} as in \eqref{eq:stability bound_intro}, we also establish a novel quantitative stability bound for UOT \emph{plans} with parameters $p\geq 1$ and $C>0$. Since UOT plans are not necessarily unique, our stability result is stated for the collection of UOT plans ${\mathbf{P}}^*_{p,C}(\mu,\nu)$ and is quantified in terms of the Hausdorff distance  $\mathcal{H}_\TV$ induced by the total variation norm (\Cref{thm:UOTplan_Stability}),
\begin{align*}
  & \mathcal{H}_\TV\left({\mathbf{P}}^*_{p,C}(\hat\mu,\hat\nu),{\mathbf{P}}^*_{p,C}(\mu,\nu)\right)\\
  \leq \;& 4\,(|\X| +1)\left(\TV(\hat\mu,\mu) + |\mathbb{M}(\hat\mu) - \mathbb{M}(\mu)| +\TV(\hat\nu,\nu) + |\mathbb{M}(\hat\nu) - \mathbb{M}(\nu)|\right).
  \end{align*}
Based on this bound, we can make quantitative statements about the performance of plug-in estimators for UOT plans (\Cref{thm:UOTplan_convergence}). Notably, we also establish a quantitative stability bound for balanced OT. 
Finally, the convergence analysis of empirical estimators with respect to the $(p,C)$-KRD also enables us to draw conclusions for $(p,C)$-barycenters (defined in \eqref{eq:KRbarycenter}), see Theorems \ref{thm:frechetboundPoi} and \ref{thm:kr_boundPoi}. %
To elaborate, let $\mu^*$ be any $(p,C)$-barycenter of the population measures $\mu^1,\dots,\mu^J$ and let $\hat{\mu}^\star$ be any $(p, C)$-barycenter of the empirical measures $\hat{\mu}^1,\dots,\hat{\mu}^J$. Since neither $\mu^\star$ nor $\hat{\mu}^\star$ is necessarily unique, we quantify the error of the empirical counterpart $\hat{\mathbf{B}}^\star$ in approximating the population set $\mathbf{B}^\star$ via (recall \eqref{eq:KRbarycenter}) %
\begin{align}\label{eq:cont4}
   F_{p,C}(\hat{\mu}^\star)-F_{p,C}({\mu}^\star) \quad \text{ and }\quad  \   \sup_{\hat{\mu}^\star \in \hat{\mathbf{B}}^\star} \inf_{\mu^\star \in \mathbf{B}^\star} \KR_{p,C}(\mu^\star,\hat{\mu}^\star).
\end{align}
Notably, the term on the right-hand side quantifies the maximal difference between an empirical barycenter to the closest population barycenter. %
Although this is a slightly weaker result than the analysis in terms of the Hausdorff distance, it is sufficient for practical considerations, as it details how well the ``worst choice'' for the empirical barycenters approximates its population counterpart. 
Both expressions in \eqref{eq:cont4} can be related to the $(p,C)$-KRD error between empirical and population measure in \eqref{eq:contr1}, which enables us to quantify the performance of empirical barycenters. 

\section{Empirical Kantorovich-Rubinstein Distances}\label{chp:sampbounds}
In this section we investigate the Poisson model and analyze how fast the empirical measure estimator tends to its population counterpart in terms of the $(p,C)$-Kantorovich-Rubinstein distance and the total variation norm. These bounds allow us to quantify how fast the plug-in estimator tends to the population $(p,C)$-Kantorovich-Rubinstein distance. 
Results for the multinomial and Bernoulli model follow along the same reasoning. Corresponding deviation bounds and proofs are provided in Appendices \ref{sec:multi} and \ref{sec:ber}.

\subsection{Stability Bounds for Kantorovich-Rubinstein Distance} 
The convergence results of this section are based on novel stability bounds for the UOT distance, which we detail in the following. The first is based on a tree approximation of the space $\X$ (Lemma \ref{lem:treeapproximation}), whereas the second relies on the dual formulation of the UOT cost (Lemma \ref{lem:stability_dual}). The proofs for this section are detailed in  \Cref{app:stability_proofs}.  

\subsubsection{Stability Bound via Tree Approximation of Domain}\label{app:treeapproximation}
Let $\mathcal{T}=(V,E)$ be a rooted, ultrametric tree with height function $h:V\longrightarrow \mathbb{R}_+$ and root~$\mathsf{r}$. For two nodes $\textsf{u},\textsf{v}\in V$, denote the unique path between $\mathsf{u}$ and $\mathsf{v}$ in $\mathcal{T}$ by $\mathcal{P}(\mathsf{u},\mathsf{v})$. For a node $\textsf{v}\in V$ its \emph{children} are the elements of the set $\mathcal{C}(\textsf{v})=\left\lbrace \textsf{w}\in V \,\mid\, \textsf{v}\in \mathcal{P}(\textsf{w},\textsf{r})\right\rbrace$. The \emph{parent} $par(\mathsf{v})$ of a node $\textsf{v}$ is the unique node with $(par(\mathsf{v}),\mathsf{v})\in E$ and $h(\mathsf{v})< h(par(\mathsf{v}))$.
For any $C>0$, define the set
\begin{align}\label{eq:subtreeroot}
\mathcal{R}(C)\coloneqq \left\lbrace \textsf{v}\in V\, \mid \, h(\textsf{v})\leq C/2< h(\text{par}(\textsf{v}))\right\rbrace
\end{align} 
with the convention that $\mathcal{R}(C)=\lbrace \textsf{r} \rbrace$ if $\frac{C}{2}\geq h(\textsf{r})$. The goal is to control the $(p,C)$-KRD on the finite metric space $(\X,d)$ by bounding it from above by a dominating distance $d_\mathcal{T}$ induced\footnote{For two vertices of the tree $\mathcal{T}$, we define their distance $d_{\mathcal{T}}$ as the sum of the weights of the edges included in the unique path between the two vertices. Here, the weight of an edge joining two vertices $v$ and $par(v)$ is given by $h(par(v))-h(v)$.\label{fn:treedist}} from a tree $\mathcal{T}$ with the elements of $\X$ as vertices and a height function $h$ such that $d(x,x^\prime)\leq d_\mathcal{T}(x,x^\prime)$. In this case and by the definition of the Kantorovich-Rubinstein distance it holds for all measures $\mu,\nu\in\msrX$ that
\begin{align}\label{eq:generalKRbound}
\text{KR}_{p,C}(\mu,\nu)\leq \text{KR}_{d_\mathcal{T}^p,C}(\mu,\nu),
\end{align}
where $\text{KR}_{d_\mathcal{T}^p,C}(\mu,\nu)$ denotes the $(p,C)$-KRD w.r.t. the ground space $(\X,d_\mathcal{T})$. Moreover, if $\mathcal{T}$ is an ultrametric tree with leaf nodes $L$ and height function $h\colon V\to \mathbb{R}_+$ inducing\cref{fn:treedist} the tree metric $d_\mathcal{T}$ and the two measures $\mu^L,\nu^L\in \mathcal{M}_+(L)$ supported on the leaf nodes of $\mathcal{T}$, then it holds \citep[Theorem 2.3]{heinemann2022kantorovich} that
{\begin{align}\label{eq:KRultrametric}
\begin{split}
    &\text{KR}^p_{d_\mathcal{T}^p,C}\left(\mu^L,\nu^L\right)=\\
 &\sum_{\textsf{v}\in \mathcal{R}(C)} \Bigg(2^{p-1}\sum_{\textsf{w}\in \mathcal{C}(\textsf{v})\setminus \lbrace \textsf{v} \rbrace}  \Big(\left( h(par(\textsf{w}))^p-h(\textsf{w})^p \right) \left\vert \mu^L(\mathcal{C}(\textsf{w}))-\nu^L(\mathcal{C}(\textsf{w}))\right\vert \Big) \\[1ex]
&+\left(\frac{C^p}{2}-2^{p-1}h(\textsf{v})^p\right) \left\vert \mu^L(\mathcal{C}(\textsf{v}))-\nu^L(\mathcal{C}(\textsf{v}))\right\vert \Bigg).
\end{split}
\end{align}}
\begin{figure}
  \centering
  \subfloat[][]{\includegraphics[width=0.78\linewidth]{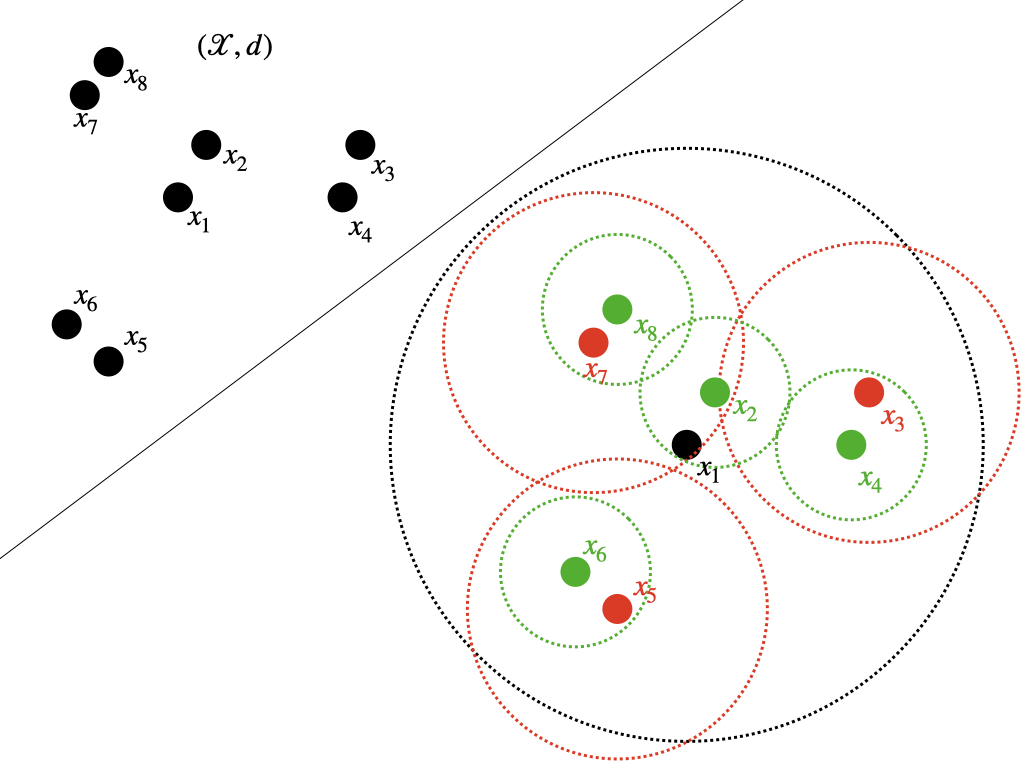}} \hspace{0.2\linewidth}
  \subfloat[][]{\includegraphics[width=0.78\linewidth]{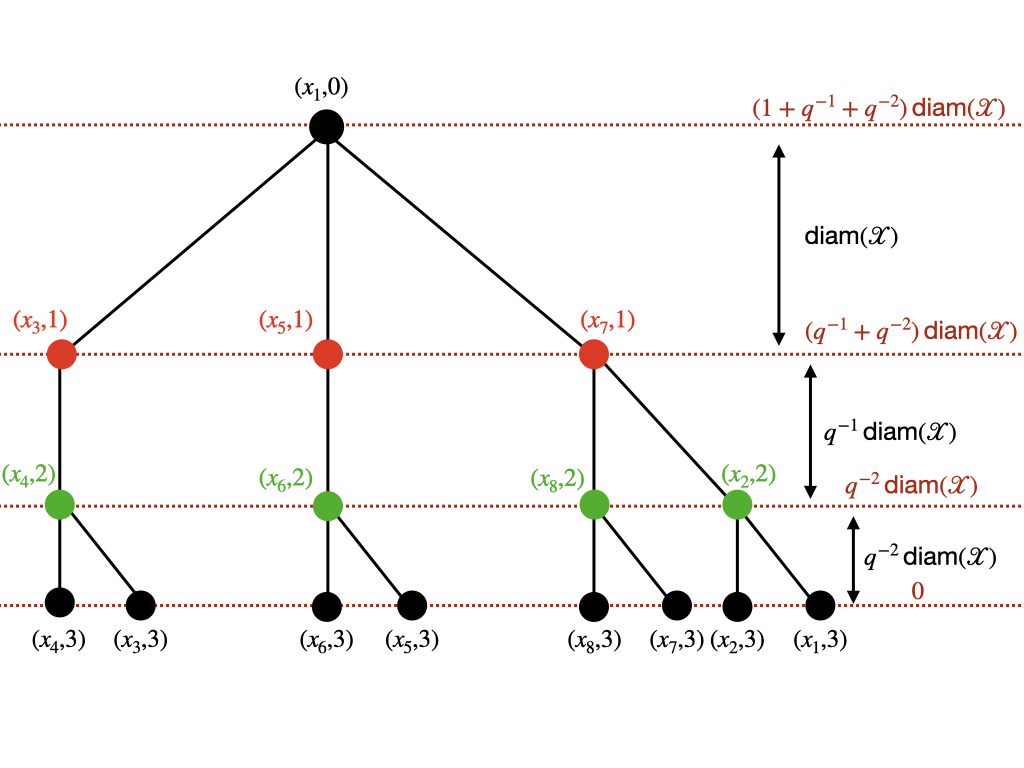}}%
  \caption{\textbf{Ground metric approximation by an ultrametric tree distance:} \textbf{(a)} A finite metric space $(\X,d)$ and its covering sets $Q_0$ (black), $Q_1$ (red) and $Q_2$ (green) for $L=2$. \textbf{(b)} Based on the covering sets from (a) an ultrametric tree is constructed. The metric space $\X$ is embedded in level $L+1=3$ and equal to all leaf nodes of that tree.}%
  \label{fig:treeapproximation}
\end{figure}

The construction of $\mathcal{T}$, such that \eqref{eq:generalKRbound} holds, is as follows. 
Fix some \emph{depth level} $L\in \mathbb{N}$. For some \emph{resolution} $q>1$ and level $j=0,\ldots,L$ define the covering set\footnote{For a metric space $(\mathcal{X},d)$ an $\varepsilon$-cover is a set of points $\left\{ x_1,\ldots,x_m\right\}\subset \mathcal{X}$ such that for each $x\in\mathcal{X}$, there exists some $1\leq i \leq m$ such that $d(x,x_i)\leq \varepsilon$. The smallest such set is denoted as $\mathcal{N}(\X, \varepsilon)$.} $Q_j\coloneqq \mathcal{N}(\X,q^{-j}\mathrm{diam}(\X))\subset \X$ and let $Q_{L+1}\coloneqq\mathcal{X}$.  Any point $x\in Q_j$ is considered as a node at level $j$ of a tree $\mathcal{T}$ and denoted as $(x,j)$ to emphasize its level position.  An illustration of this approximation is given in \Cref{fig:treeapproximation}. For level $j=0$ this yields a single element in $Q_0$ which serves as the root of the tree. For $j=0,\ldots,L$ a node $(x,j)$ at level $j$ is connected to one node $(x^\prime,j+1)$ at level $j+1$ if their distance satisfies $d(x,x^\prime)\leq q^{-j}\mathrm{diam}(\X)$ (ties are broken arbitrarily). The edge weight of the corresponding edge is set equal to $q^{-j}\mathrm{diam}(\mathcal{X})$. Consequently, the height of each node only depends on its assigned level $0\leq l\leq L+1$ and is defined as $h_{q,L}\colon \{0,\ldots,L+1\}\to\mathbb{R}$ by \footnote{The construction of the ultra-metric tree is based on approximating the underlying domain at varying precisions $(\delta_0, \dots, \delta_L)=(q^{-l}\diam(\XC))_{l = 0, \dots, L}$ with as few points as possible. If one were to select the precisions differently, the height function would change to $h(l) \coloneqq \sum_{j=l}^{L} \delta_{j}$, possibly leading to different convergence statements. Our choice is inspired by that of \cite{dereich2013constructive} and \cite{fournier2015rate}, who derived sharp rates of convergence for the empirical Wasserstein distance in Euclidean settings. Based on this choice, we establish in Section \ref{sec:explicitB} convergence results for different regimes which align with those of previous articles.}
\begin{align}\label{eq:height}
    h_{q,L}(l)=\sum_{j=l}^{L} q^{-j}\mathrm{diam}(\X)=\frac{q^{1-l}-q^{-L}}{q-1}\mathrm{diam}(\X).
\end{align}
By definition the space $\X$ is embedded in level $L+1$ as the leaf nodes of $\mathcal{T}$ with height $h_{q,L}(L+1)=0$. By a straightforward computation it holds for two points $x,x^\prime \in \X$ considered as embedded in $\mathcal{T}$ as $(x,L+1)$ and $(x^\prime,L+1)$ that
\begin{align}\label{eq:metricupper}
  d^p(x,x^\prime)\leq d_\mathcal{T}^p((x,L+1),(x^\prime,L+1)).  
\end{align}
The measures $\mu,\nu$ are embedded into $\mathcal{T}$ as measures $\mu^L,\nu^L$ supported only on leaf nodes of $\mathcal{T}$, and thus it follows from \eqref{eq:metricupper} that
\begin{align*}
    \KR_{p,C}(\mu,\nu)\leq \KR_{d_\mathcal{T}^p,C}\left(\mu^L,\nu^L\right).
\end{align*}
In combination with the closed formula from \eqref{eq:KRultrametric} this yields an upper bound on the $(p,C)$-KRD. Whenever clear from the context the notation is alleviated by writing $\textsf{v}\in Q_l$ instead of $(\textsf{v},l) \in Q_l$. 

\begin{lemma}\label{lem:treeapproximation}
Let $(\X,d)$ be a finite metric space and let $\mu,\nu\in\msrX$ with total mass $\mathbb{M}(\mu)$ and $\mathbb{M}(\nu)$, respectively. Let $p\geq 1$ and $C>0$. Then for any resolution $q>1$, depth $L\in\mathbb{N}$ and height function \eqref{eq:height} with $$h_{q,L}(k)=\frac{q^{1-k}-q^{-L}}{q-1}\mathrm{diam}(\X)$$ it holds that
\begin{align*}
    \KR^p_{p,C}(\mu,\nu)\leq \;
    &\begin{cases}
        \left(\frac{C^p}{2}-2^{p-1}h_{q,L}(0)^p\right)\left\vert \mathbb{M}(\mu)-\mathbb{M}(\nu)\right\vert\\[0.1cm] \ +B_{q,p,L,\X}(1), \\[0.1cm] \hspace{2cm}    \text{if } C\geq 2h_{q,L}(0), \\[0.4cm]
        B_{q,p,L,\X}(l), \\[0.1cm] \hspace{2cm} \text{if } 2h_{q,L}(l)\leq  C< 2h_{q,L}(l-1),\\[0.4cm]
        \frac{C^p}{2} \TV(\mu,\nu), \\[0.1cm] \hspace{2cm} \text{if } C\leq \left( 2h_{q,L}(L) \vee {\min_{x\neq x^\prime}d(x,x^\prime)}\right), \\
    \end{cases}
\end{align*}
where we define 
\begin{align*}
    B_{q,p,L,\X}(l)\coloneqq 2^{p-1}\sum\limits_{j=l}^{L+1}\sum\limits_{x\in Q_j} \left( h_{q,L}(j-1)^p-h_{q,L}(j)^p\right)\left\vert \mu^L(\mathcal{C}(x))-\nu^L(\mathcal{C}(x))\right\vert.
\end{align*}
\end{lemma}

\begin{remark}\label{rmk:boundTV_KRD}
According to \Cref{lem:treeapproximation}, for any two measures $\mu,\nu\in\msrX$ and $C\leq \left( 2h_{q,L}(L) \vee {\min_{x\neq x^\prime}d(x,x^\prime)}\right)$ it always holds that $\KR^p_{p,C}(\mu,\nu)\leq \frac{C^p}{2} \TV(\mu,\nu) \leq \frac{C^p}{2}$. In~particular, for $C\searrow 0$ this implies that $\sup_{\mu, \nu \in \msrX} \KR^p_{p,C}(\mu,\nu) \leq \frac{C}{2}\searrow 0$.
\end{remark}

\subsubsection{Stability Bound via Dual Formulation}

The stability bound based on the tree construction from previous subsection yields sharp  statements about the convergence of the measure estimator to its population counterpart. By triangle inequality, this yields sharp convergence rates for plug-in estimators of the KRD when population measures are close, but they are suboptimal when the population measures are strictly separated. In this subsection we follow a different approach to derive a stability bound for the KRD which is inspired by \cite{chizat2020} and \cite{manole2024sharp} and based on the dual representation of the KRD.

\begin{lemma} \label{lem:stability_dual}
Let $(\XC,d)$ be a finite metric space and consider two pairs of measures $\mu^1, \nu^1, \mu^2, \nu^2\in \msrX$. Let $p \geq 1$ and $C>0$. Then, it follows that
\begin{align}\label{eq:KRD_duality_stabilityBound}
  |\KR_{p,C}^p(\mu^1, \nu^1) - \KR_{p,C}^p(\mu^2, \nu^2)| 
  \leq \frac{C^p}{2}\begin{cases}4 \left(\TV(\mu^1, \mu^2)+ \TV(\nu^1, \nu^2)\right),  \\[0.1cm] \hspace{2cm}    \text{if }  C> \min\{\dist(\mu^1, \nu^1), \dist(\mu^2, \nu^2)\},\\[0.4cm]
 \left(\left|\mathbb{M}(\mu^1) - \mathbb{M}(\mu^2)\right| +  \left|\mathbb{M}(\nu^1) - \mathbb{M}(\nu^2)\right|\right),  \\[0.1cm] \hspace{2cm}  \text{ else,}
 \end{cases}
\end{align}
where we define 
$$
\dist(\mu, \nu) \coloneqq \begin{cases}
  \min_{	x \in \mathrm{supp}(\mu), x'\in \mathrm{supp}(\nu)} d(x, x'), & \text{ if } \mu \neq  0 \text{ and } \nu \neq  0, \\
  + \infty, & \text{ if } \mu = 0 \text{ or } \nu = 0. 
 \end{cases}
 $$

Moreover, under $\KR_{p,C}(\mu^1, \nu^1)\geq \delta>0$ it holds that \begin{align*}
  |\KR_{p,C}(\mu^1, \nu^1) - \KR_{p,C}(\mu^2, \nu^2)| \leq \delta^{1-p}|\KR_{p,C}^p(\mu^1, \nu^1) - \KR_{p,C}^p(\mu^2, \nu^2)|.
\end{align*}
\end{lemma}

 \subsection{$\mathbf{(p,C)}$-Kantorovich-Rubinstein Deviation Bound}\label{subsec:KR_bounds}

 In this section we first derive statistical deviation bounds for approximation error empirical estimators for their population counterpart with respect to the TV norm and $(p,C)$-KRD. We then continue by analyzing the performance of plug-in estimators. The omitted proofs of the following Theorems \ref{thm:TVbound} and \ref{thm:samplingboundKR} can be found in \Cref{app:TV_bound_proofs}.

\begin{theorem}[Expected deviation of estimator in TV and KRD]\label{thm:TVbound}
  Let $(\mathcal{X},d)$ be a finite metric space and $\mu\in\mathcal{M}_+(\mathcal{X})$ with total mass $\mathbb{M}(\mu)$. Let $\hat{\mu}_{t,s}$ be the estimator from \eqref{eq:poissonmeasure}. Then, for any $p\geq 1$ and $C>0$ it holds that
  \begin{align*}
      \mathbb{E}\left[ \KR^p_{p,C}\left(\hat{\mu}_{t,s},\mu\right)\right]\leq C^p\mathbb{E}\left[ \TV\left(\hat{\mu}_{t,s},\mu\right)\right] \leq C^p\left(2(1-s)\mathbb{M}(\mu)+\frac{s}{\sqrt{t}}\sum_{x\in \X}\sqrt{\mu(x)} \right).
  \end{align*}
  \end{theorem}

\begin{theorem}[Expected deviation of estimator in KRD]\label{thm:samplingboundKR}
Let $(\mathcal{X},d)$ be a finite metric space and $\mu\in\mathcal{M}_+(\mathcal{X})$. Let $\hat{\mu}_{t,s}$ be the estimator from \eqref{eq:poissonmeasure}. Then, for any $p\geq 1$,  $C>0$, resolution $q>1$ and depth $L\in \mathbb{N}$ it holds that
\begin{align*}
    \mathbb{E}\left[ \KR_{p,C}\left( \hat{\mu}_{t,s},\mu\right)\right]\leq
    \mathcal{E}_{p,\mathcal{X},\mu}^{\mathrm{Poi}}(C,q,L)^{1/p}\begin{cases}
   \left(2(1-s)\mathbb{M}(\mu)+\frac{s}{\sqrt{t}}\sum_{x\in \X}\sqrt{\mu(x)} \right)^{\frac{1}{p}},  \\[0.1cm]  \hspace{2cm} \text{if } C\leq \left(2h_{q,L}(L) \vee {\min_{x\neq x^\prime}d(x,x^\prime)}\right), \\[0.4cm]
  \left( \frac{1}{st}\mathbb{M}(\mu)+\frac{1-s}{s}\sum_{x\in \X}\mu(x)^2 \right)^{\frac{1}{2p}}, \\[0.1cm]  \hspace{2cm} \text{else}. 
    \end{cases}
\end{align*}
For
\[
A_{q,p,L,\X}(l):=\mathrm{diam}(\mathcal{X})^p2^{p-1}\left(q^{-Lp}\lvert  \X \rvert^{\frac{1}{2}}+\left(\frac{q}{q-1}\right)^p \sum_{j=l}^{L}q^{p-jp}\lvert Q_j \rvert^{\frac{1}{2}}\right),
\]
the constant is equal to 
\begin{align*}
\mathcal{E}_{p,\mathcal{X},\mu}^{\mathrm{Poi}}(C,q,L)=\begin{cases}
\left(\frac{C^p}{2}-2^{p-1}\left( \frac{q-q^{-L}}{q-1}\mathrm{diam}(\mathcal{X})\right)^p \right)
+A_{q,p,L,\X}(1),\\[0.1cm] \hspace{2cm}\text{if }  C\geq 2h_{q,L}(0),\\[0.4cm]
A_{q,p,L,\X}(l),\\[0.1cm] \hspace{2cm}\text{if }  2h_{q,L}(l)\leq C< 2h_{q,L}(l-1),\\[0.4cm]
\frac{C^p}{2},\\[0.1cm] \hspace{2cm} \text{if }  C\leq \left(2h_{q,L}(L) \vee {\min_{x\neq x^\prime}d(x,x^\prime)}\right),
\end{cases}
\end{align*}
Furthermore, for $p=1$ the factor $\frac{q}{(q-1)}$ in $A_{q,1,L,\X}(l)$ can be removed for all $l=1,\dots,L$.
\end{theorem}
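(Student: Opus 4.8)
The plan is to insert the random pair $(\hat{\mu}_{t,s},\mu)$ into the tree-approximation bound of \Cref{cor:treeapproximation}, take expectations, and control the resulting terms by elementary first- and second-moment computations for the two independent layers of randomness (the Poisson counts $P_x$ and the Bernoulli thinning $B_x$). Fixing $q>1$ and $L\in\mathbb{N}$, one applies \Cref{cor:treeapproximation} pathwise with $\nu=\hat{\mu}_{t,s}$; since $r\mapsto r^{1/p}$ is concave on $[0,\infty)$ for $p\geq 1$, Jensen's inequality gives $\mathbb{E}[\text{KR}_{p,C}(\hat{\mu}_{t,s},\mu)]\leq(\mathbb{E}[\text{KR}^{p}_{p,C}(\hat{\mu}_{t,s},\mu)])^{1/p}$, so it suffices to bound the expectation of the right-hand side of \Cref{cor:treeapproximation} in each of its three regimes, which match the three cases of the statement term by term.

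The central estimate is a deviation bound for aggregated masses: for any $S\subseteq\X$ the variable $\hat{\mu}_{t,s}(S)=\tfrac{1}{st}\sum_{x\in S}B_xP_x$ is unbiased for $\mu(S)$, so $\mathbb{E}|\hat{\mu}_{t,s}(S)-\mu(S)|\leq\mathrm{Var}(\hat{\mu}_{t,s}(S))^{1/2}$, and using independence across $x$ together with $\mathbb{E}[P_x^2]=t\mu(x)+t^2\mu(x)^2$ one finds $\mathrm{Var}(\hat{\mu}_{t,s}(S))=\tfrac{1}{st}\mu(S)+\tfrac{1-s}{s}\sum_{x\in S}\mu(x)^2$. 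Since for each level $j$ the sets $\mathcal{C}(x)$, $x\in Q_j$, partition the leaf set of $\mathcal{T}$ (equivalently $\X$), applying this bound to each $\mathcal{C}(x)$ and then the Cauchy--Schwarz inequality over the $|Q_j|$ nodes of level $j$ gives $\sum_{x\in Q_j}\mathbb{E}|\hat{\mu}^{L}_{t,s}(\mathcal{C}(x))-\mu^{L}(\mathcal{C}(x))|\leq|Q_j|^{1/2}\big(\tfrac{1}{st}\mathbb{M}(\mu)+\tfrac{1-s}{s}\sum_{x\in\X}\mu(x)^2\big)^{1/2}$. Taking expectations in $B_{q,p,L,\X}(l)$ and using the crude bounds $h_{q,L}(j-1)^{p}-h_{q,L}(j)^{p}\leq h_{q,L}(j-1)^{p}\leq\big(\tfrac{q}{q-1}\big)^{p}q^{p-jp}\text{diam}(\X)^{p}$ for $l\leq j\leq L$ and $h_{q,L}(L)^{p}=q^{-Lp}\text{diam}(\X)^{p}$ at the bottom level $j=L+1$ recovers exactly the constant $A_{q,p,L,\X}(l)$; when $p=1$ the height increments equal $q^{1-j}\text{diam}(\X)$ exactly, so the factor $\tfrac{q}{q-1}$ drops out. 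The same variance estimate with $S=\X$ controls the deterministic multiple of $|\mathbb{M}(\mu)-\mathbb{M}(\hat{\mu}_{t,s})|$ appearing when $C\geq 2h_{q,L}(0)$, and assembling the pieces yields $\mathbb{E}[\text{KR}^{p}_{p,C}(\hat{\mu}_{t,s},\mu)]\leq\mathcal{E}_{p,\X,\mu}^{\text{Poi}}(C,q,L)\big(\tfrac{1}{st}\mathbb{M}(\mu)+\tfrac{1-s}{s}\sum_{x\in\X}\mu(x)^2\big)$ in the two larger-$C$ regimes; a final $p$-th root finishes those cases.

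For the regime $C\leq\big(2h_{q,L}(L)\vee\min_{x\neq x'}d(x,x')\big)$, \Cref{cor:treeapproximation} reduces matters to estimating $\mathbb{E}[\text{TV}(\hat{\mu}_{t,s},\mu)]=\sum_{x\in\X}\mathbb{E}|\hat{\mu}_{t,s}(x)-\mu(x)|$. I would split each pointwise deviation into a Bernoulli and a Poisson part, writing $\hat{\mu}_{t,s}(x)-\mu(x)=\tfrac{B_x}{st}\big(P_x-t\mu(x)\big)+\tfrac{B_x-s}{s}\mu(x)$, and bound the two contributions using independence together with $\mathbb{E}|B_x-s|=2s(1-s)$ and $\mathbb{E}|P_x-t\mu(x)|\leq(t\mu(x))^{1/2}$; summing the resulting pointwise bound over $\X$ reproduces the factor $2(1-s)\mathbb{M}(\mu)+\tfrac{s}{\sqrt{t}}\sum_{x\in\X}\sqrt{\mu(x)}$, so that multiplying by $\tfrac{C^{p}}{2}=\mathcal{E}_{p,\X,\mu}^{\text{Poi}}(C,q,L)$ and taking the $p$-th root closes this case. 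The Bernoulli and multinomial versions (proved in the respective appendices) then follow along the same lines, with the variance and pointwise first-moment computations replaced by the corresponding ones for $\hat{\mu}_{s_\X}$ and $\hat{\mu}_N$.

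The main obstacle I anticipate is bookkeeping rather than conceptual depth: one has to match the three regimes of \Cref{cor:treeapproximation} — whose breakpoints involve the thresholds $2h_{q,L}(l)$ together with the separate floor $\min_{x\neq x'}d(x,x')$ — against the case distinction and the piecewise-constant form of $\mathcal{E}_{p,\X,\mu}^{\text{Poi}}(C,q,L)$, and one must check carefully that the level-$j$ children sets genuinely partition $\X$, since this is precisely what legitimises the Cauchy--Schwarz step over $Q_j$ (and the attendant replacement of $\sum_{x\in Q_j}\sum_{y\in\mathcal{C}(x)}\mu(y)^{2}$ by $\sum_{y\in\X}\mu(y)^{2}$). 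The probabilistic content, by contrast, reduces entirely to the two elementary moment bounds above.
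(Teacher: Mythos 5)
Your route is the same as the paper's: apply \Cref{cor:treeapproximation} pathwise with $\nu=\hat{\mu}_{t,s}$, bound each aggregated-mass deviation by its standard deviation (your computation $\text{Var}(\hat{\mu}_{t,s}(S))=\tfrac{1}{st}\mu(S)+\tfrac{1-s}{s}\sum_{x\in S}\mu(x)^2$ is correct, and the level-$j$ children sets do partition the leaves, which legitimises the Cauchy--Schwarz step and the replacement of the double sum by $\sum_{x\in\X}\mu(x)^2$), bound the height increments by $h_{q,L}(j-1)^p-h_{q,L}(j)^p\leq \bigl(\tfrac{q}{q-1}\bigr)^pq^{p-jp}\text{diam}(\X)^p$ and $h_{q,L}(L)^p=q^{-Lp}\text{diam}(\X)^p$ exactly as the paper does, treat the total-mass term for $C\geq 2h_{q,L}(0)$ with the same variance estimate, and finish with Jensen. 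One slip in the assembly sentence: the factor $\bigl(\tfrac{1}{st}\mathbb{M}(\mu)+\tfrac{1-s}{s}\sum_{x\in\X}\mu(x)^2\bigr)$ must carry the exponent $\tfrac12$ (as in your own displayed estimate), otherwise the final $p$-th root would give $1/p$ rather than the stated $1/(2p)$; this reads as a typo, not a flaw in the reasoning.

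The one step that does not deliver the statement as written is the small-$C$ (total-variation) regime. Your identity $\hat{\mu}_{t,s}(x)-\mu(x)=\tfrac{B_x}{st}(P_x-t\mu(x))+\tfrac{B_x-s}{s}\mu(x)$ is correct, but after the triangle inequality and independence you get $\mathbb{E}\bigl[\tfrac{B_x}{st}\lvert P_x-t\mu(x)\rvert\bigr]=\tfrac{1}{t}\mathbb{E}\lvert P_x-t\mu(x)\rvert\leq\tfrac{1}{\sqrt{t}}\sqrt{\mu(x)}$, because the factor $\mathbb{E}[B_x]=s$ is cancelled by the $1/(st)$ normalisation. Summing therefore yields $2(1-s)\mathbb{M}(\mu)+\tfrac{1}{\sqrt{t}}\sum_{x\in\X}\sqrt{\mu(x)}$, not the stated $2(1-s)\mathbb{M}(\mu)+\tfrac{s}{\sqrt{t}}\sum_{x\in\X}\sqrt{\mu(x)}$; since $s\leq 1$ this is strictly weaker for $s<1$, so your argument proves a bound of the same form but with a larger constant than the theorem claims. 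The paper avoids this by conditioning on $B_x$ and, on the event $\{B_x=1\}$, splitting $P_x-st\mu(x)=(1-s)P_x+s\,(P_x-t\mu(x))$, so that the probability $s$ of that event multiplies the Poisson-deviation term and produces $s^2\mathbb{E}\lvert P_x-t\mu(x)\rvert$ before division by $st$, i.e.\ the factor $s/\sqrt{t}$. Replacing your split by this one closes the gap; everything else in your proposal coincides with the paper's proof.
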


The constant $\mathcal{E}_{p,\mathcal{X},\mu}^{\mathrm{Poi}}(C,q,L)$ is reminiscent of the constants for similar deviation bounds for optimal transport between finitely supported measures \citep{boissard2014mean,sommerfeld2019optimal}. However, in the case of UOT one finds an interesting case distinction into roughly three cases depending on the relation between the penalty parameter $C$ of the $(p,C)$-KRD and the resolution $q$ and depth $L$ of the tree approximation. The different constants arise from the fact that $C$ controls the maximal range at which transport occurs in an UOT plan. In particular, if $d(x,x^\prime)>C^p$, then for any UOT plan $\pi$ it holds $\pi(x,x^\prime)=0$. If $C$ is sufficiently large ($C \geq 2h(0)$), i.e., larger than the diameter of $\X$, then $\mathcal{E}_{p,\mathcal{X},\mu}^{\mathrm{Poi}}(C,q,L)$ coincides with the deviation bounds for usual optimal transport, however, there is an additional summand arising from the necessary estimation of the true total mass $\mathbb{M}(\mu)$ (see also Lemma \ref{lem:treeapproximation}). For sufficiently small $C$, e.g., $C<\min_{x\neq x^\prime} d(x,x^\prime)$, the $(p,C)$-KRD is proportional to the TV distance, hence we obtain a constant which is oblivious to the geometry of the ground space (see \Cref{thm:TVbound}). For an intermediate value of $C$, the UOT problem on the ultra-metric tree $\mathcal{T}$ decomposes into smaller problems on subtrees of $\mathcal{T}$ (for details see the proof of \eqref{eq:KRultrametric} in \cite{heinemann2022kantorovich}) depending on $C$. The expected $(p,C)$-KRD error then depends on the size of these subtrees and the mass estimation error inherent to the total mass on these subtrees. 
\begin{remark}\label{rem:infimum}
Since the deviation bound holds for any resolution $q>1$ and depth $L\in\mathbb{N}$ one can optimize and equivalently state upper bounds in terms of the infimum over those parameters. When the dependence on $q$ or $L$ is omitted, it is assumed that the infimum over those parameters has been taken, i.e.
\begin{align*}
    \mathcal{E}_{p,\mathcal{X},\mu}^{\mathrm{Poi}}(C)=\inf_{L\in \mathbb{N},q>1} \mathcal{E}_{p,\mathcal{X},\mu}^{\mathrm{Poi}}(C,q,L).
\end{align*}
\end{remark}
From the reverse triangle inequality we immediately obtain the following corollary from \Cref{thm:samplingboundKR}.

\begin{corollary}[Expected deviation of empirical KRD]\label{cor:estimation_KRD}
Let $(\mathcal{X},d)$ be a finite metric space and $\mu,\nu\in\mathcal{M}_+(\mathcal{X})$. Let $\hat{\mu}_{t,s},\hat{\nu}_{t,s}$ be the estimator from \eqref{eq:poissonmeasure} for each of these measures, respectively. Then, for any $p\geq 1$, resolution $q>1$ and depth $L\in \mathbb{N}$ it holds that
\begin{align*}
    &\mathbb{E}\left[ \lvert \KR_{p,C}\left( \hat{\mu}_{t,s},\hat{\nu}_{t,s}\right) -  \KR_{p,C}\left( {\mu},{\nu}\right)\rvert \right]\\
       \leq \,
        &\mathcal{E}_{p,\mathcal{X},\mu}^{\mathrm{Poi}}(C,q,L)^{1/p}
\begin{cases}
            \left(2(1-s)\mathbb{M}(\mu+\nu)+\frac{s}{\sqrt{t}}\sum_{x\in \X}(\sqrt{\mu(x)} + \sqrt{\nu(x)}\right)^{\frac{1}{p}} ,\\[0.1cm]  \hspace{2cm} \text{if } C< \min_{x\neq x^\prime\in\X}d(x,x^\prime), \\
            \left( \frac{1}{st}\mathbb{M}(\mu+\nu)+\frac{1-s}{s}\sum_{x\in \X}(\mu(x)^2 + \nu(x)^2) \right)^{\frac{1}{p}}, \\[0.1cm]  \hspace{2cm} \text{else}.
    \end{cases}
\end{align*}
\end{corollary}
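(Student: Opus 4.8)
The statement follows at once from \Cref{thm:samplingboundKR} together with the fact, recorded below \eqref{eq:krdistance}, that $\text{KR}_{p,C}$ is a metric on $\msrX$ for every $p\geq 1$. The plan is to first apply the reverse triangle inequality
\[
\bigl\lvert \text{KR}_{p,C}(\hat{\mu}_{t,s},\hat{\nu}_{t,s}) - \text{KR}_{p,C}(\mu,\nu)\bigr\rvert \leq \text{KR}_{p,C}(\hat{\mu}_{t,s},\mu) + \text{KR}_{p,C}(\hat{\nu}_{t,s},\nu),
\]
which is valid precisely because $\text{KR}_{p,C}$ satisfies the triangle inequality. Taking expectations and using linearity and monotonicity of $\mathbb{E}$ then reduces the claim to bounding $\mathbb{E}[\text{KR}_{p,C}(\hat{\mu}_{t,s},\mu)]$ and $\mathbb{E}[\text{KR}_{p,C}(\hat{\nu}_{t,s},\nu)]$ separately.

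Next I would invoke \Cref{thm:samplingboundKR} once for the pair $(\mu,\hat{\mu}_{t,s})$ and once for $(\nu,\hat{\nu}_{t,s})$. The key structural observation is that the prefactor $\mathcal{E}_{p,\mathcal{X},\mu}^{\text{Poi}}(C,q,L)$ depends only on $C,p,q,L$ and on metric data of $(\X,d)$ — its diameter and the covering numbers $\lvert Q_j\rvert$, $\lvert\X\rvert$ — and not on the underlying measure; hence the same prefactor multiplies both bounds. Adding the two resulting inequalities gives $\mathcal{E}_{p,\mathcal{X},\mu}^{\text{Poi}}(C,q,L)^{1/p}$ times the sum of the measure-dependent factors for $\mu$ and for $\nu$. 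Bounding the $\nu$-factor by the $\mu$-factor (equivalently, replacing both by their maximum, which is harmless for an upper bound) produces the stated multiplicative $2$. The two-way case split is inherited directly from \Cref{thm:samplingboundKR}: its first regime $C\leq\bigl(2h_{q,L}(L)\vee\min_{x\neq x^\prime}d(x,x^\prime)\bigr)$ contains the corollary's regime $C<\min_{x\neq x^\prime}d(x,x^\prime)$, and the complementary ``else'' case matches verbatim.

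No step here is a genuine obstacle; the only point requiring a little care is the bookkeeping around the case split and the absorption of the $\nu$-dependent factor into the $\mu$-dependent one, so that the conclusion can be stated with a single prefactor and a clean factor of $2$ rather than as a sum of two structurally identical terms.
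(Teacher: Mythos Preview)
Your approach is exactly the paper's: it states that the corollary follows ``from the reverse triangle inequality'' together with \Cref{thm:samplingboundKR}, which is precisely your decomposition. Your observation that the prefactor $\mathcal{E}_{p,\mathcal{X},\mu}^{\text{Poi}}(C,q,L)$ depends only on $(\X,d)$ and not on the measure is the key structural point, and your remark about absorbing the $\nu$-factor into the $\mu$-factor (via the maximum) is the natural reading of the corollary as stated.
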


The deviation bound established in \Cref{cor:estimation_KRD} is sharp in the sense that they are optimal up to constants for the regime where $\mu = \nu$ (see \Cref{subsec:rate_optimality}). Moreover, even if $\mu$ and $\nu$ differ but are permitted to be arbitrarily close, the convergence rates are sharp. However, if the measures are strictly separated, the rates are suboptimal, an observation which was previously made in the context of empirical optimal transport \citep{chizat2020, manole2024sharp}. The subsequent deviation bound provides a refinement for the empirical KRD in this regime and is based on the stability bound via duality (Lemma \ref{lem:stability_dual}).

\begin{theorem}[Expected deviation of empirical KRD under separation]\label{thm:estimation_KRD_separate}
Consider the same setting as in Corollary \ref{cor:estimation_KRD} and assume additionally that $\KR_{p,C}(\mu, \nu)\geq C \delta >0$. Then it follows that 
\begin{align*}
    &\mathbb{E}\left[ \lvert \KR_{p,C}\left( \hat{\mu}_{t,s},\hat{\nu}_{t,s}\right) -  \KR_{p,C}\left( {\mu},{\nu}\right)\rvert \right]\\
       \leq \, 
    & \frac{\delta^{1-p} C}{2}\begin{cases}
       4 \left(2(1-s)\mathbb{M}(\mu+\nu)+\frac{s}{\sqrt{t}}\sum_{x\in \X}(\sqrt{\mu(x)} + \sqrt{\nu(x)}) \right)^{\frac{1}{2}},\\[0.1cm]  \hspace{2cm} \text{if } C> \min_{\substack{x\in\mathrm{supp}(\mu)\\ x' \in \mathrm{supp}(\nu)}}d(x,x^\prime), \\
        \left( \frac{1}{st}\mathbb{M}(\mu+\nu)+\frac{1-s}{s}\sum_{x\in \X}(\mu(x)^2 + \nu(x)^2) \right)^{\frac{1}{2}},\\[0.1cm]  \hspace{2cm} 
        \text{else}.
    \end{cases}
\end{align*}
\end{theorem}
\begin{proof}
The result follows from Lemma \ref{lem:stability_dual} by plugging in the bounds of $\TV(\hat{\mu}_{t,s}, \mu)$ and $\mathbb E[|\mathbb{M}(\hat{\mu}_{t,s}) - \mathbb{M}(\mu)|]$, obtained in \Cref{thm:TVbound} and in \eqref{eq:mass_convergence}.
\end{proof}

In the formulation of the statement, we impose the condition that $\KR_{p,C}(\mu, \nu) \geq C \delta>0$. This condition is motivated from the fact that $\KR_{p,C}(\mu, \nu) = 2^{-1/p} C \textup{TV}(\mu, \nu)^{1/p}$ for $C< \min_{x\neq x'}d(x,x')$. In particular, this way, we obtain a linear scaling in $C$ in the upper bound of \Cref{thm:estimation_KRD_separate}, which captures the correct scaling (see Section \ref{subsec:rate_optimality}).

\begin{remark}To compare the established convergence results in Corollary \ref{cor:estimation_KRD} and Theorem \ref{thm:estimation_KRD_separate}, set $s = 1$ and note that if $\mu$ is close to $\nu$, the convergence rate of the empirical KRD is of order $t^{-1/2p}$, whereas under a strict separation constraint the rate is instead of order $t^{-1/2}$. 
  This is a consequence of the fact that the $p$-th root functional $x \mapsto x^{1/p}$ is Lipschitz-continuous at the origin if and only if $p = 1$. 
\end{remark}

\subsection{Rate Optimality}\label{subsec:rate_optimality}
In the following, we provide some intuition and consequences of the deviation bound provided in \Cref{thm:samplingboundKR}. The term $\phi(s,t)$ must necessarily contain a sum of terms depending on $s$ and $t$, respectively. In particular, neither choosing $s=1$ nor letting $t$ go to infinity for $s<1$, would yield a zero error. For any fixed $t >0$ and $s=1$ the expected $(p,C)$-KRD error is clearly non-zero as the mass of the measures at each location is in general not estimated correctly. Similarly, for any fixed $s<1$ letting $t\rightarrow \infty$ can not yield an expected $(p,C)$-KRD error of zero as on average $(1-s)\lvert \X \rvert>0$ support points of $\mu$ are not observed. However, for $s=1$ the error vanishes for $t\rightarrow \infty$, as we observe all support points of $\mu$ and then the strong law of large numbers guarantees the convergence of the weights at each location. It remains to verify whether the rate in $t$ is optimal. For this, fix $s=1$ and observe that
\begin{align}\label{eq:naivetv}
  \min\{C,\min_{x\neq x^\prime}d(x,x^\prime)\}^p \TV(\mu,\nu)\leq  \KR_{p,C}^p(\mu,\nu)\leq C^p \TV(\mu,\nu).
\end{align}
To show that the rate $t^{-\frac{1}{2}}$ in \Cref{thm:samplingboundKR} is sharp, we prove that $t^{\frac{1}{2}} \TV(\mu,\hat{\mu}_{t,1})$ converges in distribution for $t \to \infty$ to a non-degenerate distribution. By combining the central limit theorem for Poisson random variables in conjunction with the continuous mapping theorem, it follows for $t \to \infty$ that 
\begin{align*}
  t^{\frac{1}{2}}\TV(\mu,\hat{\mu}_{t,1})  = t^{\frac{1}{2}} \sum_{x\in \XC} | \mu(x) - \hat{\mu}_{t,1}(x)| = t^{-\frac{1}{2}} \sum_{x\in \XC} | t \mu(x) - P_{x, t}|  \xrightarrow{\mathcal{D}} \sum_{x \in \XC} \sqrt{\mu(x)} |Z_x|,
\end{align*}
where $P_{x,t} \sim Poi(\mu(x)t)$ and $Z_x\sim \mathcal{N}(0,1)$ with $x\in \XC$ are jointly independent random variables. Hence, by the Portemanteau lemma for the function $x \mapsto x^{1/p}$, we conclude that 
\begin{align*}
 \liminf_{t \to \infty}\EE[ t^{\frac{1}{2p}}\TV(\mu,\hat{\mu}_{t,1})^{\frac{1}{p}}]  \geq \;& \EE\Big[\Big(\sum_{x \in \XC} \sqrt{\mu(x)} |Z_x|\Big)^{\frac{1}{p}}\Big] > 0,
\end{align*}
In conjunction with \eqref{eq:naivetv}, the expectation $\mathbb{E}\left[\KR_{p,C}(\mu,\hat{\mu}_{t,1}) \right]$ is decreasing at least with order $Ct^{-\frac{1}{2p}}$ and the rate in $t$ of \Cref{thm:samplingboundKR} is thus sharp.

To illustrate sharpness of the convergence statement \Cref{thm:estimation_KRD_separate} in the regime of different measures, take $\nu = 0$, which yields $\hat \nu_t = \nu = 0$ for all $t>0$, and note that for every measure $\mu$ on $\XC$ it follows that $$\KR_{p,C}^p(\mu,\nu) = \frac{C^p}{2}\mathbb{M}(\mu).$$ 
Then, by the central limit theorem for Poisson random variables in conjunction with the delta method for $x \mapsto x^{1/p}$ and the continuous mapping theorem for the absolute value function, it follows if $\mathbb{M}(\mu)>0$ for $t \to \infty$ that 
\begin{align*}
  t^{\frac{1}{2}}\left|\KR_{p,C}(\hat\mu_{t,1},\hat\nu_{t,1}) - \KR_{p,C}(\mu,\nu) \right| =  \frac{t^{\frac{1}{2}}C}{2^{1/p}} \left|\mathbb{M}(\mu)^{1/p} - \mathbb{M}(\hat\mu_{t,1})^{1/p} \right| \xrightarrow{\mathcal{D}} \frac{C \mathbb{M}(\mu)^{-\frac{1}{2}+\frac{1}{p}}}{2^{1/p}} |Z| %
\end{align*}
for $Z \sim \mathcal{N}(0,1)$. Again invoking the Portemanteau theorem thus yields that 
\begin{align*}
  \liminf_{t \to \infty}\EE\left[ t^{\frac{1}{2}}\left|\KR_{p,C}(\hat\mu_{t,1},\hat\nu_{t,1}) - \KR_{p,C}(\mu,\nu) \right|\right] \geq  \frac{C}{\sqrt{\pi}} \left(\frac{\mathbb{M}(\mu)}{2}\right)^{-\frac{1}{2}+\frac{1}{p}},
\end{align*}
which shows that the convergence rate of order $Ct^{-1/2}$ in \Cref{thm:estimation_KRD_separate} is sharp. 

\subsection{Explicit Bounds for Euclidean Spaces}\label{sec:explicitB}
While the constants in the previous theorem are valid on arbitrary metric spaces, more explicit bounds can be derived for many practical applications. Thus, we assume that for the $\varepsilon$-covering number of $\XC$, there exists a constant $A > 0$ such that  %
\begin{equation} \label{eq:Xcovering_ass}
    \mathcal{N}(\XC,\varepsilon \cdot \mathrm{diam}(\XC)) \leq \min(A \varepsilon^{-\alpha}, |\XC|)  \quad \text{for all } \varepsilon \in (0,1].
\end{equation}
This assumption covers, for instance, the  setting when $\X$ is a finite subset of an $\alpha$-dimensional submanifold of $\RR^D$ and with $d$ chosen as the Euclidean distance $d_2$. Notably, if the domain is the unit ball $\{x\in \RR^D |\, \|x\|\leq 1\}$, then $\alpha = D$ and $A = 2$ are viable choices \citep[Corollary 4.2.11]{vershynin2018high}.  
We fix $q=2$ for simplicity. %
By repeating the argument within this framework, we can compute explicit upper bounds on the constants in \Cref{thm:samplingboundKR}. For $\alpha<2p$ and $L\to \infty$, it holds %
\begin{align*}
    \mathcal{E}_{p,\XC,\mu}^{\mathrm{Poi}}(C)\leq \begin{cases}
        \frac{C^p}{2} - 2^{2p-1} \mathrm{diam}(\XC)^p + A^{1/2}\mathrm{diam}(\XC)^{p} 2^{3p-1} \frac{2^{\alpha/2-p}}{1 - 2^{\alpha/2-p}}, \\[0.1cm] \hspace{2cm} \text{if } C\geq 2h_{L}(0),\\[0.4cm]
        A^{1/2} 2^{3p-1} \mathrm{diam}(\XC)^{p} \frac{2^{(\alpha/2-p)l}}{1 - 2^{\alpha/2-p}}, \\[0.1cm] \hspace{2cm} \text{if } 2h_{L}(l)\leq C< 2h_{L}(l-1),\\[0.4cm]
        \frac{C^p}{2}, \\[0.1cm] \hspace{2cm}\text{if } C\leq \left( 2h_{L}(L) \wedge d_{\min}\right).
    \end{cases}
\end{align*}
For $\alpha = 2p$ we put $L = \lfloor \frac{1}{\alpha}\log_2 (|\XC|) \rfloor$ and get
\begin{align*}
    \mathcal{E}_{p,\XC,\mu}^{\mathrm{Poi}}(C)\leq \begin{cases}
        \frac{C^p}{2} - 2^{p-1} \left(2 - |\XC|^{-1/\alpha} \right)^p\mathrm{diam}(\XC)^p \\[0.1cm]
        \ + 2^{3p-1} \mathrm{diam}(\XC)^{p} \left(2^{-p} + A^{1/2} \frac{1}{\alpha} \log_2 (|\XC|) \right), \\[0.1cm] \hspace{2cm} \text{if } C\geq 2h_{L}(0),\\[0.4cm]
        2^{3p-1} \mathrm{diam}(\XC)^{p}  \left(2^{-p} + A^{1/2}  \left( \frac{1}{\alpha} \log_2 (|\XC|) - l\right)\right), \\[0.1cm] \hspace{2cm} \text{if } 2h_{L}(l)\leq C< 2h_{L}(l-1),\\[0.4cm]
        \frac{C^p}{2}, \\[0.1cm] \hspace{2cm}\text{if } C\leq \left( 2h_{L}(L) \wedge {d_{\min}}\right).
    \end{cases}
\end{align*}
For $\alpha>2p$ and $L=\lfloor \frac{1}{\alpha}\log_2(|\XC|) \rfloor$, it holds
\begin{align*}
    \mathcal{E}_{p,\XC,\mu}^{\mathrm{Poi}}(C)\leq \begin{cases}
        \frac{C^p}{2} - 2^{p-1} \left(2 - |\XC|^{-1/\alpha} \right)^p\mathrm{diam}(\XC)^p \\
        \ +  2^{3p-1} \mathrm{diam}(\XC)^{p} |\XC|^{1/2 - p/\alpha} \left(2^{-p} + A^{1/2}  \frac{2^{\alpha/2-2p}}{2^{\alpha/2-p}-1} \right), \\[0.1cm] \hspace{2cm} \text{if } C\geq 2h_{L}(0),\\[0.4cm]
        2^{3p-1}\mathrm{diam}(\XC)^{p} \left(2^{-p}|\XC|^{1/2-p/\alpha} \right. \\
        \ + \left.A^{1/2} \frac{2^{\alpha/2-p}}{2^{\alpha/2-p} - 1} \left(|\XC|^{1/2-p/\alpha} - 2^{(\alpha/2-p)(l-1)} \right)\right), \\[0.1cm] \hspace{2cm} \text{if } 2h_{L}(l)\leq C< 2h_{L}(l-1),\\[0.4cm]
        \frac{C^p}{2}, \\[0.1cm] \hspace{2cm}\text{if } C\leq \left( 2h_{L}(L) \wedge {d_{\min}}\right).
    \end{cases}
\end{align*}
These upper bounds depend on the penalty $C$ as well as the parameters $\alpha$ and $A$ which dictates the behavior of the covering number $\mathcal{N}(\XC,\varepsilon)$ of $\XC$ at difference scales $\varepsilon$. %

If $\alpha<2p$, then there is no dependence on $|\XC|$ and the convergence of approximation error of the empirical measure is independent of its support size. If $\alpha=2p$, then $|\XC|$ enters through a logarithmic term. If $\alpha>2p$, then the dependence becomes polynomial in $|\XC|$. These phase transitions for the dependence on the support size align with those for empirical (balanced) OT for an arbitrary finite subset of $\RR^D$ \citep{sommerfeld2019optimal}, corresponding to $\alpha = D$. Our results also conceptually align with rate results for the empirical Wasserstein distance \citep{fournier2015rate, weed2019sharp}, namely that for $\alpha < 2p$ parametric convergence rates in the sample size and independent of the domain manifest. Moreover, for $\alpha>2p$ some polynomial dependency in $|\XC|$ has to manifest, since otherwise it would imply generic parametric convergence rates in the sample size for the empirical Wasserstein distance for $\alpha> 2p$, which would contradict well-known lower bounds \citep{singh2018minimax,weed2019sharp}. We conjecture the dependency in $|\XC|$ for the different regimes to be sharp. %

A novelty for the UOT setting is the additional dependency on the different scales of $C$. This is explained by the previously discussed control of $C$ on the maximal distance at which transport occurs in an optimal plan. The height function is again used to specify the scale induced by a particular choice of the parameter $C$. Notably, the dependence on $\lvert \X \rvert$ does not change on most scales of $C$. There is an exception, however, for sufficiently small values of $C$, where the $(p,C)$-KRD is equal to a scaled total variation distance. Thus, these bounds are completely oblivious to the geometry of $\X$ in $\Y$, though they scale as $\lvert \X \rvert^{\frac{1}{2}}$ which is the same rate we obtain for the TV bound (recall Theorem \ref{thm:TVbound}). As a final observation, we note that for $C>2h_L(0)$, these bounds essentially recover analogue bounds for empirical optimal transport \citep{sommerfeld2019optimal}. However, for the $(p,C)$-KRD the bounds include an additional summand based on the estimation error for the measure's total mass intensity.

\begin{remark}
Assumption \eqref{eq:Xcovering_ass} on $\XC$ can be relaxed in terms of $\varepsilon$. Namely, for $C \leq 2h_{L}(L) \wedge d_{\min}$ the assumption is not required because in this case the UOT cost coincides with the TV distance and the complexity of $\XC$ does not play a role. For  $C\geq 2 h_L(0)$, identical upper bounds  in $t>0$ remain valid if \eqref{eq:Xcovering_ass} is satisfied for $\varepsilon > t^{-1/(2 \vee \alpha)}$. This reflects a multiscale behavior of the empirical plug-in estimator previously observed for empirical OT \citep{weed2019sharp}: if the finitely supported measure is concentrated on an $\varepsilon$-fattened low-dimensional domain, then for small $t$ the constant from the low-dimensional setting will manifest, but for  $t\to \infty$ the constant degrades due to the fattening. %
\end{remark}

\section{Empirical Unbalanced Optimal Transport Plans}\label{sec:UOT_plans}

In this section we derive novel quantitative convergence statements for empirical UOT plans. 
We rely on a novel stability result for the balanced setting, which we also include as \Cref{thm:stability_OTplan} as it might be interesting on its own; the proof is stated in \Cref{app:stab_OTplan_proofs}.
The rest of the omitted proofs can be found in \Cref{app:UOTplans}. 

To set notation, we denote the collection of UOT plans between measures $\mu, \nu\in \msrX$ for parameters $p\geq 1$ and $C\geq 0$ by
\begin{align}\label{eq:UOTplans_unrestricted}
  {\overline{\mathbf{P}}}^*_{p,C}(\mu, \nu) \coloneqq \argmin_{\pi\in\Pi_{\leq}(\mu, \nu)}\sum_{x,x'\in \X}d^p(x,x')\pi(x,x') + C^p\left(\frac{\mathbb{M}(\mu) + \mathbb{M}(\nu)}{2} -  \mathbb{M}(\pi)\right).
\end{align}
Moreover, we additionally define the domain $\mathcal{D}(C)\coloneqq \{ (x,x') \in \XC\times \XC | d(x,x') < C\}$ and consider the set of restricted UOT plans
\begin{align}\label{eq:UOTplans}
  {\mathbf{P}}^*_{p,C}(\mu, \nu) \coloneqq \left\{ \pi \cdot \mathds{1}(\cdot\in \mathcal{D}(C))  \,\Big|\, \pi \in {\overline{\mathbf{P}}}^*_{p,C}(\mu, \nu) \right\}.
\end{align}
Lemma \ref{lem:UOTplan_restriction_connection} in Appendix \ref{app:UOTplans} shows that ${\mathbf{P}}^*_{p,C}(\mu, \nu)$ is a subset of ${\overline{\mathbf{P}}}^*_{p,C}(\mu, \nu)$, and that every  sub-coupling in $\Pi_{\leq}(\mu, \nu)$ which is the sum of an element in ${\overline{\mathbf{P}}}^*_{p,C}(\mu, \nu)$ and a non-negative measure supported on $\{ (x,x') \in \XC\times \XC \,|\, d(x,x') = C\}$ is contained in ${\overline{\mathbf{P}}}^*_{p,C}(\mu, \nu)$. 
 Based on this insight, we restrict our attention to the set of restricted UOT plans for empirical and population measures, 
\begin{align*}
  {\mathbf{P}}^*_{p,C}(\hat\mu_{t,s},\hat\nu_{t,s}) %
  \quad \text{ and } \quad 
  {\mathbf{P}}^*_{p,C}(\mu, \nu),%
\end{align*}
and quantify the accuracy in estimating the latter in terms of the former via the Hausdorff distance induced by the total variation norm. 
Crucial for our analysis is the following novel stability bound of the UOT plans. %
\begin{theorem}[Stability bound for UOT plans]
  \label{thm:UOTplan_Stability}
Let $(\X,d)$ be a finite metric space and $\mu^1,\mu^2,\nu^1,\nu^2\in\msrX$. Then, for any $p\geq 1$ and $C\geq 0$ it holds that 
\begin{align*}
& \mathcal{H}_\TV\left({\mathbf{P}}^*_{p,C}(\mu^1,\nu^1),{\mathbf{P}}^*_{p,C}(\mu^2,\nu^2)\right)\\
\leq \;& 4\,(|\X| +1)\left(\TV(\mu^1,\mu^2) + |\mathbb{M}(\mu^1) - \mathbb{M}(\mu^2)| +\TV(\nu^1,\nu^2) + |\mathbb{M}(\nu^1) - \mathbb{M}(\nu^2)|\right),
\end{align*}
where $\mathcal{H}_\TV$ denotes the Hausdorff distance induced by the total variation norm.
\end{theorem}

The proof is based on relating the collection of (restricted) UOT plans to associated OT plans for suitably augmented measures and a stability bound for OT plans stated in \Cref{thm:stability_OTplan}. The latter follows from a general stability bound for optimal solutions of linear program theory based on \cite{li1994sharp}. A remarkable property of the above stability bound is that it does not depend on parameters $p$ or $C$. In fact, the assertion also remains valid for any other cost function besides of $c = d^p$, and might be of independent interest. This stability bound asserts at the main result of this section. 

\begin{theorem}[Expected deviation of empirical UOT plans]\label{thm:UOTplan_convergence}
  Let $(\mathcal{X},d)$ be a finite metric space and $\mu, \nu\in\mathcal{M}_+(\mathcal{X})$. Let $\hat{\mu}_{t,s}, \hat{\nu}_{t,s}$ be the estimator from \eqref{eq:poissonmeasure}. Then, for any $p\geq 1$ and $C\geq 0$ it follows that 
  \begin{align*}
    & \mathbb{E}\left[\sup_{p\geq 1, C>0}\mathcal{H}_\TV\left({\mathbf{P}}^*_{p,C}(\hat\mu_{t,s},\hat\nu_{t,s}),{\mathbf{P}}^*_{p,C}(\mu,\nu)\right)\right] \\
    \leq \;&  4\,(|\X|+1) \Bigg( 2(1-s)\mathbb{M}(\mu + \nu)+\frac{s}{\sqrt{t}}\sum_{x\in \X}\left[\sqrt{\mu(x)} + \sqrt{\nu(x)}\right] \\ & \quad \quad \quad \quad\;\;\; + \sqrt{\frac{1}{st} \mathbb{M}(\mu+  \nu)+\frac{1-s}{s} \sum_{x\in \X} (\mu(x)^2  + \nu(x)^2)  }  \Bigg).
  \end{align*}
\end{theorem}
\begin{proof}
The assertion follows by combining the stability bound from \Cref{thm:UOTplan_Stability} above with the bound on the total variation distance between the estimators $\hat{\mu}_{t,s}, \hat{\nu}_{t,s}$ and $\mu, \nu$ (Theorem \ref{thm:TVbound}) and on the absolute deviation between their masses \eqref{eq:mass_convergence}.
\end{proof}

\begin{remark}[Computation]
Thanks to the connection between UOT and balanced OT (see Appendix \ref{app:liftOT}), various polynomial time procedures exist to compute the UOT plan between  plug-in estimators $\hat \mu_{t,s}$ and $\hat \nu_{t,s}$. For exact computation, the auction algorithm \citep{bertsimas1997introduction} or the network simplex flow algorithm \citep{luenberger1984linear, bonneel2011displacement} can be used which admit a computational complexity of order $\mathcal{O}(N^3\log(N))$ where $N$ is the number of support points. Moreover, to approximate the UOT plan up to some precision $\varepsilon>0$, the Sinkhorn algorithm for the entropy regularized OT problem \citep{cuturi2013sinkhorn, peyre2019computational} provides a computationally effective method which scales with order $\mathcal{O}(N^2 \log(N)/\varepsilon^2)$, see \cite{altschuler2017near,dvurechensky2018computational} as well as \cite{weed2018explicit} for an explicit analysis on the difference between unregularized and entropy regularized OT cost. Notably, incorporating additionally the structure of the UOT plan, any such algorithm can be improved by restricting the measures on the domain $\mathcal{D}(C)$. 
\end{remark}

For our stability bound of the UOT plan we establish a novel stability bound for the vanilla OT plan between measures with equal mass which might be of independent interest. To formalize these results, we employ the following notation.   Let $\mu, \nu \in \mathcal{M}_+(\X)$ be measures with identical mass $\mathbb{M}(\mu) = \mathbb{M}(\nu)$, let $c\colon \XC\times \XC \to [0, \infty)$ be a cost function and denote the respective collection of OT plans as \begin{align}\label{eq:balancedOT_problem}
  \tilde{\mathbf{P}}_{c}^*(\mu, \nu) \coloneqq \argmin_{\pi\in \Pi_{=}(\mu, \nu)} \sum_{x,x'\in \X} c(x,x')\pi(x,x').
\end{align}
Note that the assumption of identical masses ensures that the collection of transport plans is always non-empty and compact, hence $\tilde{\mathbf{P}}_{c}^*(\mu, \nu)$ is always non-empty. 
\begin{theorem}[Stability bound for balanced OT plans]\label{thm:stability_OTplan}
  Let $\X$ be a finite discrete space and $\mu_1, \nu_1, \mu_2, \nu_2\in \mathcal{M}_+(\X)$ such that $\mathbb{M}(\mu_1) = \mathbb{M}(\nu_1)$ and $\mathbb{M}(\mu_2) = \mathbb{M}(\nu_2)$. Then, for every cost function $c\colon \XC\times\XC\to [0,\infty)$ it holds that 
  \begin{align*}
    & \mathcal{H}_\TV\left({\tilde{\mathbf{P}}}^*_{c}(\mu_1,\nu_1),{\tilde{\mathbf{P}}}^*_{c}(\mu_2,\nu_2)\right) \leq  4\,|\X| \left(\TV(\mu_1,\mu_2) +\TV(\nu_1,\nu_2)\right).
  \end{align*}
\end{theorem}
The proof is based on linear program theory and relies on a stability bound by \cite{li1994sharp}.

This quantitative error bound represents, to the best of our knowledge, the first of its kind for finite discrete domains. Moreover, thanks to the generality of our stability bound, we are not limited to the setting $p = 2$, which has been the main subject of analysis in Euclidean settings for continuous settings \citep{gigli2011holder,  deb2021, hutter2021minimax, delalande2023quantitative, manole2021plugin} as well as semi-discrete settings  \citep{bansil2022quantitative,divol2024tight}. Notably, these former results are concerned with the OT map, whereas our Theorem \ref{thm:stability_OTplan} is concerned with the OT plan. It remains open whether similar deviation bound for the OT plan can also be established for continuous settings. Since the total variation norm metrizes strong convergence, we believe that similar results are will not be valid for empirical plug-in estimators on continuous domains but can be achieved when choosing a weaker loss and smooth plug-in estimators. We leave this for future research. 

\begin{remark}
Let us address some aspects of our stability bound from \Cref{thm:stability_OTplan}. \begin{enumerate}
		\item The key insight of \Cref{thm:stability_OTplan} is the fact that they behave fairly stable under small perturbation of the marginal measures with respect to $TV$-norm. Remarkably, the underlying cost function does not affect the upper bound, which may seem surprising given that the OT plan crucially depends on the cost function (see,  e.g., \cite{villani2008optimal}).
		\item For $\mu^1 = \mu^2 = \delta_{x}$   and arbitrary measures $\nu^1, \nu^2\in \mathcal{M}_+(\XC)$  it follows that ${\tilde{\mathbf{P}}}^*_{c}(\mu_i,\nu_i) = \{\mu_i\otimes \nu_i\}$. Therefore,
        $$\mathcal{H}_\TV\left({\tilde{\mathbf{P}}}^*_{c}(\mu_1,\nu_1),{\tilde{\mathbf{P}}}^*_{c}(\mu_2,\nu_2)\right) = \TV(\mu_1\otimes \nu_1, \mu_2\otimes \nu_2) = \TV(\nu_1, \nu_2),$$
		and thus the dependency in \Cref{thm:stability_OTplan} with respect to the marginal measures is sharp. 
		\item The upper bound only scales linearly in the number of elements of the domain $\XC$. If all measures $\mu^1,\nu^1, \mu^2, \nu^2$ are concentrated on a subdomain $\XC' \subseteq \XC$, the constant could be replaced by $|\XC'|$.  
		 Obtaining the sharp dependency in the number of support points remains an interesting aspect for future work.  
		 \item Based on distributional limits for the empirical OT plan between discrete measures \citep{klatt2020limit, liu2023asymptotic} it follows that the deviation bound in \Cref{thm:UOTplan_convergence} is sharp in $s$ and $t$ up to multiplicative constants which depend on $\XC$, $\mu$ and $\nu$. 
		\item When perturbing the cost function a similar stability bound for OT plans can generally not be expected. As an  example, take identical probability measures $\mu=\nu$ which are not concentrated on a single point and consider $c_a(x,y) = a\cdot  d(x,y)$ as the cost function where $a\in [0,\infty)$ and $d$ is a metric on $\XC$. Then, for  $a>0$ it follows that  $\tilde{\mathbf{P}}^*_{c_a}(\mu, \nu ) = \{ (\textup{Id}, \textup{Id})_{\#}\mu \}$ whereas for $a = 0$ it holds $\tilde{\mathbf{P}}^*_{c_0}(\mu, \nu ) = \Pi(\mu, \nu)$. This confirms that the set of OT plans is discontinuous with respect to the cost function since 
		\begin{align*}
 \lim_{a \searrow 0 } \mathcal{H}_\TV\left(\tilde{\mathbf{P}}^*_{c_a}(\mu, \nu ),\tilde{\mathbf{P}}^*_{c_0}(\mu, \nu )\right)>0 \quad \text{ while } \quad \lim_{a \searrow 0 }\|c_a - c_0\|_\infty = 0.
\end{align*}
	\end{enumerate}
\end{remark}

\section{Empirical Kantorovich-Rubinstein Barycenters}\label{sec:KR_barycenter}
 Consider measures $\mu^1,\ldots,\mu^J\in\msrX$ which we replace by $\hat{\mu}^1_{t_1,s_1},\ldots,\hat{\mu}^J_{t_J,s_J}\in\msrX$ as defined in \eqref{eq:poissonmeasure}. We again focus on the Poisson model and treat the remaining two models in Appendices \ref{sec:multi} and ref{sec:ber}. The previous upper bound on the Kantorovich-Rubinstein distance in \Cref{thm:samplingboundKR} between a measure and its empirical version enables a bound on the mean absolute deviation of $(p,C)$-barycenters in terms of their $p$-Fr\'echet functional $F_{p,C}(\mu)=\frac{1}{J}\sum_{i=1}^J \text{KR}_{p,C}^p(\mu^i,\mu)$ from \eqref{eq:KRbarycenter}. We denote 
\begin{align*}
    \mu^\star\in \argmin_{\mu\in\mathcal{M}_+(\Y)} \frac{1}{J} \sum_{i=1}^J \KR_{p,C}^p(\mu^i,\mu),\quad \hat{\mu}^\star\in \argmin_{\mu\in\mathcal{M}_+(\Y)} \frac{1}{J} \sum_{i=1}^J \KR_{p,C}^p(\hat{\mu}_{t_i,s_i}^i,\mu)
\end{align*}
and measure the accuracy of approximation of $\mu^\star$ by $\hat{\mu}^\star$ in terms of their mean absolute $p$-Fr\'echet deviation. The omitted proofs for this section are detailed in \Cref{app:barycenters_proofs}.

\begin{theorem}[Expected deviation of Fr\'echet error]\label{thm:frechetboundPoi}
Let $\mu^1,\ldots,\mu^J\in\msrX$ and denote $\mathcal{X}_i=\mathrm{supp}(\mu^i)$ for $i=1,\dots ,J$. Consider random estimators $\hat{\mu}_{t_1,s_1}^1,\ldots,\hat{\mu}_{t_J,s_J}^J\in\msrX$ derived from \eqref{eq:poissonmeasure}. Then,
\begin{align*}
    \mathbb{E}\left[\lvert F_{p,C}(\hat{\mu}^\star)-F_{p,C}(\mu^\star) \rvert \right] \leq \frac{2p C^{p-1}}{J}\easysum{i=1}{J} \mathcal{E}_{1,\X_i,\mu^i}^{\mathrm{Poi}}(C)\phi(t_i,s_i),
\end{align*}
where $\phi$ is given by
\begin{align*}
    \phi(t,s)=\begin{cases}
   \left(2(1-s)\mathbb{M}(\mu)+\frac{s}{\sqrt{t}}\sum_{x\in \X}\sqrt{\mu(x)} \right), &\text{if } C\leq \min_{x\neq x^\prime}d(x,x^\prime), \\
  \left( \frac{1}{st}\mathbb{M}(\mu)+\frac{1-s}{s}\sum_{x\in \X}\mu(x)^2 \right)^{\frac{1}{2}}, &\text{else}. \\
    \end{cases}
\end{align*}
\end{theorem}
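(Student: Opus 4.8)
The plan is to combine the usual \emph{optimality sandwich} for Fr\'echet functionals with a per-term stability estimate for $\mathrm{KR}_{p,C}^{p}$ and then apply \Cref{thm:samplingboundKR} with $p=1$. Write $\hat F_{p,C}(\mu)=\tfrac1J\es{i=1}{J}\mathrm{KR}_{p,C}^{p}(\hat\mu^{i}_{t_i,s_i},\mu)$ for the empirical Fr\'echet functional, so that $\mu^\star$ minimises $F_{p,C}$ and $\hat\mu^\star$ minimises $\hat F_{p,C}$ over $\msrY$. From $F_{p,C}(\mu^\star)\le F_{p,C}(\hat\mu^\star)$ and $\hat F_{p,C}(\hat\mu^\star)\le\hat F_{p,C}(\mu^\star)$, inserting $\pm\hat F_{p,C}(\hat\mu^\star)$ and $\pm\hat F_{p,C}(\mu^\star)$ and discarding the nonpositive term $\hat F_{p,C}(\hat\mu^\star)-\hat F_{p,C}(\mu^\star)$ yields the deterministic inequality
\begin{align*}
\bigl|F_{p,C}(\mu^\star)-F_{p,C}(\hat\mu^\star)\bigr|=F_{p,C}(\hat\mu^\star)-F_{p,C}(\mu^\star)\le\bigl|F_{p,C}(\hat\mu^\star)-\hat F_{p,C}(\hat\mu^\star)\bigr|+\bigl|\hat F_{p,C}(\mu^\star)-F_{p,C}(\mu^\star)\bigr|.
\end{align*}
For any fixed $\eta\in\msrY$ one has $|F_{p,C}(\eta)-\hat F_{p,C}(\eta)|\le\tfrac1J\es{i=1}{J}\bigl|\mathrm{KR}_{p,C}^{p}(\mu^{i},\eta)-\mathrm{KR}_{p,C}^{p}(\hat\mu^{i}_{t_i,s_i},\eta)\bigr|$, so it suffices to control one term $\bigl|\mathrm{KR}_{p,C}^{p}(\mu^{i},\eta)-\mathrm{KR}_{p,C}^{p}(\hat\mu^{i}_{t_i,s_i},\eta)\bigr|$ \emph{uniformly in $\eta$}, which then also covers the random choices $\eta=\hat\mu^\star$ and $\eta=\mu^\star$ (these exist and, by \cite{heinemann2022kantorovich}, are supported in the finite restricted centroid set, so they are indeed elements of $\msrY$).

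The crux is the stability estimate
\begin{align*}
\bigl|\mathrm{KR}_{p,C}^{p}(\alpha,\eta)-\mathrm{KR}_{p,C}^{p}(\alpha',\eta)\bigr|\le p\,\min\{\diam(\Y),C\}^{\,p-1}\,\mathrm{KR}_{1,C}(\alpha,\alpha')\qquad\text{for all }\alpha,\alpha',\eta\in\msrY.
\end{align*}
To prove it, first note that replacing $d$ by the capped metric $\rho\coloneqq d\wedge\min\{\diam(\Y),C\}$ leaves all three $\mathrm{KR}$-quantities unchanged, because an optimal (sub-)coupling never moves mass where $d^{p}>C^{p}$. Working with $\rho$, let $\pi\in\Pi_{\le}(\alpha',\eta)$ be optimal for $\mathrm{KR}_{p,C}^{p}(\alpha',\eta)$ and $\sigma\in\Pi_{\le}(\alpha,\alpha')$ optimal for $\mathrm{KR}_{1,C}(\alpha,\alpha')$; disintegrating both over the intermediate support and routing the overlapping mass through it produces a feasible $\pi'\in\Pi_{\le}(\alpha,\eta)$. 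On each routed triple $(x,x',z)$ the mean value theorem for $t\mapsto t^{p}$ gives $\rho^{p}(x,z)-\rho^{p}(x',z)\le p\min\{\diam(\Y),C\}^{\,p-1}\rho(x,x')$ (trivial if the left side is $\le 0$, and otherwise using $\rho\le\min\{\diam(\Y),C\}$ and the triangle inequality for $\rho$), so summing bounds the transport cost of $\pi'$ by $\langle\rho^{p},\pi\rangle+p\min\{\diam(\Y),C\}^{\,p-1}\langle\rho,\sigma\rangle$; an elementary bound $\mathbb{M}(\pi)-\mathbb{M}(\pi')\le\mathbb{M}(\alpha')-\mathbb{M}(\sigma)$ handles the mass term, so that the creation/destruction part of $\pi'$ is controlled by $C^{p}\bigl(\tfrac{\mathbb{M}(\alpha)+\mathbb{M}(\alpha')}{2}-\mathbb{M}(\sigma)\bigr)$. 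Recombining, recalling $\mathrm{KR}_{1,C}(\alpha,\alpha')=\langle\rho,\sigma\rangle+C\bigl(\tfrac{\mathbb{M}(\alpha)+\mathbb{M}(\alpha')}{2}-\mathbb{M}(\sigma)\bigr)$, and using $p\ge 1$ yields the estimate after symmetrising in $\alpha$ and $\alpha'$. I expect the main obstacle to be exactly this stability lemma, and within it the bookkeeping of created and destroyed mass when gluing the two sub-couplings.

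Granting the stability estimate, apply it with $\alpha=\mu^{i}$, $\alpha'=\hat\mu^{i}_{t_i,s_i}$ and $\eta\in\{\mu^\star,\hat\mu^\star\}$, take expectations, and use symmetry of $\mathrm{KR}_{1,C}$ together with \Cref{thm:samplingboundKR} at $p=1$ (optimised over $q>1$ and $L\in\mathbb{N}$ as in \Cref{rem:infimum}): this gives $\mathbb{E}\bigl[\mathrm{KR}_{1,C}(\mu^{i},\hat\mu^{i}_{t_i,s_i})\bigr]\le\mathcal{E}_{1,\X_i,\mu^{i}}^{\text{Poi}}(C)\,\phi(t_i,s_i)$ for each $i$, with precisely the displayed case distinction for $\phi$ (here $\X_i=\text{supp}(\mu^{i})$). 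Substituting into the sandwich, each of the two summands on its right-hand side has expectation at most $\tfrac{p\min\{\diam(\Y),C\}^{\,p-1}}{J}\es{i=1}{J}\mathcal{E}_{1,\X_i,\mu^{i}}^{\text{Poi}}(C)\phi(t_i,s_i)$, and adding the two produces the stated constant with its factor $2$.
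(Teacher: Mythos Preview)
Your overall architecture is the same as the paper's: an optimality sandwich reduces everything to the uniform (in $\eta$) stability estimate
\[
\bigl|\mathrm{KR}_{p,C}^{p}(\alpha,\eta)-\mathrm{KR}_{p,C}^{p}(\alpha',\eta)\bigr|\le p\,\min\{\diam(\Y),C\}^{\,p-1}\,\mathrm{KR}_{1,C}(\alpha,\alpha'),
\]
which is then fed into \Cref{thm:samplingboundKR} at $p=1$. The sandwich you write and the one the paper writes are algebraically identical.

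The genuine difference is in how the stability estimate is obtained. The paper does \emph{not} glue sub-couplings directly. Instead it invokes the dummy-point lift from \cite{heinemann2022kantorovich}: one augments $\Y$ by a point $\mathfrak d$ at cost $C^p/2$, under which $\mathrm{KR}_{p,C}^{p}=\tilde{\mathrm{OT}}_{p}^{p}$ and $\mathrm{KR}_{1,C}=\tilde{\mathrm{OT}}_{1}$ for the lifted (equal-mass) measures, and then cites the known balanced inequality $|\tilde{\mathrm{OT}}_{p}^{p}(\cdot,\cdot)-\tilde{\mathrm{OT}}_{p}^{p}(\cdot,\cdot)|\le p\,\diam(\tilde\Y)^{\,p-1}\tilde{\mathrm{OT}}_{1}(\cdot,\cdot)$ from \cite{sommerfeld2018inference}. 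This buys the paper a one-line proof of the step you flag as the main obstacle, at the price of importing the lift machinery.

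Your direct route is more self-contained and does work, but your sketch leaves the gluing implicit. A concrete construction that makes your bookkeeping go through is
\[
\pi'(x,z)=\sum_{x'\in\mathrm{supp}(\alpha')}\sigma(x,x')\,\frac{\pi(x',z)}{\alpha'(x')},
\]
which is readily checked to lie in $\Pi_{\le}(\alpha,\eta)$; with this choice one has $\mathbb{M}(\pi)-\mathbb{M}(\pi')=\sum_{x'}\pi_1(x')\bigl(1-\sigma_2(x')/\alpha'(x')\bigr)\le\mathbb{M}(\alpha')-\mathbb{M}(\sigma)$ exactly as you need, and your MVT bound on $\rho^p(x,z)-\rho^p(x',z)$ then handles the transport part. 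The final ``using $p\ge 1$'' hides the comparison $C^{p}\le p\,\min\{\diam(\Y),C\}^{\,p-1}\,C$ for the mass penalty; this is immediate when $C\le\diam(\Y)$, and the paper's lift argument has the same feature, so your treatment is on par with it.
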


A more elaborate statement gives control over the set of empirical $(p,C)$-barycenters itself. This involves a related linear program that is presented in detail in \Cref{app:sec_LP}.
\begin{theorem}[Expected deviation of barycenters]\label{thm:kr_boundPoi}
Let $\mu^1,\ldots,\mu^J\in\msrX$ and denote $\mathcal{X}_i=\mathrm{supp}(\mu^i)$ for $i=1,\dots ,J$. Consider random estimators $\hat{\mu}_{t_1,s_1}^1,\ldots,\hat{\mu}_{t_J,s_J}^J\in\msrX$ derived from \eqref{eq:poissonmeasure}. Let $\mathbf{B}^\star$ be the set of $(p,C)$-barycenters of $\mu^1,\dots,\mu^J$ and $\hat{\mathbf{B}}^\star$ the set of $(p,C)$-barycenters of $\hat{\mu}^1_{t_1,s_1},\ldots,\hat{\mu}^J_{t_J,s_J}$. Then, for $p\geq 1$ it holds that
\begin{align*}
    \mathbb{E}\left[ \sup_{\hat{\mu}^\star \in \hat{\mathbf{B}}^\star} \inf_{\mu^\star \in \mathbf{B}^\star} \KR_{p,C}^p(\hat{\mu}^\star,\mu^\star) \right]
    \leq\frac{p C^{p-1} }{V_{P}J}\easysum{i=1}{J} \mathcal{E}_{1,\X_i,\mu^i}^{\mathrm{Poi}}(C)\phi(t_i,s_i),
\end{align*}
where $\phi$ is defined as in \Cref{thm:frechetboundPoi}. The constant $V_P$ is strictly positive and given by
\begin{align*}
V_{P}\coloneqq V_{P}(\mu^1,\dots,\mu^J):=(J+1)\diam(\mathcal{X})^{-p}\underset{v \in V\backslash V^\star}{\min} \ \frac{c^Tv-f^*}{d_1(v,\mathcal{M})},
\end{align*}
where $V$ is the set of feasible vertices from the linear program in \Cref{app:sec_LP}, $V^\star$ is the subset of optimal vertices, $c$ is the cost vector of the program, $f^\star$ is the optimal value, $\mathcal{M}$ is the set of minimizers of the linear program \eqref{eq:lpbary} and $d_1(x,\mathcal{M})=\inf_{y\in \mathcal{M}} \lVert x-y \rVert_1$. %
\end{theorem}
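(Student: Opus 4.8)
The plan is to combine three ingredients: (1) the Fréchet-functional deviation bound from \Cref{thm:frechetboundPoi}, (2) a purely deterministic linear-programming stability estimate relating closeness in objective value of the $(p,C)$-barycenter LP to closeness (in $\ell_1$, hence in $\mathrm{KR}_{p,C}^p$ up to a diameter factor) of the corresponding feasible points to the optimal set, and (3) the fact established in \cite{heinemann2022kantorovich} that every $(p,C)$-barycenter — population or empirical — is supported in the finite restricted centroid set, so that the whole problem can be phrased as a single linear program with a fixed, data-independent ground set of potential support points. Throughout, $p>1$ is reduced to $p=1$ by the elementary inequality $|a^p-b^p|\le p\max\{a,b\}^{p-1}|a-b|$ applied with $a,b\le\min\{\diam(\Y),C\}$, which is exactly where the factor $p\min\{\diam(\Y),C\}^{p-1}$ enters; I will carry out the argument for $p=1$ and append this rescaling at the end.

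First I would set up the linear program of \Cref{app:sec_LP}: its variables encode the barycenter's masses on the full centroid set together with the sub-coupling masses for each of the $J$ transport problems, its cost vector $c$ is the concatenation of the ground costs $d^p$ and the penalty terms $C^p/2$, and the feasible polytope $\mathcal{P}$ has finitely many vertices $V$. Write $f(\cdot)=c^T\cdot$ for the objective, $f^\star$ its minimum, $\mathcal{M}$ the face of minimisers, and $V^\star=V\cap\mathcal{M}$. The empirical barycenter problem is the \emph{same} LP with a perturbed cost/right-hand side, and its optimal value is $f(\hat v^\star)$ for any empirical optimiser $\hat v^\star$; the population Fréchet value at $\hat\mu^\star$ equals $c^T\tilde v$ where $\tilde v$ is the population-feasible point obtained by keeping the empirical barycenter's support masses and re-optimising the couplings, so $c^T\tilde v\ge f^\star$ and $c^T\tilde v - f^\star \le |F_{p,C}(\hat\mu^\star)-F_{p,C}(\mu^\star)|$ after clearing the $1/J$ normalisation — this is the bridge to \Cref{thm:frechetboundPoi}.

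The deterministic core is the stability lemma: for \emph{any} feasible point $v$ of the LP one has $d_1(v,\mathcal{M}) \le V_P^{-1}(J+1)\diam(\X)^{-p}\,(c^Tv-f^\star)$ — equivalently, $c^Tv-f^\star \ge V_P\,(J+1)^{-1}\diam(\X)^p\, d_1(v,\mathcal{M})$ — where $V_P$ is precisely the quantity $(J+1)\diam(\X)^{-p}\min_{v\in V\setminus V^\star}(c^Tv-f^\star)/d_1(v,\mathcal{M})$ appearing in the statement; this is the classical Hoffman-type bound that the suboptimality of a polytope point is linearly lower-bounded by its distance to the optimal face, and it suffices to prove it at vertices (the worst ratio is attained there by a standard decomposition of a feasible point into a vertex of $\mathcal{M}$ plus a direction toward a suboptimal vertex) which is exactly how the $\min$ over $V\setminus V^\star$ arises. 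Strict positivity of $V_P$ is immediate: for $v\in V\setminus V^\star$ one has $c^Tv>f^\star$ and $d_1(v,\mathcal{M})<\infty$, and the minimum is over a finite set. Applying this to $v=\tilde v$ and projecting its barycenter-block onto the support coordinates gives $\|\hat\mu^\star-\mu^\star_{\mathrm{proj}}\|_1 \le d_1(\tilde v,\mathcal{M})$ for the nearest $\mu^\star\in\mathbf B^\star$, and then $\mathrm{KR}_{p,C}^p(\mu^\star,\hat\mu^\star)\le C^p/2\cdot\mathrm{TV}(\mu^\star,\hat\mu^\star)$ or the cruder $\le\diam(\Y)^p\|\cdot\|_1$ bound (whichever the constant bookkeeping demands — here the $\diam(\X)^{-p}$ in $V_P$ cancels the $\diam$ factor) controls the left-hand side; taking $\sup_{\hat\mu^\star}\inf_{\mu^\star}$, then expectations, and inserting \Cref{thm:frechetboundPoi} yields the claim.

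The main obstacle I anticipate is item (2) done \emph{correctly}: making the LP formulation precise enough that ``the empirical and population barycenter problems are the same LP with perturbed data'' is literally true, so that $\tilde v$ is genuinely feasible and $c^T\tilde v-f^\star$ is genuinely bounded by the Fréchet deviation — this requires the fixed centroid-set reduction and care that re-optimising couplings for a fixed barycenter does not leave the polytope. The secondary subtlety is the direction of the Hoffman inequality and the precise constant: one must verify that projecting the high-dimensional LP point down to the barycenter's mass coordinates only decreases $\ell_1$ distance (true, since $\|\cdot\|_1$ of a subvector is at most $\|\cdot\|_1$ of the whole), and that the factor $(J+1)$ and the $\diam(\X)^{-p}$ scaling are the ones that survive the normalisations — everything else (Jensen, the $p$-power reduction, plugging in $\phi(t_i,s_i)$) is routine.
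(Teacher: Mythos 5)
Your skeleton is the paper's: bound the Fréchet suboptimality of any empirical barycenter via \Cref{thm:frechetboundPoi}, then convert that suboptimality into $KR_{p,C}^p$-distance to the population optimal set through a deterministic LP stability estimate whose constant is $V_P$. The paper executes this by lifting to the balanced problem on the augmented space $\Y\cup\{\dum\}$, where $KR_{p,C}^p=\tilde{OT}_p^p$ and the barycenter sets correspond via $\phi_{\mu^1,\dots,\mu^J}$ (display \eqref{eq:wsthm_reform}), and then \emph{cites} the stability estimate as an adapted Lemma~3.8 of \cite{heinemann2022randomized} (\Cref{lem:slope_bound}) rather than proving it. Your plan to derive that estimate directly — any feasible point's suboptimality is at least $\kappa:=\min_{v\in V\setminus V^\star}(c^Tv-f^\star)/d_1(v,\mathcal{M})$ times its $\ell_1$-distance to $\mathcal{M}$, proved by decomposing the point into vertices and using convexity of $\mathcal{M}$ — is sound and is essentially the content of the omitted lemma; likewise your observations that the empirical barycenters are supported in the population centroid set (since $\mathrm{supp}(\hat\mu^i_{t_i,s_i})\subseteq\X_i$ in the Poisson model), that lifting $\hat\mu^\star$ with the population augmentation makes it feasible for the LP of \Cref{app:sec_LP}, and that the Fréchet comparison is pathwise uniform in the choice of empirical minimiser (so the supremum over $\hat{\mathbf{B}}^\star$ can be taken before the expectation) are exactly the points the paper's proof uses, mostly implicitly.

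The genuine gap is in the constant, i.e.\ in how you pass from $d_1$-distance of LP points to $KR^p$-distance of barycenters. Your stated mechanism — ``projecting onto the barycenter mass coordinates only decreases $\ell_1$'' — goes the wrong way for the factor $(J+1)$: it only gives $\lVert a-a_w\rVert_1\le d_1(v,\mathcal{M})$, hence $c^Tv-f^\star\ge\kappa\,\lVert a-a_w\rVert_1$, and after the TV-to-transport conversion (retain common mass, move the rest at cost at most a capped-diameter-to-the-$p$ per unit, with the transported amount $\tfrac12\lVert a-a_w\rVert_1$) you would prove the theorem only with $V_P$ replaced by $V_P/(J+1)$, a strictly weaker statement than the one claimed. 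The $(J+1)$ comes from the \emph{opposite} inequality: if $v$ carries the couplings optimal for the weight vector $a$ and $w\in\mathcal{M}$ is the nearest optimal point with weight block $a_w$, then every one of the $J$ coupling blocks in \eqref{eq:lpbary} has first marginal $a$ resp.\ $a_w$, and marginalisation is an $\ell_1$-contraction, so
\begin{align*}
\lVert v-w\rVert_1=\lVert a-a_w\rVert_1+\sum_{i=1}^J\lVert \pi^{(i)}-\pi^{(i)}_w\rVert_1\ \ge\ (J+1)\,\lVert a-a_w\rVert_1 .
\end{align*}
Combining $c^Tv-f^\star\ge\kappa\lVert v-w\rVert_1\ge\kappa(J+1)\lVert a-a_w\rVert_1$ with $\tilde{OT}_p^p(\tilde\mu,\tilde\mu^\star)\le\tfrac{\diam(\X)^p}{2}\lVert a-a_w\rVert_1$ yields the slope bound $F_{p,C}(\hat\mu^\star)-F_{p,C}(\mu^\star)\ge 2V_P\,KR_{p,C}^p(\mu^\star,\hat\mu^\star)$, and it is this factor $2$ that cancels against the $2$ in \Cref{thm:frechetboundPoi} to leave the stated $p\min\{\diam(\Y),C\}^{p-1}/(V_PJ)$. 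Without the marginal-contraction step your argument is correct but does not recover the theorem with the stated $V_P$.
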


\begin{remark}
Let us comment on our convergence result for barycenters. 
\begin{enumerate}
	\item \Cref{thm:kr_boundPoi} implies that every empirical barycenter tends for $\min_{i \in \{1, \dots, J\}}t_i \to \infty$ and $\min_{i \in \{1, \dots, J\}}s_i \to 1$ with parametric rates towards a suitable population barycenter. This is practically relevant because for discrete domains, non-uniqueness of empirical and population barycenters generally cannot be ruled out. Insofar, our result guarantees that every estimator admits satisfactory convergence properties. 
	\item As an extension of \Cref{thm:kr_boundPoi} we deem it worthwhile to extend the analysis  to the Hausdorff distance $\mathcal{H}_{\KR_{p,C}}$ induced by the $(p,C)$-KRD, i.e., to establish convergence statements for empirical barycenters  $\hat{\mathbf{B}}^\star$ to population barycenters ${\mathbf{B}}^\star$, 
    \begin{align*}  \mathcal{H}_{\KR_{p,C}}^p\left(\hat{\mathbf{B}}^\star, {\mathbf{B}}^\star\right) = 
    \max\left(\sup_{\hat{\mu}^\star \in \hat{\mathbf{B}}^\star} \inf_{\mu^\star \in \mathbf{B}^\star} \KR_{p,C}^p(\mu^\star,\hat{\mu}^\star), \sup_{\mu^\star \in \mathbf{B}^\star} \inf_{\hat{\mu}^\star \in \hat{\mathbf{B}}^\star} \KR_{p,C}^p(\mu^\star,\hat{\mu}^\star) \right) .
\end{align*}
Such a convergence statement in  the Hausdorff distance would assert, in addition to point 1., that for every population barycenter there exists an empirical barycenter which approximates it. Arguing as in \Cref{app:barycenters_proofs}, one can see that \begin{align*}
    \hat V_P\sup_{\mu^\star \in \mathbf{B}^\star} \inf_{\hat{\mu}^\star \in \hat{\mathbf{B}}^\star} \KR_{p,C}^p(\mu^\star,\hat{\mu}^\star) \leq  \lvert \hat F_{p,C}(\hat{\mu}^\star)- \hat F_{p,C}(\mu^\star) \rvert,
\end{align*}
where $ \hat V_P$ is defined analogously to $V_P$ but with each $\mu^i$ replaced by the estimator $\hat\mu^i_{t_i, s_i}$. Controlling $ \hat V_P$ to infer quantitative convergence results, however, seems rather challenging, and we leave a refined analysis of the underlying constant for future research. 
\end{enumerate}
\end{remark}

\begin{remark}\label{rmk:BarycenterSharp}
For $J=1$ and any $p\geq 1, C>0$ the $(p,C)$-barycenter of $\mu^1$ is just $\mu^1$. Thus, the optimal value of the Fr\'echet functional is zero, and it holds 
\[
    F_{p,C}(\hat{\mu}^*)-F_{p,C}(\mu^*)=\KR_{p,C}^p(\mu^1,\hat{\mu}^1_{t,s}).
\]
Consequently, it also holds
\[
    \sup_{\hat{\mu}^\star \in \hat{\mathbf{B}}^\star} \inf_{\mu^\star \in \mathbf{B}^\star} \KR_{p,C}^p(\mu^*,\hat{\mu}^*)=\KR_{p,C}^p(\mu^1,\hat{\mu}^1_{t,s}).
\]
Thus, the rate for the convergence of the $(p,C)$-barycenter of the empirical measures, can in general not be faster than the convergence rate of a single estimator. In particular, the rates in $t$ in \Cref{thm:frechetboundPoi} and \Cref{thm:kr_boundPoi} are sharp.
\end{remark}

\begin{remark}
  While this work is only concerned with plug-in estimators for $(p,C)$-barycenters between general measures, there are also various other notions of barycenters between general measures \cite{chizat2018scaling,friesecke2021barycenters}. Exploring the statistical properties of these alternative barycenters presents an interesting venue for future research.
\end{remark}

\section{Application: Randomized Computation with Statistical Guarantees}\label{sec:applications}
 \label{sec:rand_comp}
In this section we discuss how our derived bounds enable randomized computations of UOT quantities with statistical guarantees. If the size of the population measures is computationally infeasible, then empirical versions of these measures can be used as a proxy for the population distance, plan and barycenter. Our bounds allow tuning the problem size (hence computational time) against the accuracy of the approximation. While  all three models allow this approximation approach, we exemplify this for the multinomial model. In this resampling scenario, the assumption of known total intensities amounts to have access to the full data set.
Here, the sample size $N$ provides a strict upper bound on the computational complexity of a given approximation. 
This is though not the case for the other two models, where such bounds can only be obtained by involving subsampling scheme instead of a resampling one (subsampling refers to sampling without replacement, while resampling refers to sampling with replacement).  We note however that subsampling approach is typically outperformed by the resampling one.

\begin{figure}[b!]
  \centering
  \begin{tabular}{ccc}
        \includegraphics[width=0.31\textwidth]{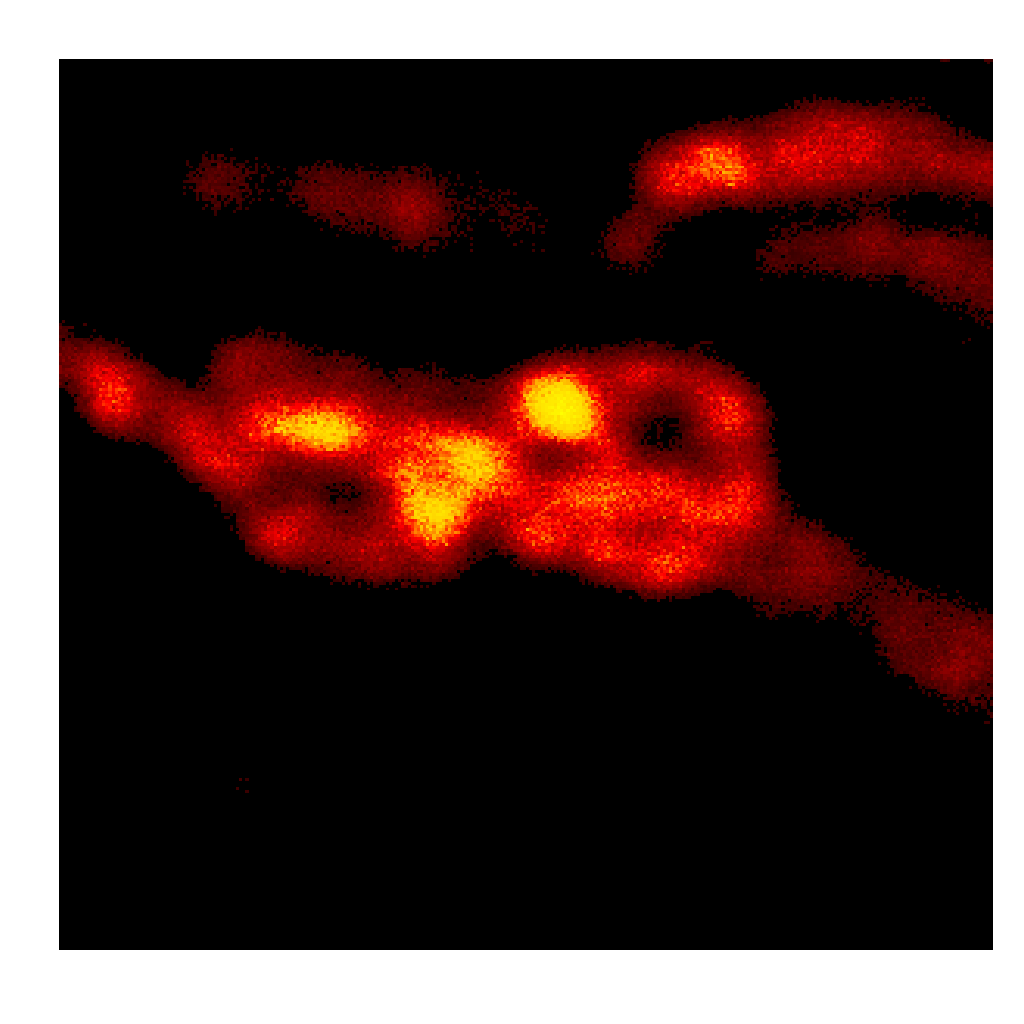} &     
        \includegraphics[width=0.31\textwidth]{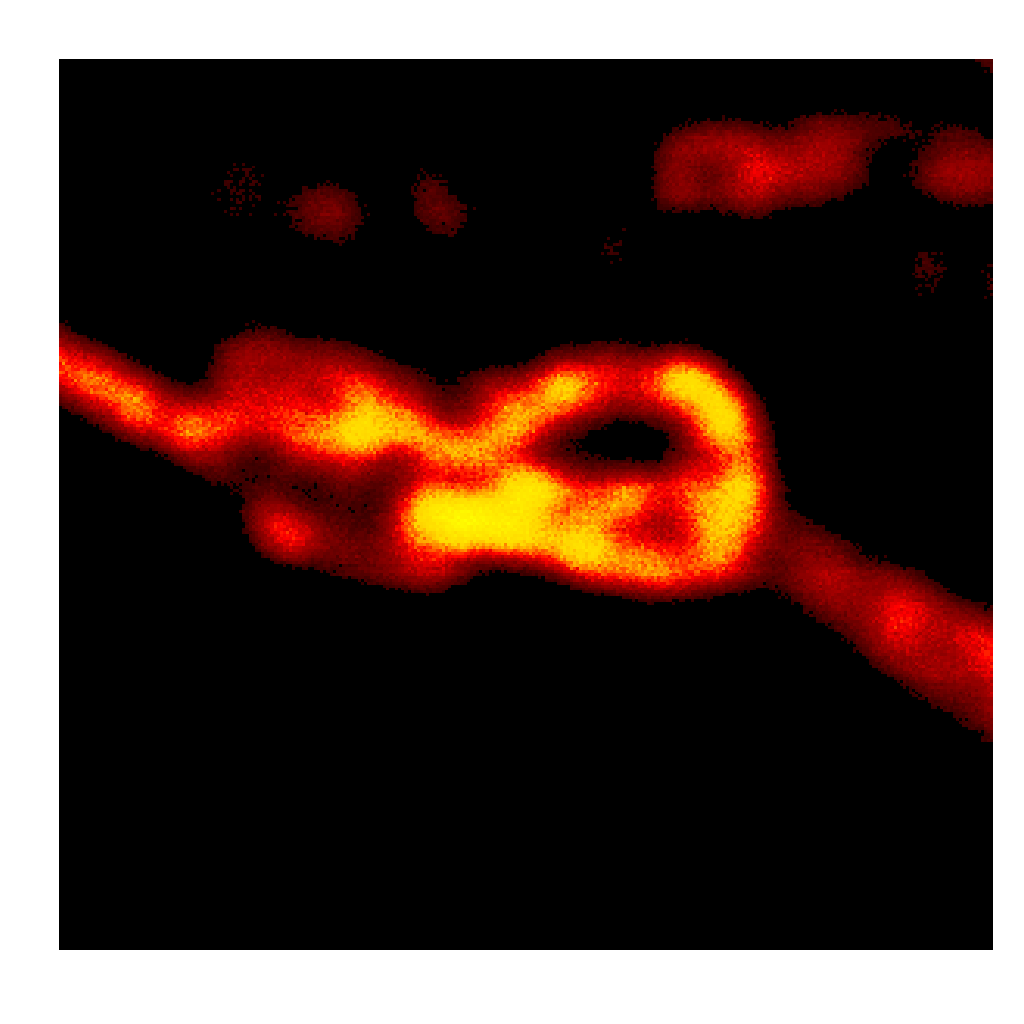} &
      \includegraphics[width=0.31\textwidth]{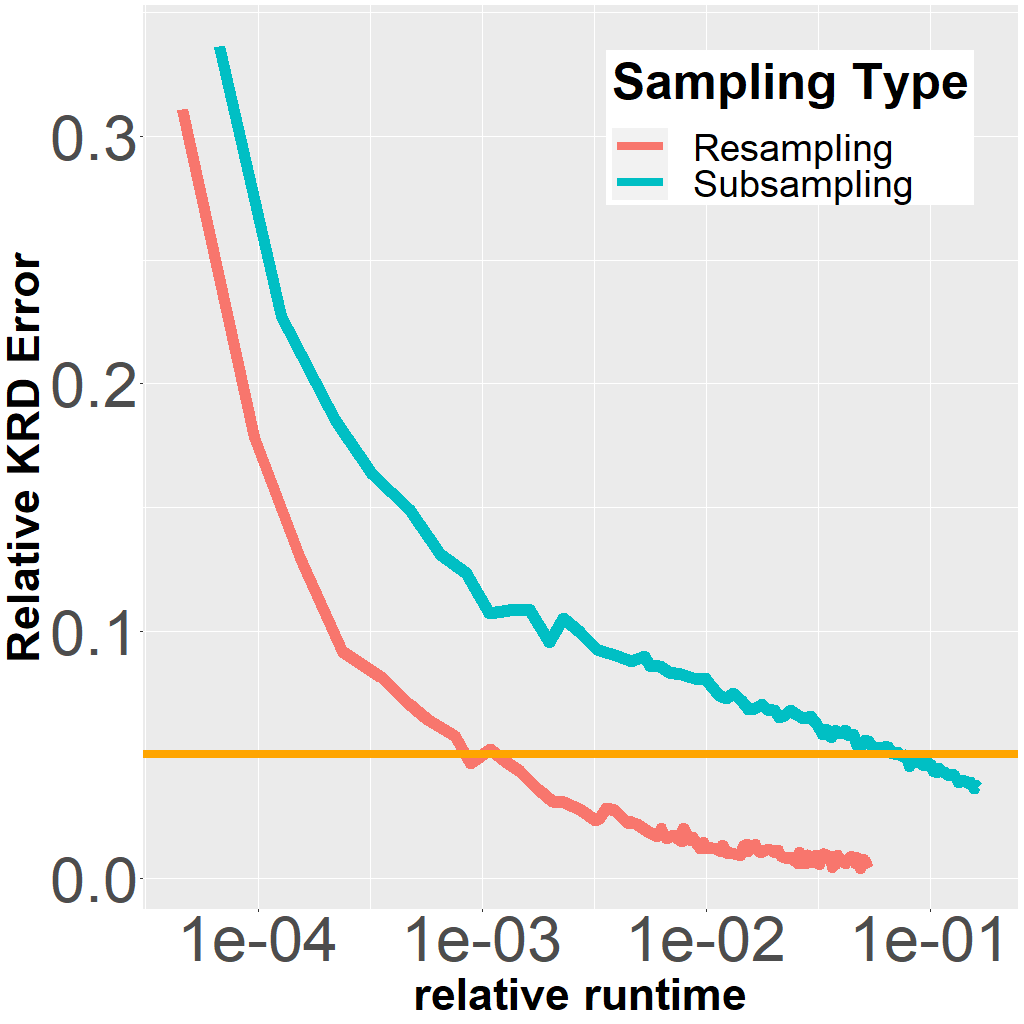} 
  \end{tabular}
  \caption{\textbf{Left:} An excerpt of size $300\times 300$ from the STED microscopy data of adult Human Dermal Fibroblasts labelled at MIC60 (a mitochondrial inner membrane complex, see \cite{tameling2021colocalization}). \textbf{Center:} The same type of data labelled at TOM20 (translocase of the outer mitochondrial membrane), see \cite{tameling2021colocalization}. \textbf{Right:} The expected relative $(2,0.1)$-KRD error curves obtained for the two images on the left and in the center from re- and subsampling for specified runtimes (computed with the CRAN package \emph{WSGeometry}). The orange line indicates an error level of $5\%$. The respective sample sizes are between $100$ and $9000$ for both approaches, while the original images have about $23000$ non-zero pixels.}
    \label{fig:re_vs_sub_time}
  \end{figure}

Indeed, for the subsampling scheme we replace the i.i.d.\ samples $X_1,\dots,X_N$ from $\mu$ by ones drawn without replacement. A natural choice of estimator is 
\begin{align}\label{eq:subsamp}
    \tilde {\mu}_{N}=\textstyle \frac{\mathbb{M}(\mu)}{\sum_{i=1}^N \mu(X_i)}\sum_{i=1}^N \mu(X_i)\delta_{X_i},
\end{align}
where the mass at each drawn location $x\in \X$ is proportional to the mass of the population measure at $x$ and the total mass intensity is rescaled to the known, true total intensity. This estimator is, due to the sampling without replacement, guaranteed to have $N$ support points, which yields close control on the required runtime for a given approximation. In recent years, this approach has become popular within the machine learning community where it is referred to as mini-batch OT \citep{fatras2021minibatch,nguyen2021transportation}. An illustration of the potential runtime advantages using the suggested randomized methods is displayed in \Cref{fig:re_vs_sub_time}. For this example, the resampling approach provides an expected relative KRD error of about $5\%$ while achieving a speedup of about a factor of $1000$ compared to the original runtime, while the subsampling approach requires nearly $10\%$ of the original runtime to achieve the same accuracy. A more detailed comparison of the empirical performance of the re- and subsampling model is found in \Cref{sec:realdata}. Though, we note that, in the considered data examples, the resampling approach consistently performed better than the subsampling one. We also study the convergence properties of the empirical measure and barycenter with respect to the $(p,C)$-KRD for all three described models in extended simulation studies on a wide range of synthetic datasets, again further in \Cref{sec:sims}. %

\section{Simulations}\label{sec:sims}
In this section we investigate empirically the decay in the expected error for the Poisson model for measures within $\X\subset [0,1]^2$. For the $(p,C)$-KRD we consider two measures $\mu,\nu \in \msrX$ and the \emph{relative $(p,C)$-KRD error}\footnote{We define $0/0:=0$.\label{fn0}}
\begin{align}\label{eq:relKR}
    \mathbb{E}\left[\left\lvert \frac{\KR_{p,C}(\hat{\mu}_{t,s},\hat{\nu}_{t,s})-\KR_{p,C}(\mu,\nu)}{\KR_{p,C}(\mu,\nu)} \right\rvert\right].
\end{align}
For the setting of barycenters we consider the \emph{relative $(p,C)$-Fr\'echet error}\cref{fn0}
\begin{align}\label{eq:relFre}
    \mathbb{E}\left[ \frac{F_{p,C}(\hat{\mu}^*)-F_{p,C}(\mu^*)}{F_{p,C}(\mu^*)} \right].
\end{align}
In both cases, the relative error allows for easier comparisons between models than the absolute error. In particular, since $\KR^p_{p,C}(\mu, \nu) = \frac{C^p}{2} \mathrm{TV}(\mu, \nu)$ \cite[Theorem 2]{heinemann2022kantorovich} for $C\searrow 0$ and $\mu \neq \nu$, it follows that the relative $(p,C)$-KRD error enables an easier comparison of the estimation error among different choices of $C$. Additionally, for the $(p,C)$-barycenter the relative $(p,C)$-Fr\'echet error in \eqref{eq:relFre} is readily available from simulations, while numerically considering the quantity in \Cref{thm:kr_boundPoi} is difficult, as it requires all optimal solutions instead of a single one. All computations of the KRD and the $(p,C)$-barycenter in this section are performed using the methods available in the CRAN package \href{https://github.com/cran/WSGeometry}{\emph{WSGeometry}}.

\subsection{Synthetic Datasets}
We consider multiple types of measures for our simulations. Below we describe four types; Appendix \ref{app:additionalSim} contains four additional types and the respective Poisson simulations are provided in \Cref{sec:addfigpoi}. Analogous simulations for Multinomial Sampling and Bernoulli Sampling are detailed in Appendices \ref{app:sim_mult} and \ref{app:sim_ber}, respectively. 

Let us fix some notation. Let $J\in \mathbb{N}$ be the number of measures generated. Let $U[0,1]^2$ denote the uniform distribution and let $\mathrm{Poi}(\lambda)$ denote a Poisson distribution with intensity $\lambda$. In all settings considered below, the measures are of the form
\[
\mu^i=\sum_{k=1}^{K_i}w^i_k\delta_{l^i_k}
\]
for some weights $w^i_k$, locations $l^i_k$ and $K_i\in \mathbb{N}$. If $K_i=K_j$ for all $i,j=1,\dots ,J$, then we omit the index and denote the number of points by $K$. Note that all measures have been constructed to have their support included in $[0,1]^2$.

\subsubsection*{Nested Ellipses (NE), see \Cref{fig:ellipsedata} (a)}
Let $G_1,\dots ,G_J\sim U\{1,2,3,4,5\}$ and let $K_i=MG_i$ for $M\in \mathbb{N}$. Set all $w_k^i$ equal to $1$ for each $1\leq k \leq K_i$ for $i=1,\dots,J$. Let $t_1,\dots ,t_M$ be a discretization of $[0,2\pi]$. Let $U^i_1,\dots,U^i_K,V^i_1,\dots,V^i_K\sim U[0.2,1]$. For $1\leq i\leq J$, set
\[
l^i_{M(j_i-1)+k}=0.5(1+3^{-j}(U_{M(j_i-1)+k}\sin(t_k),V_{M(j_i-1)+k}\cos(t_k))^T), \quad j_i=0,\dots ,G_i.
\]
\subsubsection*{Clustered Nested Ellipses (NEC), see \Cref{fig:ellipsedata} (b)}
Let $G^c_1,\dots ,G^c_J\sim \mathrm{Poi}(\lambda_c)$ for $c=1,\dots ,5$. Take $\lambda_3=2$ and set $\lambda_c=1$ else. Let $K_i=M\sum_{c=1}^{5}G^c_i$. Set $w_k^i$ equal to $1$ for $1\leq k\leq K_i$ for $i=1,\dots,J$. Let $t_1,\dots ,t_M$ be a discretization of $[0,2\pi]$. Choose $U^i_1,\dots,U^i_K,V^i_1,\dots,V^i_K\sim U[0.2,1]$. Let $\alpha=(2,12,12,22,12)^T$ and $\beta=(12,2,12,12,22)^T$. Set for $c=1,\dots,5$ and $j_i=0,\dots ,G_i^c$
\[
l^i_{M\left(\sum_{r=1}^{c-1}G^r_i\right)+M(j_i-1)+k}=\frac{1}{24}((3^{-j}U_{M(j_i-1)+k}\sin(t_k)+\alpha_c,3^{-j}V_{M(j_i-1)+k}\cos(t_k)+\beta_c))^T,
\]
where we use the convention that a sum is zero if its last index is smaller than its first one.
\begin{figure}[H]
  \centering
  \subfloat[][]{\includegraphics[width=0.45\linewidth]{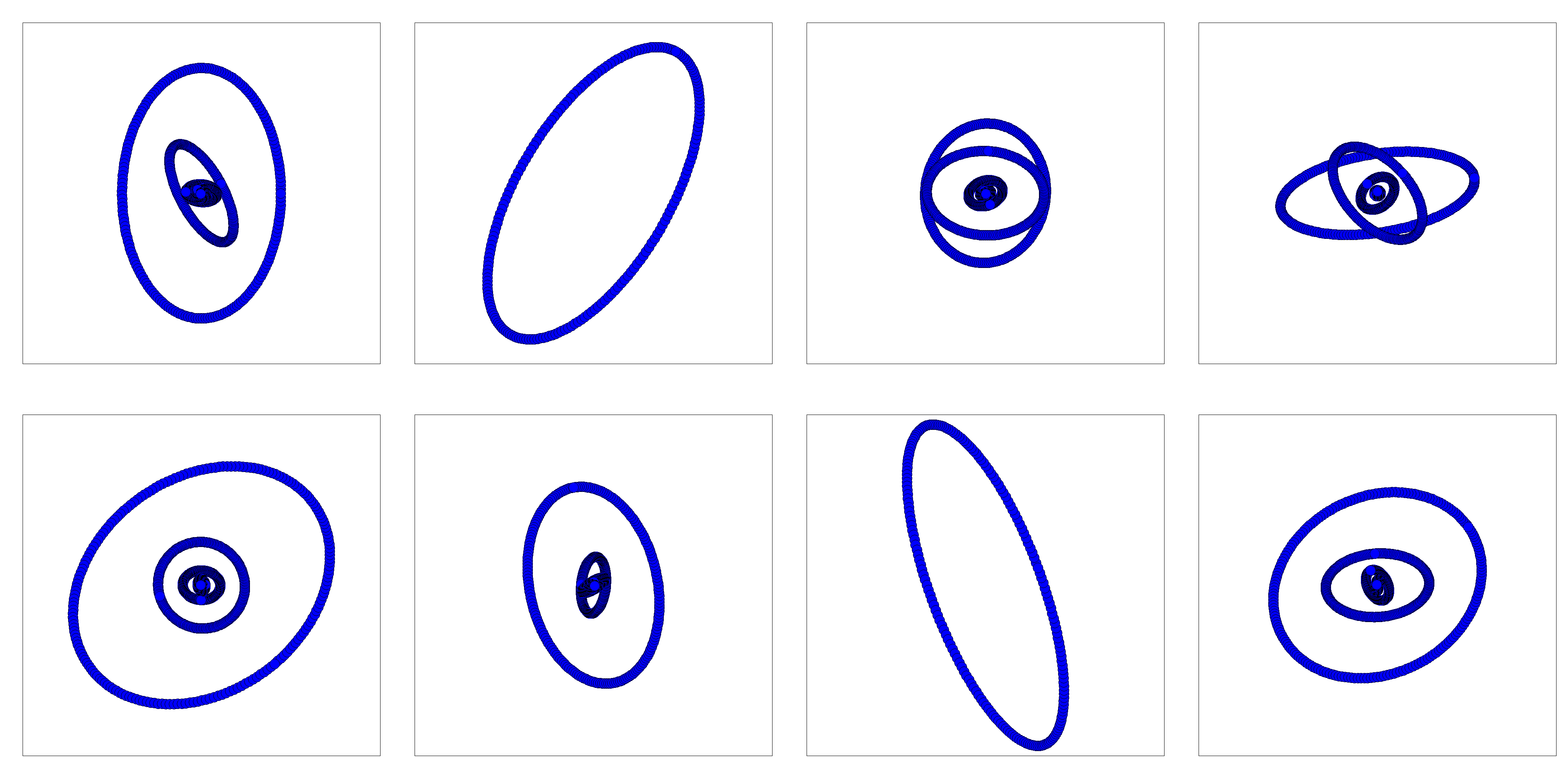}}%
  \qquad
  \subfloat[][]{\includegraphics[width=0.45\linewidth]{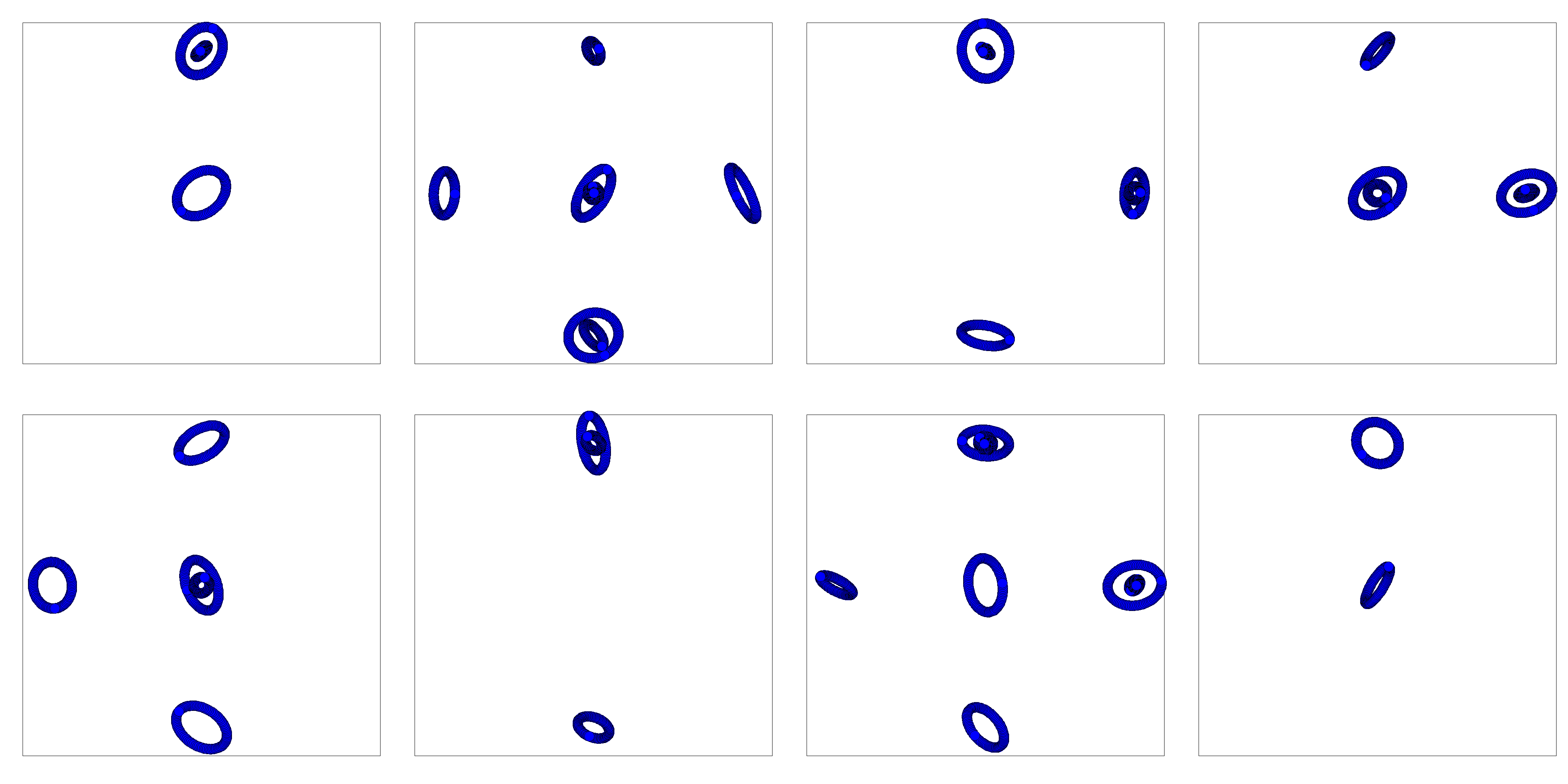}}%
  \caption{\textbf{(a)} An example of $J=8$ measures from the \emph{NE} dataset with $M=200$. \textbf{(b)} An example of $J=8$ measures from the \emph{NEC} dataset with $M=95$.}%
  \label{fig:ellipsedata}
\end{figure}

\subsubsection*{Poisson Intensities on Uniform Positions (PI), see \Cref{fig:poissondata} (a)}
Set $K=M$ for $M\in \mathbb{N}$ and let $w^i_1,\dots ,w^i_K\sim \mathrm{Poi}(\lambda)$ and some intensity $\lambda>0$ and $l^i_1,\dots,l^i_K \sim U[0,1]^2$ for $1\leq i\leq J$.
\subsubsection*{Poisson Intensities on a Grid (PIG), see \Cref{fig:poissondata} (b)}
Set $K=M^2$ for $M\in \mathbb{N}$ and let $w^i_1,\dots ,w^i_{M^2}\sim \mathrm{Poi}(\lambda)$ and $l^i_1,\dots,l^i_{M^2}$ be the locations of equidistant $M\times M$ grid points in $[0,1]^2$ for $1\leq i\leq J$.
\begin{figure}[H]
  \centering
  \subfloat[][]{\includegraphics[width=0.45\linewidth]{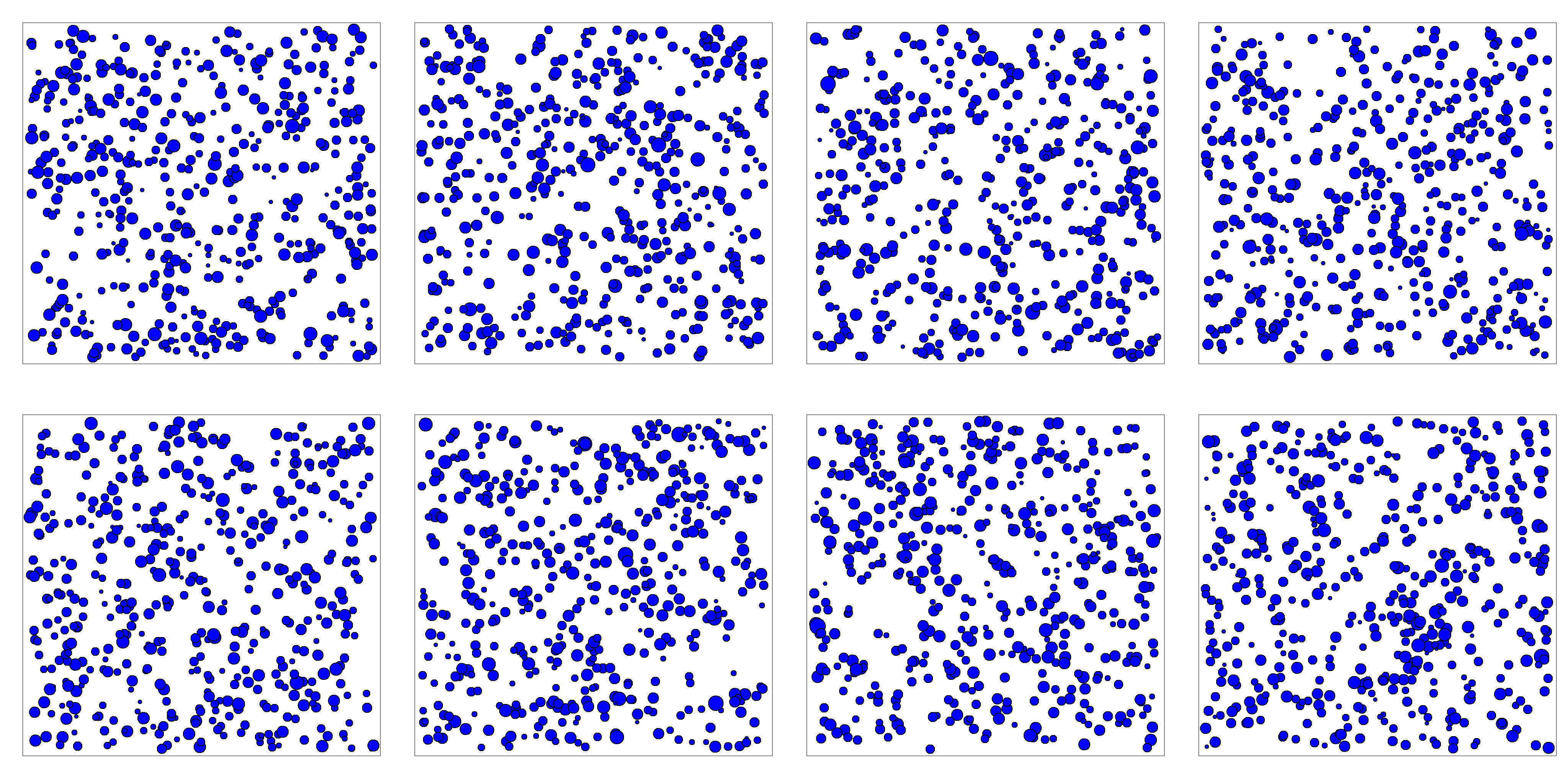}}%
  \qquad
  \subfloat[][]{\includegraphics[width=0.45\linewidth]{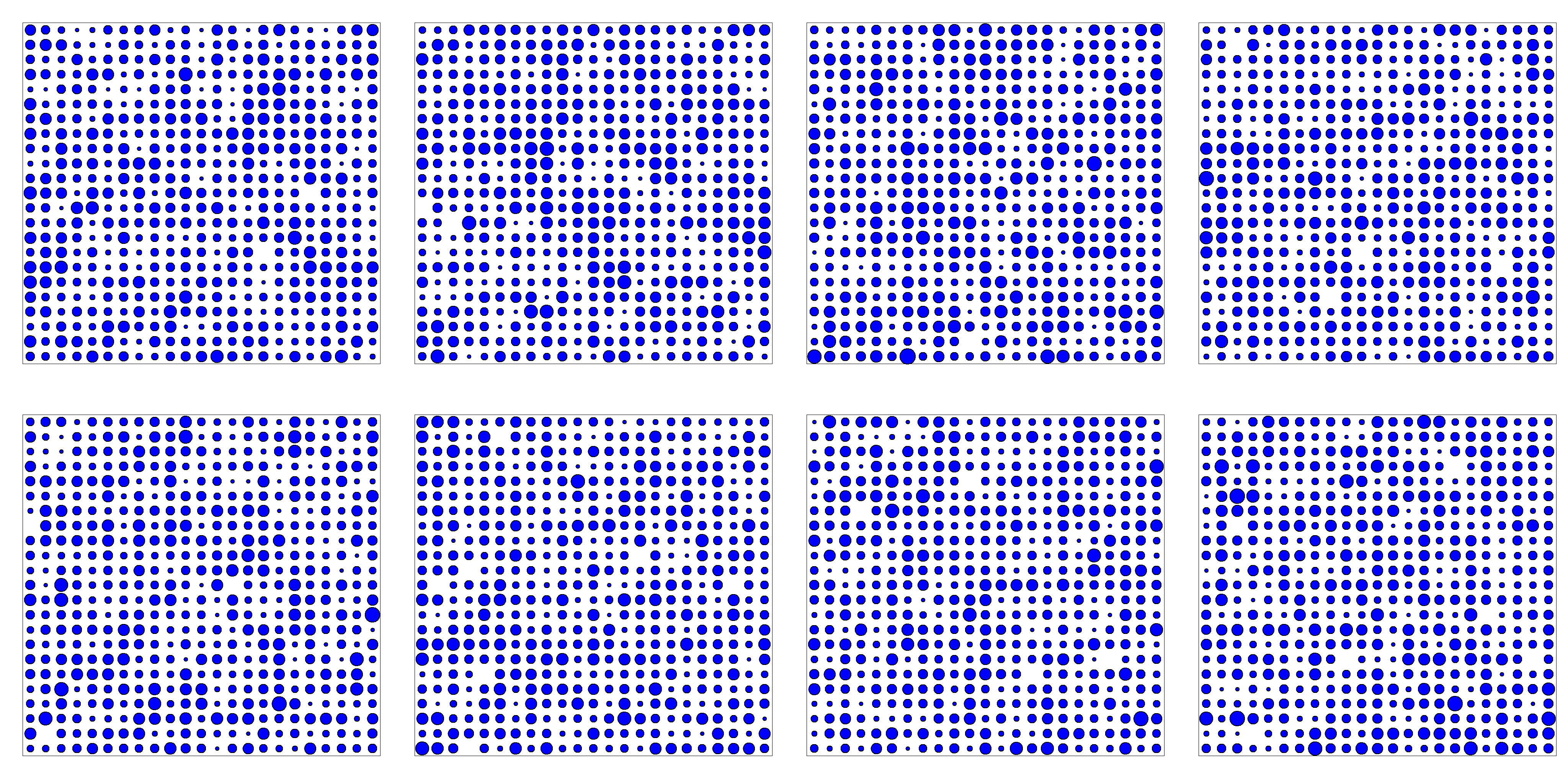}}%
  \caption{\textbf{(a)} An example of $J=8$ measures from the \emph{PI} dataset with $M=500$ and $\lambda=5$. \textbf{(b)} An example of $J=8$ measures from the \emph{PIG} dataset with $M=23$ and $\lambda=5$.}%
  \label{fig:poissondata}
\end{figure}

\subsection{Simulation Results for the $\mathbf{(2,C)}$-Kantorovich-Rubinstein Distance}

In the following, we discuss the results from our simulation studies for the Poisson model for the $(2,C)$-KRD between two measures within one of the eight classes of measures introduced above. %
For the error of the NE class in \Cref{fig:dist_poiNE} the error is decreasing in $s$ and $t$, but increasing in $C$. Both of these behaviors are in line with the bound in \Cref{thm:samplingboundKR}. The decrease of the error for increasing $s$ and $t$ is immediately clear from our theoretical results. The increase of error for increasing $C$ is based on the fact that in the Poisson model the population total intensities of $\mu$ and $\nu$ are unknown and have to be estimated from the data. The $(p,C)$-KRD penalizes mass deviation with a factor scaling with $C$, so naturally for increasing $C$, the errors in the estimation of the true difference of masses yields an increase in the expected relative $(p,C)$-KRD error.

\begin{figure}[b!]
  \centering
\begin{tabular}{cc}
      \includegraphics[width=0.45\textwidth]{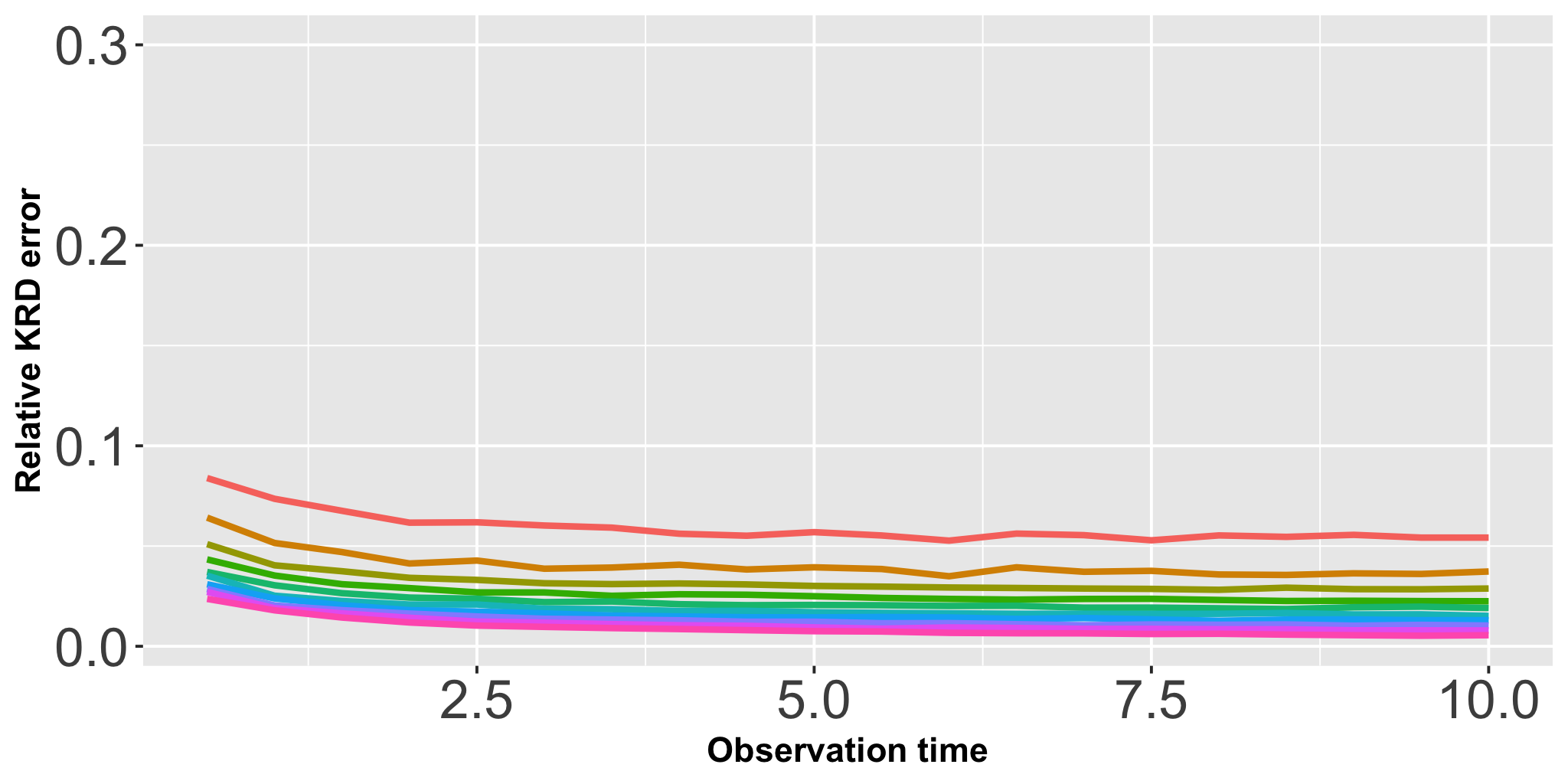} &       \includegraphics[width=0.45\textwidth]{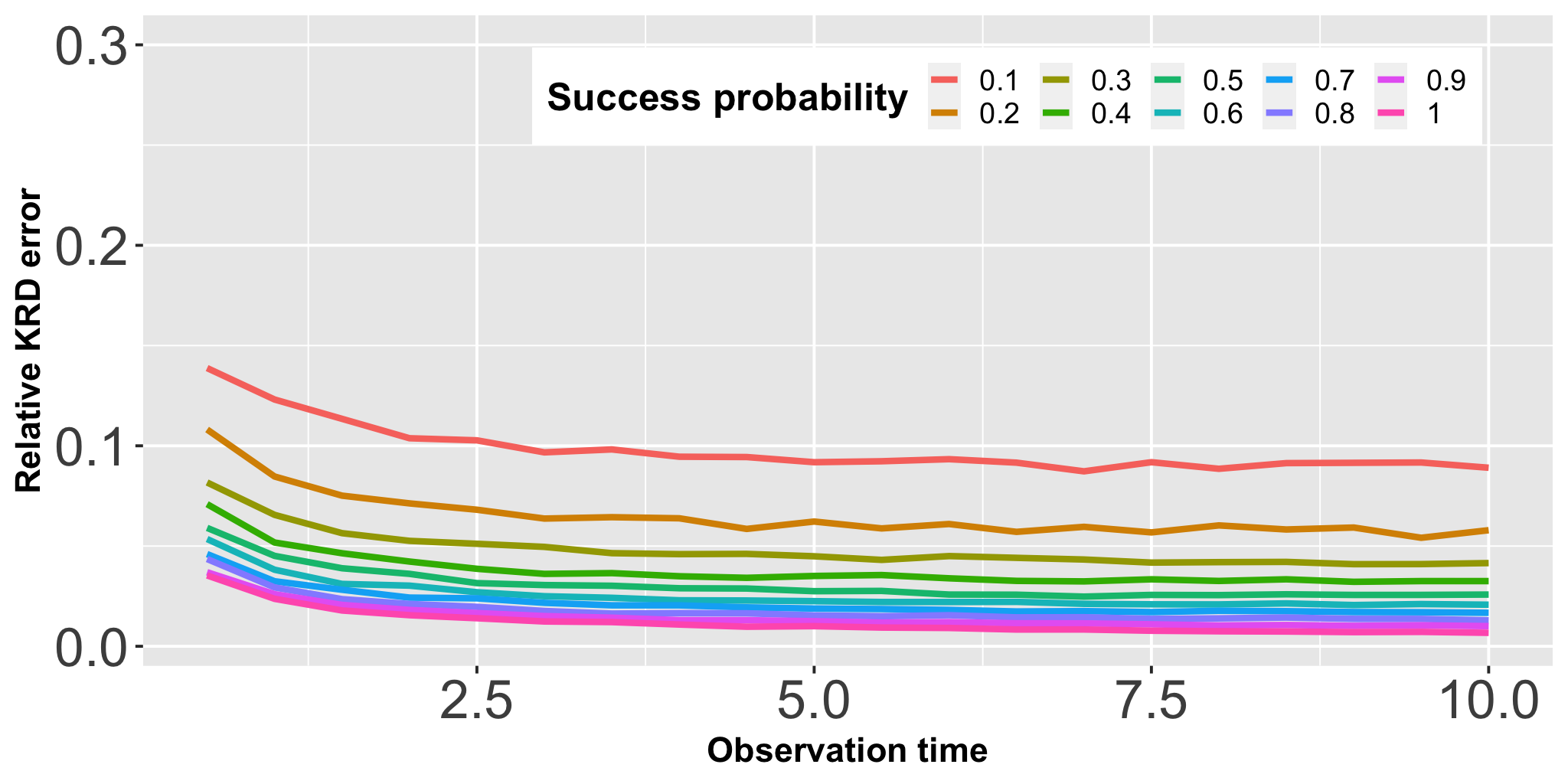} \\
    \includegraphics[width=0.45\textwidth]{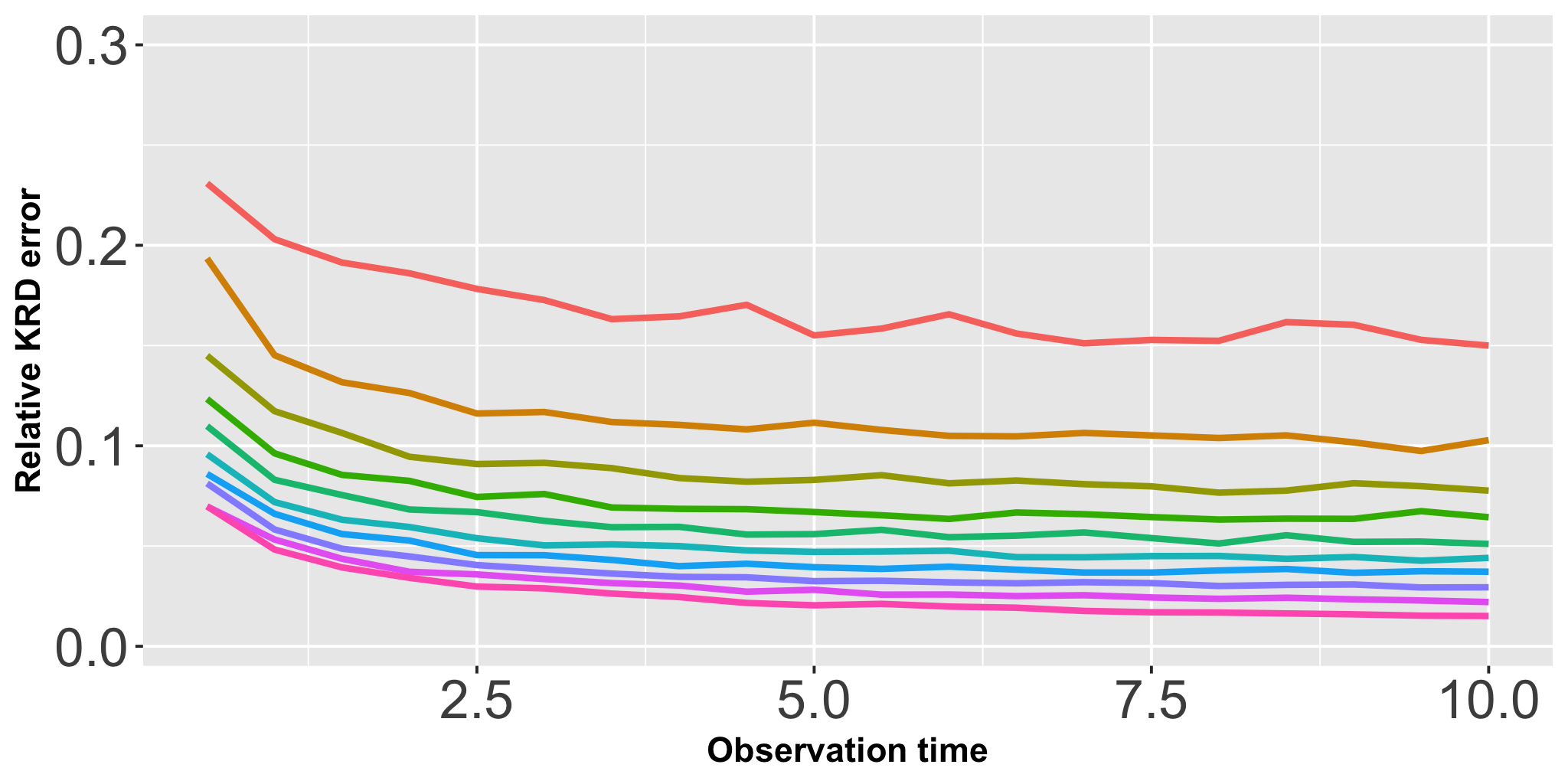} &       \includegraphics[width=0.45\textwidth]{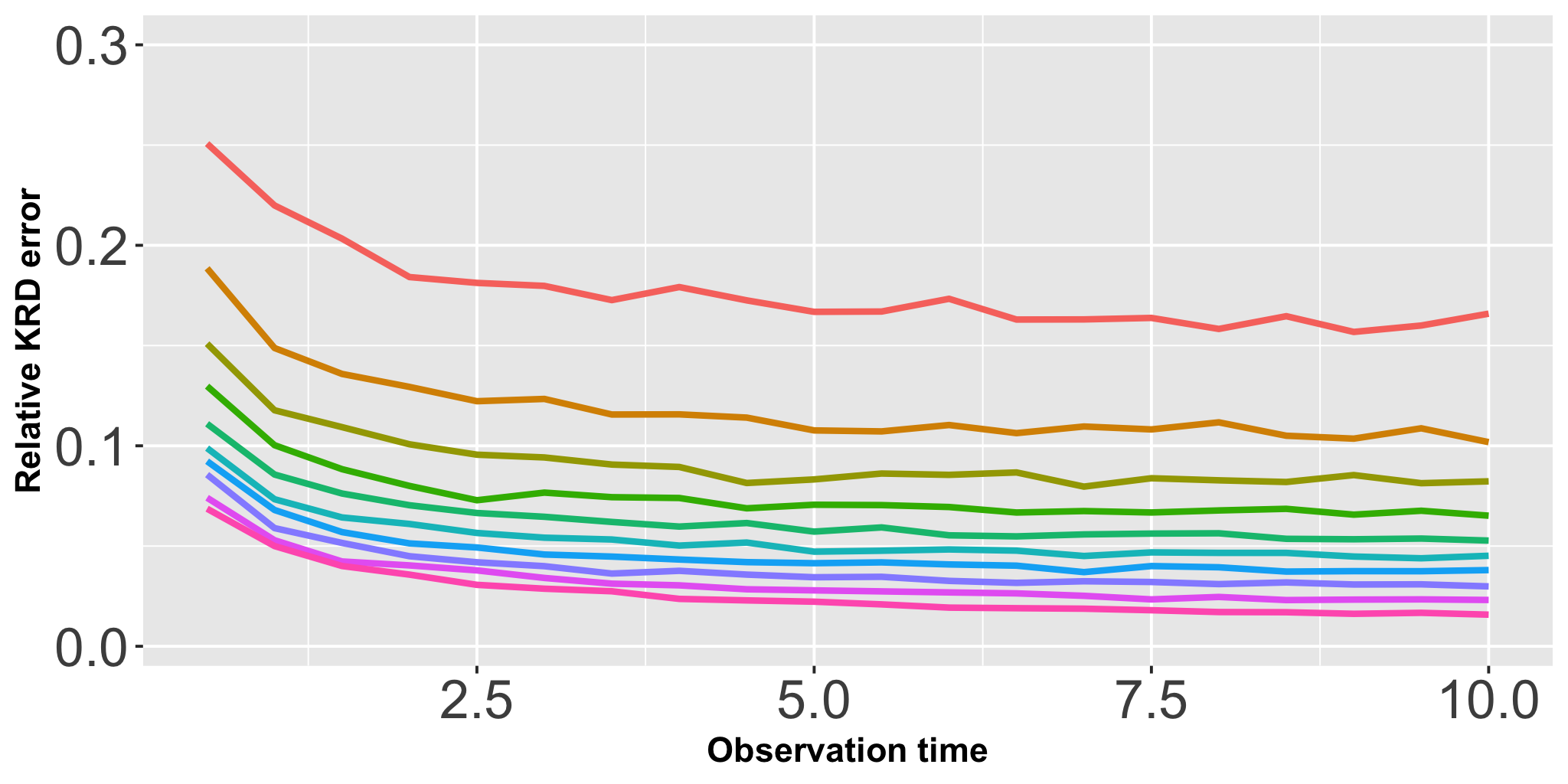} \\
\end{tabular}
\caption{Expected relative $(2,C)$-KRD error for two measures in the Poisson sampling model for the NE class with $M= 100$ and different success probabilities $s$. For each pair of success probability $s$ and observation time $t$ the expectation is estimated from $1000$ independent runs. %
From top-left to bottom-right we have $C=0.01,0.1,1,10$, respectively.}
  \label{fig:dist_poiNE}
\end{figure}

\begin{figure}[h!]
  \centering
\begin{tabular}{cc}
      \includegraphics[width=0.45\textwidth]{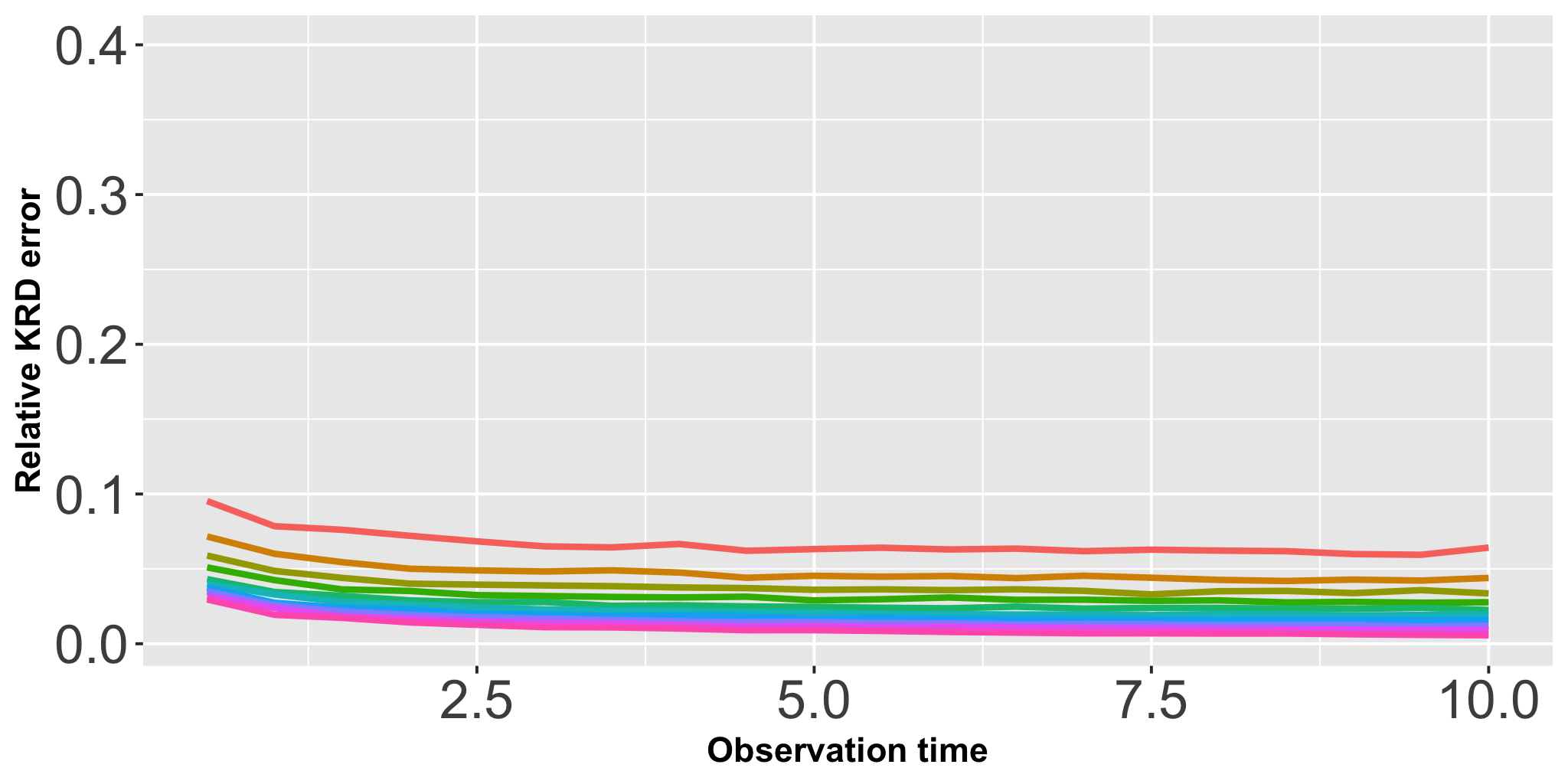} &       \includegraphics[width=0.45\textwidth]{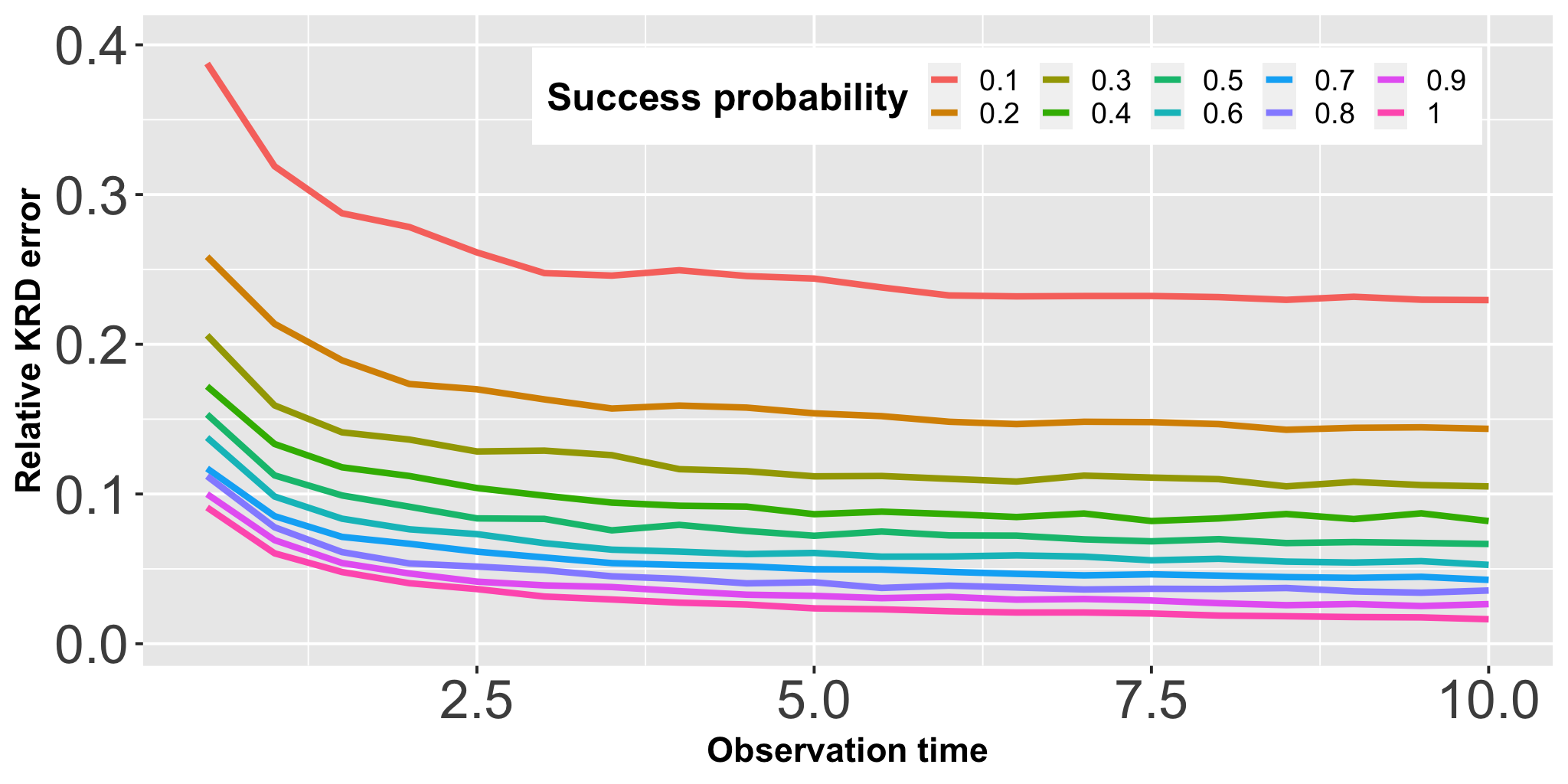} \\
    \includegraphics[width=0.45\textwidth]{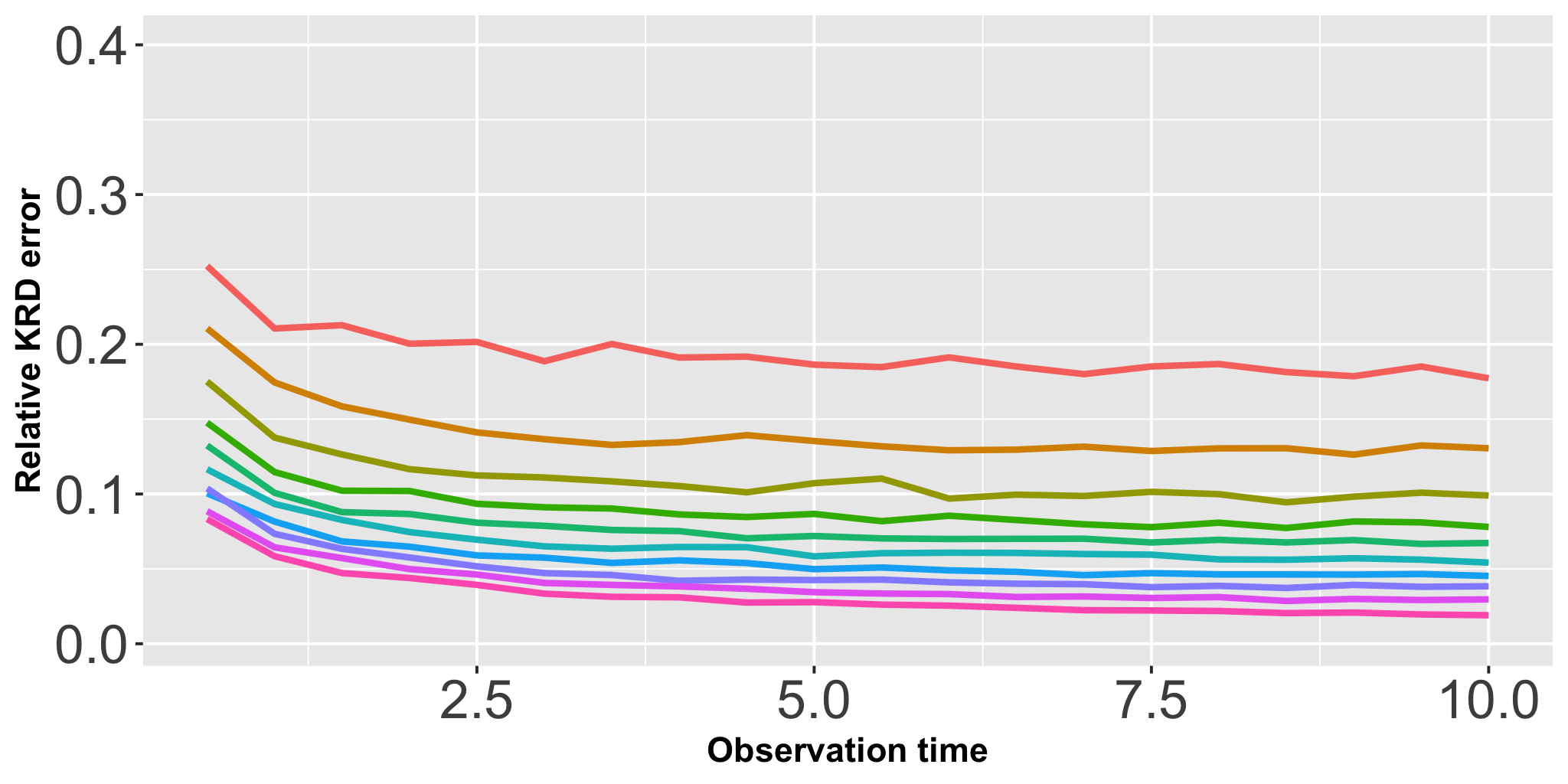} &       \includegraphics[width=0.45\textwidth]{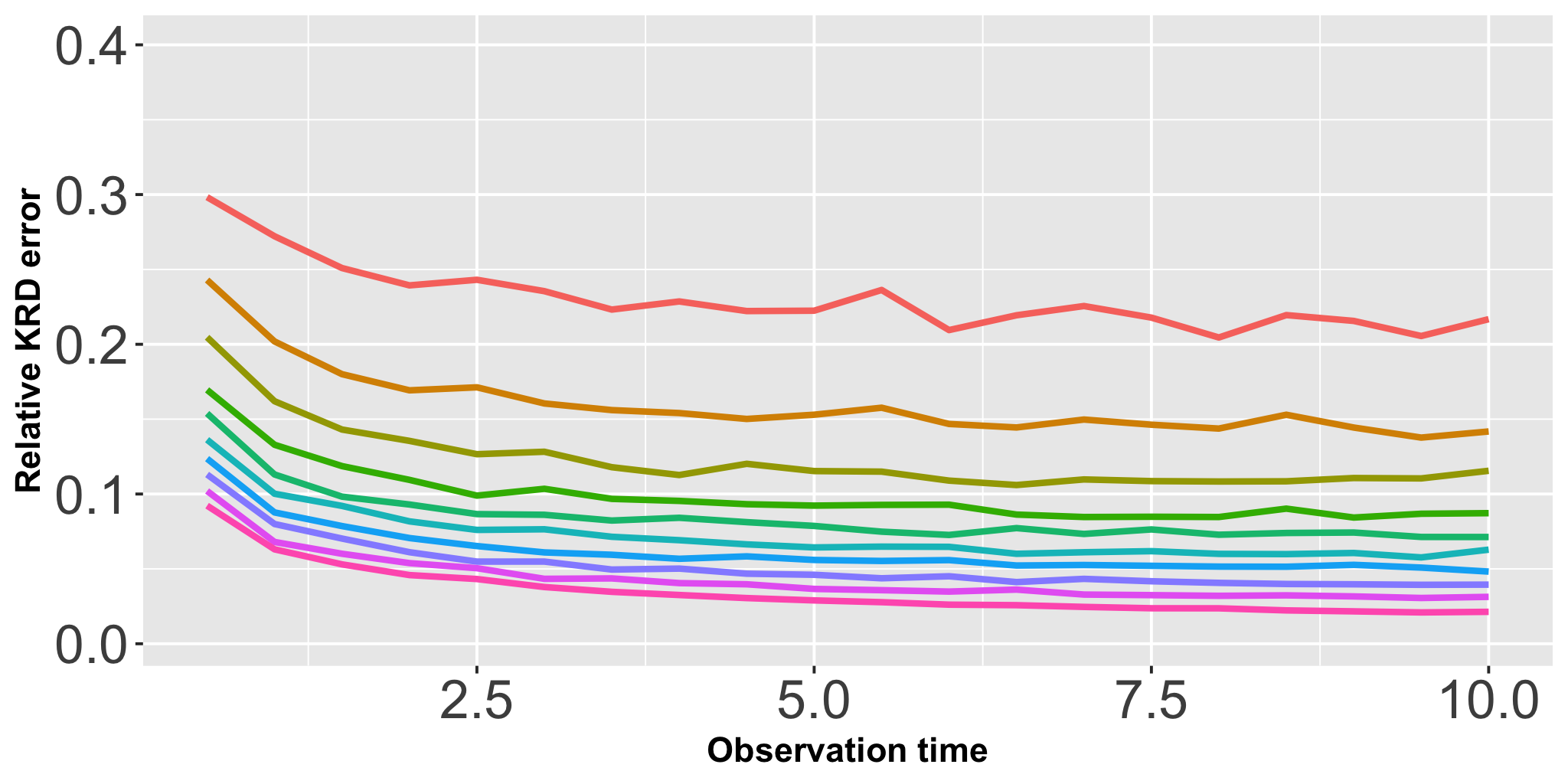} \\
\end{tabular}
\caption{As in \Cref{fig:dist_poiNE}, but for the NEC class with $M=75$.}
  \label{fig:dist_poiNEC}
\end{figure}

\begin{figure}[h!]
  \centering
  \begin{tabular}{cc}
        \includegraphics[width=0.45\textwidth]{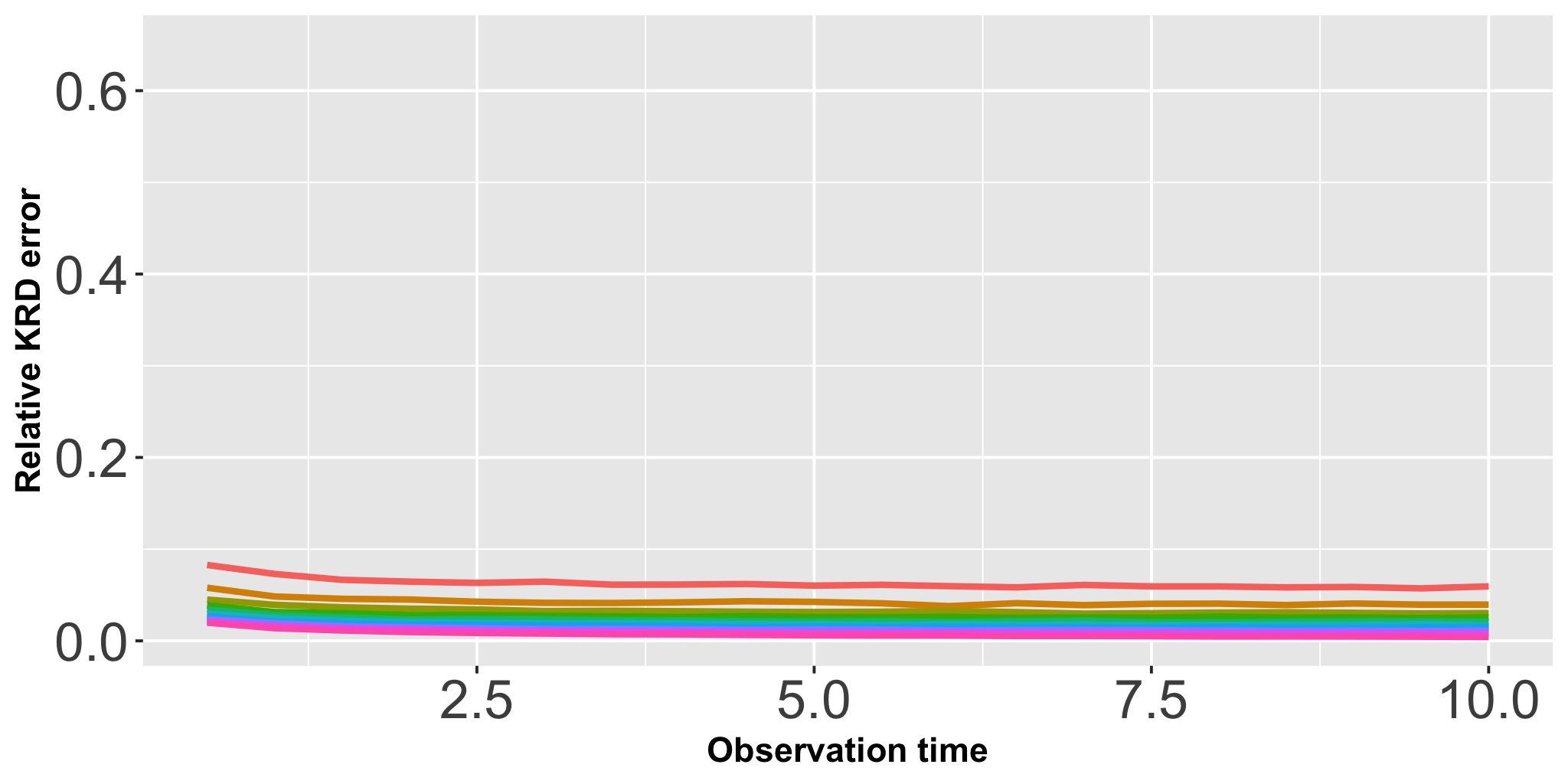} &       \includegraphics[width=0.45\textwidth]{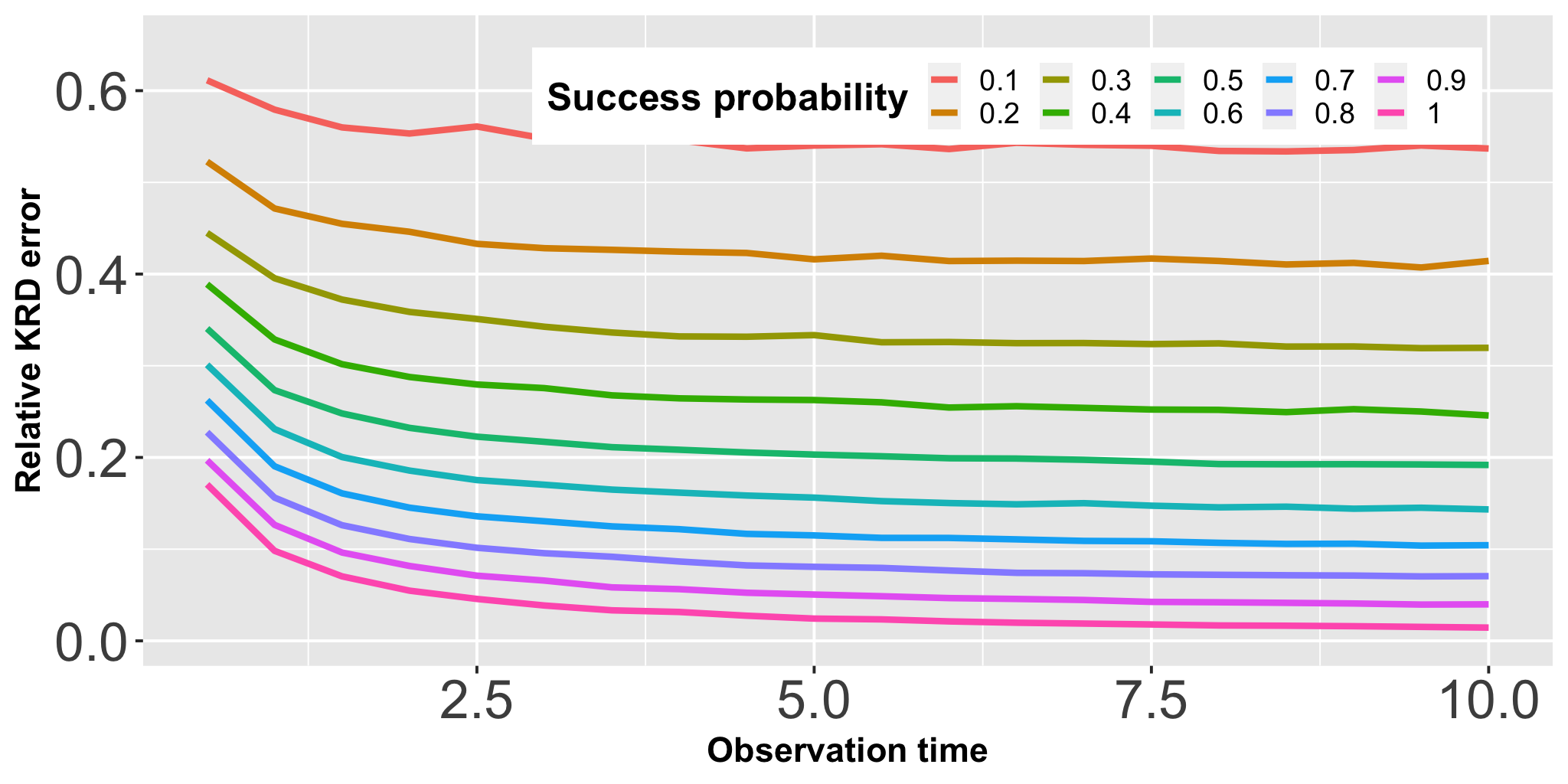} \\
      \includegraphics[width=0.45\textwidth]{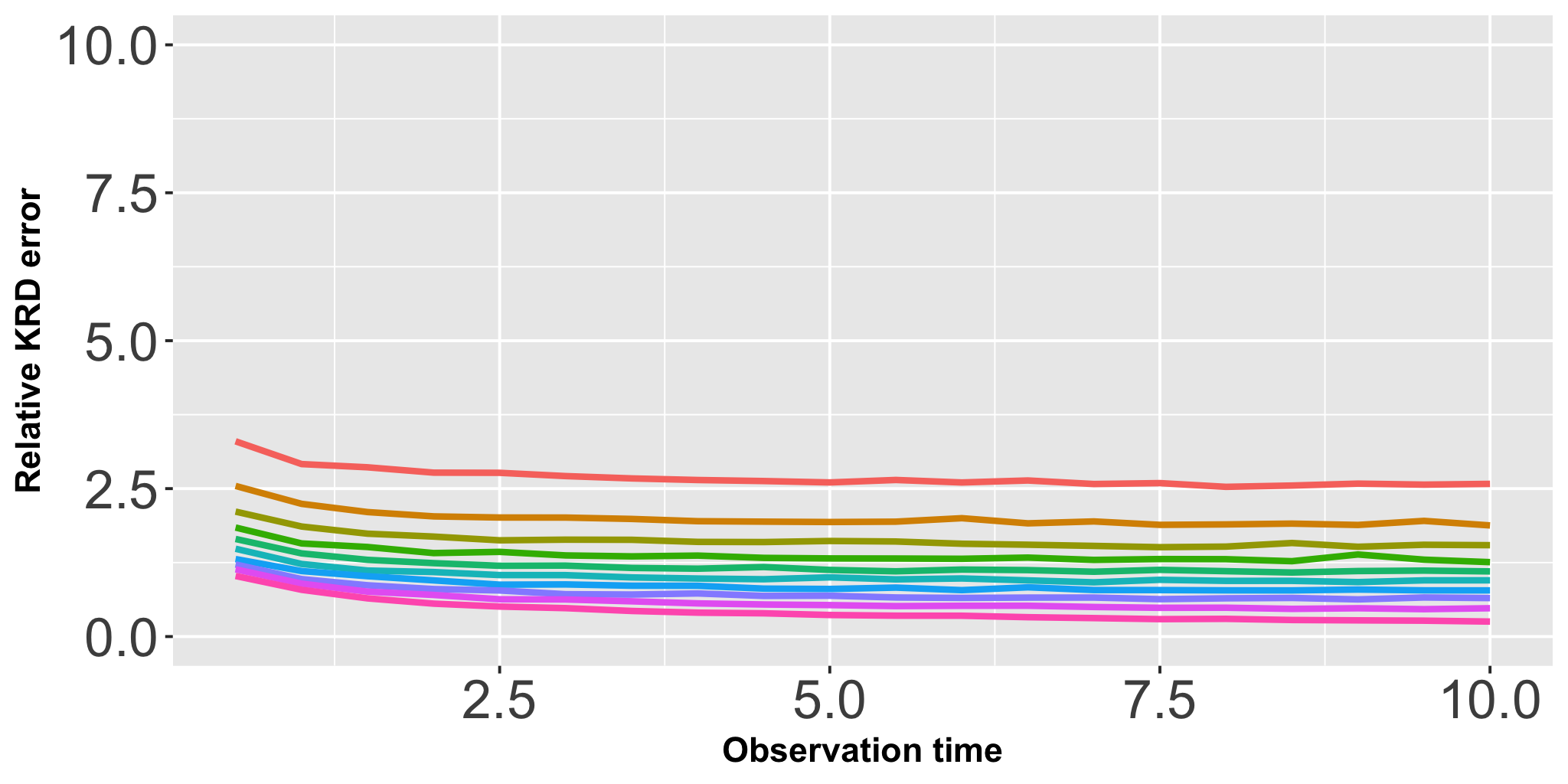} &       \includegraphics[width=0.45\textwidth]{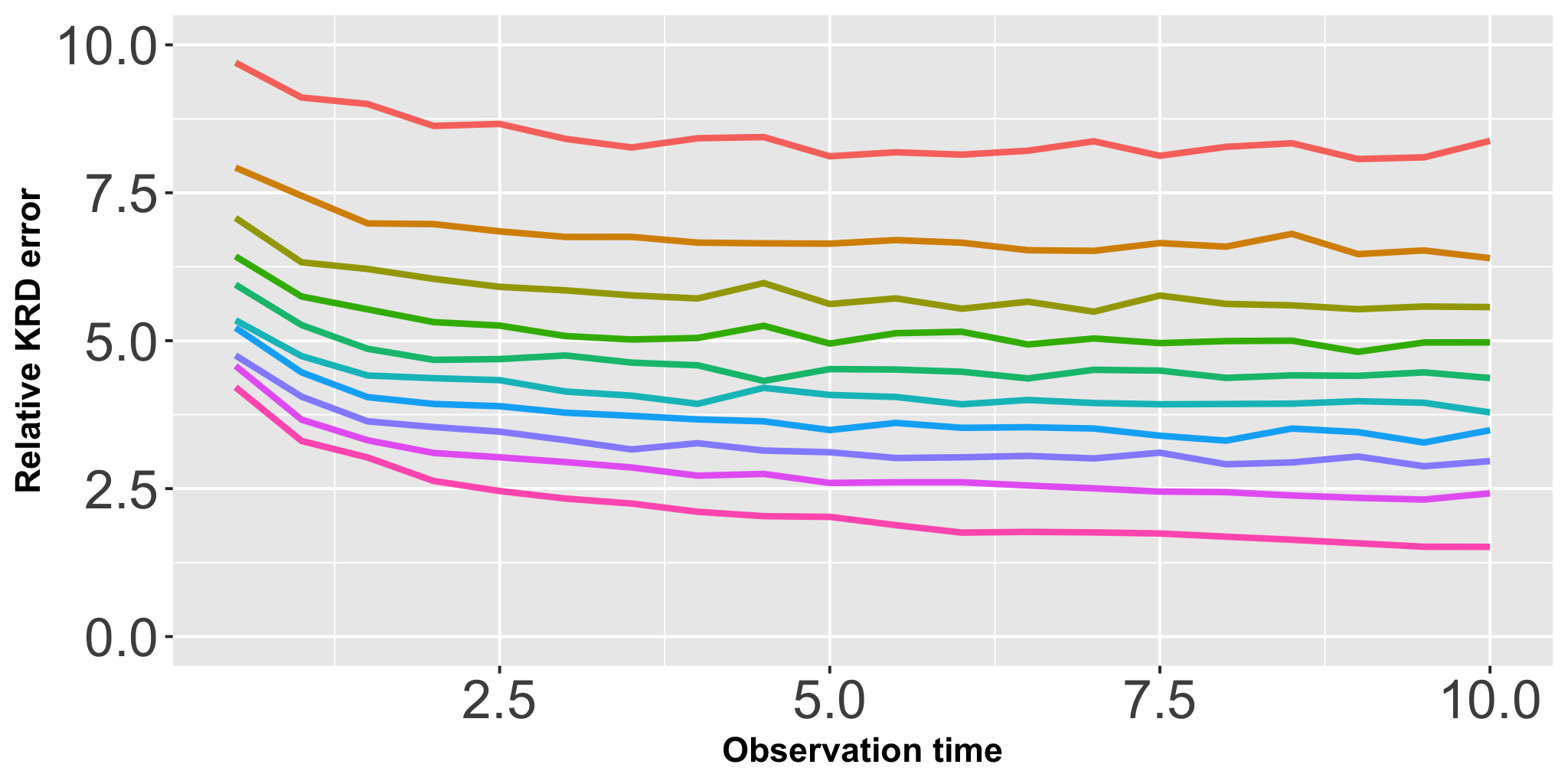} \\
  \end{tabular}
  \caption{As in \Cref{fig:dist_poiNE}, but for the PI class with $M=450$.}
    \label{fig:dist_poiPI}
  \end{figure}

\begin{figure}[h!]
  \centering
\begin{tabular}{cc}
      \includegraphics[width=0.45\textwidth]{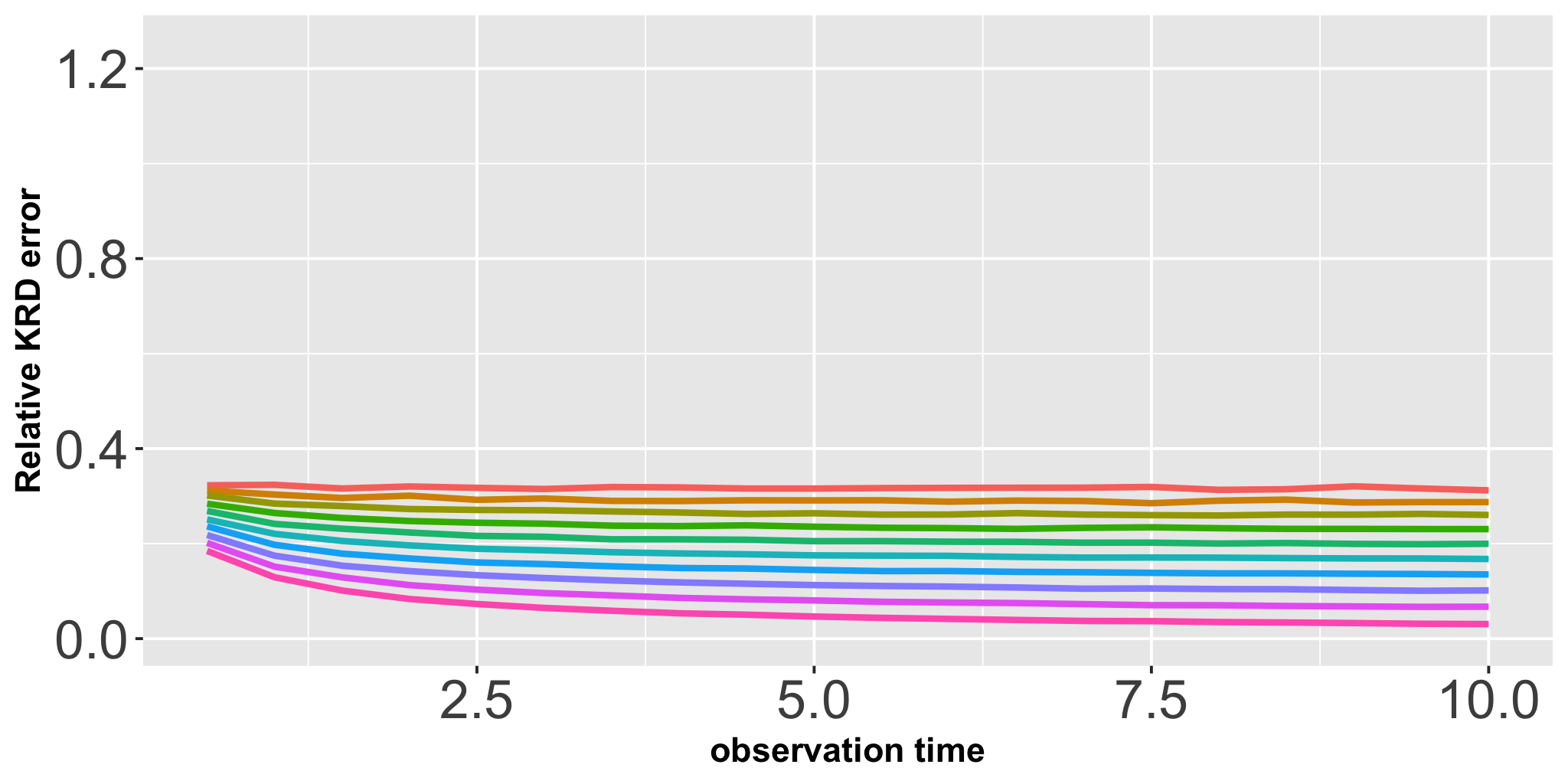} &       \includegraphics[width=0.45\textwidth]{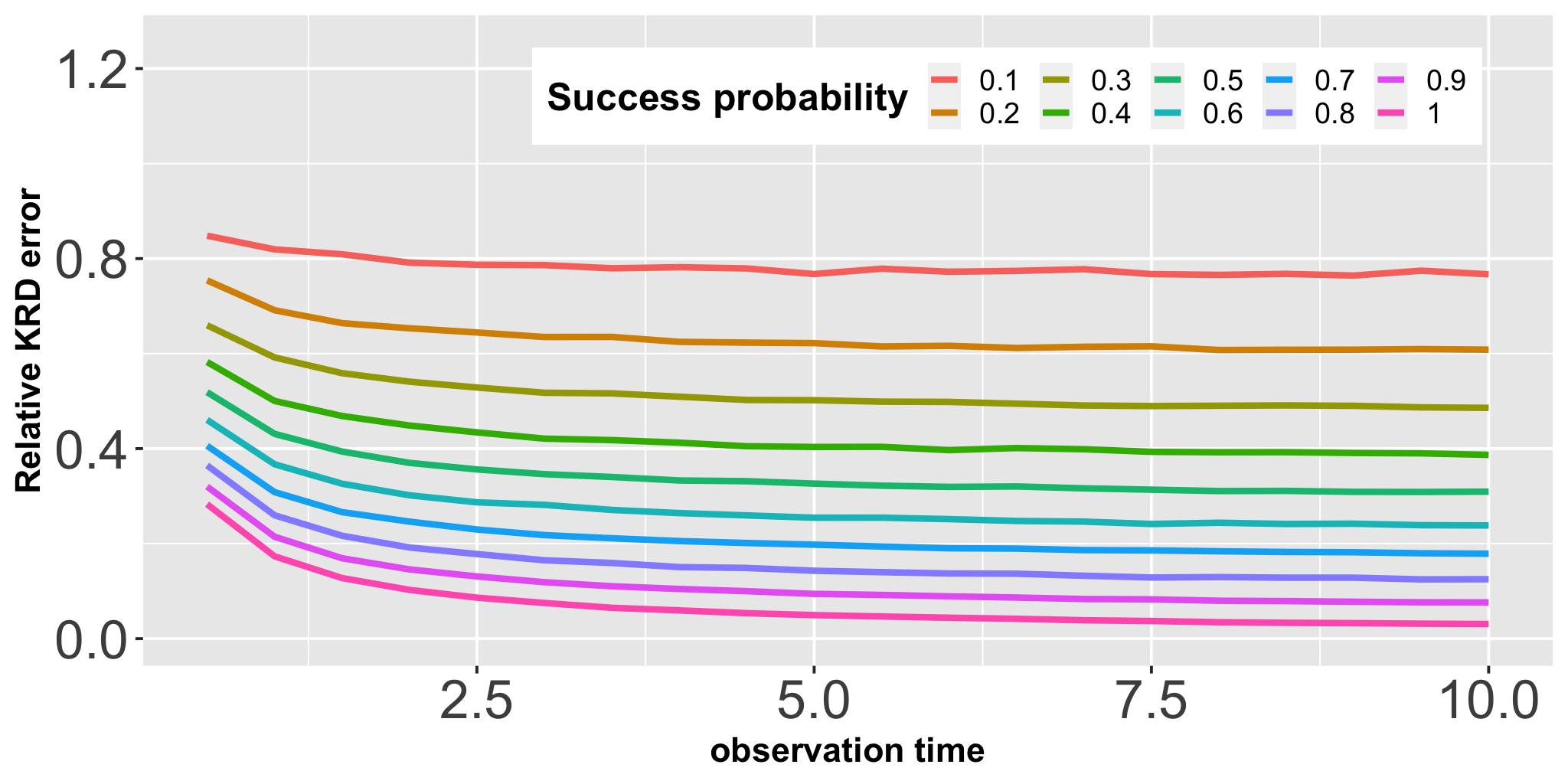} \\
    \includegraphics[width=0.45\textwidth]{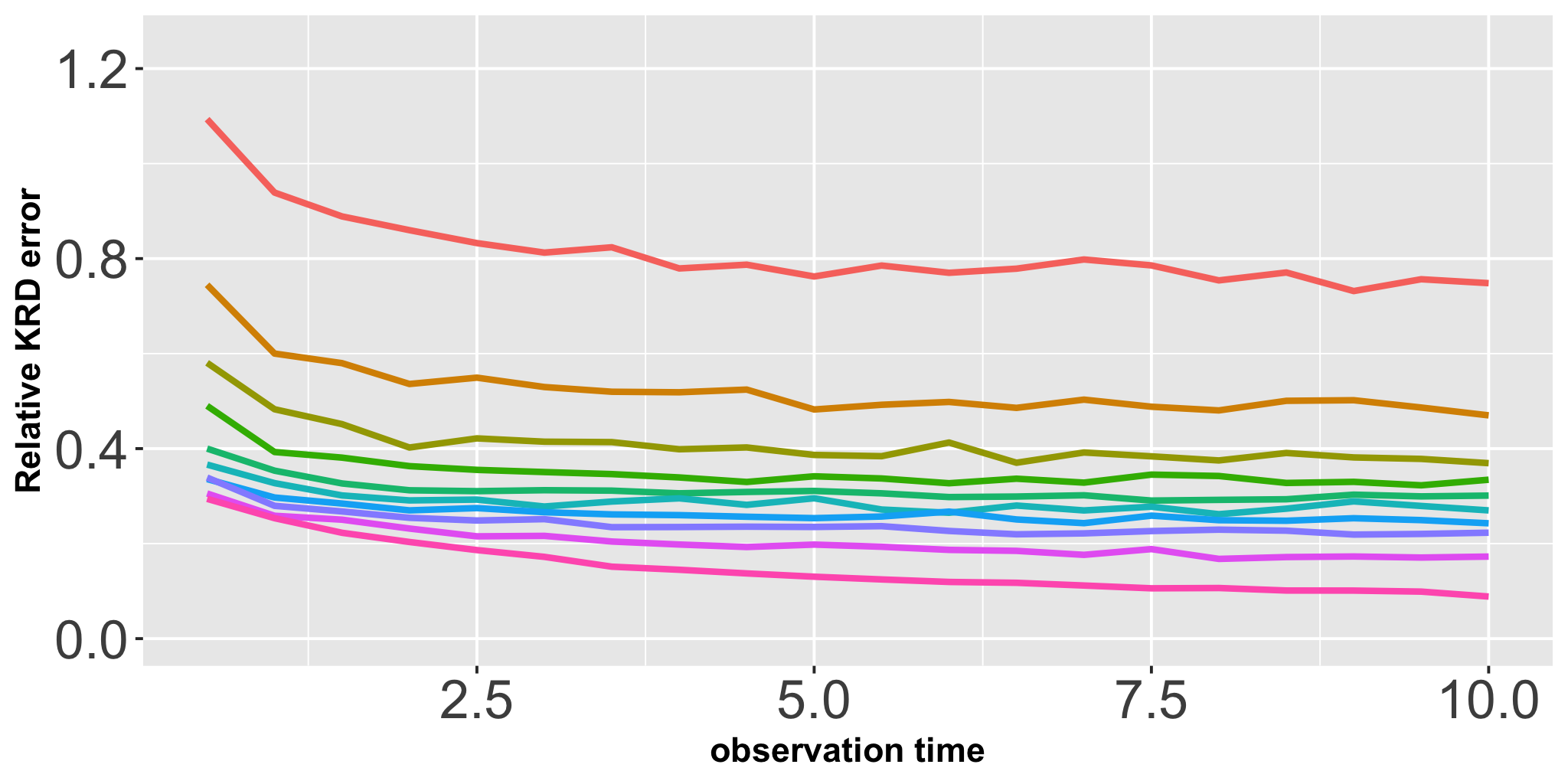} &       \includegraphics[width=0.45\textwidth]{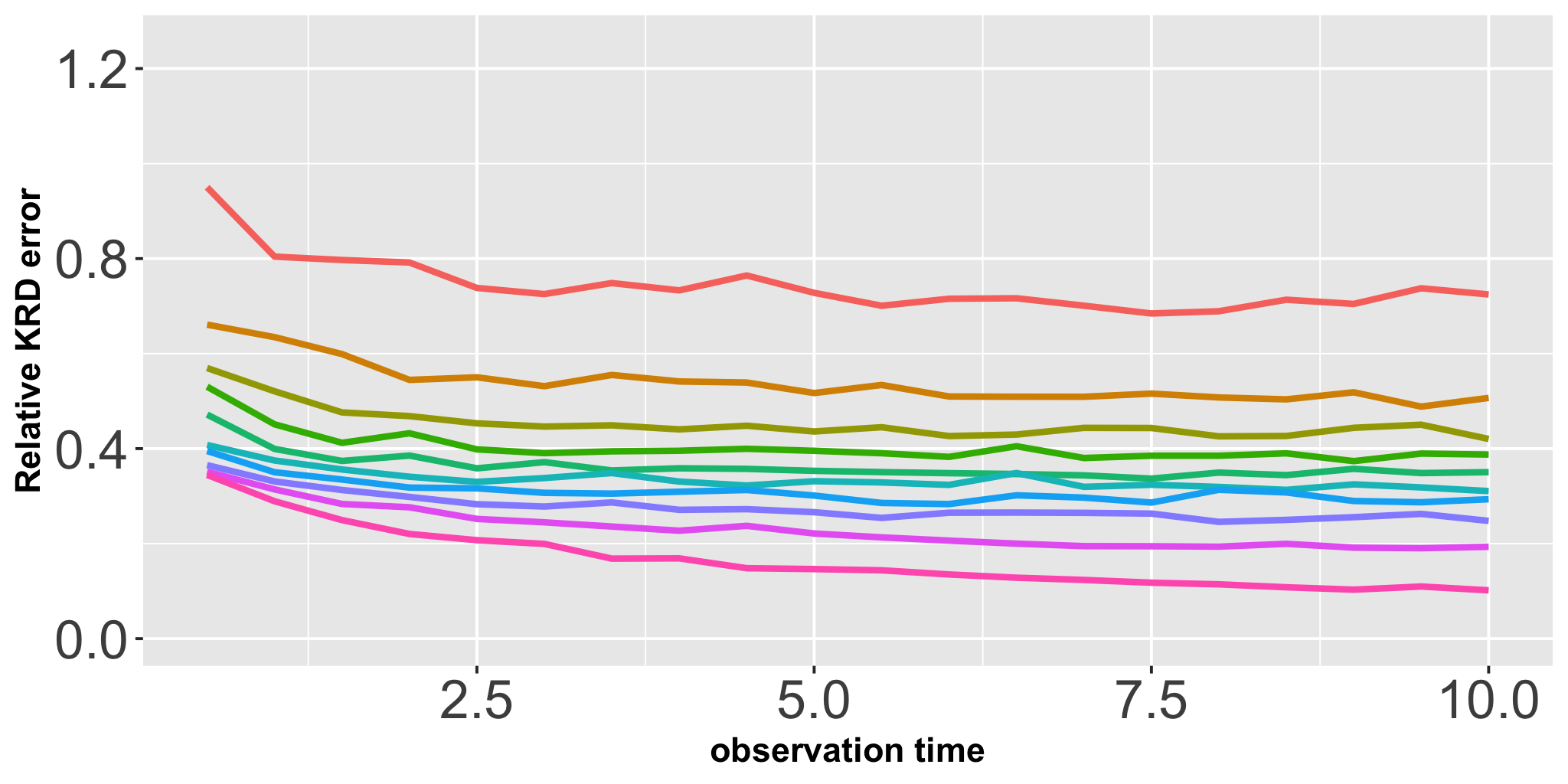} \\
\end{tabular}
\caption{As in \Cref{fig:dist_poiNE}, but for the PIG class with $M=22$.}
  \label{fig:dist_poiPIG}
\end{figure}

Notably, while the decrease in $s$ and $t$ is similar for the error of the NEC class in \Cref{fig:dist_poiNEC}, the error is no longer increasing in $C$. Instead, the errors increase from $C=0.01$ to $C=0.1$, but then decrease again for $C=1$. Afterwards they increase again at $C=10$. This difference in behavior is explained by the cluster structure of the measure in NEC. There is still the general trend of increasing error for increasing $C$, as present in the NE class, but now there is an additional change in behavior based on the fact if transport occurs within clusters or between clusters. From $C=0.1$ to $C=1$, we pass the size of the clusters and the distance between the clusters. Thus, $C=1$ is the first value in our simulation for which inter-cluster transports can occur. This causes a decrease in error, as the impact of the estimation of the total mass intensity of a measure within one cluster is decreased. After this point the usual increase in error for increasing $C$ due to the estimation of the total mass intensity occurs. 

For the $(p,C)$-KRD error in the PI class in \Cref{fig:dist_poiPI} this effect is particularly strong. The error increases on average  about two orders of magnitude from $C=0.01$ to $C=10$. This is explained by the fact that the total mass intensity in this class is significantly larger than for the classes NE and NEC, where each location in the support of the measures has mass one. This also causes an increase of the variance of the mass of the empirical measures at each location, which causes a faster increase of error for increasing $C$ at all scales of $C$. \\

Meanwhile, the $(p,C)$-KRD error in the PIG class in \Cref{fig:dist_poiPIG} increases from $C = 0.01$ to $C= 0.1$ but does not increase further as $C$ increases. This is likely a consequence of the homogeneous structure of the measures in the PIG class, which cause the UOT plan to transport mass along small scales and since the total mass is well concentrated. 

\subsection{Simulation Results for the $\mathbf{(2,C)}$-Barycenter}

It remains to discuss the results from our simulation studies for the Poisson model for the $(2,C)$-barycenter between sets of measures within one of the classes of measures introduced above. %
We restrict our analysis to the values of $C=0.1,1,10$, since for $C=0.01$ the $(p,C)$-KRD is close to the TV distance. In particular, for all classes except PIG, where all measures share the same support grid, the $(2,C)$-barycenter will be close or identical to the zero measure, since the measures in the other classes are almost surely disjoint. Additionally, if the barycenter of the population measures is the zero measure, any empirical barycenter has mass zero as well. Thus, there is little merit in simulating the barycenters in these cases. For the class PIG the barycenters are essentially TV-barycenters for small $C$ which removes any geometrically interesting features from the barycenter. Finally, for extremely small values of $C$ the $(p,C)$-barycenter computations tend to become numerically unstable due to either involving values close to machine accuracy and UOT plans for these values of $C$ often being close to the zero measure. Hence, empirical simulations of the expected relative Fr\'echet error would also be less reliable in this regime of values for $C$. In summary, empirical analysis of the properties of the $(p,C)$-barycenter for values of $C$ which are several orders of magnitude smaller than the diameter of $\Y$ is inadvisable. 

\begin{figure}[t!]
  \begin{tabular}{ccc}
    \includegraphics[width=0.31\textwidth]{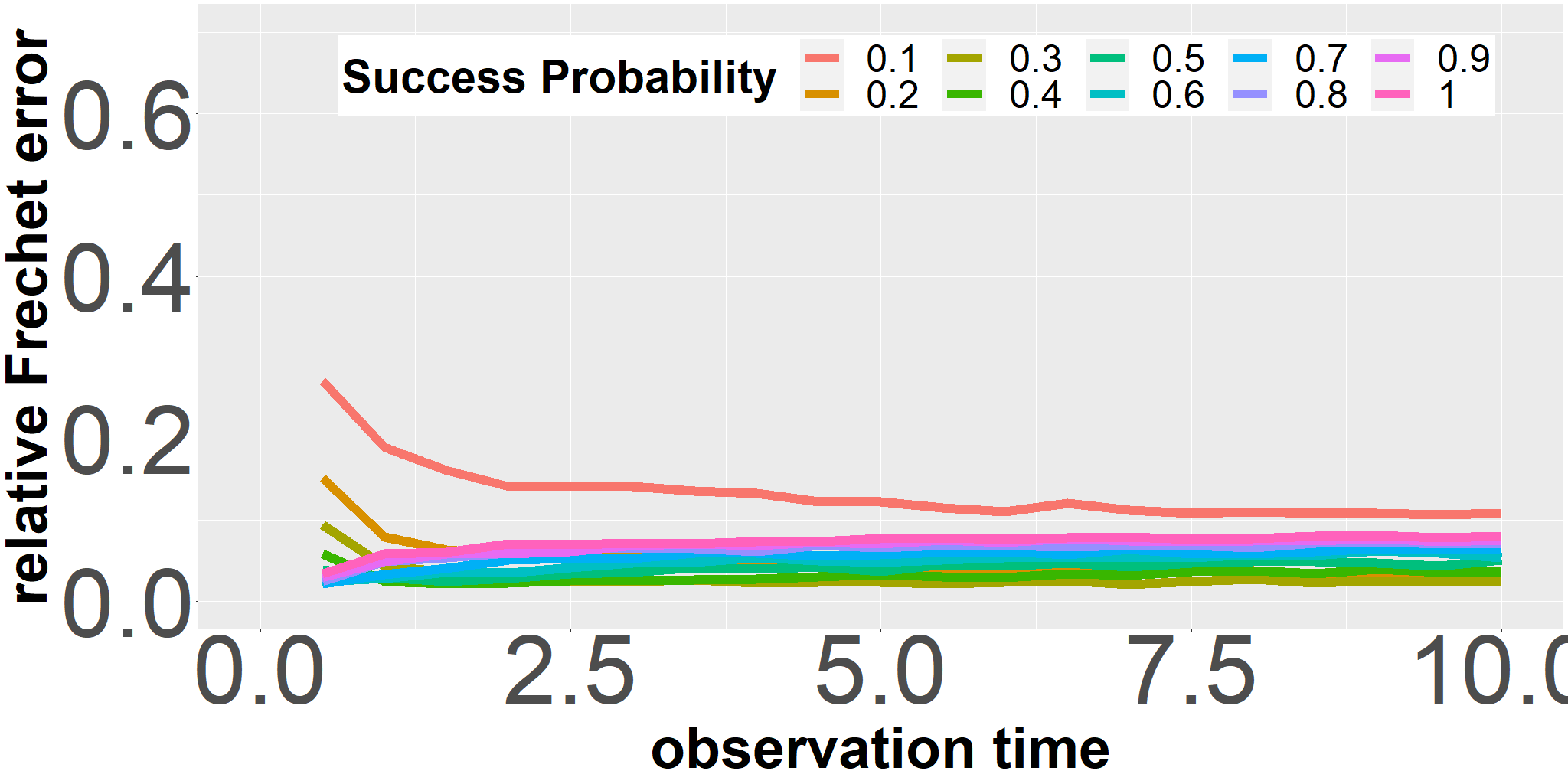} &
      \includegraphics[width=0.31\textwidth]{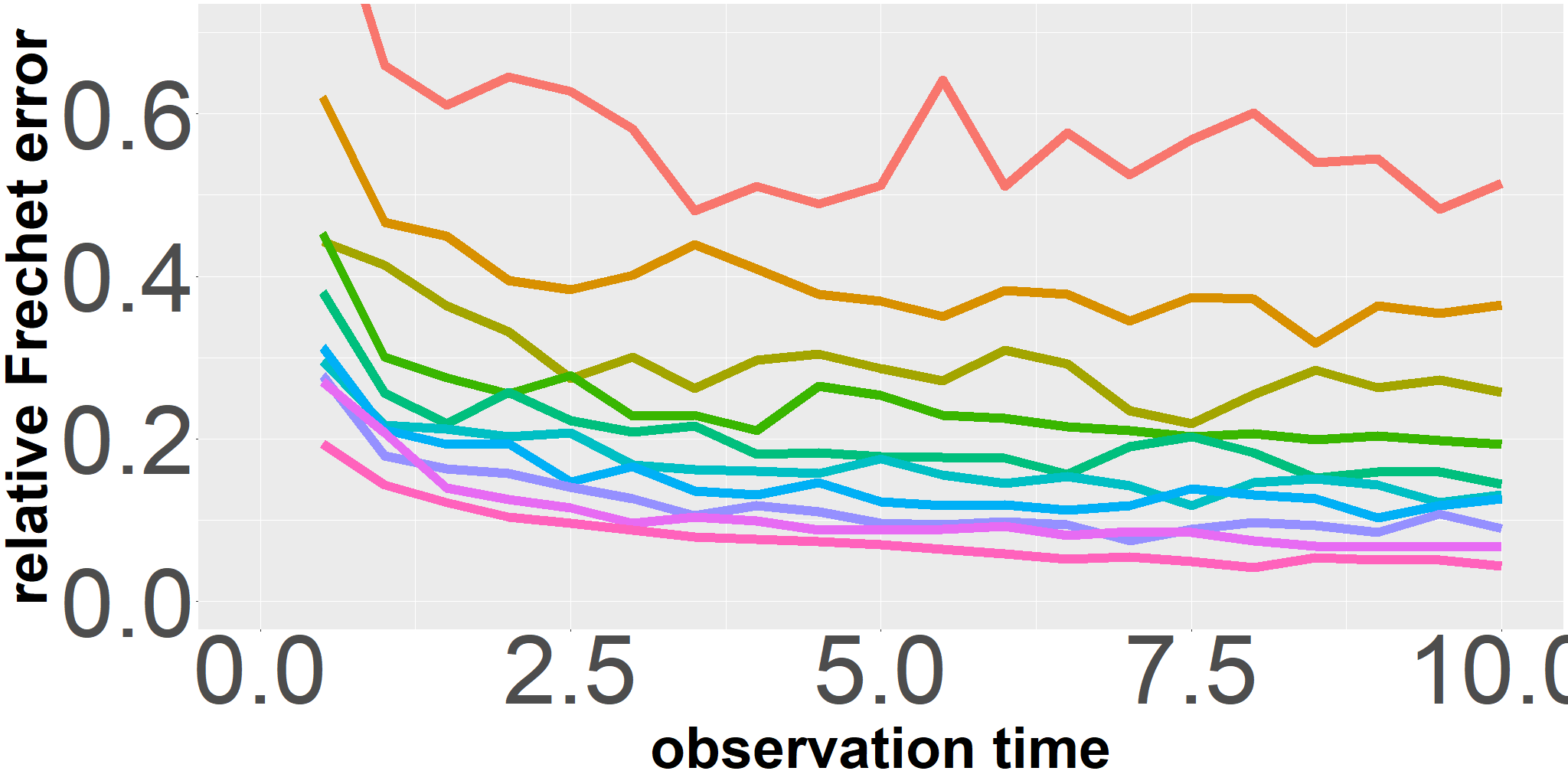} &       \includegraphics[width=0.31\textwidth]{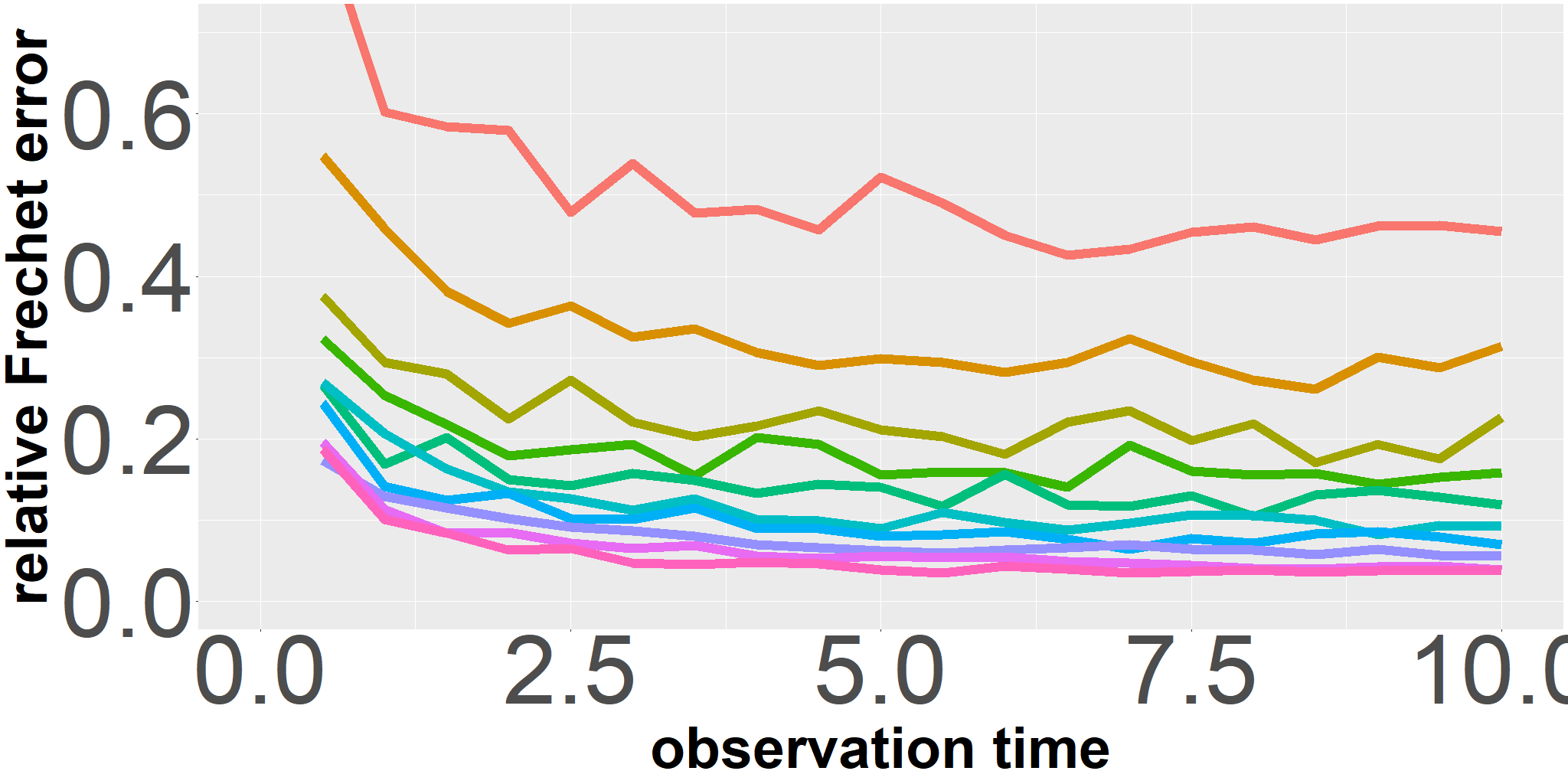} 
  \end{tabular}
  \caption{Expected relative Fr\'echet error for the $(2,C)$-barycenter for $J=5$ measures from the NE class in the Poisson sampling model with different success probabilities $s$. For each pair of success probability $s$ and observation time $t$ the expectation is estimated from $100$ independent runs. Set $M=100$. From left to right we have $C=0.1,1,10$, respectively.}
    \label{fig:bary_poiNE}
  \end{figure}

\begin{figure}[t!]
\begin{tabular}{ccc}
    \includegraphics[width=0.31\textwidth]{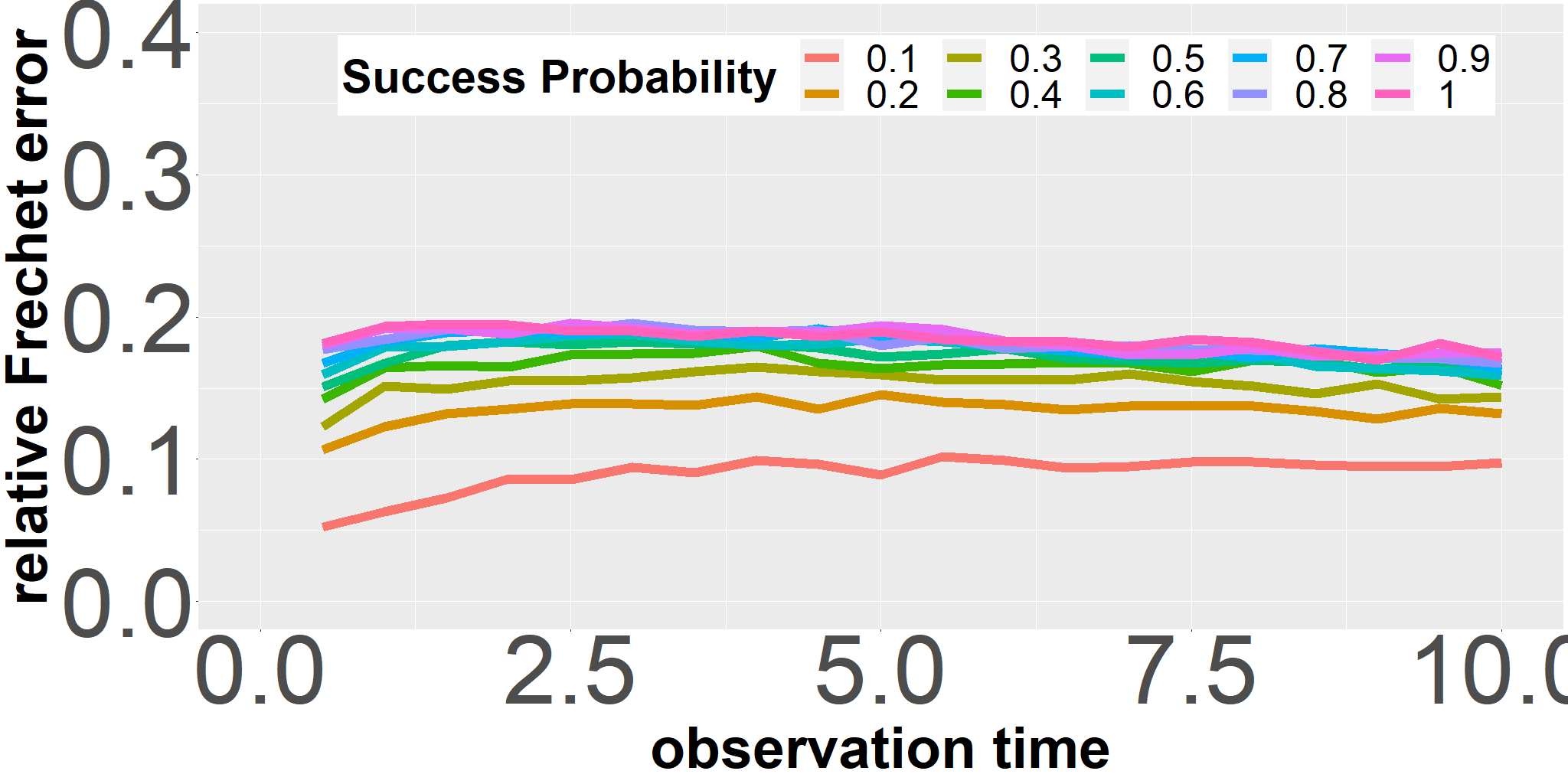} &
    \includegraphics[width=0.31\textwidth]{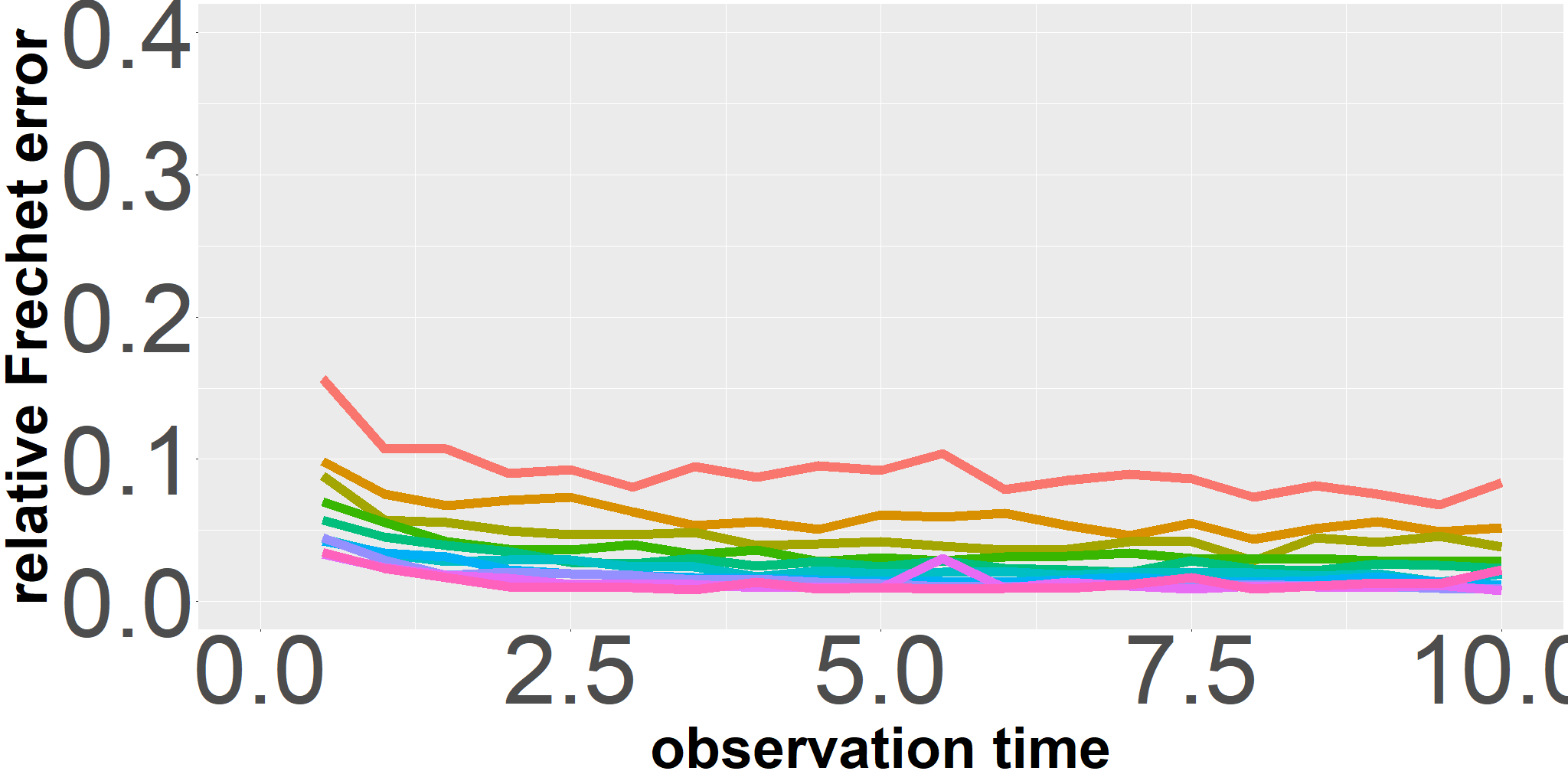} &       \includegraphics[width=0.31\textwidth]{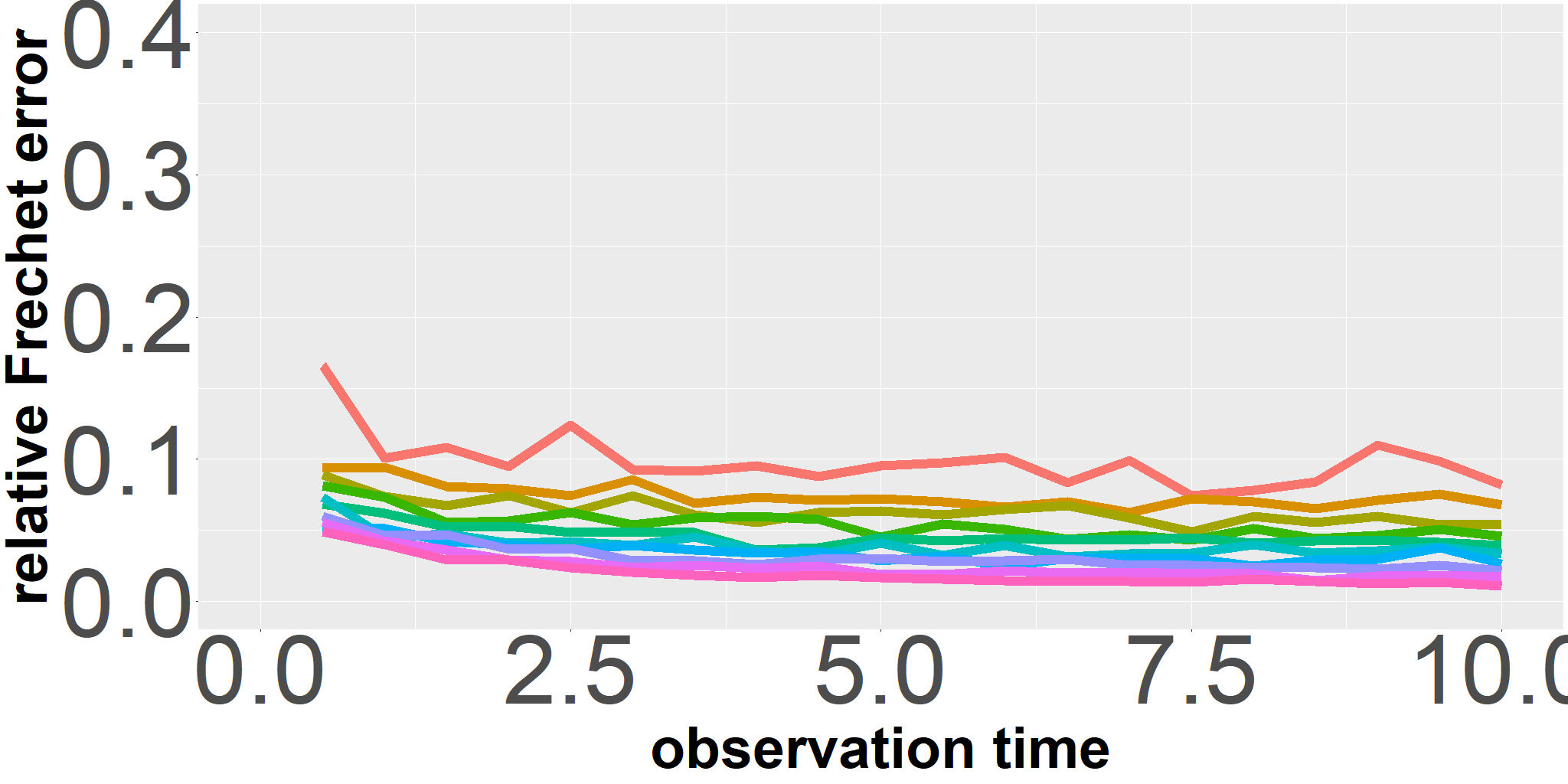} 
\end{tabular}
\caption{As in \Cref{fig:bary_poiNE}, but for the NEC class with $M=75$.}
  \label{fig:bary_poiNEC}
\end{figure}
For the relative Fr\'echet errors we observe significant changes in behavior compared to the relative $(p,C)$-KRD before. Considering the error for the NE class in \Cref{fig:bary_poiNE}, we note that for $C=0.1$ the behavior in $s$ and $t$ is different from the empirical $(p,C)$-KRD. Namely, for fixed $s$, the error is in general not strictly decreasing in $t$ and vice versa for fixed $t$, the error is not always strictly decreasing in $s$. This is an interesting effect arising for small values of the product $st$. A point $y\in \Y$ can only be a support point of a $(p,C)$-barycenter if it is in the intersection of at least $J/2$ balls of size $C^p/2$ around support points of different measures (compare the construction of the centroid set in \eqref{eq:KRCentroid}). Now, for small $s$ and $t$ many support points of the population barycenter are not included in the support of the empirical one, since centroid set of the empirical measures is significantly smaller than the population level one. In particular, this can create situations where an increase in $s$ or $t$ on average adds support points to empirical barycenter, which cause the relative error to increase, since placing mass zero at this location, for small $C$, is actually better than placing a potentially larger mass (since we assumed $(ts)^{-1}$ to be relatively small) at this location. Thus, while asymptotically, the rate in \Cref{thm:frechetboundPoi} is optimal, for certain, sufficiently small, values of $s$, $t$ and $C$, the behavior of the relative Fr\'echet error might be counter-intuitive. For $C=0.1$ and $C=1$, the errors behave quite similarly to the $(p,C)$-KRD setting, though there is essentially no increase in error going from $C=1$ to $C=10$. This is explained by two points. First, the location of the $(p,C)$-barycenter tends to be more centered within the support of the measures (all measures are support on subsets of the unit square), so little transport between the barycenter and the $\mu^i$ occurs at a distance larger than one. Second, the key factor for the increasing error for increasing $C$ in the $(p,C)$-KRD case is the estimation error for the total mass intensities. However, for sufficiently large $C$ the mass of the $(p,C)$-barycenter is the median of the total masses of the $\mu^i$. Since, this quantity is significantly more stable under estimation than the individual total mass intensities, it is to be expected that the mass estimation has little effect on the relative Fr\'echet error. For the error of the NEC class in \Cref{fig:bary_poiNEC} the results are similar to the NE case. We observe similar effects on the dependence of $s$ and $t$ for $C=0.1$ and for $C=1$ and $C=10$, the errors look extremely similar. One notable distinction is the fact that from $C=0.1$ to $C=1$ the errors decrease on average. As before for the NEC class in the $(p,C)$-KRD setting, this can be explained by its cluster structure and $C=1$ being the first value for which inter-cluster transport becomes possible in an UOT plan. This is therefore also the first value of $C$ which allows the $(p,C)$-barycenter to have mass between clusters. Finally, for the error of the PI class in \Cref{fig:bary_poiPI} the value of $C$ only has a minimal effect on the resulting errors. Notably and contrary to the two prior classes, we do not encounter any additional effects for $C=0.1$. This is explained by the in general higher mass intensities of the measures in the PI class, which make the previously described effects due to low values of $s$ and $t$ less likely. Additionally, these measures do not possess any geometrical structures in their support, which could impact the behavior on different scales. There is again little increase in error for increasing $C$, which is in stark contrast to the PI class in the $(p,C)$-KRD setting, where the error increased by multiple orders of magnitude. This is another strong indicator, that the Fr\'echet error is significantly more stable under $C$, due to the stability of total mass intensity of the empirical barycenter opposed to the total mass intensity of the individual measures.

\begin{figure}[t!]
  \begin{tabular}{ccc}
       \includegraphics[width=0.31\textwidth]{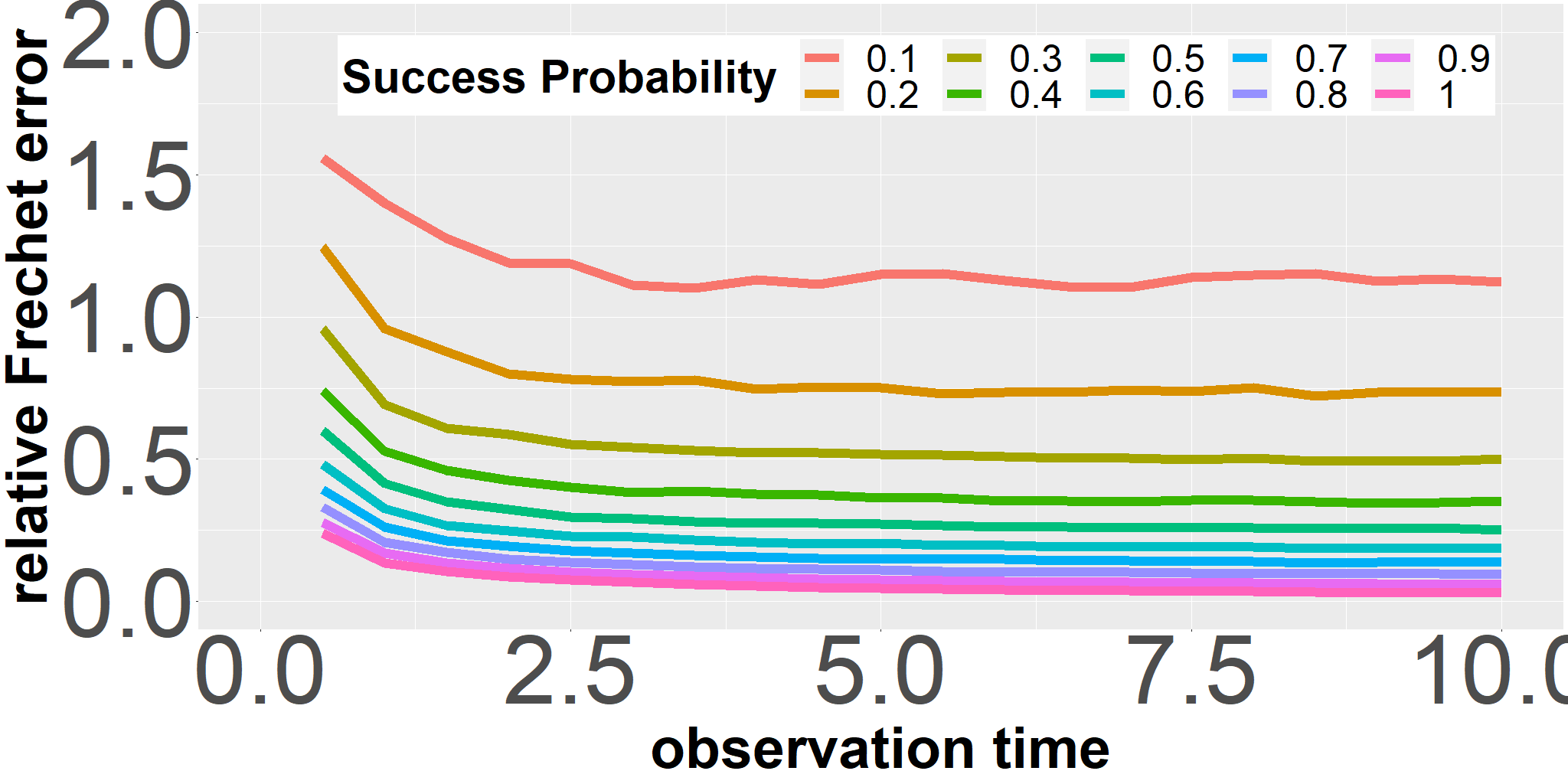} &
      \includegraphics[width=0.31\textwidth]{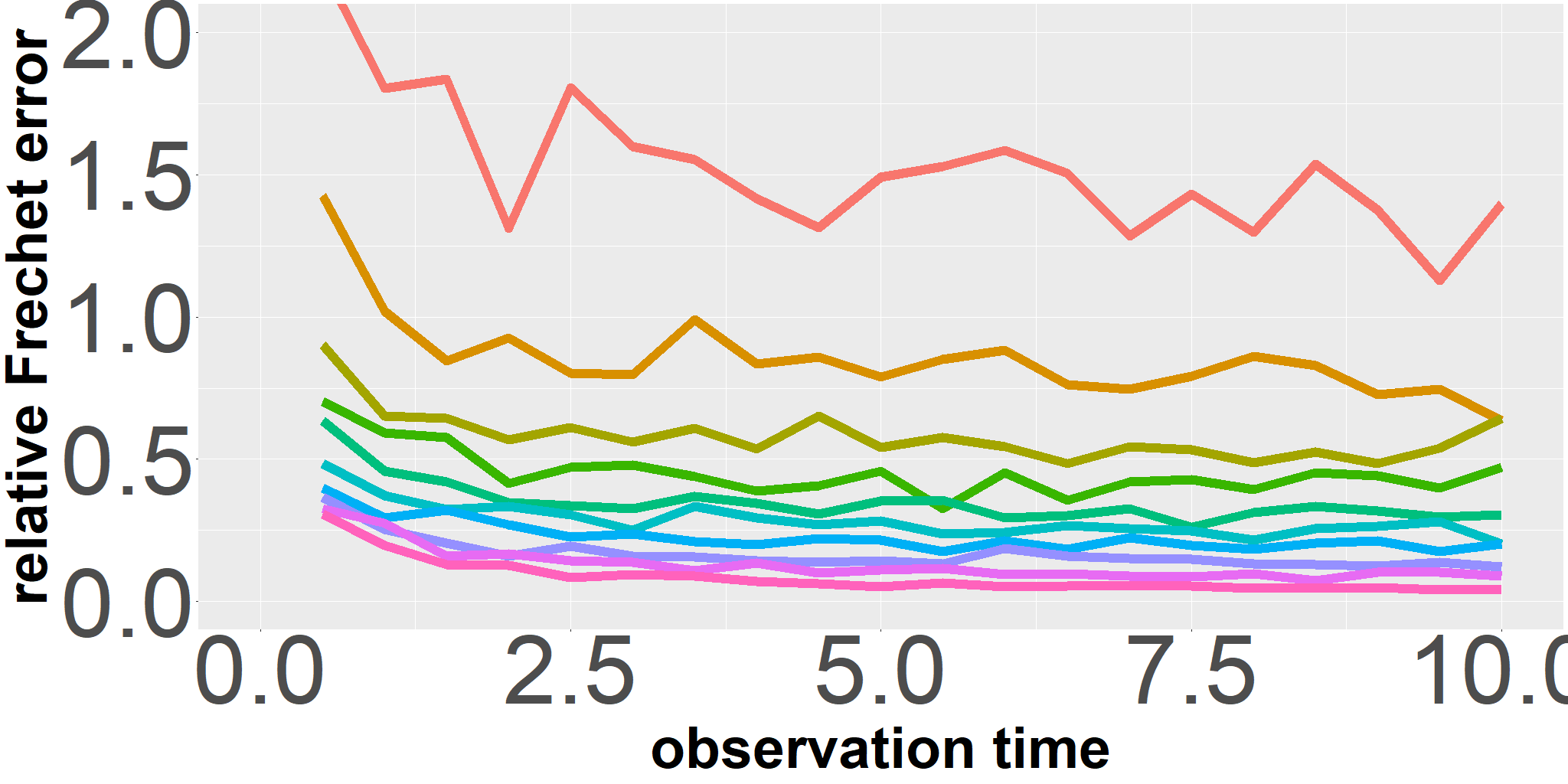} &       \includegraphics[width=0.31\textwidth]{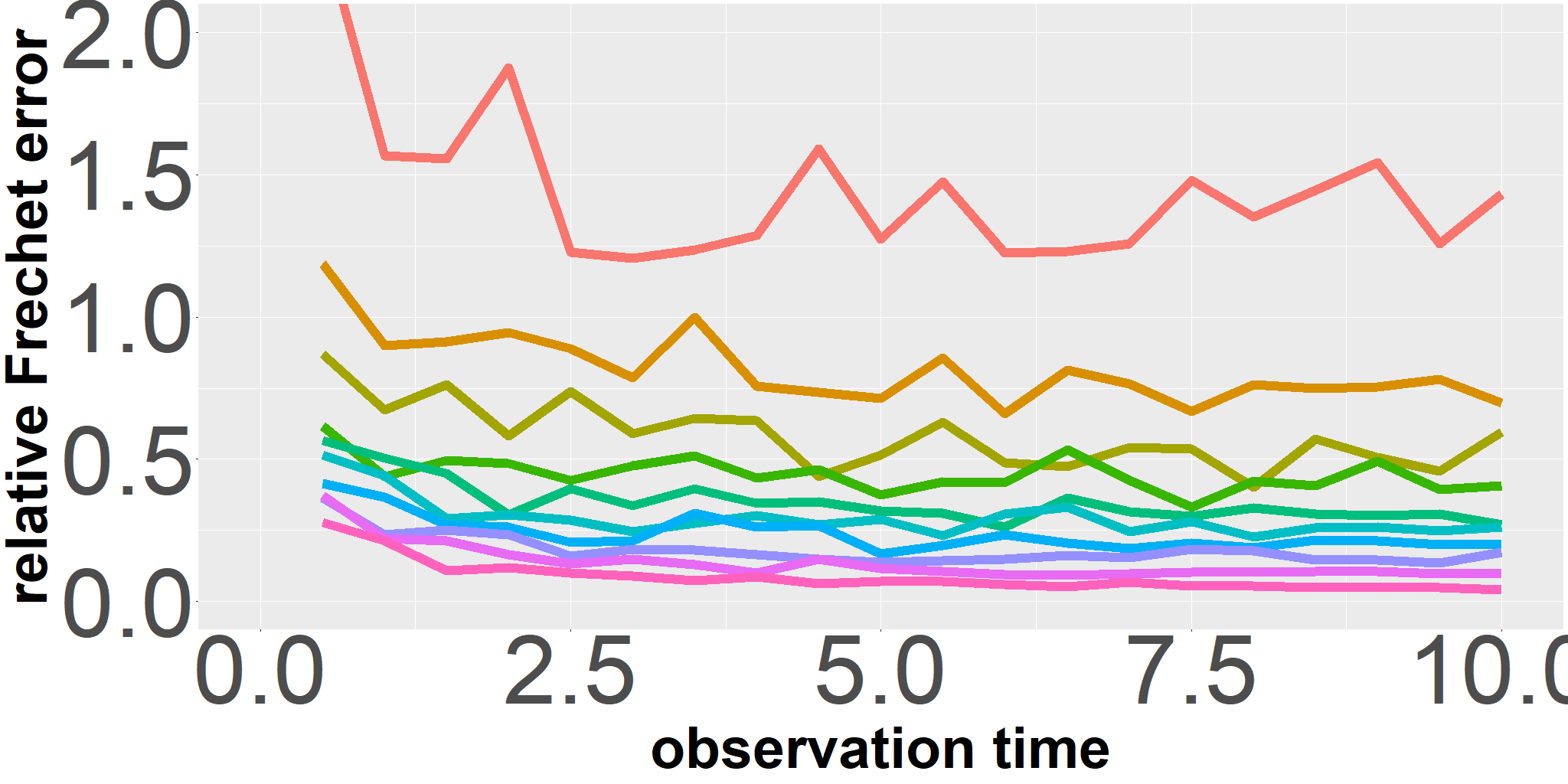} 
  \end{tabular}
  \caption{As in \Cref{fig:bary_poiNE}, but for the PI class with $M = 450$.}
    \label{fig:bary_poiPI}
  \end{figure}
  \begin{figure}[t!]
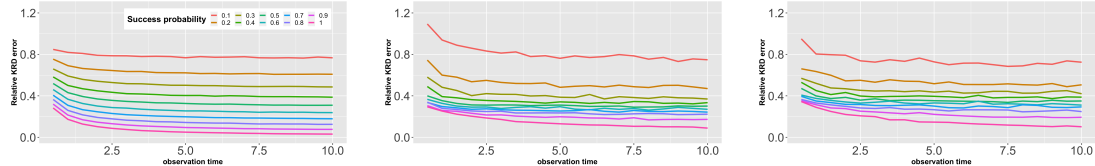

    \begin{tabular}{ccc}
    \includegraphics[width=0.3\textwidth]{graphics/comPoi_PIG_C0.1.png} &
        \includegraphics[width=0.3\textwidth]{graphics/comPoi_PIG_C1.png} &       \includegraphics[width=0.3\textwidth]{graphics/comPoi_PIG_C10.png} 
    \end{tabular}
    \caption{As in \Cref{fig:bary_poiNE}, but for the PIG class with $M=22$.}
      \label{fig:bary_poiPIG}
    \end{figure}

\subsection{Real Data Example}\label{sec:realdata}
In \Cref{fig:realdata} we consider the $(2,0.1)$-KRD between images which are an excerpt from STED microscopy of adult human dermal fibroblast cells (for the full dataset see \cite{tameling2021colocalization}). Figures \ref{fig:realdata}(a), (b) and Figures \ref{fig:realdata}(c), (d) are visually similar, as they correspond to measurements taken based on two different markers (one at the inner mitochondrial membrane and one at the outer) in the same cells. The $(2,0.1)$-KRD captures this: the pairwise distance between the measures are smallest for these pairs of images. Utilizing UOT on these type of datasets is a potential way of quantifying dissimilarity between the respective measures and extending OT based dissimilarity analysis to measures of unequal total intensity (the total mass intensities in these examples lies between $6200$ and $9500$).

We further want to use this dataset to illustrate the performance of the randomized computational approach for the $(p,C)$-KRD based on the multinomial model (recall \eqref{eq:multinomialmeasure}). The $300\times 300$ images here are specifically chosen such that the true distances can still be computed which allows to compare the expected error of the empirical $(p,C)$-KRD for given sample sizes on this data set. We compare the results obtained from the resampling approach (i.e., the estimator from \eqref{eq:multinomialmeasure}) considered in the multinomial model to the subsampling approach (i.e., the estimator from \eqref{eq:subsamp}) obtained by sampling without replacement from the measures instead. In these simulations the maximum sample size is about $1/5$ of the support sizes. This corresponds to a runtime of about $2.5\%$ of the original problem size. While it is clear by construction that for sufficiently large sample sizes, subsampling yields a smaller error than the resampling (as the error approaches zero if the sample size approaches the support size), for smaller sample sizes the resampling can have a better performance. It yields a relative error below $5\%$ at less than $10\%$ of the original support size in all considered instances. This approximation can be achieved in around $0.5\%$ of the original runtime. The subsampling approach does not reach this level of accuracy for the considered sample sizes. Thus, these simulations suggest that randomized computations based on the multinomial model allow for high accuracy approximations of the $(p,C)$-KRD in real data applications at a significantly lower computational cost than the original problem and that for small sample sizes there are scenarios where the resampling approach yields significantly better performance than the subsampling one. 

\begin{figure}[t!]
  \centering
  \subfloat[][]{\includegraphics[width=0.25\linewidth]{graphics/IM2.png}} 
    \subfloat[][]{\includegraphics[width=0.25\linewidth]{graphics/IM1.png}} 
      \subfloat[][]{\includegraphics[width=0.25\linewidth]{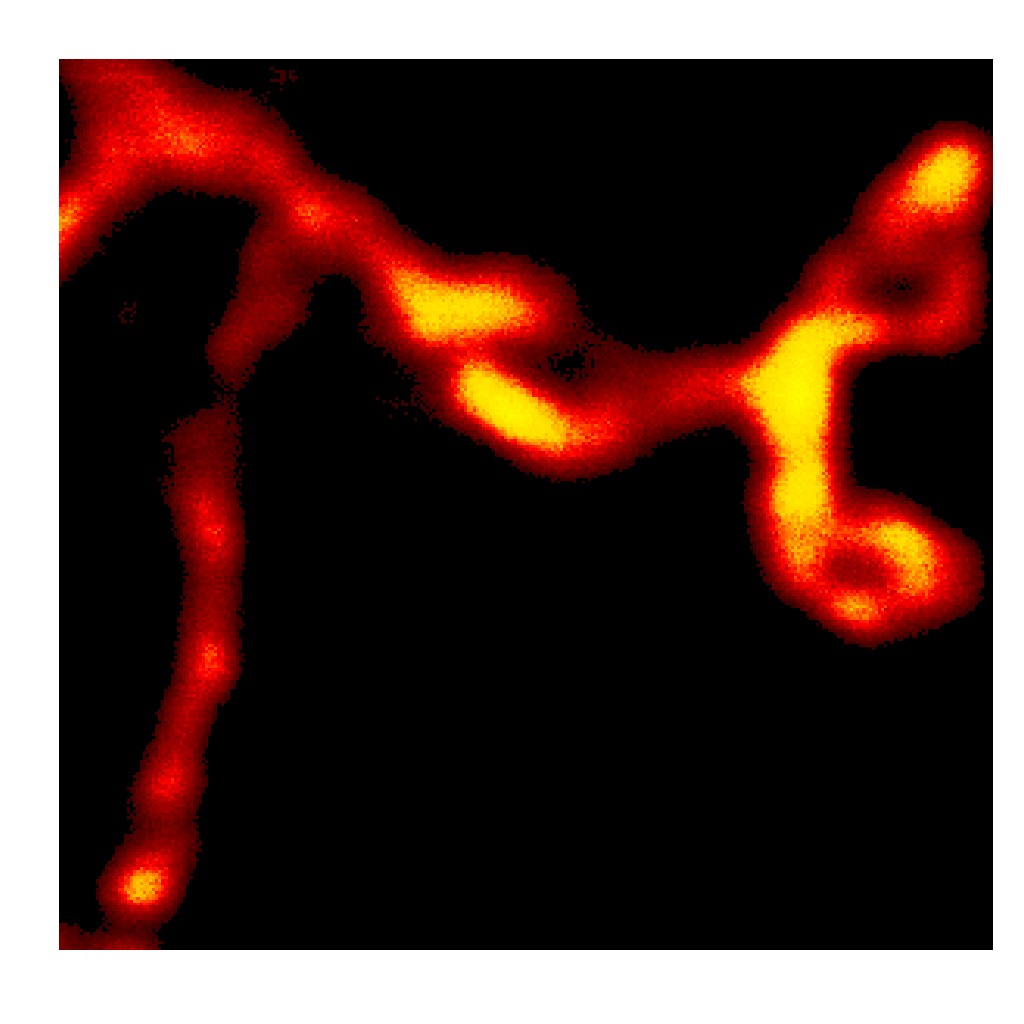}} 
  \subfloat[][]{\includegraphics[width=0.25\linewidth]{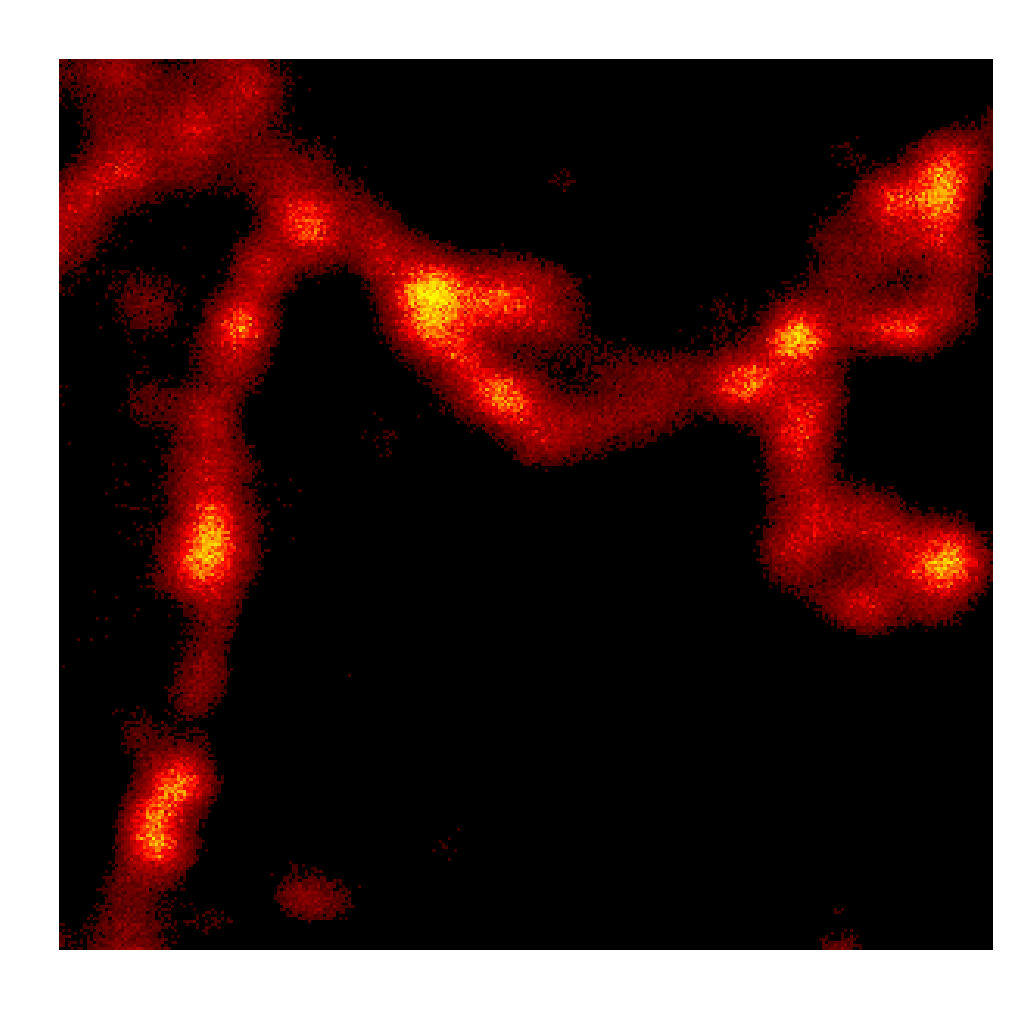}} \hspace{0.01\linewidth}
      \subfloat[][]{\includegraphics[width=0.33\linewidth]{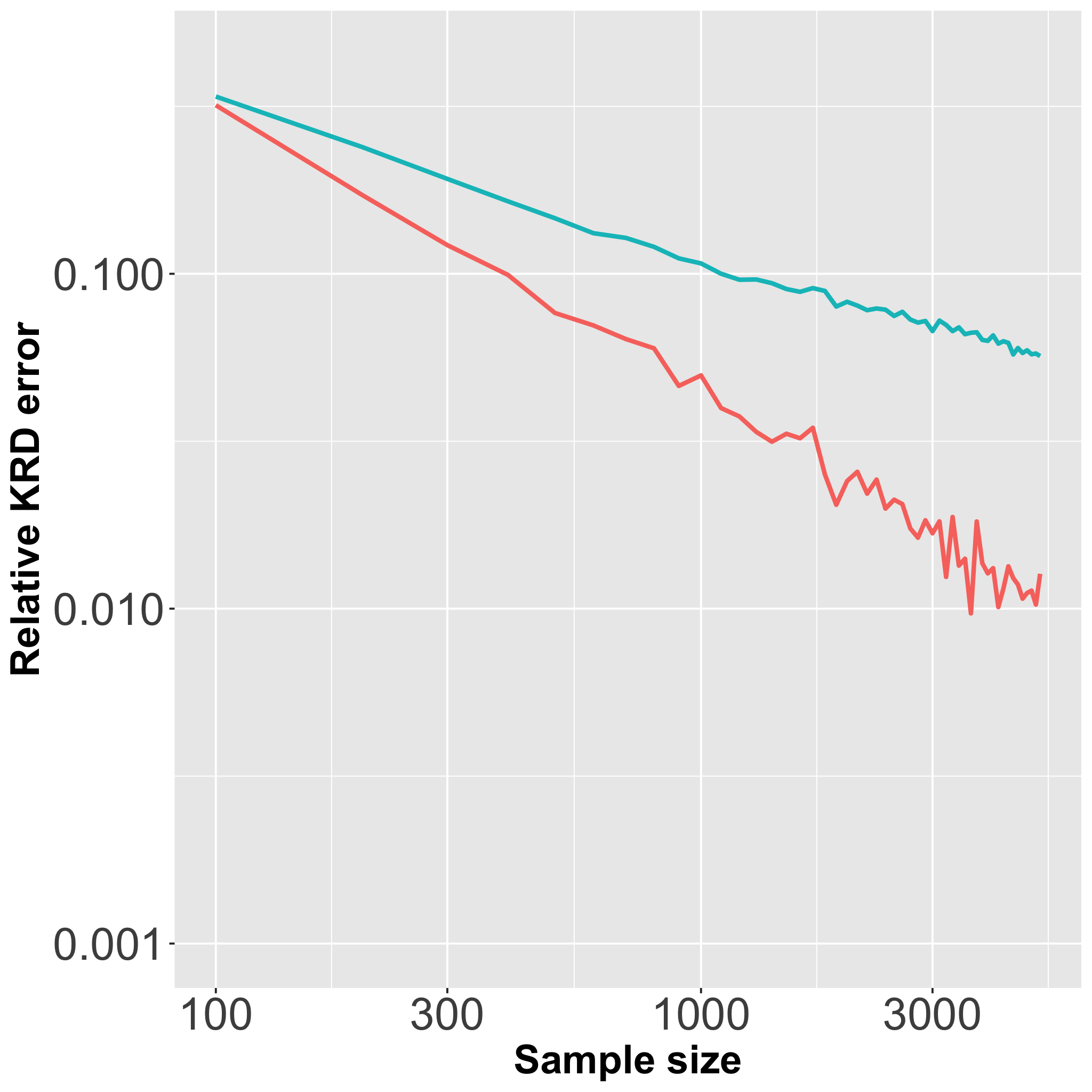}} 
    \subfloat[][]{\includegraphics[width=0.33\linewidth]{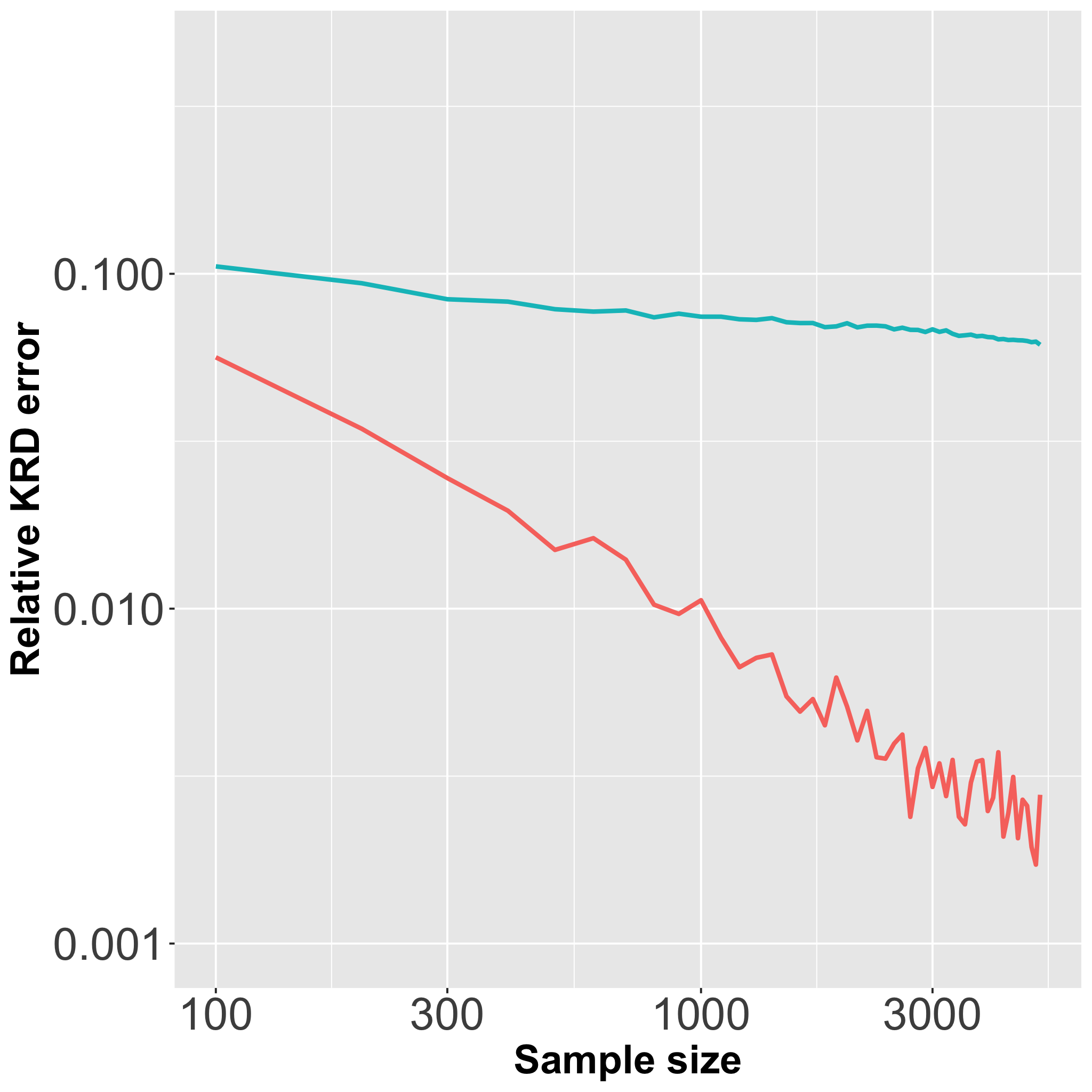}} 
      \subfloat[][]{\includegraphics[width=0.33\linewidth]{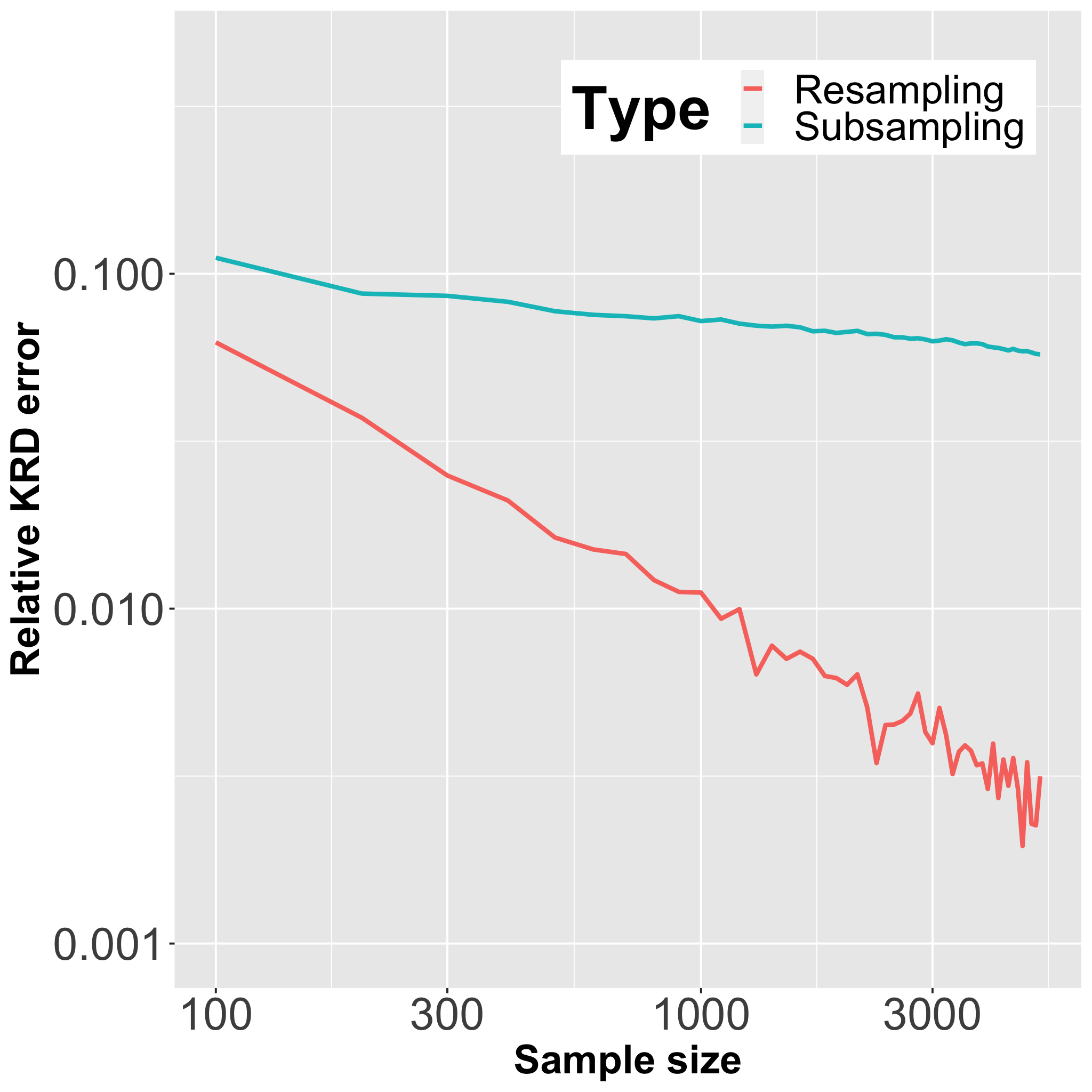}} \hspace{0.01\linewidth}
        \subfloat[][]{\includegraphics[width=0.33\linewidth]{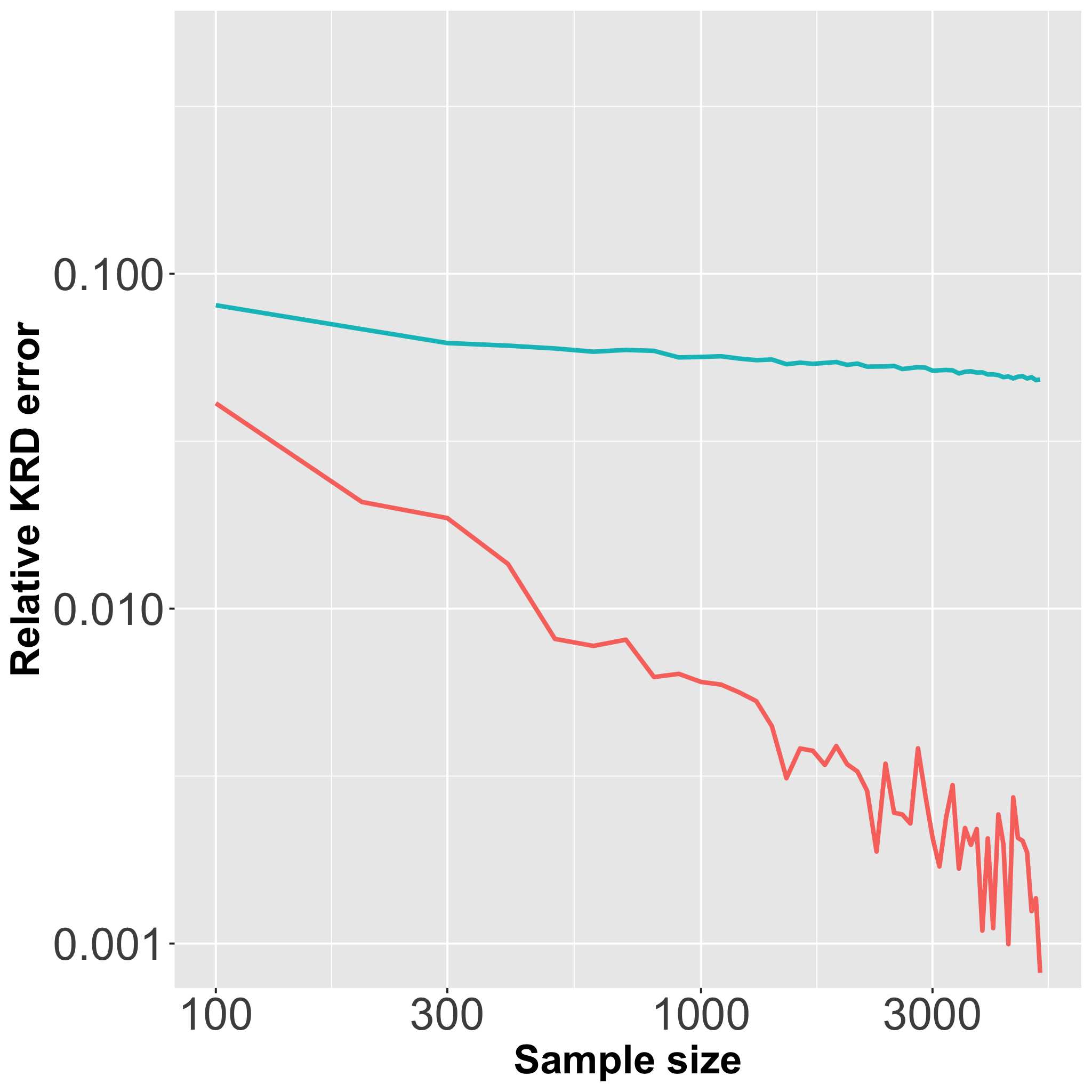}} 
    \subfloat[][]{\includegraphics[width=0.33\linewidth]{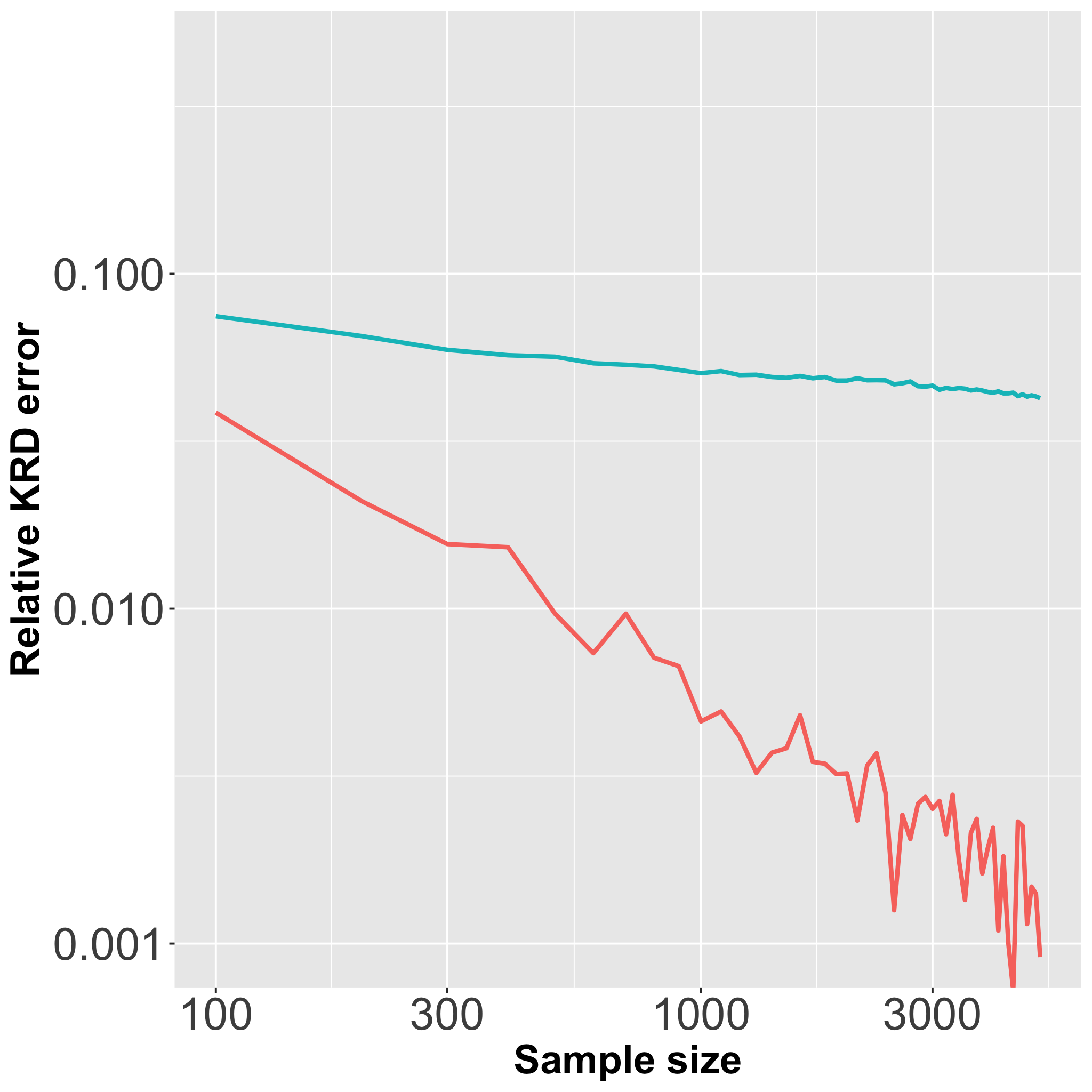}} 
      \subfloat[][]{\includegraphics[width=0.33\linewidth]{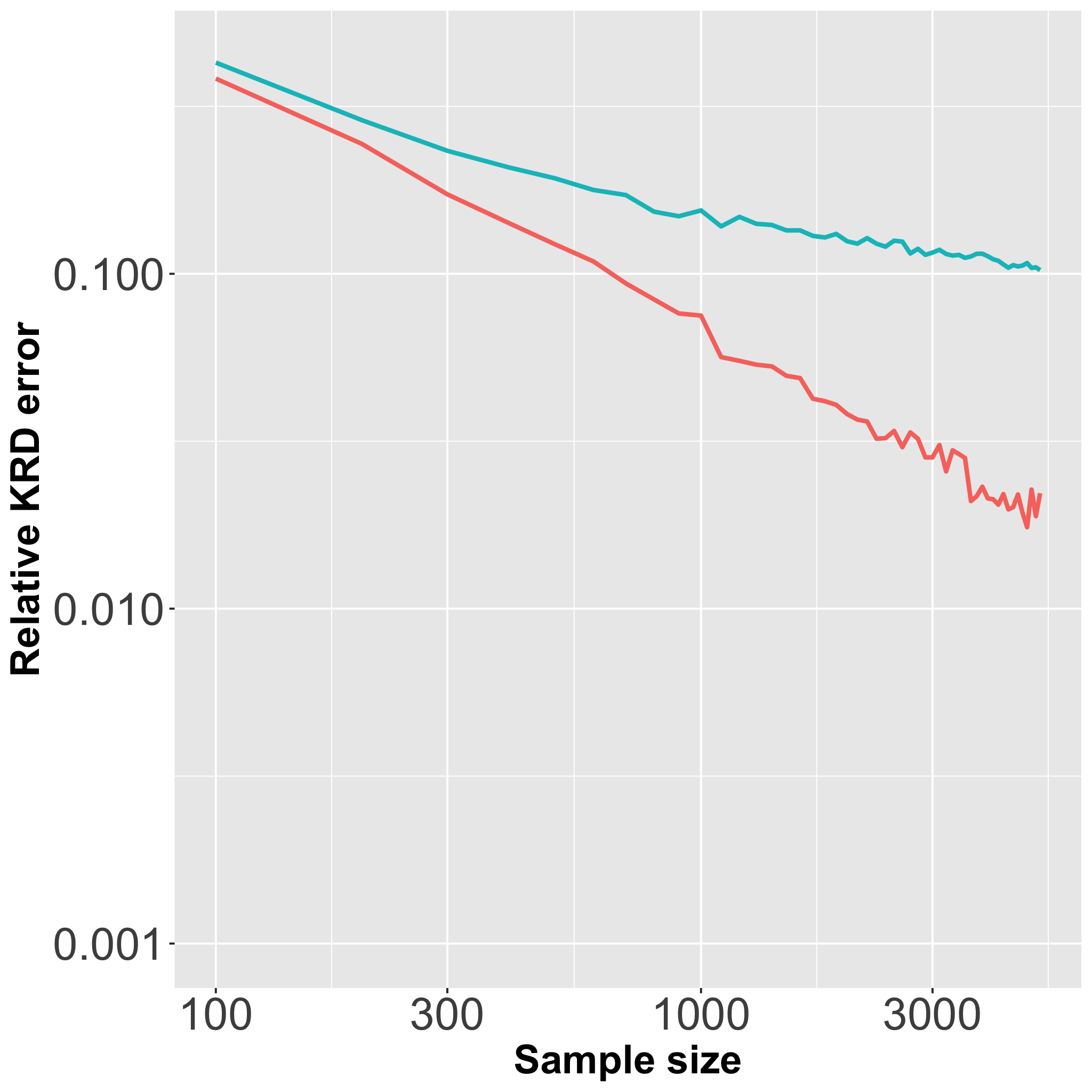}} 
  \caption{\textbf{(a)-(d):} Excerpts of size $300\times 300$ from the STED microscopy data of adult Human Dermal Fibroblasts in \cite{tameling2021colocalization}. The images have on average about $25000$ non-zero pixels. (a) and (c) have been labelled at MIC60 (a mitochondrial inner membrane complex); (b) and (d) have been labelled at TOM20 (translocase of the outer mitochondrial membrane). \textbf{(e)-(j):} Log-log-plots of the relative error for the empirical $(2,0.1)$-KRD (obtained from resampling and subsampling) between the four filament structures in (a)-(d) considered as measures in $[0,1]^2$. \textbf{(e)} Between (a) and (b). \textbf{(f)} Between (a) and (c). \textbf{(g)} Between (a) and (d). \textbf{(h)} Between (b) and (c). \textbf{(i)} Between (b) and (d). \textbf{(j)} Between (c) and (d).}%
  \label{fig:realdata}
\end{figure}

\section*{Acknowledgments}
S. Hundrieser, F.\  Heinemann,  M.\ Klatt, and M. Struleva gratefully acknowledge support from the DFG Research Training Group 2088 \textit{Discovering structure in complex data: Statistics meets optimization and inverse problems}. A.\ Munk gratefully acknowledges support from the DFG CRC 1456 \textit{Mathematics of the Experiment A04, C06}, DFG RU 5381 \textit{Mathematical Statistics in the information age -- Statistical efficiency and computational tractability}, and the DFG Cluster of Excellence 2067 MBExC \textit{Multiscale bioimaging--from molecular machines to networks of excitable cells}.


\begin{appendix}
  \section{Bounds for the Multinomial Model}\label{sec:multi}
  In this section we provide analogue results  to the convergence statement for the expected deviation of the estimator in KRD and TV (Theorems \ref{thm:TVbound} and \ref{thm:samplingboundKR}), the empirical UOT plan (\Cref{thm:UOTplan_convergence}), and  empirical barycenter (Theorems \ref{thm:frechetboundPoi} and \ref{thm:kr_boundPoi}) but for the multinomial model in \eqref{eq:multinomialmeasure}. We do not explicitly state the convergence statements for the empirical KRD (\Cref{cor:estimation_KRD} and \Cref{thm:estimation_KRD_separate}) as they can be immediately derived from our bound in \eqref{eq:stability bound_intro}.
  The proofs for all subsequent results only differ in the way the respective expectations are bounded, so whenever suitable, we only state relevant differences in the proofs.  Notably, in Appendix \ref{app:sim_mult} we detail some simulations for several classes of measures which showcase that our theoretical results are realized.

  \begin{theorem}[Expected deviation of estimator in TV and KRD]\label{thm:TVboundMult}
  Let $(\mathcal{X},d)$ be a finite metric space and $\mu\in\mathcal{M}_+(\mathcal{X})$ with total mass $\mathbb{M}(\mu)$. Let $\hat{\mu}_N$ be the estimator from \eqref{eq:multinomialmeasure}. Then, for any $p\geq 1$ it holds that 
  \begin{align*}
  \mathbb{E}\left[\KR^p_{p,C}\left(\hat{\mu}_N,\mu\right)\right] \leq C^p  \mathbb{E}\left[\TV\left(\hat{\mu}_N,\mu\right)\right] \leq  \left(C^p\sqrt{\mathbb{M}(\mu)}\sum_{x\in \X}\sqrt{\mu(x)} \right)N^{-\frac{1}{2}}.
  \end{align*}
  \end{theorem}
  \begin{proof}Following the proof of  Theorem~\ref{thm:TVbound}, it suffices to bound the TV norm in expectation,
  \begin{align*}
      \mathbb{E}\left[\TV\left(\hat{\mu}_N,\mu\right)\right] &= \sum_{x\in\mathcal{X}}\mathbb{E}\left[ \right\vert \hat{\mu}_N(x)-\mu(x)\left\vert\right]=\frac{\mathbb{M}(\mu)}{N}\sum_{x\in\mathcal{X}} \mathbb{E}\Big[ \Big\vert\sum_{i=1}^N \mathbbm{1}\{X_i=x\}-N\frac{\mu(x)}{\mathbb{M}(\mu)}\Big\vert\Big]\\
      &\leq \frac{\mathbb{M}(\mu)}{N}\sum_{x\in\mathcal{X}} \sqrt{N\frac{\mu(x)}{\mathbb{M}(\mu)}\left( 1-\frac{\mu(x)}{\mathbb{M}(\mu)}\right)}\leq N^{-\frac{1}{2}}\sqrt{\mathbb{M}(\mu)}\sum_{x\in\mathcal{X}}\sqrt{\mu(x)},
  \end{align*}
  where the inequality follows from the fact the $X_i\sim Ber\left(\mu(x)/\mathbb{M}(\mu)\right)$ for $i=1,\dots,N$.
  \end{proof}
  \begin{theorem}[Expected deviation of estimator in KRD]\label{thm:samplingboundKRMult}
  Let $(\mathcal{X},d)$ be a finite metric space and $\mu\in\mathcal{M}_+(\mathcal{X})$ with total mass $\mathbb{M}(\mu)$. Let $\hat{\mu}_N$ be the estimator from \eqref{eq:multinomialmeasure}. Then, for any $p\geq 1$, resolution $q>1$ and depth $L\in \mathbb{N}$ it holds that
  \begin{align*}
         \mathbb{E}\left[ \KR_{p,C}\left( \hat{\mu}_N,\mu\right)\right]\leq \mathcal{E}_{p,\mathcal{X},\mu}^{\mathrm{Mult}}(C)^{1/p}N^{-\frac{1}{2p}}.
  \end{align*}
  For
  \[
      A_{q,p,L,\X}(l):=\mathrm{diam}(\mathcal{X})^p2^{p-1}\left(q^{-Lp}\lvert  \X \rvert^{\frac{1}{2}}+\left(\frac{q}{q-1}\right)^p \sum_{j=l}^{L}q^{p-jp}\lvert Q_j \rvert^{\frac{1}{2}}\right),
  \]
  the constant is equal to 
  \begin{align*}
      &\mathcal{E}_{p,\mathcal{X},\mu}^{\mathrm{Mult}}(C,q,L)=
      &\begin{cases}
        \mathbb{M}(\mu) A_{q,p,L,\X}(1),\, &\text{if }C\geq 2h_{q,L}(0),\\[3ex]
        \mathbb{M}(\mu) A_{q,p,L,\X}(l),\, &\text{if }2h_{q,L}(l)\leq C< 2h_{q,L}(l-1),\\[3ex]
          \frac{C^p}{2}\sqrt{\mathbb{M}(\mu)}\sum_{x\in \X}\sqrt{\mu(x)}, \, &\text{if } C\leq \left(2h_{q,L}(L) \vee {\min_{x\neq x^\prime}d(x,x^\prime)}\right).
      \end{cases}
  \end{align*}
  Furthermore, for $p=1$ the factor $\frac{q}{(q-1)}$ in $A_{q,1,L,\X}(a,b,l)$ can be removed. Denote
  \[
      \mathcal{E}_{p,\mathcal{X},\mu}^{\mathrm{Mult}}(C):=\inf_{L\in \mathbb{N},q>1} \mathcal{E}_{p,\mathcal{X},\mu}^{\mathrm{Mult}}(C,q,L).
  \]
  \end{theorem}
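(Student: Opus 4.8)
The plan is to run the argument of \Cref{thm:samplingboundKR} essentially verbatim, substituting the multinomial sampling variances for the Poisson/Bernoulli ones. First I would fix $p=1$, deferring $p>1$ to the end. If $C\le \min_{x\neq x'}d(x,x')$, then by \cite[Theorem~2.2(ii)]{heinemann2022kantorovich} one has $\text{KR}_{1,C}(\hat\mu_N,\mu)=\tfrac{C}{2}\,\text{TV}(\hat\mu_N,\mu)$, so the claim is exactly \Cref{lem:TVboundMult} and yields the third case of $\mathcal{E}_{p,\X,\mu}^{\text{Mult}}(C,q,L)$. For general $C$ I would use the tree construction of \Cref{app:treeapproximation} to embed $\mu$ and $\hat\mu_N$ into an ultrametric tree $\mathcal{T}$, so that $\text{KR}_{1,C}(\hat\mu_N,\mu)\le \text{KR}_{d_\mathcal{T},C}(\hat\mu_N^L,\mu^L)$, and then apply \Cref{cor:treeapproximation}. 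After this reduction the only quantities left to control are $\mathbb{E}\big[\big|\hat\mu_N^L(\mathcal{C}(x))-\mu^L(\mathcal{C}(x))\big|\big]$ for nodes $x\in Q_j$ and $\mathbb{E}\big[|\mathbb{M}(\hat\mu_N)-\mathbb{M}(\mu)|\big]$.

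These two estimates are the only genuinely new ingredient. The second is trivial: since $\hat\mu_N$ is rescaled to the \emph{known} total mass, $\mathbb{M}(\hat\mu_N)=\tfrac{\mathbb{M}(\mu)}{N}\sum_{x}|\{k:X_k=x\}|=\mathbb{M}(\mu)$ deterministically, so this expectation vanishes. This is precisely why, unlike in the Poisson case, the large-$C$ branch of $\mathcal{E}_{p,\X,\mu}^{\text{Mult}}$ carries no mass-estimation summand and reduces to $A_{q,p,L,\X}(1)$. For the first estimate, write $Z_x:=|\{k:X_k\in\mathcal{C}(x)\}|$, which is $\text{Bin}\big(N,\mu(\mathcal{C}(x))/\mathbb{M}(\mu)\big)$ because the $X_k$ are i.i.d.\ $\mu/\mathbb{M}(\mu)$; then $\hat\mu_N^L(\mathcal{C}(x))=\tfrac{\mathbb{M}(\mu)}{N}Z_x$, and by Jensen's inequality and the binomial variance bound
\[
\mathbb{E}\big[\big|\hat\mu_N^L(\mathcal{C}(x))-\mu^L(\mathcal{C}(x))\big|\big]\;\le\;\tfrac{\mathbb{M}(\mu)}{N}\sqrt{\text{Var}(Z_x)}\;\le\;N^{-1/2}\sqrt{\mathbb{M}(\mu)\,\mu(\mathcal{C}(x))}.
\]
Summing over $x\in Q_l$, applying Cauchy--Schwarz and using that the level-$l$ nodes partition the leaf mass, $\sum_{x\in Q_l}\mu(\mathcal{C}(x))=\mathbb{M}(\mu)$, gives $\sum_{x\in Q_l}\mathbb{E}[\cdots]\le N^{-1/2}\,\mathbb{M}(\mu)\,\sqrt{|Q_l|}$, the multinomial analogue of the Poisson sum $\sqrt{|Q_l|}\,\sqrt{\tfrac{1}{st}\mathbb{M}(\mu)+\tfrac{1-s}{s}\sum_x\mu(x)^2}$, but now with the clean parametric rate $N^{-1/2}$. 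Note that no independence across distinct nodes is needed, only the triangle inequality and the marginal variance.

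Plugging these bounds into the three cases of \Cref{cor:treeapproximation}, and reading off $h_{q,L}(j-1)-h_{q,L}(j)=\text{diam}(\X)\,q^{1-j}$ (for $p=1$, whence the factor $q/(q-1)$ drops from $A_{q,p,L,\X}$), yields a bound of the stated form whose constant matches $\mathcal{E}_{p,\X,\mu}^{\text{Mult}}(C,q,L)$ in each of the three $C$-regimes. For $p>1$ I would repeat the computation with $h_{q,L}(j-1)^p-h_{q,L}(j)^p\le \text{diam}(\X)^p\big(\tfrac{q}{q-1}\big)^p q^{p-jp}$ and $h_{q,L}(L)^p-h_{q,L}(L+1)^p=\text{diam}(\X)^p q^{-Lp}$, keeping the $2^{p-1}$ prefactors of \eqref{eq:KRultrametric}, and finish with Jensen's inequality $\mathbb{E}[\text{KR}_{p,C}(\hat\mu_N,\mu)]\le(\mathbb{E}[\text{KR}^p_{p,C}(\hat\mu_N,\mu)])^{1/p}$ to obtain the rate $N^{-1/(2p)}$; since $q>1$ and $L\in\mathbb{N}$ are arbitrary, taking the infimum defines $\mathcal{E}_{p,\X,\mu}^{\text{Mult}}(C)$. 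I do not expect a real obstacle: everything of substance is already packaged in \Cref{cor:treeapproximation}, \Cref{lem:TVboundMult} and \cite{heinemann2022kantorovich}. The only points requiring care are the bookkeeping of the case split against the thresholds $2h_{q,L}(l)$ and $\min_{x\neq x'}d(x,x')$, and tracking exactly where the deterministic identity $\mathbb{M}(\hat\mu_N)=\mathbb{M}(\mu)$ simplifies the Poisson constant.
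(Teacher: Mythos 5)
Your proposal is correct and follows essentially the same route as the paper: reuse the tree-approximation argument of \Cref{thm:samplingboundKR} verbatim, note that $\mathbb{M}(\hat{\mu}_N)=\mathbb{M}(\mu)$ holds deterministically so the mass-estimation term drops from the large-$C$ branch, and replace the Poisson variance computation by the binomial one, giving $\sum_{x\in Q_l}\mathbb{E}\left[\left\vert \hat{\mu}_N^L(\mathcal{C}(x))-\mu^L(\mathcal{C}(x))\right\vert\right]\leq \mathbb{M}(\mu)\sqrt{\lvert Q_l\rvert/N}$, exactly as in the paper (which phrases the same bound via rescaled Bernoulli averages $a(x)=\mu^L(\mathcal{C}(x))/\mathbb{M}(\mu)$ rather than binomial counts). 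The small-$C$ case via \Cref{lem:TVboundMult}, the $p>1$ height-difference bounds, and the final Jensen step also coincide with the paper's proof.
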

  \begin{proof}
  The proof of this result only differs from the proof of \Cref{thm:samplingboundKR} by the upper bounds on the relevant expectations. By definition $\mathbb{E}\left[\lvert \mathbb{M}(\hat{\mu}_N)-\mathbb{M}(\mu)\rvert\right]=0$. Furthermore, scaling the expectation by total mass 
  \begin{align*}
  \mathbb{E}\left[\left\vert \hat{\mu}_N^L(\mathcal{C}(x))-\mu^L(\mathcal{C}(x))\right\vert\right]=\mathbb{M}(\mu) \mathbb{E}\left[\left\vert \frac{\hat{\mu}_N^L(\mathcal{C}(x))}{\mathbb{M}(\mu)}-\frac{\mu^L(\mathcal{C}(x))}{\mathbb{M}(\mu)}\right\vert\right],
  \end{align*}
  we notice that $\frac{\hat{\mu}_N^L(\mathcal{C}(x))}{\mathbb{M}(\mu)} \overset{D}{=} \frac{1}{N} \sum_{i=1}^N X_i(x)$,
  where $X_1(x),\ldots,X_N(x)\stackrel{\text{i.i.d.}}{\sim} \mathrm{Ber}\left(a(x)\right)$ with $a(x)\coloneqq \frac{\mu^L(\mathcal{C}(x))}{\mathbb{M}(\mu)}$. Consequently, it holds that 
  \begin{align*}
  \sum_{x\in Q_l}\mathbb{E}\left[\left\vert \hat{\mu}_N^L(\mathcal{C}(x))-\mu^L(\mathcal{C}(x))\right\vert\right]&=\mathbb{M}(\mu)\sum_{x\in Q_l}\mathbb{E}\left[\left\vert\frac{1}{N}\sum_{i=1}^N X_i(x)-a(x)\right\vert\right]\\
  &\leq \mathbb{M}(\mu) \sum_{x\in Q_l} \sqrt{\frac{a(x)(1-a(x))}{N}}\\
  &\leq \mathbb{M}(\mu)\sqrt{\frac{\left\vert Q_l\right\vert}{N}}.
  \end{align*}
  \end{proof}
  Notably, compared to $\mathcal{E}_{p,\mathcal{X},\mu}^{\mathrm{Poi}}(C,q,L)$, the constant $\mathcal{E}_{p,\mathcal{X},\mu}^{\mathrm{Mult}}(C,q,L)$ misses an additional summand for large $C$. This summand corresponds to the estimation error of the total mass intensity of $\hat{\mu}_N$ which is zero by assumptions of the model.
  \begin{remark}
  If $C>\diam(\X)$ and $\mathbb{M}(\mu)=\mathbb{M}(\nu)$, the $(p,C)$-KRD between $\mu$ and $\nu$ is equal to the Wasserstein distance between these two measures.  In particular, for $C>2h_{q,L}(0)$ we recover the respective deviation bounds  by \cite{sommerfeld2019optimal} for the measure estimator  Wasserstein distance. In particular,  for this setting, it holds in the multinomial model for all $N\in \mathbb{N}$ that $\mathbb{M}(\hat{\mu}_N)=\mathbb{M}(\mu)$ which implies for $C>\diam(\X)$ that $\KR_{p,C}(\hat \mu_N, \mu) = W_p(\hat \mu_N, \mu)$. Since for the latter term the parametric rate $N^{-\frac{1}{2p}}$  is already known to be optimal \citep{sommerfeld2019optimal}, our rate in $N$ is sharp.
  \end{remark}
  
  \begin{theorem}[Expected deviation of empirical UOT plans]\label{thm:UOTplan_convergence_mult}
  Let $(\mathcal{X},d)$ be a finite metric space and $\mu, \nu\in\mathcal{M}_+(\mathcal{X})$. Let $\hat{\mu}_N, \hat{\nu}_N$ be their respective  estimators from \eqref{eq:multinomialmeasure}. Then, for any $p\geq 1$ and $C\geq 0$ it follows that 
  \begin{align*}
      & \mathbb{E}\left[\mathcal{H}_\TV\left({\mathbf{P}}^*_{p,C}(\hat{\mu}_N,\hat{\nu}_N),{\mathbf{P}}^*_{p,C}(\mu,\nu)\right)\right] \\
      &\leq 4(|\XC|+1)N^{-1/2} \left( \sqrt{\mathbb{M}(\mu)} \sum_{x \in \XC} \sqrt{\mu(x)} + \sqrt{\mathbb{M}(\nu)} \sum_{x \in \XC} \sqrt{\nu(x)}\right) \\
      &\leq 4(|\XC|+1)^{3/2}N^{-1/2}\mathbb{M}(\mu+\nu).
    \end{align*}
  \end{theorem}
  \begin{proof}
  The assertion follows by combining our stability bound (Theorem \ref{thm:UOTplan_Stability}) with convergence of the estimators $\hat{\mu}_N, \hat{\nu}_N$ to $\mu, \nu$ with respect to the total variation norm (Theorem \ref{thm:TVboundMult}) and the fact that by definition $\mathbb{E}\left[\lvert \mathbb{M}(\hat{\mu}_N)-\mathbb{M}(\mu)\rvert\right]= \mathbb{E}\left[\lvert \mathbb{M}(\hat{\nu}_N)-\mathbb{M}(\nu)\rvert\right]=0$. The last inequality is due to Cauchy–Schwarz.
  \end{proof}
  \begin{remark}
  The convergence rate in $N$ matches with the distributional limit obtained by \cite{klatt2020limit} and \cite{liu2023asymptotic} and is therefore sharp in $N$. In particular, our result explicitly quantifies how the number of support points affect the convergence rate. %
  \end{remark}
  
  \begin{theorem}[Expected deviation of Fr\'echet error]\label{thm:frechetboundMult}
  Let $\mu^1,\ldots,\mu^J\in\msrX$ and denote $\mathcal{X}_i=\mathrm{supp}(\mu^i)$ for $i=1,\dots ,J$. Consider random estimators $\hat{\mu}^1_{N_i},\dots,\hat{\mu}^J_{N_J}\in\msrX$ derived from \eqref{eq:multinomialmeasure} and based on sample size $N_1,\dots,N_J$, respectively. Then it holds for any barycenter $\mu^*$ of the population measures and any barycenter $\hat{\mu}^*$ of the estimators,
  \begin{align*}
  \mathbb{E}\left[\lvert F_{p,C}(\hat{\mu}^\star) - F_{p,C}(\mu^\star) \rvert \right] \leq \frac{2p C^{p-1}}{J}\easysum{i=1}{J} \mathcal{E}_{1,\X_i,\mu^i}^{\mathrm{Mult}}(C)N_i^{-\frac{1}{2}}.
  \end{align*}
  \end{theorem}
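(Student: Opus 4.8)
The approach is to reproduce the proof of \Cref{thm:frechetboundPoi} and change only the final step, where the Poisson $(p,C)$-KRD deviation bound is replaced by its multinomial counterpart \Cref{thm:samplingboundKRMult}. Concretely, I would write $\hat F_{p,C}(\mu):=\frac1J\sum_{i=1}^J KR_{p,C}^p(\hat\mu^i_{N_i},\mu)$ for the empirical Fr\'echet functional, so $\mu^\star$ minimises $F_{p,C}$ and $\hat\mu^\star$ minimises $\hat F_{p,C}$. Optimality of $\mu^\star$ for $F_{p,C}$ gives $\lvert F_{p,C}(\mu^\star)-F_{p,C}(\hat\mu^\star)\rvert=F_{p,C}(\hat\mu^\star)-F_{p,C}(\mu^\star)$, and inserting $\hat F_{p,C}$ together with optimality of $\hat\mu^\star$ for $\hat F_{p,C}$ (which makes $\hat F_{p,C}(\hat\mu^\star)-\hat F_{p,C}(\mu^\star)\le 0$) yields
\[
\lvert F_{p,C}(\mu^\star)-F_{p,C}(\hat\mu^\star)\rvert\le\big(F_{p,C}(\hat\mu^\star)-\hat F_{p,C}(\hat\mu^\star)\big)+\big(\hat F_{p,C}(\mu^\star)-F_{p,C}(\mu^\star)\big).
\]
Each of the two summands is of the form $\pm\frac1J\sum_{i=1}^J\big(KR_{p,C}^p(\mu^i,\rho)-KR_{p,C}^p(\hat\mu^i_{N_i},\rho)\big)$ with $\rho=\hat\mu^\star$ and $\rho=\mu^\star$, respectively.

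The crux is the deterministic stability estimate
\[
\big\lvert KR_{p,C}^p(\alpha,\rho)-KR_{p,C}^p(\beta,\rho)\big\rvert\le p\,\min\{\diam(\Y),C\}^{\,p-1}\,KR_{1,C}(\alpha,\beta),\qquad \alpha,\beta\in\msrX,\ \rho\in\msrY,
\]
which is exactly the estimate used in \Cref{thm:frechetboundPoi}. I would derive it from the reverse triangle inequality for the metric $KR_{p,C}$ and the elementary bound $\lvert a^p-b^p\rvert\le p\max\{a,b\}^{p-1}\lvert a-b\rvert$, exploiting the structural property of the $(p,C)$-KRD emphasised throughout the paper: in an optimal plan mass is transported only at distances $\le\min\{\diam(\Y),C\}$. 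This lets one cap the $\max$-factor by $\min\{\diam(\Y),C\}$ and pass from the metric $KR_{p,C}$ to the linearised quantity $KR_{1,C}$ at the price of the extra power $\min\{\diam(\Y),C\}^{\,p-1}$, the mass-deficiency part being carried by the penalty terms. Since this estimate is pathwise, it may be applied with $\alpha=\mu^i$, $\beta=\hat\mu^i_{N_i}$ and the random choices $\rho=\hat\mu^\star$, $\rho=\mu^\star$; summing over $i$ and over the two summands and taking expectations gives
\[
\mathbb{E}\big[\lvert F_{p,C}(\mu^\star)-F_{p,C}(\hat\mu^\star)\rvert\big]\le\frac{2p\,\min\{\diam(\Y),C\}^{\,p-1}}{J}\sum_{i=1}^J\mathbb{E}\big[KR_{1,C}(\hat\mu^i_{N_i},\mu^i)\big].
\]

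The final step is where the multinomial model enters: by \Cref{thm:samplingboundKRMult} applied with $p=1$ on the ground space $\X_i$, $\mathbb{E}[KR_{1,C}(\hat\mu^i_{N_i},\mu^i)]\le\mathcal{E}_{1,\X_i,\mu^i}^{\text{Mult}}(C)N_i^{-1/2}$, and substitution finishes the proof. The only genuinely nontrivial ingredient is the displayed stability estimate; it is identical to the one in the proof of \Cref{thm:frechetboundPoi}, so for this theorem it can simply be invoked, and the remainder is a verbatim transcription of that argument with the rate $N_i^{-1/2}$ in place of $\phi(t_i,s_i)$. I expect no new obstacle beyond reproducing that estimate, the subtle point there being to absorb the unbalanced (mass-creation/deletion) contributions into the penalty terms without losing the sharp $\min\{\diam(\Y),C\}^{\,p-1}$ factor.
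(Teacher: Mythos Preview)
Your overall argument is correct and essentially the same as the paper's: the same optimality-based decomposition into two $\lvert F_{p,C}-\hat F_{p,C}\rvert$ terms, the same stability estimate reducing these to $KR_{1,C}(\mu^i,\hat\mu^i_{N_i})$, and then substitution of the multinomial single-measure bound from \Cref{thm:samplingboundKRMult} in place of the Poisson one.

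The one point that needs correction is your proposed \emph{derivation} of the stability estimate. Applying $\lvert a^p-b^p\rvert\le p\max\{a,b\}^{p-1}\lvert a-b\rvert$ with $a=KR_{p,C}(\alpha,\rho)$ and $b=KR_{p,C}(\beta,\rho)$ produces a factor $\max\{KR_{p,C}(\alpha,\rho),KR_{p,C}(\beta,\rho)\}^{p-1}$, and this is \emph{not} bounded by $\min\{\diam(\Y),C\}^{p-1}$: the value $KR_{p,C}(\alpha,\rho)$ scales with the total masses (e.g.\ $KR_{p,C}^p(\alpha,0)=\tfrac{C^p}{2}\mathbb{M}(\alpha)$). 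The property ``transport in an optimal plan occurs only at distances $\le\min\{\diam(\Y),C\}$'' bounds the integrand, not the KRD itself, and likewise there is no direct inequality $KR_{p,C}\le KR_{1,C}$. The paper obtains the estimate differently: it lifts to the augmented space $\tilde\Y=\Y\cup\{\dum\}$ with the capped metric $\tilde d_C$, where $KR_{p,C}^p=\tilde{OT}_p^p$ on the lifted measures, and then invokes the balanced-OT inequality $\lvert\tilde{OT}_p^p(\tilde\alpha,\tilde\rho)-\tilde{OT}_p^p(\tilde\beta,\tilde\rho)\rvert\le p\,\diam(\tilde\Y)^{p-1}\tilde{OT}_1(\tilde\alpha,\tilde\beta)$ from \cite{sommerfeld2018inference}; since $\diam(\tilde\Y)=\min\{\diam(\Y),C\}$ and $\tilde{OT}_1(\tilde\alpha,\tilde\beta)=KR_{1,C}(\alpha,\beta)$, the stated bound follows. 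As you note yourself, you can simply invoke this estimate from the proof of \Cref{thm:frechetboundPoi}; just drop the reverse-triangle derivation.
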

  \begin{theorem}[Expected deviation of empirical barycenters]\label{thm:kr_boundMult}
  Let $\mu^1,\ldots,\mu^J\in\msrX$ and denote $\mathcal{X}_i=\mathrm{supp}(\mu^i)$ for $i=1,\dots ,J$. Consider random estimators $\hat{\mu}^1_{N_i},\dots,\hat{\mu}^J_{N_J}\in\msrX$ derived from \eqref{eq:multinomialmeasure} and based on sample size $N_1,\dots,N_J$, respectively. Let $\mathbf{B}^\star$ be the set of $(p,C)$-barycenters of $\mu^1,\dots,\mu^J$ and $\hat{\mathbf{B}}^\star$ the set of $(p,C)$-barycenters of $\hat{\mu}^1_{N_i},\dots,\hat{\mu}^J_{N_J}$. Then, for $p\geq 1$ it holds that
  \begin{align*}
      \mathbb{E}\left[ \sup_{\hat{\mu}^\star \in \hat{\mathbf{B}}^\star} \inf_{\mu^\star \in \mathbf{B}^\star} \KR_{p,C}^p(\hat{\mu}^\star,\mu^\star) \right]
      \leq\frac{p C^{p-1} }{V_{P}J}\easysum{i=1}{J}\mathcal{E}_{1,\X_i,\mu^i}^{\mathrm{Mult}}(C)N_i^{-\frac{1}{2}},
  \end{align*}
  where the constant $V_P$ is defined as in \Cref{thm:kr_boundPoi}.
  \end{theorem}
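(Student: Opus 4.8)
The plan is to reduce the statement to \Cref{thm:frechetboundMult} by means of a deterministic, sample‑path stability estimate that turns a small excess value of the population Fréchet functional into $KR_{p,C}^p$‑proximity to the population barycenter set. This is precisely the scheme already carried out for \Cref{thm:kr_boundPoi}, the only substantive change being that the Poisson‑model expectation bounds are replaced by the multinomial ones from \Cref{lem:TVboundMult} and the computations in the proof of \Cref{thm:samplingboundKRMult}, which produce the $N_i^{-1/2}$ rate and replace $\mathcal{E}^{\text{Poi}}$ by $\mathcal{E}^{\text{Mult}}$. Concretely, I would first establish, for every realization of $\hat\mu^1_{N_1},\dots,\hat\mu^J_{N_J}$, the deterministic inequality
\[
\inf_{\mu^\star\in\mathbf{B}^\star} KR_{p,C}^p(\mu^\star,\hat\mu^\star)\ \le\ \frac{1}{2V_P}\big(F_{p,C}(\hat\mu^\star)-F_{p,C}(\mu^\star)\big)\qquad\text{for every }\hat\mu^\star\in\hat{\mathbf{B}}^\star,
\]
(noting $F_{p,C}(\hat\mu^\star)-F_{p,C}(\mu^\star)\ge 0$ since $\mu^\star$ minimizes the population Fréchet functional), and then take the supremum over $\hat\mu^\star\in\hat{\mathbf{B}}^\star$, pass to expectations, and bound the right‑hand side by \Cref{thm:frechetboundMult}.

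For the deterministic step I would use the linear‑programming description of the barycenter. By \cite{heinemann2022kantorovich}, every $(p,C)$‑barycenter, of the population measures or of the empirical ones, is supported on the finite restricted centroid set $\mathcal{C}_{KR}(J,p,C)$ of \eqref{eq:KRCentroid}; in the multinomial model $\text{supp}(\hat\mu^i_{N_i})\subseteq\text{supp}(\mu^i)$, so the empirical centroid set is contained in the population one and both optimization problems live over the same finite ground set. The population Fréchet minimization then becomes the fixed linear program \eqref{eq:lpbary} in the variables consisting of the barycenter masses on $\mathcal{C}_{KR}(J,p,C)$ together with the sub‑couplings $\pi^i\in\Pi_{\leq}(\mu^i,\cdot)$, with cost vector $c$ assembled from the values $d^p$ and the penalty $C^p$, optimal value $f^\star$ (a fixed multiple of $F_{p,C}(\mu^\star)$), and minimizer set $\mathcal{M}$ whose projection onto the barycenter coordinates is $\mathbf{B}^\star$. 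Any empirical barycenter $\hat\mu^\star$ lifts to a feasible point $\hat z$ of this population LP by appending the population‑optimal sub‑couplings of the pairs $(\mu^i,\hat\mu^\star)$; hence $c^\top\hat z$ equals the same fixed multiple of $F_{p,C}(\hat\mu^\star)$ and $c^\top\hat z-f^\star$ is proportional to $F_{p,C}(\hat\mu^\star)-F_{p,C}(\mu^\star)$.

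Next I would invoke the standard vertex form of Hoffman's error bound for this fixed polytope: every feasible $z$ satisfies $d_1(z,\mathcal{M})\le(c^\top z-f^\star)/\theta$ with $\theta:=\min_{v\in V\setminus V^\star}(c^\top v-f^\star)/d_1(v,\mathcal{M})>0$, positivity following from finiteness of $V$ and strict suboptimality of every non‑optimal vertex. Applying this to $\hat z$ controls $\lVert\hat\mu^\star-\tilde\mu^\star\rVert_1$ for the closest $\tilde\mu^\star\in\mathbf{B}^\star$, and the elementary comparison $KR_{p,C}^p(\mu,\nu)\le\min\{C,\diam(\Y)\}^p\,\mathrm{TV}(\mu,\nu)$ from \eqref{eq:naivetv} converts this into the desired $KR_{p,C}^p$ bound; collecting the encoding‑dependent prefactors $(J+1)\diam(\X)^{-p}$ into $\theta$ reproduces the constant $V_P$ of the statement. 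Taking the supremum over $\hat\mu^\star$ and expectations and substituting \Cref{thm:frechetboundMult} then yields the claimed bound with $\phi(t_i,s_i)$ replaced by $N_i^{-1/2}$ and $\mathcal{E}^{\text{Poi}}$ by $\mathcal{E}^{\text{Mult}}$.

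The routine parts are the substitution of the multinomial deviation bounds and the $\ell_1$‑to‑$\mathrm{TV}$ comparison. The genuine obstacle is the Hoffman‑type vertex estimate together with the precise bookkeeping that identifies the perturbed quantity with $c^\top\hat z-f^\star$ and tracks all normalizations through the LP encoding of \eqref{eq:lpbary}; the main subtlety to check carefully is that the feasible region of the \emph{population} LP contains every empirical barycenter, which in the multinomial model is exactly the support containment $\text{supp}(\hat\mu^i_{N_i})\subseteq\text{supp}(\mu^i)$ and hence of the centroid sets. Since every other ingredient is identical to the proof of \Cref{thm:kr_boundPoi}, one may also simply refer to that argument and only record the change of the model‑specific expectation bounds.
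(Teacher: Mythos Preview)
Your overall architecture is exactly that of the paper: a deterministic Hoffman--type stability estimate for the barycenter LP converts excess Fr\'echet value into $KR_{p,C}^p$--proximity to $\mathbf{B}^\star$, and then \Cref{thm:frechetboundMult} supplies the multinomial expectation bound. Your final sentence---just refer to the proof of \Cref{thm:kr_boundPoi} and swap the model--specific bounds---is literally what the paper does: the three theorems \Cref{thm:kr_boundPoi}, \Cref{thm:kr_boundMult}, \Cref{thm:kr_boundBer} share one proof, which passes through the lift $\phi_{\mu^1,\dots,\mu^J}$ to the augmented space $\tilde{\Y}$ and then invokes Lemma~\ref{lem:slope_bound} (an adaptation of Lemma~3.8 in \cite{heinemann2022randomized}) to obtain $\tilde F_{p,C}(\tilde\mu)-\tilde F_{p,C}(\tilde\mu^\star)\ge 2V_P\,\tilde{OT}_p^p(\tilde\mu,\tilde\mu^\star)$ directly, with no separate TV comparison.

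Where your concrete sketch deviates, it runs into a genuine issue. You describe working on the \emph{unbalanced} LP with variables $(a,\pi^1,\dots,\pi^J)$ and sub--couplings $\pi^i\in\Pi_{\le}(\mu^i,\cdot)$, yet you cite \eqref{eq:lpbary}, which is the \emph{augmented} program with equality marginals on $\tilde{\mathcal C}_{KR}(J,p,C)$; it is precisely this augmented LP that defines $V_P$. The unbalanced feasible polyhedron you describe is unbounded in the $a$--direction (the sub--coupling inequalities do not cap the barycenter mass), so the finite--vertex minimum $\min_{v\in V\setminus V^\star}(c^\top v-f^\star)/d_1(v,\mathcal M)$ is not immediately available and in any case would not reproduce the stated constant $V_P$. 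The paper's augmentation fixes the total mass to $\sum_i\mathbb M(\mu^i)$ and yields a genuine polytope; in the multinomial model $\mathbb M(\hat\mu^i_{N_i})=\mathbb M(\mu^i)$, so the lifts of population and empirical barycenters live in the \emph{same} polytope, and your support--containment observation $\text{supp}(\hat\mu^i_{N_i})\subseteq\text{supp}(\mu^i)$ is exactly what ensures the lifted empirical barycenter lies in $\mathcal M_+(\mathcal C_{KR}(J,p,C)\cup\{\dum\})$ so that Lemma~\ref{lem:slope_bound} applies. So keep your reduction, but route the stability step through the augmented LP rather than the unbalanced one.
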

  The proofs of \Cref{thm:frechetboundMult} and \Cref{thm:kr_boundMult} are deferred to \Cref{app:barycenters_proofs}.
  \begin{remark}
  By the same argument as for the Poisson model (\Cref{rmk:BarycenterSharp}), the convergence rate for the empirical barycenter does not to zero faster than for a single measure. Thus, the $N^{-1/2}$ rate is sharp.
  \end{remark}

  In the following we, derive explicit bounds for $\mathcal{E}_{p,\XC,\mu}^{\mathrm{Mult}}(C)$ for under the structural Assumption \eqref{eq:Xcovering_ass}. To this end, we follow the arguments in \Cref{sec:explicitB}. For the regime,  for $\alpha < 2p$ and $L \to \infty$ we obtain, 
  \begin{align*}
      \mathcal{E}_{p,\XC,\mu}^{\mathrm{Mult}}(C)\leq 
      \begin{cases}
        \mathbb{M}(\mu)A^{1/2}\mathrm{diam}(\XC)^{p} 2^{3p-1} \frac{2^{\alpha/2-p}}{1 - 2^{\alpha/2-p}}, \\[0.1cm] \hspace{2cm} \text{if }C\geq 2h_{L}(0),\\
        \mathbb{M}(\mu)A^{1/2} \mathrm{diam}(\XC)^{p} 2^{3p-1} \frac{2^{(\alpha/2-p)l}}{1 - 2^{\alpha/2-p}}, \\[0.1cm] \hspace{2cm} \text{if }2h_{L}(l)\leq C< 2h_{L}(l-1),\\
          \frac{C^p}{2}\sqrt{\mathbb{M}(\mu)}\sum_{x\in \XC}\sqrt{\mu(x)}, \\[0.1cm] \hspace{2cm} \text{if } C\leq \left(2h_{L}(L) \vee {\min_{x\neq x'}d(x,x')}\right).
      \end{cases}
  \end{align*}
  For $\alpha = 2p$ and $L = \lfloor \frac{1}{\alpha}\log_2 (|\XC|) \rfloor$ we get
  \begin{align*}
      \mathcal{E}_{p,\XC,\mu}^{\mathrm{Mult}}(C)\leq 
      \begin{cases}
        \mathbb{M}(\mu) \mathrm{diam}(\XC)^{p} 2^{3p-1}\left(2^{-p} + A^{1/2} \frac{1}{\alpha} \log_2 |\XC| \right), \\[0.1cm] \hspace{2cm} \text{if }C\geq 2h_{L}(0),\\
        \mathbb{M}(\mu) \mathrm{diam}(\XC)^{p} 2^{3p-1} (2^{-p} + A^{1/2}  \left( \frac{1}{\alpha} \log_2 |\XC| - l\right)), \\[0.1cm] \hspace{2cm} \text{if }2h_{L}(l)\leq C< 2h_{L}(l-1),\\
          \frac{C^p}{2}\sqrt{\mathbb{M}(\mu)}\sum_{x\in \XC}\sqrt{\mu(x)}, \, \\[0.1cm] \hspace{2cm} \text{if } C\leq \left(2h_{L}(L) \vee {\min_{x\neq x'}d(x,x')}\right).
      \end{cases}
  \end{align*}
  Finally, for $\alpha>2p$ and $L=\lfloor \frac{1}{\alpha}\log_2(|\XC|) \rfloor$, it holds
  \begin{align*}
      \mathcal{E}_{p,\XC,\mu}^{\mathrm{Mult}}(C)\leq 
      \begin{cases}
        \mathbb{M}(\mu) \mathrm{diam}(\XC)^{p} 2^{3p-1} |\XC|^{1/2 - p/\alpha}  \left(2^{-p} + A^{1/2}  \frac{2^{\alpha/2-2p}}{2^{\alpha/2-p}-1} \right), \\[0.1cm] \hspace{2cm} \text{if }C\geq 2h_{L}(0),\\
        \mathbb{M}(\mu) \mathrm{diam}(\XC)^{p} 2^{3p-1} \times \\
          \ (2^{-p}|\XC|^{1/2-p/\alpha} + A^{1/2} \frac{2^{\alpha/2-p}}{2^{\alpha/2-p} - 1} \left(|\XC|^{1/2-p/\alpha} - 2^{(\alpha/2-p)(l-1)} \right)), \\[0.1cm] \hspace{2cm} \text{if }2h_{L}(l)\leq C< 2h_{L}(l-1),\\
          \frac{C^p}{2}\sqrt{\mathbb{M}(\mu)}\sum_{x\in \XC}\sqrt{\mu(x)}, \\[0.1cm] \hspace{2cm}\text{if } C\leq \left(2h_{L}(L) \vee {\min_{x\neq x'}d(x,x')}\right).
      \end{cases}
  \end{align*}
  We stress that while these constants do not include the additional term for the estimation of the total mass intensity, their dependency on $\lvert \X \rvert$ is identical to that of the upper bounds on $\mathcal{E}_{p,\mathcal{X},\mu}^{\text{Pois}}(C)$. In particular, the phase transitions still occur depending on whether $\alpha$ is larger than $2p$, smaller than $2p$ or equal to it.

  \section{Bounds for the Bernoulli Model}\label{sec:ber}
  In this section we provide results analogue to previous subsection for the estimator in the Bernoulli model in \eqref{eq:bermeasure}. As before, since the proofs only differ in the way expectations are bounded, we only state relevant differences in the proofs. Simulations, corroborating our theoretical findings are provided in Appendix \ref{app:sim_ber}. 
  \begin{theorem}[Expected deviation of estimator in TV and KRD]\label{thm:TVboundBer}
  Let $(\mathcal{X},d)$ be a finite metric space and $\mu\in\mathcal{M}_+(\mathcal{X})$ with $\mu(x)\in \{ 0,1 \}$ for $x\in \X$. Let $\hat{\mu}_{s_\X}$ be the measure in \eqref{eq:bermeasure}. Then, for any $p\geq 1$ it holds that
  \begin{align*}
      \mathbb{E}\left[ \KR^p_{p,C}\left(\hat{\mu}_{s_\X},\mu\right)\right]\leq  C^p \mathbb{E}\left[\TV\left(\hat{\mu}_{s_\X},\mu\right)\right]\leq 2 C^p\sum_{x\in \X}\left(1-s_x\right).
  \end{align*}
  \end{theorem}
  \begin{proof}
  This proof is identical to the proof of  Theorem~\ref{thm:TVbound} except for the bound on the expectation. For this, note that
  \begin{align*}
      \mathbb{E}\left[\TV\left(\hat{\mu}_{s_\X},\mu\right)\right]= \sum_{x\in\mathcal{X}} \mathbb{E}\left[ \left\vert \frac{1}{s_x}B_x-\mu(x)\right\vert\right]&\leq \sum_{x\in\supp(\mu)} (1-s_x)+s_x(\frac{1}{s_x}-1)
      \leq 2\sum_{x\in\mathcal{X}}(1-s_x),
  \end{align*}
  with $B_x\sim Ber(s_x \mu(x))$ for $s_x\in (0,1]$ for all $x\in \X$.
  \end{proof}
  \begin{theorem}[Expected deviation of estimator in KRD]\label{thm:samplingboundKRBer}
  Let $(\mathcal{X},d)$ be a finite metric space and $\mu\in\mathcal{M}_+(\mathcal{X})$ with $\mu(x)\in \{ 0,1 \}$ for $x\in \X$. Let $\hat{\mu}_{s_\X}$ be the measure in \eqref{eq:bermeasure}. Then, for any $p\geq 1$, resolution $q>1$ and depth $L\in \mathbb{N}$ it holds that
  \begin{align*}
         \mathbb{E}\left[ \KR_{p,C}\left( \hat{\mu}_{s_\X},\mu\right)\right]\leq \mathcal{E}_{p,\mathcal{X},\mu}^{\mathrm{Ber}}(C,q,L)^{1/p}\begin{cases}
     \left(2\sum_{x\in \X} (1-s_x) \right)^{\frac{1}{p}} , \\[0.1cm] \hspace{2cm}\text{if } C\leq \left(2h_{q,L}(L) \vee {\min_{x\neq x^\prime}d(x,x^\prime)}\right) \\[0.4cm]
    \left( \sum_{x\in \X}\frac{1-s_x}{s_x} \right)^{\frac{1}{2p}}, \\[0.1cm] \hspace{2cm} \text{else}. \\
      \end{cases}
  \end{align*}
  The constant $\mathcal{E}_{p,\mathcal{X},\mu}^{\mathrm{Ber}}(C,q,L)$ is equal to $\mathcal{E}_{p,\mathcal{X},\mu}^{\mathrm{Poi}}(C,q,L)$ for all $C>0$, $q>1$ and $L\in \mathbb{N}$. We denote 
  \[
  \mathcal{E}_{p,\mathcal{X},\mu}^{\mathrm{Ber}}(C):=\inf_{L\in \mathbb{N},q>1} \mathcal{E}_{p,\mathcal{X},\mu}^{\mathrm{Ber}}(C,q,L).
  \]
  \end{theorem}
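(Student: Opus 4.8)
The plan is to follow the proof of \Cref{thm:samplingboundKR} line by line, replacing only the second-moment estimates that are specific to the estimator. For the first branch of the bound nothing new is needed: \Cref{lem:TVboundBer} already gives $\mathbb{E}[\text{KR}^p_{p,C}(\hat{\mu}_{s_\X},\mu)]\le C^p\sum_{x\in\X}(1-s_x)$, and since on the scale $C\le(2h_{q,L}(L)\vee\min_{x\neq x'}d(x,x'))$ one has $\mathcal{E}^{\text{Ber}}_{p,\X,\mu}(C,q,L)=\mathcal{E}^{\text{Poi}}_{p,\X,\mu}(C,q,L)=\tfrac{C^p}{2}$ by definition, Jensen's inequality $\mathbb{E}[\text{KR}_{p,C}]\le(\mathbb{E}[\text{KR}^p_{p,C}])^{1/p}$ produces exactly the stated right-hand side there.

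For general $C$ I would carry out the ultrametric tree construction of \Cref{app:treeapproximation}, embed $\mu$ and $\hat{\mu}_{s_\X}$ as leaf-supported measures $\mu^L,\hat{\mu}_{s_\X}^L$, apply \eqref{eq:generalKRbound} and then \Cref{cor:treeapproximation}. What remains is to bound, for each level $l$, the moments $\mathbb{E}[\lvert\mathbb{M}(\hat{\mu}_{s_\X})-\mathbb{M}(\mu)\rvert]$ and $\sum_{x\in Q_l}\mathbb{E}[\lvert\hat{\mu}_{s_\X}^L(\mathcal{C}(x))-\mu^L(\mathcal{C}(x))\rvert]$ appearing in the three cases of that lemma. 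This is where the Bernoulli model simplifies matters relative to the Poisson model: $\hat{\mu}_{s_\X}^L(\mathcal{C}(x))=\sum_{y\in\mathcal{C}(x)\cap\X}B_y/s_y$ with independent $B_y\sim\text{Ber}(s_y)$ and $\text{Var}(B_y/s_y)=(1-s_y)/s_y$, so $\mathbb{E}\lvert Z-\mathbb{E}Z\rvert\le\sqrt{\text{Var}(Z)}$ gives $\mathbb{E}[\lvert\hat{\mu}_{s_\X}^L(\mathcal{C}(x))-\mu^L(\mathcal{C}(x))\rvert]\le\big(\sum_{y\in\mathcal{C}(x)\cap\X}(1-s_y)/s_y\big)^{1/2}$. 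A Cauchy--Schwarz step together with the fact that $\{\mathcal{C}(x)\cap\X : x\in Q_l\}$ partitions $\X$ then yields $\sum_{x\in Q_l}\mathbb{E}[\lvert\hat{\mu}_{s_\X}^L(\mathcal{C}(x))-\mu^L(\mathcal{C}(x))\rvert]\le\sqrt{\lvert Q_l\rvert}\,\big(\sum_{x\in\X}(1-s_x)/s_x\big)^{1/2}$, while the total-mass deviation obeys the same estimate without the $\sqrt{\lvert Q_l\rvert}$ factor. Inserting these into the three cases of \Cref{cor:treeapproximation} and collecting the factors $\diam(\X)^p$, $2^{p-1}$, $(q/(q-1))^p$, $q^{-jp}$ reproduces verbatim the quantity $A_{q,p,L,\X}(l)$ and the large-$C$ coefficient $\tfrac{C^p}{2}-2^{p-1}h_{q,L}(0)^p$ of \Cref{thm:samplingboundKR}; the only estimator-dependent object pulled outside the sums is the single scalar $\big(\sum_{x\in\X}(1-s_x)/s_x\big)^{1/2}$, which is precisely the ``else'' factor in the statement. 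For $p>1$ I would, exactly as in the Poisson proof, use $h_{q,L}(j-1)^p-h_{q,L}(j)^p\le\diam(\X)^p(q/(q-1))^pq^{p-jp}$ and $h_{q,L}(L)^p-h_{q,L}(L+1)^p=\diam(\X)^pq^{-Lp}$ from \eqref{eq:height}, bound the moments as above, and close with Jensen.

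I do not expect any step to be a genuine obstacle — the estimator has a single layer of randomness and a clean per-coordinate variance, so all moment computations are short. The one point that merits care is the last assertion, that $\mathcal{E}^{\text{Ber}}_{p,\X,\mu}(C,q,L)=\mathcal{E}^{\text{Poi}}_{p,\X,\mu}(C,q,L)$ as an \emph{exact} identity rather than merely equality up to constants. This is settled by the observation already made: once the moments are bounded, the Bernoulli and Poisson estimates differ only by a multiplicative scalar that factors out of every summand of \Cref{cor:treeapproximation}, so the residual depends on $p,q,L,\X,C$ through the same $A_{q,p,L,\X}(l)$ and the same large-$C$ correction in both models, and in the small-$C$ range both equal $\tfrac{C^p}{2}$.
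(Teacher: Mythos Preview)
Your proposal is correct and essentially identical to the paper's own proof: the paper likewise states that the argument differs from \Cref{thm:samplingboundKR} only in the moment bounds, computes $\mathbb{E}\lvert\hat{\mu}_{s_\X}^L(\mathcal{C}(x))-\mu^L(\mathcal{C}(x))\rvert\le\big(\sum_{y\in\mathcal{C}(x)}(1-s_y)/s_y\big)^{1/2}$ via the variance, applies Cauchy--Schwarz over $Q_l$ to extract the factor $\sqrt{\lvert Q_l\rvert}\big(\sum_{x\in\X}(1-s_x)/s_x\big)^{1/2}$, and bounds the total-mass deviation analogously.
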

  \begin{proof}
  The proof of this result only differs from the proof of \Cref{thm:samplingboundKR} in the upper bounds on the relevant expectations. Recall the estimator $\hat{\mu}_{s_\X}$ from \eqref{eq:bermeasure} and let $B_x\sim \mathrm{Ber}(s_x \mu(x))$ for $s_x\in (0,1]$ for all $x\in  \X$. It holds that
  \begin{align*}
      \sum_{x\in Q_l}\mathbb{E}\left[\left\vert \hat{\mu}_{s_\X}^L(\mathcal{C}(x))-\mu^L(\mathcal{C}(x))\right\vert\right]&=  \sum_{x\in Q_l}\mathbb{E}\left[\left\vert \sum_{y\in \mathcal{C}(x)}\frac{B_y}{s_y}-\sum_{y\in \mathcal{C}(x)}\mu^L(y)\right\vert\right]\\
      &\leq  \sum_{x\in Q_l} \sqrt{\text{Var}\left( \sum_{y\in \mathcal{C}(x)}\frac{B_y}{s_y} \right)}
      = \sum_{x\in Q_l}  \sqrt{\sum_{y\in \mathcal{C}(x)}s_y^{-2}\text{Var}(B_y)}\\
      &= \sum_{x\in Q_l}\sqrt{\sum_{y\in \mathcal{C}(x)} \frac{(1- \mu(y)s_y)\mu(y)}{s_y}}
      \leq \sqrt{\lvert Q_l \rvert} \sqrt{\sum_{x\in \X} \frac{1-s_x}{s_x}}.
  \end{align*}
  The total mass can be bounded analogously as
  \begin{align} \label{eq:massconvBer}
     \mathbb{E}\left[ \lvert \hat{\mu}^L_{s_\X}(\X)-\mu^L(\X) \rvert \right]\leq \sqrt{\sum_{x\in\X}\frac{1-s_x}{s_x}}.
  \end{align}
  \end{proof}
  Since the constants for the deviation bounds for this model coincide with those for the Poisson model we refer to the previous discussion on their properties. 
  \begin{remark}
  Consider $s_\X$ such that $s_x=s$ for some $s\in(0,1]$ and all $x \in \X$. Note, that for sufficiently small $C$ the upper bound is an equality, since the $(p,C)$-KRD in this setting is proportional to the TV distance and that distance has a closed form solution here. For larger $C$, the expectation in the proof of \Cref{thm:samplingboundKRBer} amounts to bounding the mean absolute deviation of a binomial distribution. This has a closed form solution which scales as the standard deviation of the respective binomial for $s$ not too close to $0$ or $1$ \citep{berend2013sharp}. Hence, in this context the upper bound on the mean absolute deviation in the proof is sharp. So based on the presented approach for the deviation bounds, the upper bound is non-improvable.
  \end{remark}
  
  To state the result analogous to \Cref{thm:UOTplan_Stability},  we consider $\mu$ and $\nu\in\mathcal{M}_+(\XC)$ with respective supports $\XC_\mu := \mathrm{supp}(\mu) = \{x_1, \dots, x_M\}$ and success probabilities $s_x \in [0, 1]$ for $x \in \XC_\mu$, and $\XC_\nu := 
   \mathrm{supp}(\nu) = \{x'_1, \dots, x'_{M'}\}$  with success probabilities $s_{x'} \in [0, 1]$ for $x' \in \XC_\nu$. 
  \begin{theorem}[Expected deviation of empirical UOT plans]\label{thm:UOTplan_convergence_ber}
  Let $(\mathcal{X},d)$ be a finite metric space and $\mu, \nu\in\mathcal{M}_+(\mathcal{X})$.
  Let $\hat{\mu}_{s_{\XC_\mu}}, \hat{\nu}_{s_{\XC_\nu}}$ be their respective  estimators from \eqref{eq:bermeasure}. Then, for any $p\geq 1$ and $C\geq 0$ it follows that 
  \begin{align*}
      & \mathbb{E}\left[\mathcal{H}_\TV\left({\mathbf{P}}^*_{p,C}(\hat{\mu}_{s_{\XC_\mu}}, \hat{\nu}_{s_{\XC_\nu}}),{\mathbf{P}}^*_{p,C}(\mu,\nu)\right)\right] \\
      &\leq 4(|\XC|+1) \left(2 \sum_{x \in \XC_\mu}\left(1-s_x\right) + 2 \sum_{x' \in \XC_\nu}\left(1-s_{x'}\right) + \sqrt{\sum_{x \in \XC_\mu} \frac{1-s_x}{s_x}} + \sqrt{\sum_{x' \in \XC_\nu} \frac{1-s_{x'}}{s_{x'}}} \right)\\
      &\leq 4(|\XC|+1) \left(2(\mathbb{M}(\mu+\nu)-2) + \sqrt{\sum_{x \in \XC_\mu} \frac{1-s_x}{s_x}} + \sqrt{\sum_{x' \in \XC_\nu} \frac{1-s_{x'}}{s_{x'}}} \right)
    \end{align*}
  \end{theorem}
  \begin{proof}
  The assertion follows by combining our stability bound (Theorem \ref{thm:UOTplan_Stability}) with convergence of the estimators $\hat{\mu}_{s_{\XC_\mu}}, \hat{\nu}_{s_{\XC_\nu}}$ to $\mu, \nu$ with respect to the total variation norm and in terms of their masses \eqref{eq:massconvBer}.
  \end{proof}

  \begin{theorem}[Expected deviation of Fr\'echet error]\label{thm:frechetboundBer}
  Let $\mu^1,\ldots,\mu^J\in\msrX$ and denote $\mathcal{X}_i=\mathrm{supp}(\mu^i)$ for $i=1,\dots ,J$. Consider (random) estimators $\hat{\mu}^1_{s_{\X_1}},\dots,\hat{\mu}^J_{s_{\X_J}}\in\msrX$ derived from \eqref{eq:bermeasure}. Then,
  \begin{align*}
  \mathbb{E}\left[\lvert F_{p,C}(\hat{\mu}^\star) -F_{p,C}(\mu^\star) \rvert \right] \leq \frac{2p C^{p-1}}{J}\easysum{i=1}{J} \mathcal{E}_{1,\X_i,\mu^i}^{\mathrm{Ber}}(C)\psi(s_{\X_i}),
  \end{align*}
  where $\psi$ is given by
  \begin{align*}
     \psi(s_\X)= \begin{cases}
     (2\sum_{x\in \X} (1-s_x), & \text{if }C\leq \min_{x\neq x^\prime}d(x,x^\prime) \\
    \left( \sum_{x\in \X}\frac{1-s_x}{s_x} \right)^{\frac{1}{2}}, &\text{else}. \\
      \end{cases}
  \end{align*}
  \end{theorem}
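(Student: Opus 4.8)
This is the Bernoulli-model counterpart of \Cref{thm:frechetboundPoi}: the deterministic part of the argument is identical, and only the final expectation estimate changes, using \Cref{thm:samplingboundKRBer} in place of \Cref{thm:samplingboundKR}. I would proceed in three steps.

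First, reduce the Fréchet deviation to a uniform deviation of the empirical Fréchet functional. Write $\hat F_{p,C}(\mu):=\frac1J\sum_{i=1}^J\text{KR}_{p,C}^p(\hat\mu^i_{s_{\X_i}},\mu)$. Since $\mu^\star$ minimises $F_{p,C}$ and $\hat\mu^\star$ minimises $\hat F_{p,C}$ over $\msrY$, adding and subtracting $\hat F_{p,C}(\hat\mu^\star)$ and $\hat F_{p,C}(\mu^\star)$ and dropping the nonpositive term $\hat F_{p,C}(\hat\mu^\star)-\hat F_{p,C}(\mu^\star)$ gives
\[
0\le F_{p,C}(\hat\mu^\star)-F_{p,C}(\mu^\star)\le \bigl|F_{p,C}(\hat\mu^\star)-\hat F_{p,C}(\hat\mu^\star)\bigr|+\bigl|\hat F_{p,C}(\mu^\star)-F_{p,C}(\mu^\star)\bigr|\le 2\sup_{\mu\in\msrY}\bigl|F_{p,C}(\mu)-\hat F_{p,C}(\mu)\bigr|.
\]

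Second, control the uniform deviation measure-by-measure. For any fixed $\mu$, the triangle inequality in $\ell^1$ over $i$ together with the deterministic pointwise estimate
\[
\bigl|\text{KR}_{p,C}^p(\mu^i,\mu)-\text{KR}_{p,C}^p(\hat\mu^i_{s_{\X_i}},\mu)\bigr|\le p\min\{\diam(\Y),C\}^{p-1}\,\text{KR}_{1,C}\bigl(\mu^i,\hat\mu^i_{s_{\X_i}}\bigr)
\]
gives a bound on $\sup_\mu|F_{p,C}(\mu)-\hat F_{p,C}(\mu)|$ that no longer depends on $\mu$, namely $\tfrac{p\min\{\diam(\Y),C\}^{p-1}}{J}\sum_i\text{KR}_{1,C}(\mu^i,\hat\mu^i_{s_{\X_i}})$. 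This pointwise estimate — the UOT analogue of the Wasserstein inequality used for barycenters in \cite{heinemann2022randomized} — is the technical heart: it follows from the reverse triangle inequality for the metric $\text{KR}_{p,C}$, the elementary bound $|a^p-b^p|\le p(a\vee b)^{p-1}|a-b|$, and a comparison of $\text{KR}_{p,C}^p$ with $\text{KR}_{1,C}$ at the effective transport scale $\min\{\diam(\Y),C\}$. Establishing that the scale is $\min\{\diam(\Y),C\}$ rather than $\diam(\Y)$ or $C$ alone — by noting that any displacement in an optimal sub-coupling is either a genuine transport of length at most $\diam(\Y)$ or a mass deletion priced by $C^p$ — is the main obstacle, and requires a careful gluing of the optimal sub-couplings of $(\mu^i,\hat\mu^i_{s_{\X_i}})$ and $(\hat\mu^i_{s_{\X_i}},\mu)$.

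Finally, take expectations. Combining the last two displays and invoking \Cref{thm:samplingboundKRBer} with $p=1$, which yields $\mathbb{E}[\text{KR}_{1,C}(\mu^i,\hat\mu^i_{s_{\X_i}})]\le\mathcal{E}_{1,\X_i,\mu^i}^{\text{Ber}}(C)\,\psi(s_{\X_i})$ with exactly the function $\psi$ in the statement, gives
\[
\mathbb{E}\bigl[|F_{p,C}(\mu^\star)-F_{p,C}(\hat\mu^\star)|\bigr]\le \frac{2p\min\{\diam(\Y),C\}^{p-1}}{J}\sum_{i=1}^J\mathcal{E}_{1,\X_i,\mu^i}^{\text{Ber}}(C)\,\psi(s_{\X_i}),
\]
as claimed. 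Only this last step differs from the Poisson case, so the write-up can simply record the change in the expectation bound and refer to the proof of \Cref{thm:frechetboundPoi} for the rest.
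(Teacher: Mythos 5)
Your skeleton is the same as the paper's: the risk decomposition (the paper adds and subtracts $\hat F_{p,C}(\mu^\star)$ and uses optimality of $\hat\mu^\star$, which is exactly your supremum bound since the per-$i$ estimate does not depend on $\mu$), then the pointwise inequality $\lvert KR_{p,C}^p(\mu^i,\mu)-KR_{p,C}^p(\hat\mu^i_{s_{\X_i}},\mu)\rvert\le p\min\{\diam(\mathcal{Y}),C\}^{p-1}KR_{1,C}(\mu^i,\hat\mu^i_{s_{\X_i}})$, then expectations via \Cref{thm:samplingboundKRBer} with $p=1$, which indeed produces exactly $\mathcal{E}^{\text{Ber}}_{1,\X_i,\mu^i}(C)\psi(s_{\X_i})$. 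The first and last steps are fine and match the paper.

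The gap is precisely at the step you flag as the technical heart: the pointwise inequality is asserted, and the ingredients you list would not prove it as stated. Applying the reverse triangle inequality for the metric $KR_{p,C}$ and then $\lvert a^p-b^p\rvert\le p(a\vee b)^{p-1}\lvert a-b\rvert$ with $a=KR_{p,C}(\mu^i,\mu)$, $b=KR_{p,C}(\hat\mu^i_{s_{\X_i}},\mu)$ leaves a prefactor $(a\vee b)^{p-1}$ that is \emph{not} bounded by $\min\{\diam(\mathcal{Y}),C\}^{p-1}$: KR distances grow with total mass (e.g.\ $KR_{p,C}^p(\mu,0)=C^p\,\mathbb{M}(\mu)/2$), so this route only yields a mass-dependent constant. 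Likewise a direct comparison $KR^p_{p,C}\le\min\{\diam(\mathcal{Y}),C\}^{p-1}KR_{1,C}$ is false in general, since the mass-penalty term costs $C^p/2$ per unit (take $\mu^1=\delta_x$, $\mu^2=0$ and $C>\diam(\mathcal{Y})$). The power-mean-value bound has to be applied at the level of the truncated cost $d^p\wedge C^p$, not of the distances, and your proposed gluing of optimal sub-couplings is delicate because created and destroyed mass does not glue. The paper sidesteps all of this: it lifts to the balanced problem on the augmented space $(\tilde{\mathcal{Y}},\tilde d_C)$ from \eqref{eq:augmetric}, where $KR^p_{p,C}=\tilde{OT}^p_p$ on equal-mass augmented measures and the ground metric is truncated at scale $C$, and then quotes the estimate $\lvert\tilde{OT}_p^p(\tilde\mu^1,\tilde\mu^3)-\tilde{OT}_p^p(\tilde\mu^2,\tilde\mu^3)\rvert\le p\,\diam(\tilde{\mathcal{Y}})^{p-1}\tilde{OT}_1(\tilde\mu^1,\tilde\mu^2)$ from \cite{sommerfeld2018inference} together with $\tilde{OT}_1(\tilde\mu^1,\tilde\mu^2)=KR_{1,C}(\mu^1,\mu^2)$. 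To complete your proposal you must either perform this lift (or an equivalent gluing carried out on the truncated cost with explicit treatment of the unmatched mass); as written, the central inequality of your plan remains unproven.
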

  \begin{theorem}[Expected deviation of empirical barycenters]\label{thm:kr_boundBer}
  Let $\mu^1,\ldots,\mu^J\in\msrX$ and denote $\mathcal{X}_i=\mathrm{supp}(\mu^i)$ for $i=1,\dots ,J$. Consider (random) estimators $\hat{\mu}^1_{s_{\X_1}},\dots,\hat{\mu}^J_{s_{\X_J}}\in\msrX$ derived from \eqref{eq:bermeasure}. Let $\mathbf{B}^\star$ be the set of $(p,C)$-barycenters of $\mu^1,\dots,\mu^J$ and $\hat{\mathbf{B}}^\star$ the set of $(p,C)$-barycenters of $\hat{\mu}^1_{s_{\X_1}},\dots,\hat{\mu}^J_{s_{\X_J}}$. Then, for $p\geq 1$ it holds that
  \begin{align*}
      \mathbb{E}\left[ \sup_{\hat{\mu}^\star \in \hat{\mathbf{B}}^\star} \inf_{\mu^\star \in \mathbf{B}^\star} \KR_{p,C}^p(\hat{\mu}^\star,\mu^\star) \right]
      \leq\frac{p C^{p-1} }{V_{P}J}\easysum{i=1}{J} \mathcal{E}_{1,\X_i,\mu^i}^{\mathrm{Ber}}(C)\psi(s_{\X_i}),
  \end{align*}
  where $\psi$ is defined as in \Cref{thm:frechetboundBer} and $V_P$ is defined as in \Cref{thm:frechetboundPoi}.
  \end{theorem}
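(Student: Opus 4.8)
The plan is to follow the proof of \Cref{thm:kr_boundPoi} essentially verbatim: as the paper notes, the Bernoulli and Poisson constants coincide ($\mathcal{E}_{p,\mathcal{X},\mu}^{\text{Ber}}=\mathcal{E}_{p,\mathcal{X},\mu}^{\text{Poi}}$), so the only genuinely new input is the Bernoulli moment control, namely \Cref{lem:TVboundBer} together with the case $p=1$ of \Cref{thm:samplingboundKRBer}, which between them give $\mathbb{E}\bigl[KR_{1,C}(\hat{\mu}^i_{s_{\X_i}},\mu^i)\bigr]\le \mathcal{E}_{1,\X_i,\mu^i}^{\text{Ber}}(C)\,\psi(s_{\X_i})$ for each $i$. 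Everything else — the reduction of the barycenter problem to a linear program and the sensitivity analysis producing $V_P$ — is model-independent and carries over unchanged.

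First I would establish \Cref{thm:frechetboundBer}. Since $\mu^\star$ minimises $F_{p,C}$ over $\msrY$ and $\hat{\mu}^\star$ minimises the empirical Fréchet functional $\hat F_{p,C}(\mu)=\frac1J\sum_{i=1}^J KR_{p,C}^p(\hat{\mu}^i_{s_{\X_i}},\mu)$, the standard minimax sandwich gives
\[
0\le F_{p,C}(\hat{\mu}^\star)-F_{p,C}(\mu^\star)\le 2\sup_{\mu\in\msrY}\bigl|F_{p,C}(\mu)-\hat F_{p,C}(\mu)\bigr|\le\frac{2}{J}\easysum{i=1}{J}\sup_{\mu}\bigl|KR_{p,C}^p(\mu^i,\mu)-KR_{p,C}^p(\hat{\mu}^i_{s_{\X_i}},\mu)\bigr|.
\]
The key per-measure step is a Lipschitz-type stability estimate for $F_{p,C}$ in each of its data arguments: combining the reverse triangle inequality for the metric $KR_{p,C}$ with the mean-value inequality $|a^p-b^p|\le p\max(a,b)^{p-1}|a-b|$ and the a priori bounds on $KR_{p,C}$ forced by the structure of the barycenter problem (transport occurs only at distances at most $C$, within $\Y$, and the relevant measures have controlled total mass), one obtains $\sup_\mu|KR_{p,C}^p(\mu^i,\mu)-KR_{p,C}^p(\hat{\mu}^i,\mu)|\le p\min\{\diam(\Y),C\}^{p-1}\,KR_{1,C}(\mu^i,\hat{\mu}^i_{s_{\X_i}})$, where the passage to the $(1,C)$-KRD comes from perturbing an optimal sub-coupling by the excess/deficient mass of $\hat{\mu}^i$ relative to $\mu^i$. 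Taking expectations and inserting the Bernoulli bound above yields \Cref{thm:frechetboundBer}.

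Then I would lift this control of the scalar Fréchet value to the set-valued statement, exactly as in \Cref{thm:kr_boundPoi}. By \cite{heinemann2022kantorovich} every $(p,C)$-barycenter of $\mu^1,\dots,\mu^J$ (and of the empirical measures) is supported in the finite restricted centroid set $\mathcal{C}_{KR}(J,p,C)$, so computing a barycenter is a linear program over a fixed polytope; $\mathbf{B}^\star$ is (up to the correspondence of \Cref{app:sec_LP}) the polyhedral optimal set $\mathcal{M}$ of the program \eqref{eq:lpbary}, and $\hat{\mathbf{B}}^\star$ the optimal set of the perturbed program whose cost vector differs from $c$ only in the blocks encoding $KR_{p,C}^p(\mu^i,\cdot)$. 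Because an empirical barycenter is a minimiser of the perturbed LP, as a feasible point its \emph{original} objective value exceeds $f^\star$ only by the Fréchet gap controlled in the previous step; the Hoffman-type error bound built into the definition of $V_P$, i.e.\ $d_1(v,\mathcal{M})\le\frac{(J+1)\diam(\X)^{-p}}{V_P}(c^Tv-f^\star)$ for feasible vertices $v$, then produces an $\ell_1$-bound of the (vertex) empirical barycenter to $\mathcal{M}$, and finally $KR_{p,C}^p\le\frac{C^p}{2}\mathrm{TV}$ together with $\diam$ comparisons converts this into the claimed bound on $\sup_{\hat{\mu}^\star}\inf_{\mu^\star}KR_{p,C}^p(\mu^\star,\hat{\mu}^\star)$, the factor $\tfrac12$ relative to \Cref{thm:frechetboundBer} being absorbed into $V_P$; strict positivity of $V_P$ is automatic, being a minimum of strictly positive ratios over the finitely many non-optimal vertices $v\in V\setminus V^\star$. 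The main obstacle is exactly this last passage — converting control of one scalar (the Fréchet value) into control of the entire, generally non-unique, set of empirical barycenters — which is what forces the quantitative LP sensitivity argument and the condition-number-type constant $V_P$; the remaining estimates are routine bookkeeping inherited from the Poisson proof with the Bernoulli moments substituted in.
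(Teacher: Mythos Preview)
Your strategy matches the paper's unified proof (which handles the Poisson, multinomial and Bernoulli cases simultaneously): first the Fr\'echet bound, then an LP sensitivity argument encoded by $V_P$. The only Bernoulli-specific input is indeed $\mathbb{E}[KR_{1,C}(\mu^i,\hat{\mu}^i_{s_{\X_i}})]\le\mathcal{E}_{1,\X_i,\mu^i}^{\text{Ber}}(C)\,\psi(s_{\X_i})$ from \Cref{thm:samplingboundKRBer}.

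Where your sketch diverges from the paper is in the handling of the second step. The paper works throughout on the \emph{augmented} balanced problem: it lifts via $\phi_{\mu^1,\ldots,\mu^J}$ to $\tilde{\Y}=\Y\cup\{\mathfrak d\}$, uses $KR_{p,C}^p=\tilde{OT}_p^p$, derives the Lipschitz estimate from the balanced inequality $|\tilde{OT}_p^p-\tilde{OT}_p^p|\le p\,\mathrm{diam}(\tilde\Y)^{p-1}\tilde{OT}_1$ of \cite{sommerfeld2018inference}, and then invokes Lemma~\ref{lem:slope_bound} (exactly the Hoffman-type bound you describe, but stated for the augmented Fr\'echet functional) to obtain $\tilde{OT}_p^p(\tilde\mu^\star,\hat{\tilde\mu}^\star)=KR_{p,C}^p(\mu^\star,\hat\mu^\star)$ directly, with no $TV$ conversion needed. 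Your description contains a slip that obscures why the lift is essential: in \eqref{eq:lpbary} the measures enter through the right-hand sides $b^i$, not the cost vector $c$, so passing from population to empirical measures perturbs the \emph{feasible polytope}, not the objective. Consequently an empirical barycenter (with its empirical plans) is not feasible for the population LP, and one cannot directly plug it into the $V_P$ bound as you suggest. The lift fixes this by re-augmenting $\hat\mu^\star$ to a measure of the correct total mass on the population centroid set, which is then a legitimate input to Lemma~\ref{lem:slope_bound}. Your final conversion via $KR_{p,C}^p\le\tfrac{C^p}{2}\mathrm{TV}$ is therefore unnecessary and would not recover the stated constant cleanly.
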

  The proofs of \Cref{thm:frechetboundBer} and \Cref{thm:kr_boundBer} are deferred to \Cref{app:barycenters_proofs}.

  \section{A Lift to the Balanced Optimal Transport Problem}\label{app:liftOT}
  A key tool in establishing properties of the $(p,C)$-KRD and the $(p,C)$-barycenter is the lift of these problems to the space of probability measures by augmenting the space $\X$ with a dummy point having a fixed distance to all points in $\X$, see \cite[Section 3.1]{heinemann2022kantorovich} for details. For a fixed parameter $C>0$, consider a dummy point $\dum$ and define the augmented space $\tilde{\X}\coloneqq \X\cup \{\dum\}$ with metric cost
  \begin{align}\label{eq:augmetric}
  \tilde{d}^p_C(x,x^\prime)=
  \begin{cases}
  d^p(x,x^\prime)\wedge C^p,\, &\text{if }x,x^\prime\in \mathcal{X},\\[1ex]
  \frac{C^p}{2},\, &\text{if }x\in\mathcal{X},\,x^\prime=\dum,\\[1ex]
  \frac{C^p}{2},\, &\text{if }x=\dum,\,x^\prime\in\mathcal{X},\\[1ex]
  0,\, &\text{if }x=x^\prime=\dum.
  \end{cases}
  \end{align}
  Consider the subset $\mathcal{M}_+^B(\X)\coloneqq \left\lbrace \mu \in \msrX\, \mid\, \mathbb{M}(\mu)\leq B\right\rbrace\subset \mathcal{M}_+(\mathcal{X})$ of non-negative measures whose total mass is bounded by $B$. Setting $\tilde{\mu}\coloneqq\mu + (B-\mathbb{M}(\mu))\delta_\dum$, any measure $\mu \in \mathcal{M}_+^B(\mathcal{X})$ defines an \emph{augmented measure} $\tilde{\mu}$ on $\X$ such that $\mathbb{M}(\tilde{\mu})=B$. For any $\mu,\nu\in \msrX$ and their augmented versions $\tilde{\mu},\tilde{\nu}\in \msrX$ it holds \begin{align}\label{eq:liftOT}
      \KR^p_{C,p}(\mu,\nu)=\tilde{\mathrm{OT}}_{\tilde{d}^p_C}(\tilde{\mu},\tilde{\nu}).
  \end{align}
  Here, $\tilde{\mathrm{OT}}_{\tilde{d}^p_C}$ denotes the OT cost defined for measures $\mu,\nu$ on $(\tilde{\X},\tilde{d})$ with $\mathbb{M}(\mu)=\mathbb{M}(\nu)$ as 
  \[
      \tilde{\mathrm{OT}}_{\tilde{d}^p_C}(\mu,\nu)\coloneqq \min_{\pi\in \Pi_{=}(\mu,\nu)} \sum\limits_{x,x^\prime \in \X}\tilde{d}^p_C(x,x^\prime)\pi(x,x^\prime),
  \]
  where the set of couplings $\Pi_{=}(\mu,\nu)$ is the set $\Pi_{\leq}(\mu,\nu)$ with inequalities replaced by equalities. 
  
  Similarly, the $(p,C)$-barycenter problem can be augmented. For this, let $\tilde{\Y}\coloneqq\Y\cup\{\dum\}$ endowed with the metric $\tilde{d}_C$ in \eqref{eq:augmetric} (replace $\X$ by $\Y$ and recall that $\X\subset\Y$) and augment the measures $\mu^1,\ldots,\mu^J$ to $\tilde{\mu}^1,\ldots,\tilde{\mu}^J$ where $\tilde{\mu}_i=\mu^i+\sum_{j\neq i} \mathbb{M}(\mu^i) \delta_\dum$ for $1\leq i\leq J$. In particular, it holds $\mathbb{M}(\tilde{\mu}_i)=\sum_{i=1}^J \mathbb{M}(\mu^i)$ and the \emph{augmented $p$-Fr\'echet functional} is defined as
  \begin{align*}
      \tilde{F}_{p,C}(\mu):=\frac{1}{J}\sum_{i=1}^{J}\tilde{\mathrm{OT}}_{\tilde{d}^p_C}(\tilde{\mu}_i,\mu).
  \end{align*}
  Any minimizer of $\tilde{F}_{p,C}$ is referred to as augmented $(p,C)$-barycenter.
  
  \subsubsection*{LP-Formulation for the $\mathbf{(p,C)}$-Barycenter}
  \label{app:sec_LP}
  According to \cite[Lemma 3.2]{heinemann2022kantorovich}, the augmented $(p,C)$-barycenter problem can be rewritten as a linear program based on the centroid set $\tilde{\mathcal{C}}_{KR}(J,p,C)= \mathcal{C}_{KR}(J,p,C) \cup \{\dum \}$ (recall \eqref{eq:KRCentroid} for the definition of $\mathcal{C}_{KR}(J,p,C)$) of the augmented measures. This yields
  \begin{alignat}{3}\label{eq:lpbary}
      \underset{\pi^{(1)},\dots,\pi^{(J)},a}{\min} \quad & \frac{1}{J} \easysum{i=1}{J}\easysum{j=1}{\lvert \tilde{\mathcal{C}}_{\KR}(J,p,C) \rvert}&&\easysum{k=1}{M_i}\pi^{(i)}_{jk}c_{jk}^i \nonumber \\
  \text{s.t.} \quad \easysum{k=1}{M_i}\pi^{(i)}_{jk}&=a_j, &&\forall \ i=1,\dots ,J, \forall j=1,\dots,\lvert \tilde{\mathcal{C}}_{\KR}(J,p,C) \rvert, \nonumber \\
  \easysum{j=1}{\lvert \tilde{\mathcal{C}}_{\KR}(J,p,C) \rvert}\pi^{(i)}_{jk}&=b^i_k, &&\forall \ i=1,\dots ,J , \forall k=1,\dots ,M_i, \\
  \pi^{(i)}_{jk}&\geq 0,  &&\forall i=1,\dots ,J,  \forall j=1,\dots ,\lvert \tilde{\mathcal{C}}_{\KR}(J,p,C)\rvert , \nonumber \\& &&\forall k=1,\dots ,M_i,\nonumber 
  \end{alignat}
  where $M_i=\lvert \tilde{\mathcal{X}}_i \rvert$ for each $1\leq i\leq J$ is the cardinality of the support of the augmented measure $\tilde{\mu}_i$. Here, $c^i_{jk}$ denotes the distance between the $j$-th point of $\lvert \tilde{\mathcal{C}}_{\KR}(J,p,C) \rvert$ and the $k$-th point in the support of $\tilde{\mu}_i$, while $b^i$ is the vector of masses corresponding to $\tilde{\mu}_i$.

  \section{Auxiliary Results and Deferred Proofs}\label{app:proofs}

    \subsection{Proof of Stability Bound of KRD via Dual Formulation} \label{app:stability_proofs}
    
    \begin{proof}[Proof of Lemma \ref{lem:stability_dual}]
    We first observe that the second assertion follows from $|x-y| \leq x^{1-p}\left|x^p-y^p\right|$ for all $x, y \geq 0, p \geq 1$. To prove \eqref{eq:KRD_duality_stabilityBound}, 
    note that if $\mu^i = 0$ or $\nu^i = 0$ for each $i \in \{1, 2\}$, then by $\Pi_\leq(\mu^i,\nu^i)= \{0\}$ it follows that 
    \begin{align}\label{eq:MassRepresentation}
      \KR_{p,C}(\mu^i, \nu^i) = \frac{C^p}{2}  \left(\mathbb{M}(\mu^i) + \mathbb{M}(\nu^i)\right),
    \end{align}
     and the assertion follows by triangle inequality, 
    \begin{align*}
      |\KR_{p,C}^p(\mu^1, \nu^1) - \KR_{p,C}^p(\mu^2, \nu^2)| 
      = \;&  \frac{C^p}{2}\Big|\mathbb{M}(\mu^1) + \mathbb{M}(\nu^1)- \mathbb{M}(\mu^2) - \mathbb{M}(\nu^2)\Big|\\
      \leq \;& \frac{C^p}{2}\left(\left|\mathbb{M}(\mu^1) - \mathbb{M}(\mu^2)\right| + \left|\mathbb{M}(\nu^1) - \mathbb{M}(\nu^2)\right|\right).
    \end{align*}
    Moreover, if $C\leq \min\{\dist(\mu^1, \nu^1), \dist(\mu^2, \nu^2)\}$, then by \citet[Lemma 2.1]{heinemann2022kantorovich} the UOT plan between $\mu^i$ and $\nu^i$ is to not transport anything, and delete the excess mass at cost $C^p/2$, which yields \eqref{eq:MassRepresentation} and again inequality \eqref{eq:KRD_duality_stabilityBound} follows by triangle inequality. 
    
    For the remaining case, we augment $\mu^i$,  $\nu^i$ to $\tilde \mu^i$, $\tilde \nu^i$ on $\tilde \XC = \XC \cup \{\mathfrak{d}\}$ as spelled out in \Cref{app:liftOT} with total mass $B \coloneqq \max(\mathbb{M}(\mu^i),\mathbb{M}(\nu^i))$. Then, by the representation of the UOT cost in terms of the augmented OT cost (see Equation \eqref{eq:liftOT}), it follows that 
    \begin{equation*}
        \KR_{p,C}^p(\mu^i, \nu^i) = \tilde{\mathrm{OT}}_{\tilde d^p_C}(\tilde{\mu}^i, \tilde{\nu}^i) = \max_{\substack{f, g: \tilde \XC \to \RR \\ f(x)+g(x') \leq \tilde d^p_C(x,x')}} \sum_{x \in \tilde \XC} f(x) \tilde \mu^i(x)+\sum_{x' \in \tilde \XC} g(x') \tilde \nu^i(x').
    \end{equation*}
    By \citet[Remark 1.13]{villani2003topics} there always exist optimal potentials $f$ and $g$ such that they are bounded by $\|\tilde d^p_C\|_\infty \leq C^p$ in absolute value. %
    Hence, upon denoting such pairs of optimal potentials for $\tilde{\mathrm{OT}}_{p,C}^p(\mu^i, \nu^i)$ by $(f_i, g_i)$, it follows that
    \begin{align*}
        \KR_{p,C}^p(\mu^1, \nu^1) - \KR_{p,C}^p(\mu^2, \nu^2) 
        \leq\;& \sum_{x \in \XC} f_1(x) (\tilde \mu^1 - \tilde \mu^2)(x)+\sum_{x' \in \tilde \XC} g_1(x') (\tilde \nu^1 - \tilde \nu^2)(x')\\
        \leq\;& C^p\left(\sum_{x\in \tilde \XC} |(\tilde \mu^1 - \tilde \mu^2)(x)| + \sum_{x\in \tilde \XC} |(\tilde \mu^1 - \tilde \mu^2)(x)| \right)\\
        =\;& C^p\left( \TV(\tilde \mu^1, \tilde \mu^2) + \TV(\tilde \nu^1, \tilde \nu^2)\right)  \\
        \leq\;& 2C^p\left( \TV( \mu^1, \mu^2) + \TV(\nu^1, \nu^2)\right),
    \end{align*}
    where the last inequality follows from Lemma \ref{thm:TVbound_augmentation}. Exchanging $(\mu^1, \nu^1)$ with $(\mu^2, \nu^2)$ yields a corresponding lower bound and finishes the proof. 
    \end{proof}

    \subsection{Proof of Statistical Deviation Bound for Empirical KRD via TV-norm and Tree Approximation} \label{app:TV_bound_proofs}
    As a preparatory step to prove \Cref{thm:samplingboundKR}, we treat the significantly simpler case of an empirical deviation bound with respect to the total variation distance.
    
    \begin{proof}[Proof of \Cref{thm:TVbound}]
    Let $\mu,\nu\in \msrX$. By retaining all common mass between $\mu$ and $\nu$ at place and delete (resp. create) excess mass (resp. deficient mass) we obtain a feasible solution for \eqref{eq:krdistance} with objective value in terms of a total variation distance between $\mu$ and $\nu$. Thus, it holds
    \begin{align*}
        \KR^p_{p,C}\left(\mu,\nu\right)\leq C^p \TV\left(\mu,\nu\right).
    \end{align*}
    In particular, this inequality is satisfied for $\nu=\hat{\mu}_{t,s}$. Taking expectations yields
    \begin{align*}
        \mathbb{E}\left[\TV\left(\hat{\mu}_{t,s},\mu\right)\right]&=\frac{1}{st}\sum_{x\in \X}\mathbb{E}\left[ \lvert P_xB_x - st\mu(x) \rvert \right] \\
        &=\frac{1}{st}\sum_{x\in \X}s\mathbb{E}\left[ \lvert P_x - st\mu(x) \rvert \right]+(1-s)st\mu(x)\\
        &\leq \frac{1}{st}\sum_{x\in \X}s(1-s)\mathbb{E}\left[ P_x \right] +s^2\mathbb{E}\left[  \lvert P_x-t\mu(x) \rvert \right]+(1-s)st\mu(x)\\
        &\leq \frac{1}{st}\sum_{x\in \X} 2s(1-s)t\mu_x +s^2\sqrt{t}\sqrt{\mu(x)}\\
        &=2(1-s)\mathbb{M}(\mu)+\frac{s}{\sqrt{t}}\sum_{x\in \X} \sqrt{\mu(x)}.
    \end{align*}
    \end{proof}

    With Lemma~\ref{lem:treeapproximation} and Theorem~\ref{thm:TVbound} at our disposal we are able to prove \Cref{thm:samplingboundKR}.

    \begin{proof}[Proof of \Cref{thm:samplingboundKR}]
    Let $\hat{\mu}_{t,s}$ be the estimator from \eqref{eq:poissonmeasure}. We fix $p=1$ and detail the case $p>1$ at the end of the proof. Suppose first that $C\leq \min_{x\neq x^\prime} d(x,x^\prime)$. According to \cite[Theorem 2.2 (ii)]{heinemann2022kantorovich} it holds that 
    \begin{align*}
    \mathbb{E}\left[\KR_{1,C}(\hat{\mu}_{t,s},\mu)\right]=\frac{C}{2}\mathbb{E}\left[\sum_{x\in \mathcal{X}} \left\vert \hat{\mu}_{t,s}-\mu(x) \right\vert\right]=\frac{C}{2}\mathbb{E}\left[\TV(\hat{\mu}_{t,s},\mu)\right].
    \end{align*}
    This yields the total variation bounds (see Theorem~\ref{thm:TVbound}). Next, consider the tree approximation as outlined in \Cref{app:treeapproximation} and construct an ultrametric tree $\mathcal{T}$ such that $\text{KR}_{1,C}(\hat{\mu}_{t,s},\mu)\leq \text{KR}_{d_\mathcal{T},C}\left(\hat{\mu}_{t,s}^L,\mu^L\right).$
    Applying Lemma~\ref{lem:treeapproximation} for $p=1$ where by definition the difference of height function is equal to 
    \begin{align*}
    h_{q,L}(j-1)-h_{q,L}(j)=\frac{\mathrm{diam}(\mathcal{X})}{q-1} \left( q^{2-j}-q^{1-j}\right)=\mathrm{diam}(\mathcal{X})q^{1-j}
    \end{align*}
    and yields the upper bound
    \begin{align*}
        &\mathbb{E}\left[\KR_{1,C}(\hat{\mu}_{t,s},\mu)\right]\leq \mathbb{E}\left[\KR_{d_\mathcal{T},C}\left(\hat{\mu}_{t,s}^L,\mu^L\right)\right]\\[3ex]
        &= \begin{cases}
            \left( \frac{C}{2}-h_{q,L}(0)\right)\mathbb{E}\left[\lvert \mathbb{M}(\hat{\mu}_{t,s})-\mathbb{M}(\mu)\rvert\right]\\
            +\mathrm{diam}(\mathcal{X})\sum\limits_{j=1}^{L+1}q^{1-j}\sum\limits_{x\in Q_j}\mathbb{E}\left[\left\vert \hat{\mu}_{t,s}^L(\mathcal{C}(x))-\mu^L(\mathcal{C}(x))\right\vert\right], &\text{if } C\geq 2h_{q,L}(0) \\[3ex]
            \mathrm{diam}(\mathcal{X})\sum\limits_{j=l}^{L+1}q^{1-j}\sum\limits_{x^\prime\in Q_j} \mathbb{E}\left[\left\vert \hat{\mu}_{t,s}^L(\mathcal{C}(x^\prime))-\mu^L(\mathcal{C}(x^\prime))\right\vert\right], \,  &\text{if } 2h_{q,L}(l)\leq C< 2h_{q,L}(l-1),\\[3ex]
            \frac{C^p}{2} \mathbb{E}\left[ \TV(\hat{\mu}_{t,s},\mu)\right],\,  &\text{if } C\leq \left( 2h_{q,L}(L) \vee {\min_{x\neq x^\prime}d(x,x^\prime)}\right). \\
        \end{cases}
    \end{align*}
    For the estimator from \eqref{eq:poissonmeasure} with $B_x\sim \mathrm{Ber}(s)$ and $P_x\sim \mathrm{Poi}(t\mu(x))$ for all $x \in \X$, it holds 
    \begin{align*}
        &\sum_{x\in Q_l}\mathbb{E}\left[\left\vert \hat{\mu}_{t,s}^L(\mathcal{C}(x))-\mu^L(\mathcal{C}(x))\right\vert\right]\\&= \sum_{x\in Q_l}\
        \frac{1}{st}\mathbb{E}\left[ \left\lvert \sum_{y\in \mathcal{C}(x)} P_yB_y - st\sum_{y\in \mathcal{C}(x)} \mu(y) \right\rvert \right] \\
        &\leq \sum_{x\in Q_l}\ \frac{1}{st} \sqrt{\text{Var}\left(\sum_{y\in \mathcal{C}(x)} P_yB_y\right)}
        = \sum_{x\in Q_l}\ \frac{1}{st} \sqrt{\sum_{y\in \mathcal{C}(x)}\text{Var}(P_yB_y)}\\
        &=\sum_{x\in Q_l}\ \frac{1}{st} \sqrt{\sum_{y\in \mathcal{C}(x)} s(1-s)t\mu(y)+s(1-s)\mu(y)^2+t\mu(y)s^2}\\
        &= \sum_{x\in Q_l}\ \sqrt{ \frac{1-s}{st}\sum_{y\in \mathcal{C}(x)}\mu(y)+ \frac{1-s}{s}\sum_{y\in \mathcal{C}(x)}\mu(y)^2+\frac{1}{t}\sum_{y\in \mathcal{C}(x)}\mu(y)}\\
            &=\sum_{x\in Q_l} \sqrt{ \frac{1}{st}\mu\left(\mathcal{C}(x)\right)+ \frac{1-s}{s}\sum_{y\in \mathcal{C}(x)}\mu(y)^2}\\
            &\leq \sqrt{\lvert Q_l \rvert}\sqrt{\frac{1}{st} \sum_{x\in Q_l}\mu(\mathcal{C}(x))+\frac{1-s}{s} \sum_{x\in Q_l} \sum_{y \in \mathcal{C}(x)}\mu(y)^2   }\\
            &= \sqrt{\lvert Q_l \rvert}\sqrt{\frac{1}{st} \mathbb{M}(\mu)+\frac{1-s}{s} \sum_{x\in \X} \mu(x)^2   }
    \end{align*}
    Following an analogous computation one bounds the estimation error for the total mass intensity as
    \begin{align}
        \mathbb{E}\left[ \lvert \mathbb{M}(\hat{\mu}_{t,s})-\mathbb{M}(\mu) \rvert \right] \leq \sqrt{\text{Var}(\mathbb{M}(\hat{\mu}_{t,s}))}\leq \sqrt{\frac{1}{st} \mathbb{M}(\mu)+\frac{1-s}{s} \sum_{x\in \X} \mu(x)^2   }. \label{eq:mass_convergence}
    \end{align}
    Applying both of these bounds to the previous upper bound on the $(p,C)$-KRD in  Lemma~\ref{lem:treeapproximation} yields the claim. \\
    For $p>1$, we first observe again that if $C\leq \min_{x\neq x^\prime} d(x,x^\prime)$ then according to \cite[Theorem 2.2]{heinemann2022kantorovich} it holds that
    \begin{align*}
        \mathbb{E}\left[\KR^p_{p,C}(\hat{\mu}_{t,s},\mu)\right]=\frac{C^p}{2}\mathbb{E}\left[\TV(\hat{\mu}_{t,s},\mu)\right]
    \end{align*}
    which yields the total variation bounds. For more general $C$, we  repeat the previous calculations with the upper bounds on the difference of height function $h_{q,L}(j-1)^p-h_{q,L}(j)^p\leq \mathrm{diam}(\mathcal{X})^p \left(\frac{q}{q-1}\right)^p q^{p-jp}$.
    Since $h_{q,L}(L+1)=0$ we also have $$h_{q,L}(L)^p-h_{q,L}(L+1)^p= \mathrm{diam}(\mathcal{X})^pq^{-Lp}.$$ The expectations are bounded identically as before. Finally, using Jensen's inequality,
    \[
        \mathbb{E}\left[\KR_{p,C}(\mu,\hat{\mu}_{t,s}) \right]\leq \left( \mathbb{E}\left[\KR_{p,C}^p(\mu,\hat{\mu}_{t,s}) \right]\right)^{\frac{1}{p}},
    \]
    finishes the proof.
    \end{proof}

    \begin{remark} \label{rem:krdp}
    Omitting Jensen's inequality in the last step of the proof of \Cref{thm:samplingboundKR} implies the slightly stronger result 
    \begin{align*}
        \mathbb{E}\left[ \KR^p_{p,C}\left( \hat{\mu}_{t,s},\mu\right)\right] \leq
        \mathcal{E}_{p,\mathcal{X},\mu}^{\mathrm{Poi}}(C,q,L)\begin{cases}
       \left(2(1-s)\mathbb{M}(\mu)+\frac{s}{\sqrt{t}}\sum_{x\in \X}\sqrt{\mu(x)} \right) , \\[0.1cm] \hspace{2cm} \text{if } C\leq \left(2h_{q,L}(L) \vee {\min_{x\neq x^\prime}d(x,x^\prime)}\right), \\[0.4cm]
      \left( \frac{1}{st}\mathbb{M}(\mu)+\frac{1-s}{s}\sum_{x\in \X}\mu(x)^2 \right)^{\frac{1}{2}}, \\[0.1cm] \hspace{2cm} \text{else}. \\
        \end{cases}
    \end{align*}
    \end{remark}
  
    \subsection{Proof of Stability Bound for Balanced Optimal Transport Plan} \label{app:stab_OTplan_proofs}
    \begin{proof}[Proof of \Cref{thm:stability_OTplan}]
    The proof is divided into several steps. In the first step we cast the optimal transport problem as an appropriate linear program for which we can utilize the Lipschitz stability bound by \cite{li1994sharp}. In the second step we analyze the optimization problem which underlies the definition of respective Lipschitz constant. Finally, in the third step we quantify the Lipschitz constant. 
    
     \emph{Step 1. Reduction to linear program. }
    The collection of OT plans $\tilde{\mathbf{P}}_{c}^*(\mu, \nu)$ for measures $\mu$ and $\nu$ for cost function $c$ can be interpreted as the solution of the linear program
    \begin{align}
      \min_{\pi \in \mathcal{M}_+(\X\times \X)} \sum_{x,x'\in \X}c(x,x') \pi(x,x')  \text{ subject to } \begin{cases}\pi(x,x') \geq 0 \text{ for all } x,x'\in \X,\\ \sum_{x'\in \X}\pi(x,x')= \mu(x) \text{ for all } x\in \X,\\ \sum_{x\in \X}\pi(x,x')= \nu(x') \text{ for all } x'\in \X.
    \end{cases}\label{eq:OT_problem}
    \end{align}
    According to \cite[Theorem, p. 148]{luenberger1984linear}, one of the summation constraints is redundant and by dropping one of them all remaining one are become linearly independent. We drop the constraint for $\nu(x_{|\XC|})$ and the description of the feasible set reduces to 
    \begin{align*}
     \begin{cases}\pi(x,x') \geq 0 \text{ for all } x,x'\in \X,\\ \sum_{x'\in \X}\pi(x,x')= \mu(x) \text{ for all } x\in \X,\\ \sum_{x\in \X}\pi(x,x')= \nu(x') \text{ for all } x'\in \X\backslash\{x_{|\X|}\}.
    \end{cases}  
    \end{align*}
    
    Following the notation of \cite{li1994sharp}, by enumerating the elements of $\X$ as $x_1,\ldots,x_{|\X|}$, we 
    identify $\pi$ as a vector in $\RR^{n}$ with $n = |\X|^2$ with entries $\pi_{i+|\X|(j-1)} = \pi(x_i,x_j)$ and the marginal measures $\mu, \nu$ as a vector in $\RR^{|\XC|}$ with entries $\mu_i= \mu(x_i)$ and $\nu_i = \nu(x_i)$. The constraints of the optimization problem  \eqref{eq:OT_problem} can be rewritten as $A\pi\leq b$ and $C \pi = d$ for suitable matrices $A\in \RR^{n\times n}$ and $C\in \mathbb{R}^{k\times n}$ vectors $b\in \RR^{n}$ and $d \in\RR^{k}$ with $k = 2|\X|-1$. Herein, the matrix $A$ and the vector $b$ are given by 
    \begin{align*}
      A = - I_{n}
       \quad \text{ and } \quad b = 0.
    \end{align*}
    whereas the matrix $C$ and the vector $d$ are given by 
      \begin{align*}
      C = 
       \begin{pmatrix}
        I_{|\XC|} & I_{|\XC|} & \cdots & I_{|\XC|}  & I_{|\XC|}\\
        1_{|\XC|} & 0 & \cdots & 0& 0\\
        0 & 1_{|\XC|} & \cdots & 0 &0 \\
        \vdots & \vdots & \ddots & \vdots & \vdots \\
        0 & 0 & \cdots & 1_{|\XC|} & 0
      \end{pmatrix} \quad \text{ and } \quad d = d(\mu, \nu) =\begin{pmatrix}
          \mu(x_1)\\
        \vdots\\
        \mu(x_{|\X|})\\
        \nu(x_1)\\
        \vdots\\
        \nu(x_{|\X|-1})
      \end{pmatrix}.
    \end{align*}

    The objective function in \eqref{eq:OT_problem} can be written as $c^T\pi$ with $c\in \RR^{n}$ and $c_{i+|\X|(j-1)} = c(x_i,x_j)$. 
    With this notation, the collection of OT plans $\tilde{\mathbf{P}}^*_c$ is characterized as the solutions of the linear program 
    \begin{align*}
      \min_{\pi \in \RR^{n}} c^T\pi \text{ subject to } A\pi\leq b \text{ and } C \pi = d.
    \end{align*}
    
    Invoking Theorems 2.5 and 3.3 of \cite{li1994sharp} then yields the stability estimate 
    \begin{align*}
      \mathcal{H}_\TV\left(\tilde{\mathbf{P}}^*_{p,C}(\mu^1,\nu^1),\tilde{\mathbf{P}}^*_{p,C}(\mu^2,\nu^2)\right)& \leq \gamma \|d(\mu^1, \nu^1) - d^2(\mu^2, \nu^2)\|\\ & = \gamma \left(\TV(\mu^1,\mu^2)+\TV(\nu^1,\nu^2)\right),
    \end{align*}
    where $\gamma = \gamma(A,C)>0$ is defined as 
    \begin{align*}
      \gamma = & \sup_{(p,u,v)\in \RR^{|\XC|^2}\times \RR^{|\XC|}\times \RR^{|\XC|-1}} \|(p, u,v)\|_{\infty} \\
      &\text{ subject to } \quad \begin{cases} \|A^Tp+ C^T(u^T,v^T)^T\|_{\infty} \leq 1, \text{ and the rows of $A$ } \\ \text{corresponding to non-zero entries of $p$} & \\ \text{and the rows of $C$ are linearly independent.} & \end{cases}
    \end{align*}
    Hence, the assertion follows once we show that $\gamma \leq 4|\XC|$. 
    
    \emph{Step 2. Analysis of optimization problem for  definition of Lipschitz constant.}
    To show the bound, note for $(p, u,v) \in \RR^{|\XC|^2}\times \RR^{|\XC|}\times \RR^{|\XC|-1}$ that $(A^Tp+ C^T(u^T,v^T)^T)\in \RR^n$   has  entries \begin{align*}
    (A^Tp+ C^T(u^T,v^T)^T)_{i+|\X|(j-1)} = \begin{cases}
      -p_{i+|\X|(j-1)} + u_i + v_j, & \text{ if } j <|\XC|,  \\
       -p_{i+|\X|(j-1)} + u_i, & \text{ if } j = |\XC|
     \end{cases} \quad \text{ for }i,j\in |\XC|.
    \end{align*}
    To guarantee the constraint in the definition of $\gamma$, consider some slack variable $z\in \R^{n}$ with $\|z\|_\infty \leq 1$ and suppose that \begin{align}\label{eq:systemEQ_for_gamma}
      (A^Tp+ C^T(u^T,v^T)^T)_{i+|\X|(j-1)} = z_{i+|\X|(j-1)} \quad \text{ for }  i,j \in \{1, \dots, |\XC|\}.
    \end{align}
    As is, this system of equation is underdetermined, but by imposing that some entries of $p$ are equal to zero such that the corresponding rows of $A$ and the rows of $C$ are independent, the system either becomes uniquely solvable or unsolvable. The latter case can be ignored, as it does not contribute to the constraint set in the definition of $\gamma$. 
    
    Since the matrix $A$ has rank $n = |\XC|^2$ while $C$ has rank $k = 2|\XC|-1$, it follows that exactly $k$ entries of $p$ need to be set equal to zero for \eqref{eq:systemEQ_for_gamma} to admit a unique solution. 
    
    For $k$ such zero-entries $(i,j)$ of $p$ it follows that \eqref{eq:systemEQ_for_gamma} reduces to \begin{align}\label{eq:reducedSystemEQ_for_gamma}
      z_{i+ |\XC|(j-1)} = C^T(u^T,v^T)^T_{i+|\X|(j-1)} =  \begin{cases}
     u_i + v_j, & \text{ if } j < |\XC|,\\
     u_i, & \text{ if } j = |\XC|.
     \end{cases}
    \end{align}
    Now, given a pair of solutions for \eqref{eq:reducedSystemEQ_for_gamma}, we can recover the remaining entries of $p$ via 
     \begin{align}\label{eq:entriesOfP}
      p_{i + |\XC|(j-1)} = - z_{i + |\XC|(j-1)} + \begin{cases}
       u_i + v_j, & \text{ if } j < |\XC|,\\
       u_i, & \text{ if } j = |\XC|
     \end{cases}.
    \end{align}
    Hence, for there to be a unique solution $(p,u,v)$ of \eqref{eq:systemEQ_for_gamma}, it is necessary that $u$ and $v$ are uniquely determined by \eqref{eq:reducedSystemEQ_for_gamma}. 
    This can only happen if at least one entry $p_{i^* + |\XC|(|\XC|-1)}$ for $i^* \in \{1, \dots, |\XC|\}$ (i.e., for $j = |\XC|$) is equal to zero, since otherwise given a solution $(u,v)\in \RR^k$ we could construct an alternative solution via $(u+\delta 1_{k}, v -\delta 1_{k})\in \RR^k$ for $\delta>0$. 
    
    \emph{Step 3. Derivation of upper bound for Lipschitz constant} With the previous insight, we infer from \eqref{eq:reducedSystemEQ_for_gamma} that for every $i  \in \{1, \dots, |\XC|\}$ for which $p_{i + |\XC|(|\XC|-1)} =0$ that 
    \begin{align*}
        u_{i} = z_{i+ |\XC|(|\XC|-1)} \in [-1,1].
    \end{align*}
    Moreover, for some $j \in \{1, \dots, |\XC|-1\}$ such that $p_{i+ |\XC|(j-1)} = 0$ it follows that 
    \begin{align*}
      v_{j} = z_{i+ |\XC|(|\XC|-1)} - u_i\in [-2,2].
    \end{align*}
    Since the system of equations \eqref{eq:reducedSystemEQ_for_gamma} admits a unique solution, it follows that every entry $u_i$ and $v_j$ can be obtained by chain of equations. Since there are in total $2|\XC|-1$ equations and from each equation the bound in absolute value for $u_i$ or $v_j$ increases by $|z_{i+|\XC|(j-1)}|\leq 1$, it follows altogether that \begin{align*}
      \|(u^T, v^T)^T\|_\infty \leq 2 |\XC| -1 \leq 4 |\XC|. 
    \end{align*}
    Based on these insights, we infer from \eqref{eq:systemEQ_for_gamma} for all non-zero entries of $p$ via \eqref{eq:entriesOfP} that  $$|p_{i + |\XC|(j-1)}| \leq 4|X|-2 + 1 \leq 4|\XC|.$$
    This finally confirms the desired bound for $\gamma$ and proves the claim. 
    \end{proof}
    
    \subsection{Proof of Statistical Deviation Bound for Empirical Unbalanced Optimal Transport Plans}\label{app:UOTplans}
    
    \begin{lemma}[Relation between restricted and unrestricted UOT plans]
      \label{lem:UOTplan_restriction_connection}
      Let $(\X,d)$ be a finite space,  $\mu, \nu\in \msrX$. Then, for $p\geq 1$ and $C\geq 0$ the sets ${\overline{\mathbf{P}}}^*_{p,C}(\mu, \nu)$ defined in \eqref{eq:UOTplans_unrestricted} and ${\mathbf{P}}^*_{p,C}(\mu, \nu)$ defined in \eqref{eq:UOTplans} are related as follows. 
      \begin{enumerate}
        \item It holds that ${\mathbf{P}}^*_{p,C}(\mu, \nu)\subseteq  {\overline{\mathbf{P}}}^*_{p,C}(\mu, \nu)$, i.e., the set ${\mathbf{P}}^*_{p,C}(\mu, \nu)$ consists of unbalanced optimal transport plans.
        \item Every unbalanced optimal transport plan $\pi\in {\overline{\mathbf{P}}}^*_{p,C}(\mu, \nu)$ can be represented as the sum of a restricted transport plan ${\mathbf{P}}^*_{p,C}(\mu, \nu)$ and a non-negative measure supported on $\{ (x,x') \in \XC\times \XC \,|\, d(x,x') = C\}$.
        \item Every sub-coupling $\pi\in \Pi_{\leq}(\mu, \nu)$ which is the sum of a restricted transport plan ${\mathbf{P}}^*_{p,C}(\mu, \nu)$  and a non-negative measure supported on $\{ (x,x') \in \XC\times \XC \,|\, d(x,x') = C\}$ is also an unbalanced optimal transport plan, i.e., $\pi \in {\overline{\mathbf{P}}}^*_{p,C}(\mu, \nu)$.
      \end{enumerate}
    \end{lemma}
    \begin{proof}
    For the first claim consider $ \pi \in {\mathbf{P}}^*_{p,C}(\mu, \nu)$ and let $\overline  \pi \in {\overline{\mathbf{P}}}^*_{p,C}(\mu, \nu)$ be such that $\pi = \overline \pi|_{\mathcal{D}(C)}$. Then it follows that 
    \begin{align*}
        &\sum_{x,x'\in \X}d^p(x,x')\pi(x,x') + C^p\left(\frac{\mathbb{M}(\mu) + \mathbb{M}(\nu)}{2} -  \mathbb{M}(\pi)\right) \\ 
        = &\sum_{x,x'\in \X}d^p(x,x') \pi(x,x') + C^p\mathbb{M}(\overline \pi - \pi)  + C^p\left(\frac{\mathbb{M}(\mu) + \mathbb{M}(\nu)}{2} -  \mathbb{M}(\overline \pi)\right)\\
        = &\sum_{x,x'\in \X}d^p(x,x') \overline \pi(x,x') + C^p\left(\frac{\mathbb{M}(\mu) + \mathbb{M}(\nu)}{2} -  \mathbb{M}(\overline \pi)\right)= \text{KR}_{p,C}(\mu, \nu),
      \end{align*}
    where the last equality follows from the fact that $\overline \pi \in {\overline{\mathbf{P}}}^*_{p,C}(\mu, \nu)$. Hence, $\pi$ is an UOT plan, which shows the first assertion. 
    
    For the second claim, we prove that every UOT plan $\overline \pi \in {\overline{\mathbf{P}}}^*_{p,C}(\mu, \nu)$ assigns no mass to the set $\mathcal{E}(C) \coloneqq \{(x,x') \in \X\times \X \mid d(x,x') > C\}$. To this end, consider the decomposition $\overline\pi = \overline\pi|_{\mathcal{E}(C)} + \overline\pi|_{\mathcal{E}(C)^c}$. Then, 
    \begin{align*}
        \KR_{p,C}(\mu, \nu) = & \sum_{x,x'\in \X}d^p(x,x')\overline\pi(x,x') + C^p\left(\frac{\mathbb{M}(\mu) + \mathbb{M}(\nu)}{2} -  \mathbb{M}(\overline\pi)\right) 
        \\
       \geq & \sum_{(x,x')\in \mathcal{E}(C)}d^p(x,x')\overline\pi(x,x') + C^p \mathbb{M}(\overline \pi|_{\mathcal{E}(C)^c}) + C^p\left(\frac{\mathbb{M}(\mu) + \mathbb{M}(\nu)}{2} -  \mathbb{M}(\overline\pi)\right) 
        \\ =& \sum_{x,x'\in \X}d^p(x,x')\overline\pi|_{\mathcal{E}(C)}(x,x') + C^p\left(\frac{\mathbb{M}(\mu) + \mathbb{M}(\nu)}{2} -  \mathbb{M}(\overline\pi|_{\mathcal{E}(C)})\right) \geq  \KR_{p,C}(\mu, \nu).
      \end{align*}
      This implies that $\overline\pi|_{\mathcal{E}(C)^c} = 0$, which shows the second assertion.
      
      Finally, for the third claim, consider $\pi \in {\overline{\mathbf{P}}}^*_{p,C}(\mu, \nu)$ and $\tilde \pi\in \msrX$ which is supported on $\{(x,x') \in \X\times \X \mid d(x,x') = C\}$, and assume that $\overline \pi \coloneqq \pi + \tilde \pi\in \Pi_{\leq}(\mu, \nu)$. Then,\begin{align*}
      \KR_{p,C}(\mu, \nu) &\leq \sum_{x,x'\in \X}d^p(x,x')\overline\pi(x,x') + C^p\left(\frac{\mathbb{M}(\mu) + \mathbb{M}(\nu)}{2} -  \mathbb{M}(\overline\pi)\right) \\
      &= \sum_{x,x'\in \X}d^p(x,x')\pi(x,x') + C^p\tilde\pi(x,x') + C^p\left(\frac{\mathbb{M}(\mu) + \mathbb{M}(\nu)}{2} -  \mathbb{M}(\pi) - \mathbb{M}(\tilde\pi)\right)\\
      &= \sum_{x,x'\in \X}d^p(x,x')\pi(x,x')  + C^p\left(\frac{\mathbb{M}(\mu) + \mathbb{M}(\nu)}{2} -  \mathbb{M}(\pi)\right) =  \KR_{p,C}(\mu, \nu),
    \end{align*}
    which asserts optimality of $\overline \pi$.
    \end{proof}
    
    \begin{lemma}\label{lem:connection_UOT_OT_plan}
    Let $(\X,d)$ be a finite metric space  and let $p\geq 1$ and $C\geq 0$. Then, the following assertions hold.
    \begin{enumerate}
        \item  For every pair of measures $\mu, \nu\in \msrX$ and their augmented counterparts $\tilde \mu, \tilde \nu \in \mathcal{M}^B_+(\tilde \XC) = \{\mu \in \mathcal{M}_{+}(\XC) \mid \mathbb{M}(\mu) \leq B\}$ for $B = \max(\mathbb{M}(\mu),\mathbb{M}(\nu))$ it holds that 
        \begin{align}\label{eq:OT_UOT_equality}
          {\mathbf{P}}^*_{p,C}(\mu, \nu) = \left\{ \tilde\pi \cdot \mathds{1}(\cdot\in \mathcal{D}(C)) \,\Big|\, \tilde\pi \in {\tilde {\mathbf{P}}}^*_{p,C}(\tilde \mu, \tilde \nu) \right\}.
        \end{align}
        \item For every two pairs of measures $\mu^1, \nu^1, \mu^2, \nu^2\in \msrX$ and corresponding augmented measures $\tilde \mu^i, \tilde \nu^i \in \mathcal{M}^{B_i}_+(\tilde\X)$ for $B_i = \max(\mathbb{M}(\mu^i),\mathbb{M}(\nu^i))$ and $i \in \{1,2\}$ it holds that 
        \begin{align*}
          \mathcal{H}_\TV\left({\mathbf{P}}^*_{p,C}(\mu^1, \nu^1), {\mathbf{P}}^*_{p,C}(\mu^2, \nu^2)\right) \leq \mathcal{H}_\TV\left({\tilde {\mathbf{P}}}^*_{p,C}(\tilde \mu^1,\tilde  \nu^1), {\tilde {\mathbf{P}}}^*_{p,C}(\tilde \mu^2, \tilde \nu^2)\right).
        \end{align*}
        \end{enumerate}
      \end{lemma}
      
      \begin{proof}
    For the first assertion assume without loss of generality that $\mathbb{M}(\mu) \geq \mathbb{M}(\nu)$ which implies $\tilde \mu = \mu$. Further, throughout the proof we will make use of the following equality which has been shown by \citet[p.\ 17]{heinemann2022kantorovich} and is discussed in \Cref{app:liftOT} of this manuscript,
    \begin{align}\label{eq:UOT_OT_equality_values}
          \text{KR}_{p,C}^p(\mu, \nu) = \text{UOT}_{d^p\wedge C^p,C}(\mu, \nu) = \tilde{\text{OT}}_{\tilde d^p_C}(\tilde \mu, \tilde \nu).
    \end{align}
        
    To show that the right-hand side of \eqref{eq:OT_UOT_equality} is included in the left-hand side, let $\tilde \pi \in {\tilde {\mathbf{P}}}^*_{p,C}(\tilde \mu, \tilde \nu)$ and define $\pi \coloneqq \tilde\pi \cdot \mathds{1}(\cdot \in \mathcal{D}(C))$ and $\pi^* \coloneqq \tilde \pi\cdot \mathds{1}(\cdot \in \X\times \X)$. 
    Note that $\tilde \pi(\{\mathfrak{d}\} \times \tilde \X) = \tilde \mu(\{\mathfrak{d}\}) = 0$. As a result, by the marginal constraints on $\tilde \pi$, $\tilde \pi(\tilde \X\times \X) = \tilde \nu(\X) = \nu(\X)$, it follows that $\mathbb{M}(\pi^*) = \mathbb{M}(\nu)$. From this it is evident that
    \begin{align*}
          & \sum_{x, x'\in \X} d^p(x,x') \pi(x,x') + C^p\left(\frac{\mathbb{M}(\mu) + \mathbb{M}(\nu)}{2} - \mathbb{M}(\pi)\right)\\
      = \;& \sum_{x, x'\in \X} d^p(x,x') \pi(x,x') + C^p \mathbb{M}( \pi^* - \pi)+ C^p\left(\frac{\mathbb{M}(\mu) + \mathbb{M}(\nu)}{2} - \mathbb{M}(\pi^*)\right)\\
      = \;& \sum_{x, x'\in \X} \tilde d^p_C(x,x')  \pi^*(x,x')  + \frac{C^p}{2}\left(\mathbb{M}(\mu) - \mathbb{M}(\nu) \right)\\
      =\;& \sum_{x, x'\in \X} \tilde d^p_C(x,x')  \tilde \pi(x,x')  +  \sum_{x \in \X}  \tilde d^p_C(x,\mathfrak{d})  \tilde \pi(x,\mathfrak{d})  \\
      =\; & \sum_{x, x'\in \tilde\X} \tilde d^p_C(x,x')  \tilde \pi(x,x') = \tilde{\text{OT}}_{\tilde d^p_C}(\tilde \mu, \tilde \nu) = \text{KR}_{p,C}^p(\mu,\nu),
    \end{align*}
    where the equality third to last is a consequence of $\tilde \pi(\mathfrak{d} \times \tilde \X) = \tilde \mu(\mathfrak{d}) = 0$, the second to last equality follows by optimality of $\tilde \pi$, and the final equality is a consequence of \eqref{eq:UOT_OT_equality_values}. In conjunction with Lemma \ref{lem:UOTplan_restriction_connection} this asserts that $\pi\in {\mathbf{P}}^*_{p,C}(\mu, \nu)$. 
      
    Conversely, to show that the left-hand side of \eqref{eq:OT_UOT_equality} is contained in the right-hand side, let $\pi \in {\mathbf{P}}^*_{p,C}(\mu, \nu)$. Further, define the measure $\check\pi\in \mathcal{M}_+(\XC\times \XC)$ for $x,x'\in \XC$ by
      \begin{align*}
        \check\pi(x,x') \coloneqq \begin{cases} 0,  & \text{ if } \mathbb{M}(\nu) = \mathbb{M}(\pi),\\
          \frac{\mu(x)-\pi(x, \XC)}{\mathbb{M}(\mu) - \mathbb{M}(\pi)} \left(\nu(x') - \pi(\XC,x')\right) &, \text{ if } \mathbb{M}(\nu) > \mathbb{M}(\pi),
        \end{cases}
      \end{align*}
      and further set $\pi^* \coloneqq \pi + \check \pi$. Then, it follows that $\pi^* \in \Pi_{\leq}(\mu, \nu)$ and $\mathbb{M}(\pi^*) = \mathbb{M}(\nu)$. Further, from optimality of $\pi$ we infer that
      \begin{align}
       \KR_{p,C}^p(\mu, \nu) =\;& \sum_{x, x'\in \X} d^p(x,x') \pi(x,x') + C^p\left(\frac{\mathbb{M}(\mu) + \mathbb{M}(\nu)}{2} - \mathbb{M}(\pi)\right)\notag\\
        =\; & \sum_{x, x'\in \X} (d^p(x,x')\wedge C^p) \pi(x,x') + C^p \mathbb{M}(\check\pi) + C^p\left(\frac{\mathbb{M}(\mu) + \mathbb{M}(\nu)}{2} - \mathbb{M}(\pi + \check \pi)\right)\notag\\
        \geq \; & \sum_{x, x'\in \X} (d^p(x,x')\wedge C^p) \pi^*(x,x') + C^p\left(\frac{\mathbb{M}(\mu) + \mathbb{M}(\nu)}{2} - \mathbb{M}(\pi^*)\right)\notag\\
        \geq \; &\mathrm{UOT}_{ d^p\wedge C^p, C}(\mu, \nu) = \KR_{p,C}^p(\mu, \nu),\label{eq:derivation:_UOT_OT}
      \end{align}
    where the last equality follows \eqref{eq:UOT_OT_equality_values} and thus asserts that all inequalities are qualities. This implies that $\check\pi$ is concentrated on $\{(x,x')\in \X \times \X \mid d(x,x') \geq C\}$ and, thus, $\pi^* \cdot \mathds{1}(\cdot\in \mathcal{D}(C)) = \pi$. Next, define $\tilde \pi \coloneqq \pi^* + \sum_{x\in \X}\delta_{(x,\mathfrak{d})} \left(\mu(x) - \pi^*(x,\XC)\right)$ which fulfills $\tilde \pi\in \Pi_{=}(\tilde \mu, \tilde \nu)$ and $\tilde \pi \cdot \mathds{1}(\cdot\in \mathcal{D}(C)) = \pi$. In particular, from the above derivation in \eqref{eq:derivation:_UOT_OT} we infer from $\mathbb{M}(\pi^*) = \mathbb{M}(\nu)$, $\mathbb{M}(\tilde \pi) = \mathbb{M}(\tilde \mu) = \mathbb{M}(\mu)$ and since $\tilde \pi(\mathfrak{d},\tilde \X) = \tilde \mu(\mathfrak{d}) = 0$ that
      \begin{align*}
        \KR_{p,C}^p(\mu, \nu) =\; &\sum_{x, x'\in \X}(d^p(x,x')\wedge C^p) \pi^*(x,x') + C^p\left(\frac{\mathbb{M}(\mu) + \mathbb{M}(\nu)}{2} - \mathbb{M}(\pi^*)\right)\\
        =\; & \sum_{x, x'\in \X}(d^p(x,x')\wedge C^p) \pi^*(x,x') +  \frac{C^p}{2}\left(\mathbb{M}(\mu)  - \mathbb{M}(\nu)\right)\\
        =\; & \sum_{x, x'\in \X'}\tilde d^p_C(x,x')  \pi^*(x,x') \geq \tilde{\mathrm{OT}}_{\tilde d^p_C}(\tilde \mu, \tilde \nu) = \KR_{p,C}^p(\mu, \nu).
      \end{align*}
    Here, the final equality is a consequence of \eqref{eq:UOT_OT_equality_values}, and altogether confirms that $\tilde \pi \in {\tilde {\mathbf{P}}}^*_{p,C}(\tilde \mu, \tilde \nu)$. This concludes the proof of the first assertion.   
      
    For the proof of the second assertion note from the considerations above that for every pair of UOT plans $\pi^i\in \mathbf{P}^*(\mu^i,\nu^i)$ with $i \in \{1,2\}$, every pair of corresponding OT plans $\tilde \pi^i \in {\tilde{\mathbf{P}}}^*(\mu^i, \nu^i)$ with $\tilde \pi^i \cdot \mathds{1}(\cdot \in \mathcal{D}(C)) = \pi^i$ fulfills the inequality 
      \begin{align*}
        \text{TV}(\pi^1, \pi^2) \leq \text{TV}(\tilde \pi^1, \tilde \pi^2).
      \end{align*}
    Based on this, the claim follows from the definition of the Hausdorff distance induced by the total variation norm.
    \end{proof}
    
    \begin{proof}[Proof of Theorem \ref{thm:UOTplan_Stability}] 
      For the proof we again rely on the lift to optimal transport as detailed in Appendix \ref{app:liftOT}. To this end, consider the augmented space $\tilde \X = \X \cup \{\mathfrak{d}\}$ for some dummy element $\mathfrak{d}$, let $B$ be a majorant for the masses of the measures $\mu^1,\mu^2,\nu^1,\nu^2$, and define for $i \in\{1,2\}$ the  augmented measures
    \begin{align*}
      \tilde \mu^i &\coloneqq \mu^i + \delta_{\mathfrak{d}}(\max(\mathbb{M}(\mu^i),\mathbb{M}(\nu^i)) - \mathbb{M}(\mu^i))\quad \text{ and } \quad\\  \tilde \nu^i &\coloneqq \nu^i + \delta_{\mathfrak{d}}(\max(\mathbb{M}(\mu^i),\mathbb{M}(\nu^i)) - \mathbb{M}(\nu^i)).
    \end{align*}
    Then, upon denoting the collection of OT plans between $\tilde \mu^i$ and $\tilde \nu^i$ with respect to the cost function $\tilde d^p_C$ by ${\tilde {\mathbf{P}}}^*_{p,C}(\tilde \mu^i,\tilde \nu^i)$, it follows by Lemma \ref{lem:connection_UOT_OT_plan} and Theorem \ref{thm:stability_OTplan} that 
    \begin{align*}
      \mathcal{H}_\TV\left({\mathbf{P}}^*_{p,C}(\mu^1,\nu^1),{\mathbf{P}}^*_{p,C}(\mu^2,\nu^2)\right)
      \leq &   \mathcal{H}_\TV\left({\tilde {\mathbf{P}}}^*_{p,C}(\tilde \mu^1,\tilde \nu^1),{\tilde {\mathbf{P}}}^*_{p,C}(\tilde \mu^2,\tilde \nu^2)\right)\\
      \leq \;& 4\,|\tilde \X| \left(\TV(\tilde \mu^1,\tilde \mu^2) + \TV(\tilde \nu^1,\tilde \nu^2)\right).
    \end{align*}
    The assertion now follows by $|\tilde \X| = |\X| + 1$ and the subsequent Lemma \ref{thm:TVbound_augmentation}.
    \end{proof}
    
    \begin{lemma}
    \label{thm:TVbound_augmentation}
      Let $\mu^i, \nu^i \in \msrX$ for $i \in \{1, 2\}$ be two non-negative measures and consider their augmented measures $\tilde \mu^i, \tilde \nu^i \in \mathcal{M}^B_+(\tilde \XC)$ to $\tilde \XC$ for $B = \max(\mathbb{M}(\mu^i),\mathbb{M}(\nu^i))$,  then 
      \begin{align*}
         & \TV(\tilde \mu^1,\tilde \mu^2)  + \TV(\tilde \nu^1,\tilde \nu^2) \\ \leq \; &  \TV(\mu^1,\mu^2)  + \TV(\nu^1,\nu^2) + |\mathbb{M}(\mu^1) - \mathbb{M}(\mu^2)| + | \mathbb{M}(\nu^1)- \mathbb{M}(\nu^2)|
         \\ \leq \;&  2\left(\TV(\mu^1,\mu^2)  + \TV(\nu^1,\nu^2)\right).
      \end{align*}
    \end{lemma}
    \begin{proof}
      We first observe that \begin{align*}
        & \TV(\tilde \mu^1,\tilde \mu^2)  + \TV(\tilde \nu^1,\tilde \nu^2)   \\ 
        =\;& \left(\sum_{x\in \X} |\mu^1(x) - \mu^2(x)| + |\nu^1(x) - \nu^2(x)|\right) + |\tilde \mu^1(\mathfrak{d}) - \tilde \mu^2(\mathfrak{d})|+ |\tilde \nu^1(\mathfrak{d}) - \tilde \nu^2(\mathfrak{d})|\\
        =\;&  \TV(\mu^1,\mu^2) +\TV(\nu^1,\nu^2) + |\tilde \mu^1(\mathfrak{d}) - \tilde \mu^2(\mathfrak{d})|+ |\tilde \nu^1(\mathfrak{d}) - \tilde \nu^2(\mathfrak{d})|.
      \end{align*}
      Now if $\mathbb{M}(\mu^1)\geq \mathbb{M}(\nu^1)$ and $\mathbb{M}(\mu^2)\geq \mathbb{M}(\nu^2)$, then 
      \begin{align*}
          |\tilde \mu^1(\mathfrak{d}) - \tilde \mu^2(\mathfrak{d})|+ |\tilde \nu^1(\mathfrak{d}) - \tilde \nu^2(\mathfrak{d})| 
          =\; & |\mathbb{M}(\mu^1) -  \mathbb{M}(\nu^1) - \mathbb{M}(\mu^2) + \mathbb{M}(\nu^2)|\\
      \leq \;& |\mathbb{M}(\mu^1) - \mathbb{M}(\mu^2)| + | \mathbb{M}(\nu^1)- \mathbb{M}(\nu^2)|,
      \end{align*}
      whereas if  $\mathbb{M}(\mu^1)\geq \mathbb{M}(\nu^1)$ and $\mathbb{M}(\mu^2)<\mathbb{M}(\nu^2)$, then 
      \begin{align*}
        |\tilde \mu^1(\mathfrak{d}) - \tilde \mu^2(\mathfrak{d})|+ |\tilde \nu^1(\mathfrak{d}) - \tilde \nu^2(\mathfrak{d})| 
       =\; & |\mathbb{M}(\nu^1) -  \mathbb{M}(\mu^1)| + |\mathbb{M}(\mu^2) - \mathbb{M}(\nu^2)|\\
       =\; & \mathbb{M}(\nu^1) -  \mathbb{M}(\mu^1) + \mathbb{M}(\mu^2) - \mathbb{M}(\nu^2)\\
       \leq \;& |\mathbb{M}(\mu^1) - \mathbb{M}(\mu^2)| + | \mathbb{M}(\nu^1)- \mathbb{M}(\nu^2)|.
       \end{align*}
       The remaining two cases can be treated analogously, asserting the first inequality of the claim. For the second inequality it suffices to note that the total variation norm provides an upper bound on the mass difference of measures
    \end{proof}

    \subsection{Proof of Statistical Deviation Bound for Empirical $\mathbf{(p,C)}$-Barycenters} \label{app:barycenters_proofs}
    \begin{proof}[Proof of \Cref{thm:frechetboundPoi}, \Cref{thm:frechetboundMult} and \Cref{thm:frechetboundBer}]
    Let $\hat{\mu}^1,\dots,\hat{\mu}^J$ be any of the three estimators from \eqref{eq:multinomialmeasure}, \eqref{eq:bermeasure} or \eqref{eq:poissonmeasure}. Further, for each $1\leq i \leq J$ let $\mathcal{E}_{1,\X_i,\mu^i}(C)$ be the corresponding constant in \Cref{thm:samplingboundKR}, \Cref{thm:samplingboundKRMult} or \Cref{thm:samplingboundKRBer} for $\mu^i$, respectively. Let $\theta_i$ denote the respective sampling parameter dependencies $N_i^{-1/2}$, $\phi(t_i,s_i)$ or $\psi(s_{\X_i})$, respectively. Involving the augmentation argument, due to the construction of the lifted problem, it holds for any $\mu^1,\mu^2,\mu^3 \in \msrY$ that
    \begin{align*}
        \lvert \KR_{p,C}^p(\mu^1,\mu^3)-\KR_{p,C}^p(\mu^2,\mu^3) \rvert &= \lvert \tilde{\mathrm{OT}}_p^p(\tilde{\mu}^1,\tilde{\mu}^3)-\tilde{\mathrm{OT}}_p^p(\tilde{\mu}^2,\tilde{\mu}^3) \rvert \\
        &\leq \mathrm{diam}(\ty)^{p-1} p \,\tilde{\mathrm{OT}}_1(\tilde{\mu}^1,\tilde{\mu}^2) \\
        &=C^{p-1} p\,\KR_{1,C}(\mu^1,\mu^2),
    \end{align*}
    where the inequality follows from \cite{sommerfeld2018inference}. Taking expectation and applying the previous display together with \Cref{thm:samplingboundKR} yields
    \begin{align*}
        \mathbb{E}\left[\lvert F_{p,C}(\mu)-\hat{F}_{p,C}(\mu)\rvert\right]
        &\leq \frac{1}{J}\easysum{i=1}{J}\mathbb{E}\left[\lvert \KR_{p,C}^p(\mu^i,\mu)-\KR_{p,C}^p(\hat{\mu}^i,\mu) \rvert \right] \\
        &\leq p\,C^{p-1} \frac{1}{J}\easysum{i=1}{J}\mathbb{E} \left[\KR_{1,C}(\mu^i,\hat{\mu}^i)\right]   \\
        &\leq p C^{p-1}\frac{1}{J}\easysum{i=1}{J} \mathcal{E}_{1,\X_i,\mu^i}(C)\theta_i.
    \end{align*}
    Let $\mu^\star$ and $\hat{\mu}^\star$ be minimizers of their respective $p$-Fr\'echet functional $F_{p,C}$ and $\hat{F}_{p,C}$. Then, it follows that
    \begin{align*}
    \mathbb{E}[\lvert F_{p,C}(\hat{\mu}^\star)-F_{p,C}(\mu^\star) \rvert ]
    &= \mathbb{E}\left[ F_{p,C}(\hat{\mu}^\star)-\hat{F}_{p,C}(\mu^\star)+\hat{F}_{p,C}(\mu^\star)-F_{p,C}(\mu^\star)  \right]\\ 
    &\leq \mathbb{E}\left[  F_{p,C}(\hat{\mu}^\star)-\hat{F}_{p,C}(\mu^\star) \right]+ \mathbb{E}  \left[\hat{F}_{p,C}(\mu^\star)-F_{p,C}(\mu^\star)  \right]\\ 
    &\leq\mathbb{E}\left[ F_{p,C}(\hat{\mu}^\star)-\hat{F}_{p,C}(\mu^\star)\right]+ p\, C^{p-1} \frac{1}{J}\easysum{i=1}{J} \mathcal{E}_{1,\X_i,\mu^i}(C)\theta_i\\
    &\leq\mathbb{E}\left[ F_{p,C}(\hat{\mu}^\star)-\hat{F}_{p,C}(\hat{\mu}^\star)\right]+ p\,C^{p-1} \frac{1}{J}\easysum{i=1}{J} \mathcal{E}_{1,\X_i,\mu^i}(C)\theta_i \\
    &\leq p\,C^{p-1} \frac{2}{J}\easysum{i=1}{J} \mathcal{E}_{1,\X_i,\mu^i}(C)\theta_i,
    \end{align*}
    where the fourth inequality follows from $\hat{\mu}^*$ being a minimizer of $\hat{F}_{p,C}$.
    \end{proof}
    
    \begin{proof}[Proof of \Cref{thm:kr_boundPoi}, \Cref{thm:kr_boundMult} and \Cref{thm:kr_boundBer}]
    Let $\hat{\mu}^1,\dots,\hat{\mu}^J$, $\mathcal{E}_{1,\X_i,\mu^i}(C)$ and $\theta_i$ for all $i=1,\dots,J$ as in the previous proof. Let $\mathbf{B}$ be the set of $(p,C)$-barycenters of the measures $\mu^1,\ldots,\mu^J$ and define $\tilde{\mathbf{B}}$ as the set of $\mathrm{OT}_p$-barycenters of the augmented measures $\tilde{\mu}^1,\ldots,\tilde{\mu}^J$. Similar, we denote $\hat{\mathbf{B}}$ the set of $(p,C)$-barycenters of the estimated measures $\hat{\mu}^1,\ldots,\hat{\mu}^J$ and let $\hat{\tilde{\mathbf{B}}}$ be the set of $p$-barycenters of their augmented versions. Define the lift of a measure $\mu\in\msrY$ to a measure $\tilde{\mu}\in\mathcal{M}(\tilde{\Y})$ by
    \begin{align*}
    \phi_{\mu^1,\ldots,\mu^J}(\mu)=\mu + \left( \sum_{i=1}^J \mathbb{M}(\mu^i)-\mathbb{M}(\mu)\right) \delta_{\dum}.
    \end{align*}
    If $\mu\in\mathbf{B}$ then it follows by \cite[Lemma 3.3]{heinemann2022kantorovich} that $\phi_{\mu^1,\ldots,\mu^J}(\mu)\in\tilde{\mathbf{B}}$. Conversely, for any $\tilde{\mu}\in\tilde{\mathbf{B}}$ it holds that $\phi_{\mu^1,\ldots,\mu^J}^{-1}(\tilde{\mu})\in\mathbf{B}$. We denote by $\phi(\mathbf{B})\coloneqq\{\phi_{\mu^1,\ldots,\mu^J}(\mu)\vert \mu \in \mathbf{B}\}$ and analogously $\phi^{-1}\Big(\tilde{\mathbf{B}}\Big)\coloneqq\{\phi_{\mu^1,\ldots,\mu^J}^{-1}(\tilde{\mu})\vert \tilde{\mu} \in \tilde{\mathbf{B}}\}$. With this we have
    \begin{align} \label{eq:wsthm_reform}
        \mathbb{E}\left[ \sup_{\hat{\mu}\in\hat{\mathbf{B}}}\inf_{\mu\in\mathbf{B}}\text{KR}_{p,C}^p(\mu,\hat{\mu})\right]&=\mathbb{E}\bigg[ \sup_{\hat{\mu}\in\phi^{-1}\Big(\hat{\tilde{\mathbf{B}}} \Big)}\inf_{\mu\in\phi^{-1}\Big(\tilde{\mathbf{B}} \Big)}\text{KR}_{p,C}^p(\mu,\hat{\mu})\bigg]\\
    &=\mathbb{E}\bigg[ \sup_{\hat{\mu}\in\phi^{-1}\Big(\hat{\tilde{\mathbf{B}}} \Big)}\inf_{\mu\in\phi^{-1}\Big(\tilde{\mathbf{B}} \Big)}\tilde{\mathrm{OT}}_{p}^p(\phi(\mu),\phi(\hat{\mu}))\bigg]\\ 
    &=\mathbb{E}\bigg[ \sup_{\hat{\mu}\in\hat{\tilde{\mathbf{B}}}}\inf_{\mu\in\tilde{\mathbf{B}}}\tilde{\mathrm{OT}}_p^p(\tilde{\mu},\hat{\tilde{\mu}})\bigg].\notag
    \end{align}
    We continue by recalling a slightly adapted version of Lemma $3.8$ in \cite{heinemann2022randomized}. Since we only apply this lemma to the augmented balanced OT problem, the proof remains unchanged and is therefore omitted.
    \begin{lemma}\label{lem:slope_bound}
    Let $\tilde{F}_{p,C}$ be the augmented Fr\'echet functional corresponding to $\tilde{\mu}^1,\dots ,\tilde{\mu}_N \in \mathcal{M}_+(\tilde{\mathcal{Y}})$. Then, for any $\tilde{\mu} \in \mathcal \mathcal{M}_+(\mathcal{C}_{\KR}(J,p,C)\cup \{ \dum \})$ with $\mathbb{M}(\tilde{\mu})=\sum_{i=1}^J \mathbb{M}(\mu^i)$ there exists a $\tilde{\mu}^* \in {\argmin}_{\tilde{\nu} } \tilde{F}_{p,C}(\tilde{\nu})$ such that
    \begin{align*}
    \tilde{F}_{p,C}(\tilde{\mu})-\tilde{F}_{p,C}(\tilde{\mu}^*) \geq 2V_{P} \tilde{\mathrm OT}^p_p (\tilde{\mu},\tilde{\mu}^*),
    \end{align*}
    where $V_{P}$ is the constant from \Cref{thm:kr_boundPoi}.
    \end{lemma}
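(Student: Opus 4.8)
The plan is to read \Cref{lem:slope_bound} as a \emph{global linear growth (polyhedral error) bound} for the barycenter linear program \eqref{eq:lpbary}, i.e. as exactly the statement and argument of Lemma~3.8 in \cite{heinemann2022randomized} applied to the balanced augmented problem; so I only sketch why $2V_P$ is the right rate. The first step is to fix the dictionary between measures and LP points: if $\tilde\mu$ is supported on $\mathcal C_{KR}(J,p,C)\cup\{\dum\}$ with $\mathbb M(\tilde\mu)=\sum_i \mathbb M(\mu^i)$, then choosing cost-optimal plans $\pi^{(i)}$ transporting $\tilde\mu$ onto $\tilde\mu^i$ and letting $a$ be the weight vector of $\tilde\mu$ produces a feasible point $x=(\pi^{(1)},\dots,\pi^{(J)},a)$ of \eqref{eq:lpbary} with $c^Tx=\tilde F_{p,C}(\tilde\mu)$; conversely the $a$-block of any optimal point is the weight vector of an augmented $(p,C)$-barycenter, so $f^\star=\min\tilde F_{p,C}$ and $\mathcal M$ encodes $\argmin\tilde F_{p,C}$.

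The second step is the standard vertex argument. Decompose the feasible point $x$ attached to $\tilde\mu$ into a convex combination $x=\sum_k\lambda_k v_k$ of feasible vertices $v_k\in V$. Optimal vertices $v_k\in V^\star$ satisfy $c^Tv_k-f^\star=0$ and lie in $\mathcal M$; every non-optimal vertex has $c^Tv_k-f^\star>0$ and $d_1(v_k,\mathcal M)>0$, and since $V$ is finite the quantity $\bar V:=\min_{v\in V\setminus V^\star}(c^Tv-f^\star)/d_1(v,\mathcal M)$ is attained and strictly positive (which is also why $V_P>0$). Using convexity of $z\mapsto d_1(z,\mathcal M)$ together with $d_1(\cdot,\mathcal M)\equiv 0$ on $V^\star$,
\begin{align*}
\tilde F_{p,C}(\tilde\mu)-f^\star=c^Tx-f^\star=\sum_{k:\,v_k\notin V^\star}\lambda_k\big(c^Tv_k-f^\star\big)\ \ge\ \bar V\sum_{k:\,v_k\notin V^\star}\lambda_k\,d_1(v_k,\mathcal M)\ \ge\ \bar V\,d_1(x,\mathcal M).
\end{align*}

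The third step converts $d_1(x,\mathcal M)$ into $\tilde{OT}^p_p(\tilde\mu,\tilde\mu^\star)$. Let $y^\star\in\mathcal M$ realize $d_1(x,\mathcal M)=\|x-y^\star\|_1$ and let $\tilde\mu^\star$ be the augmented barycenter carried by its weight block $a^\star$; then $\|\tilde\mu-\tilde\mu^\star\|_{TV}=\|a-a^\star\|_1\le\|x-y^\star\|_1=d_1(x,\mathcal M)$. Since $\tilde\mu$ and $\tilde\mu^\star$ have equal total mass and live on $\mathcal C_{KR}(J,p,C)\cup\{\dum\}$, transporting one onto the other moves at most $\tfrac12\|\tilde\mu-\tilde\mu^\star\|_{TV}$ units of mass over distances bounded by the $\tilde d_C$-diameter, so $\tilde{OT}^p_p(\tilde\mu,\tilde\mu^\star)\le c_0\,\diam(\mathcal X)^p\,d_1(x,\mathcal M)$ with an explicit $c_0$. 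Combining this with the previous display and the definition $V_P=(J+1)\diam(\mathcal X)^{-p}\bar V$ gives $\tilde F_{p,C}(\tilde\mu)-\tilde F_{p,C}(\tilde\mu^\star)\ge 2V_P\,\tilde{OT}^p_p(\tilde\mu,\tilde\mu^\star)$, as claimed.

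The main obstacle is making the third step produce \emph{exactly} the constant $2V_P$: one must (i) choose $\tilde\mu^\star$ so that it is simultaneously an augmented barycenter and $\ell_1$-close, in the full LP sense, to the point $x$ built from $\tilde\mu$, and (ii) reconcile the full-LP $\ell_1$-geometry that enters $d_1(\cdot,\mathcal M)$, hence $V_P$, with the weight-coordinate geometry that governs $\tilde{OT}^p_p$ — this is where the factor $(J+1)$ (one set of barycenter weights together with the $J$ transport plans, the weight change propagating into each plan) and the power $\diam(\mathcal X)^p$ appear. Everything else — the LP reformulation \eqref{eq:lpbary}, finiteness and positivity of $\bar V$, and the trivial transport estimate — is routine, and since $\tilde F_{p,C}$ is only ever applied to the balanced augmented problem, the argument is literally that of Lemma~3.8 in \cite{heinemann2022randomized}.
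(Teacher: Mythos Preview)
Your proposal is correct and matches the paper's treatment exactly: the paper does not give an independent proof of \Cref{lem:slope_bound} but simply states that it is a ``slightly adapted version of Lemma~3.8 in \cite{heinemann2022randomized}'' whose proof ``remains unchanged and is therefore omitted'', and your sketch faithfully reproduces the LP error-bound argument from that reference. Your identification of the delicate step --- reconciling the full-LP $\ell_1$-distance $d_1(\cdot,\mathcal M)$ with the weight-block change that controls $\tilde{OT}^p_p$, which produces the factor $(J+1)\diam(\mathcal X)^{-p}$ in $V_P$ --- is accurate, and since the paper defers this entirely to the cited work, there is nothing further to compare.
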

    Invoking  Lemma~\ref{lem:slope_bound} with $\hat \mu\in \hat{\tilde{\mathbf{B}}}$ and applying \Cref{thm:frechetboundPoi} yields
    \begin{align*}
    \frac{2p}{J}\sum_{i=1}^J \mathcal{E}_{1,\X_i,\mu^i}(C)C^{p-1} \theta&\geq \mathbb{E}\left[F_{p,C}(\hat{\tilde{\mu}})-F_{p,C}(\tilde{\mu}) \right]\\
    &\geq \mathbb{E}\left[ 2V_p\sup_{\hat{\tilde{\mu}}\in\hat{\tilde{\mathbf{B}}}}\inf_{\tilde{\mu}\in\tilde{\mathbf{B}}}\tilde{\mathrm{OT}}^p_{p}\left(\tilde{\mu},\hat{\tilde{\mu}}\right)\right]=\mathbb{E}\left[ 2V_p\sup_{\hat{\mu}\in\hat{\mathbf{B}}}\inf_{\mu\in\mathbf{B}}\KR_{p,C}^p\left(\mu,\hat{\mu}\right)\right],
    \end{align*}
    where the equality follows from \eqref{eq:wsthm_reform} and hence
    \begin{align*}
    \frac{\frac{p}{J}\sum_{i=1}^J \mathcal{E}_{1,\X_i,\mu^i}(C) \ C^{p-1} }{V_{P}}\theta \geq \mathbb{E}\left[ \sup_{\hat{\mu}\in\hat{\mathbf{B}}}\inf_{\mu\in\mathbf{B}}\KR_{p,C}^p\left(\mu,\hat{\mu}\right)\right].
    \end{align*}
    \end{proof}
    
  \vspace{-1cm}
    \section{Additional Simulations}
  
  In this appendix we display additional simulations which showcase the relative error in estimating the $(2,C)$-KRD and the $(2,C)$-KR barycenter for the Poisson model as well as for the multinomial and Bernoulli model. 
  
  \subsection{Additional Synthetic Datasets}\label{app:additionalSim}
  This subsection introduces four additional classes of measures SPI, SPIC, and  NI, NIG. The first two are qualitatively similar to the classes NE and NEC, the latter two resemble weighted modifications of the classes PI and PIC.

  \subsubsection*{Spirals of varying Length (SPI), \Cref{fig:spiraldata} (a)}
  Let $a_i\sim U[2,4]$ and $b_i\sim U[3,6]$ for $i=1,\dots,J$. Let $K_i=\lceil b_iM \rceil$ and let $t_1,\dots,t_K$ be a discretization of $[0,b\pi]$. Set $w^i_k=1$ for $k=1,\dots,K_i$ and $i=1,\dots,J$. Set 
  \[
  l^i_k=a_i((t_k\sin(t_k)+64)/140,(t_k\cos(t_k)+70)/130)^T.
  \]
  \subsubsection*{Clustered Spirals (SPIC), see \Cref{fig:spiraldata} (b)}
  Let $a^c_i\sim U[2,4]$ and $b^c_i\sim U[3,6]$ for $i=1,\dots,J$, $c=1,\dots, 5$. Let $K^c_i=\lceil b_i^cM \rceil$ and let $t_1,\dots,t_K$ be a discretization of $[0,b\pi]$. Set $w^i_k=1$ for $k=1,\dots,K_i$ and $i=1,\dots,J$ and let $\alpha=(0,3,3,6,3)^T$ and $\beta=(3,0,3,3,6)^T$. Set 
  \[
  l^i_{\sum_{r=1}^{c-1}K^r_i+k}=(1/7)(((a_i^ct_k\sin(t_k)+64)/140)+\alpha_c,(a_i^ct_k\cos(t_k)+70)/130+\beta_c)^T,
  \]
  where we use the convention that a sum is zero if its last index is smaller than its first one. 
  \begin{figure}[h]
    \centering
    \subfloat[][]{\includegraphics[width=0.45\linewidth]{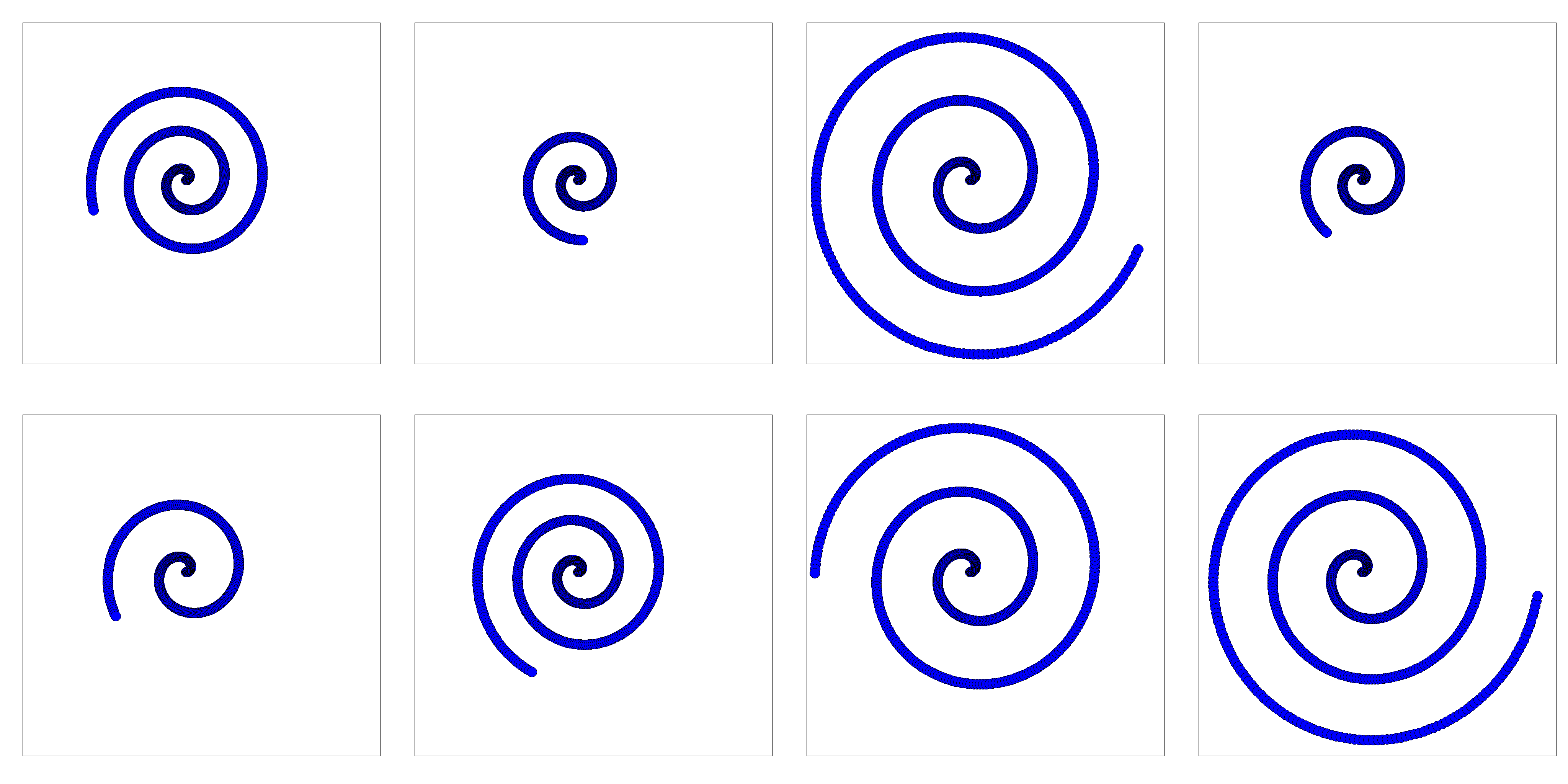}}%
    \qquad
    \subfloat[][]{\includegraphics[width=0.45\linewidth]{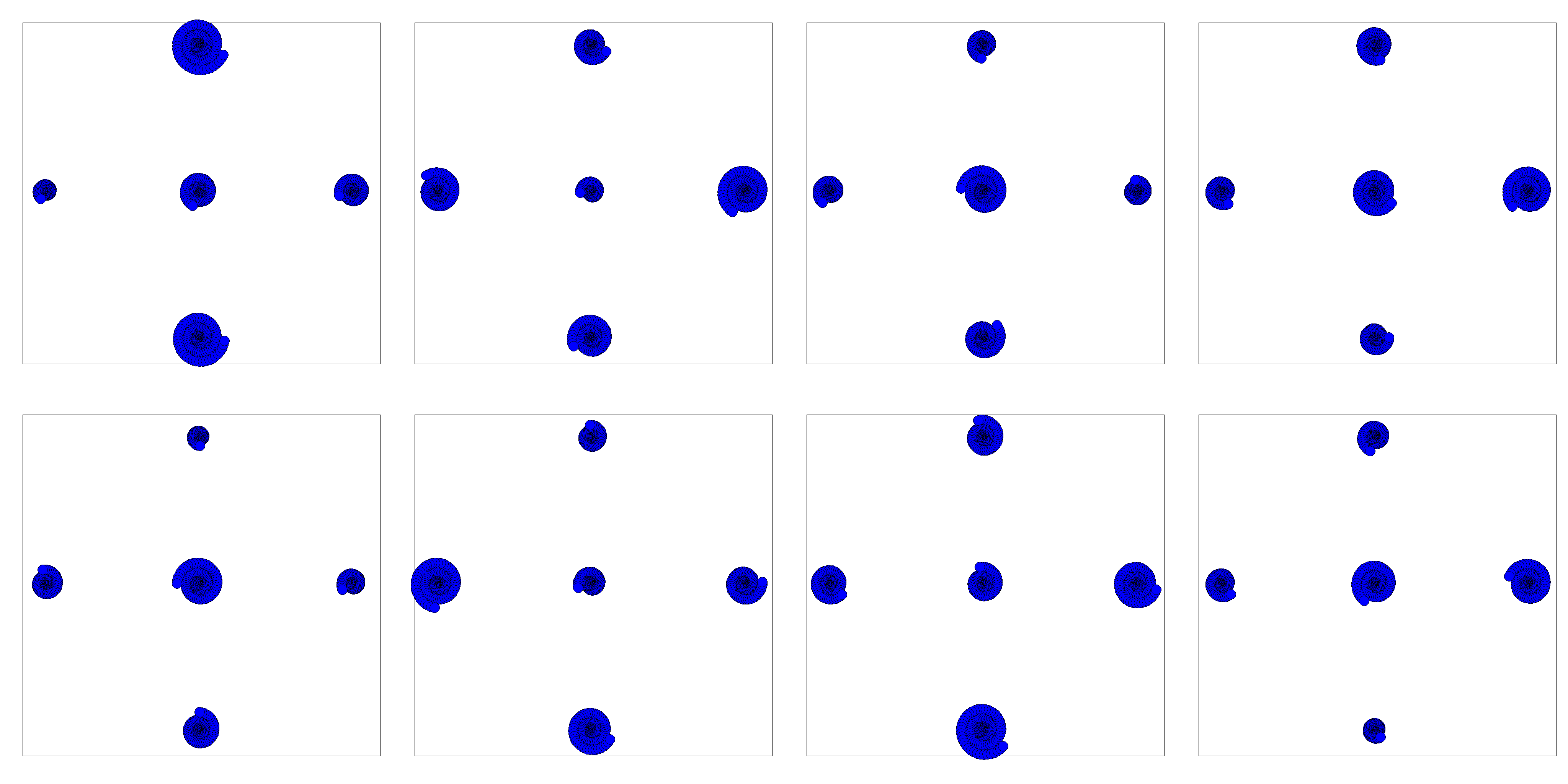}}%
    \caption{\textbf{(a)} An example of $J=8$ measures from the \emph{SPI} dataset with $M=110$. \textbf{(b)} An example of $J=8$ measures from the \emph{SPIC} dataset with $M=22$.}%
    \label{fig:spiraldata}
  \end{figure}
  \begin{figure}[b]
    \centering
    \subfloat[][]{\includegraphics[width=0.45\linewidth]{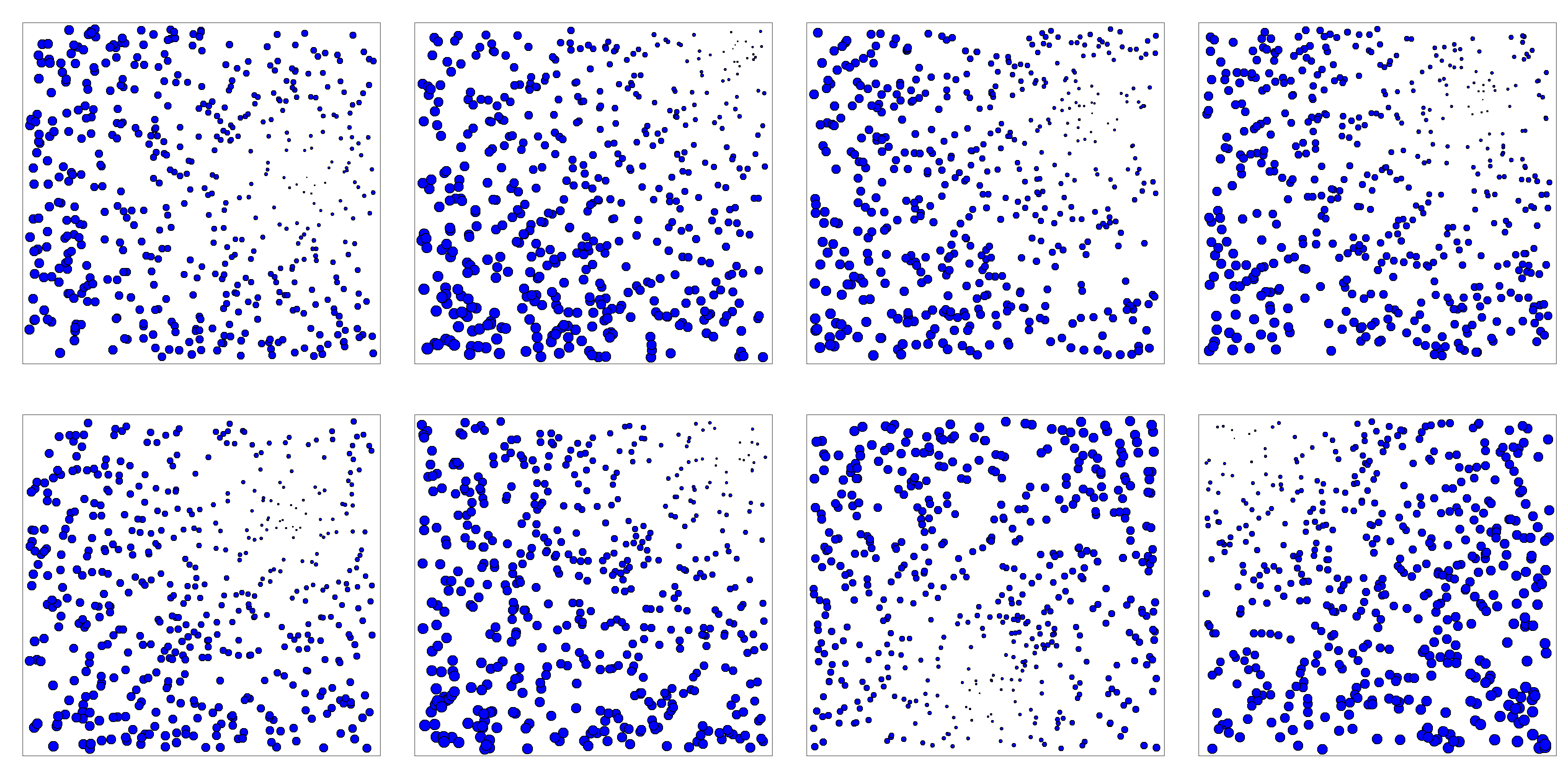}}%
    \qquad
    \subfloat[][]{\includegraphics[width=0.45\linewidth]{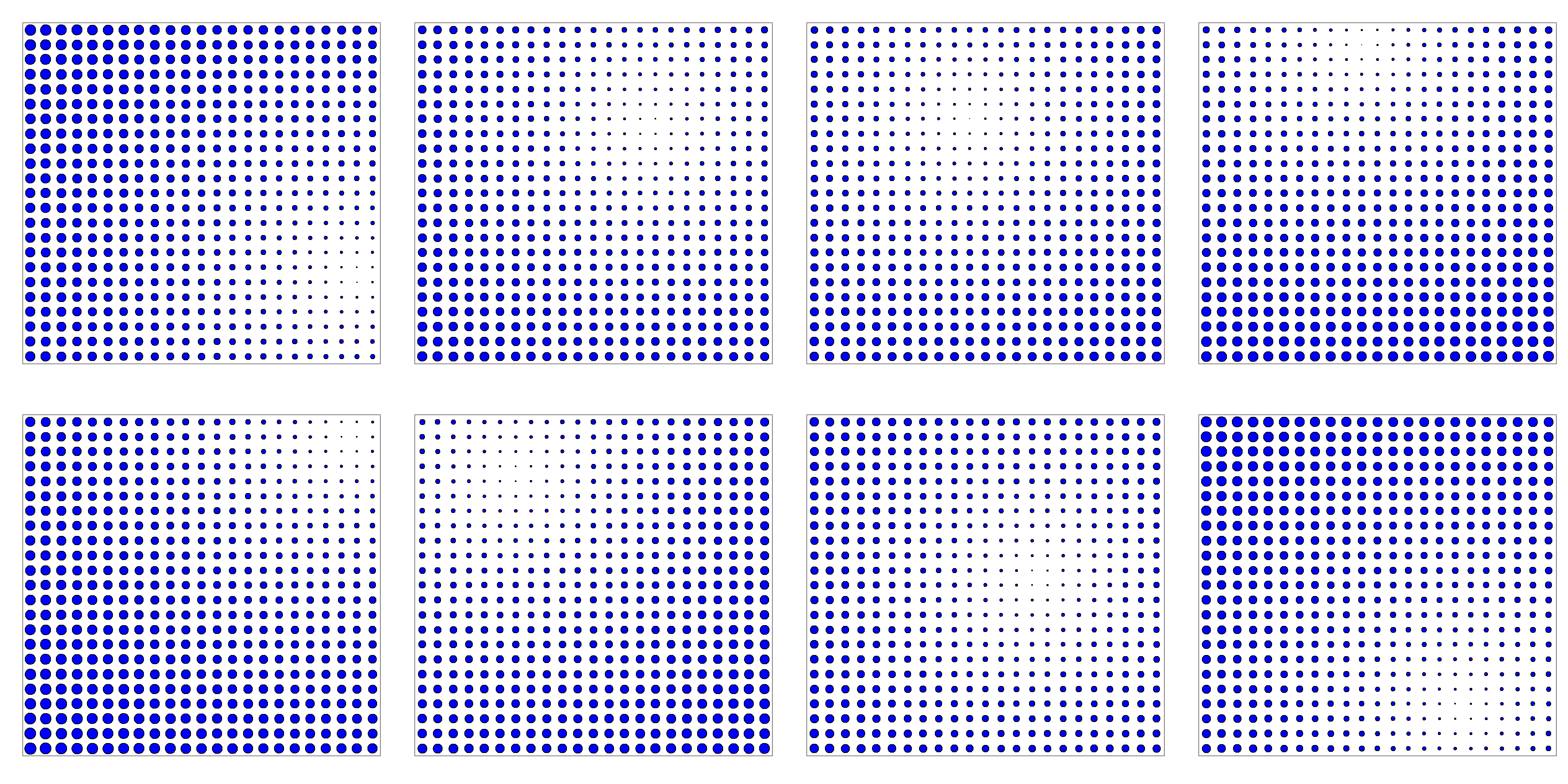}}%
    \caption{\textbf{(a)} An example of $J=8$ measures from the \emph{NI} dataset with $M=500$. \textbf{(b)} An example of $J=8$ measures from the \emph{NIG} dataset with $M=23^2$.}%
    \label{fig:normeddata}
  \end{figure}
  \subsubsection*{Norm-based Intensities on Uniform Positions (NI), see \Cref{fig:normeddata} (a)}
   Fix $J$ locations $l^1_0,\dots,l^J_0 \in [0,1]^2$. Let $l^i_1,\dots,l^i_K \sim U[0,1]^2$ and let $w^i_k=\lVert l^i_k-l_0^i \rVert_2$ for $1\leq i\leq J$. 
   \subsubsection*{Norm-based Intensities on a Grid (NIG), see \Cref{fig:normeddata} (b)}
   Let $K=M^2$ for $M\in \mathbb{N}$. Fix $J$ locations $l^1_0,\dots,l^J_0 \in [0,1]^2$ and let $l^i_1,\dots,l^i_{M^2}$ be the points of an equidistant $M\times M$ grid in $[0,1]^2$ for each $1\leq i\leq J$. Set $w^i_k=\lVert l^i_k-l_0^i \rVert_2$ for $1\leq i\leq J$.

  \subsection{Simulations for the Poisson Model}\label{sec:addfigpoi}

  This section details the simulation results for the additional classes of measures SPI, SPIC, NI, and NIG. Figures \ref{fig:dist_poiSPI} -- \ref{fig:dist_poiNIG} display the relative error of the empirical KRD in estimating the population KRD  across several choices for $C\in \{0.01, 0.1, 1, 10\}$. Moreover, Figures \ref{fig:bary_poiSPI} -- \ref{fig:bary_poiNIG} detail the relative Fr\'echet error in estimating the $(2,C)$-KR barycenter for the choices $C \in \{0.1, 1, 10\}$. 
  
  Structurally, the simulation results and corresponding conclusions for the SPI and SPIC classes are similar to those for the respective NE and NEC classes. Likewise, the insights about the NI and NIG classes are closely tied to those of the PI and PIG classes, respectively. 
  
  \begin{figure}[H]
    \centering
  \begin{tabular}{cc}
    \centering
        \includegraphics[width=0.37\textwidth]{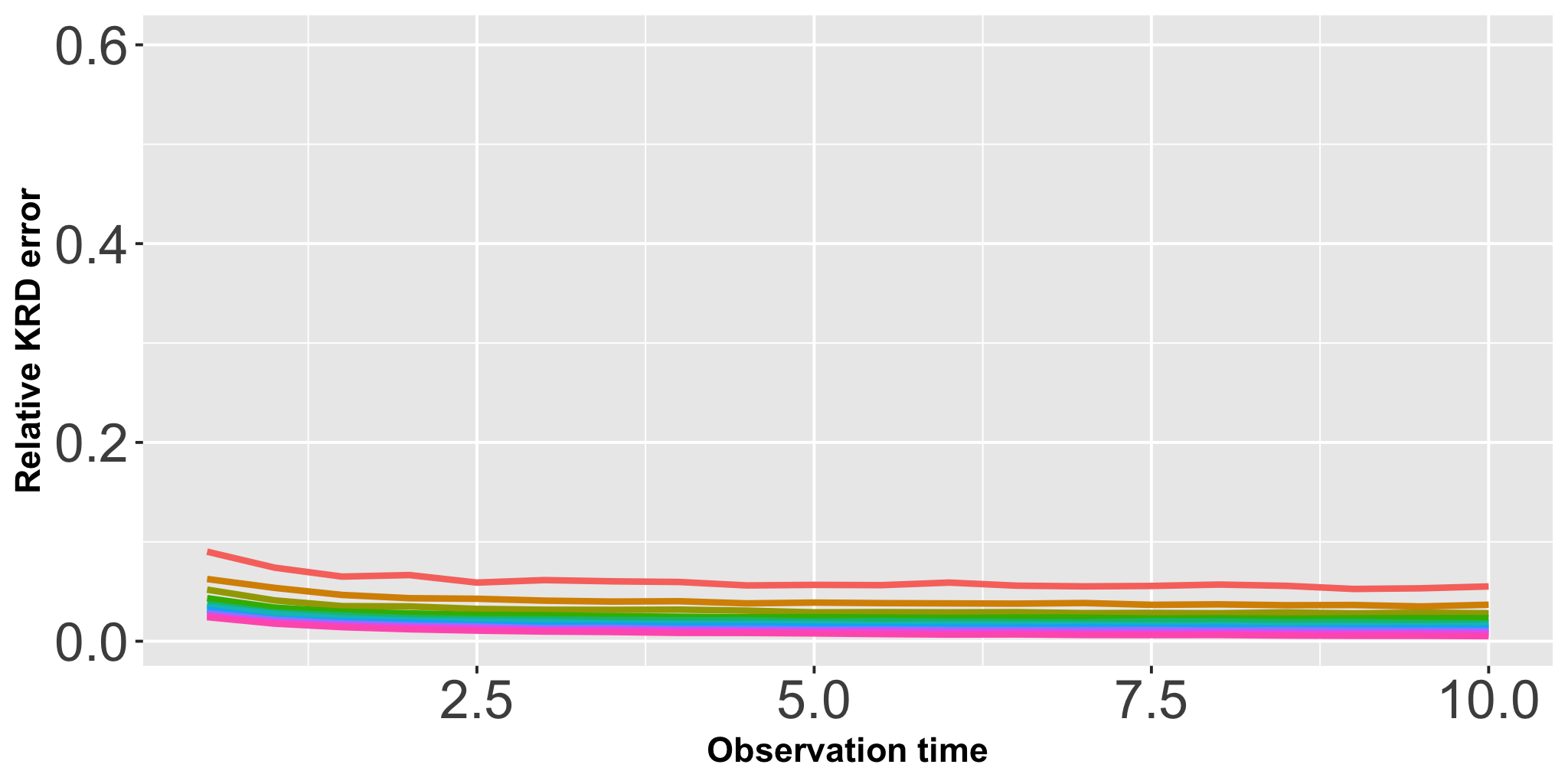} &       \includegraphics[width=0.37\textwidth]{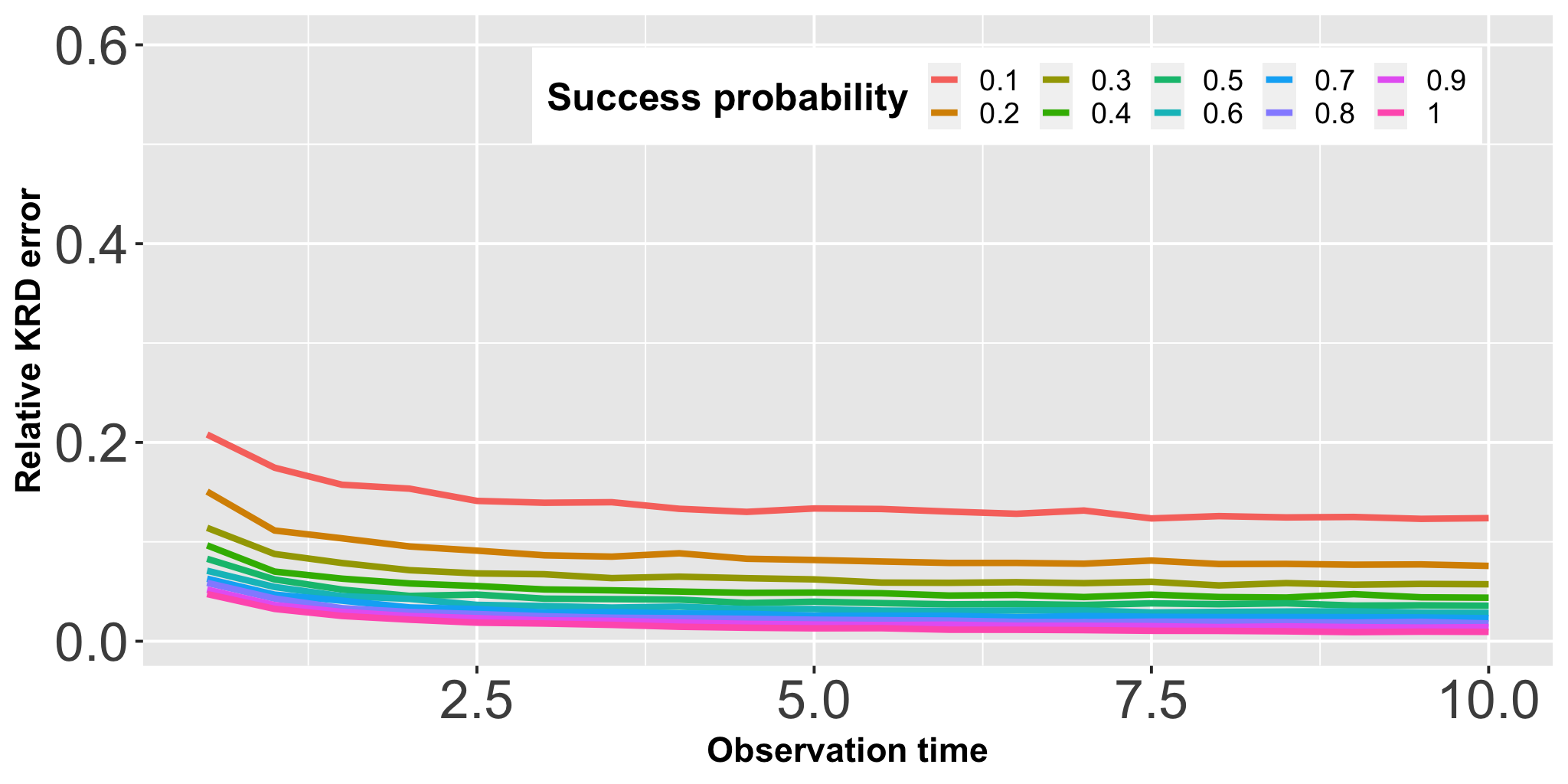} \\
      \includegraphics[width=0.37\textwidth]{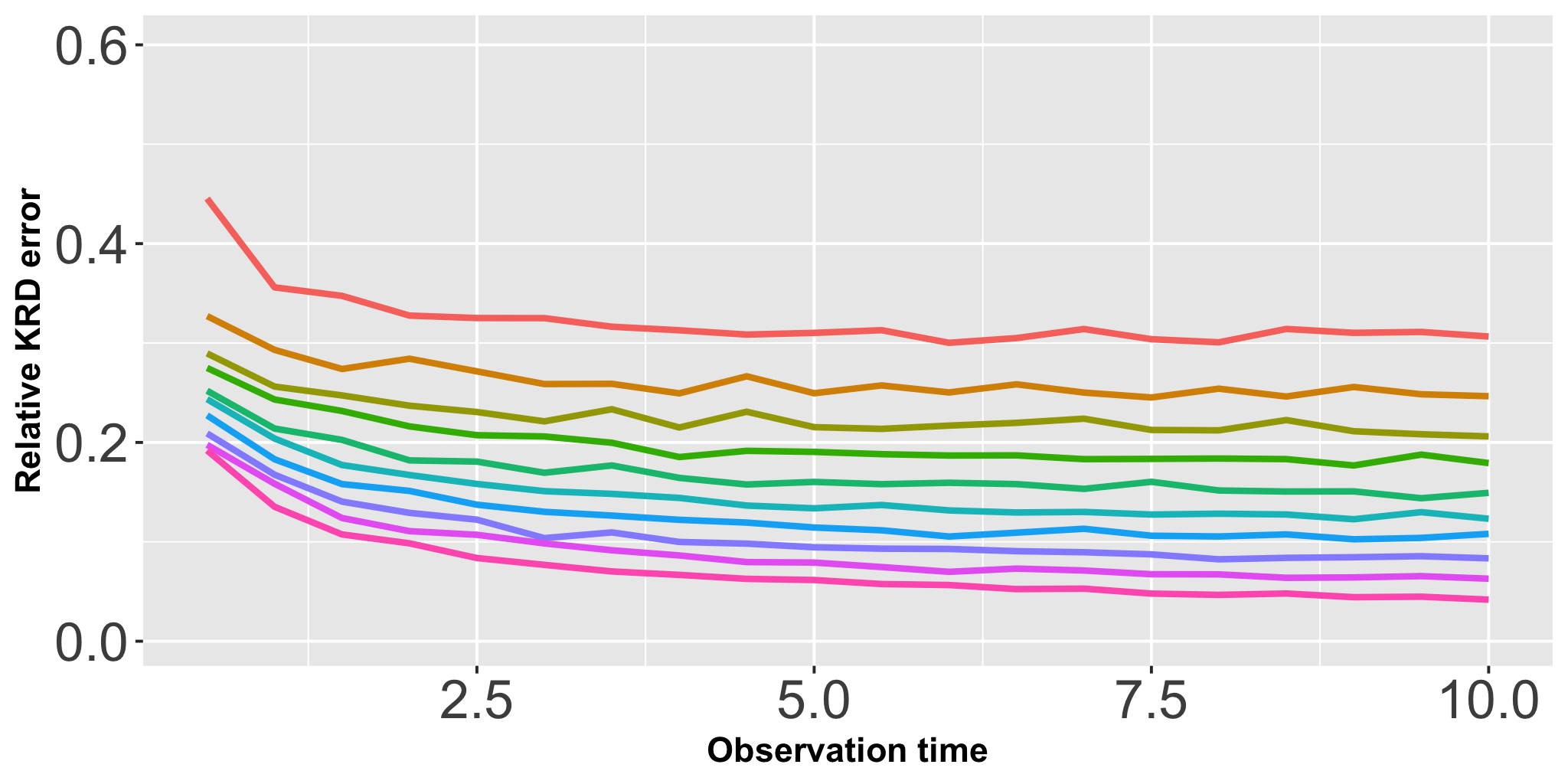} &       \includegraphics[width=0.37\textwidth]{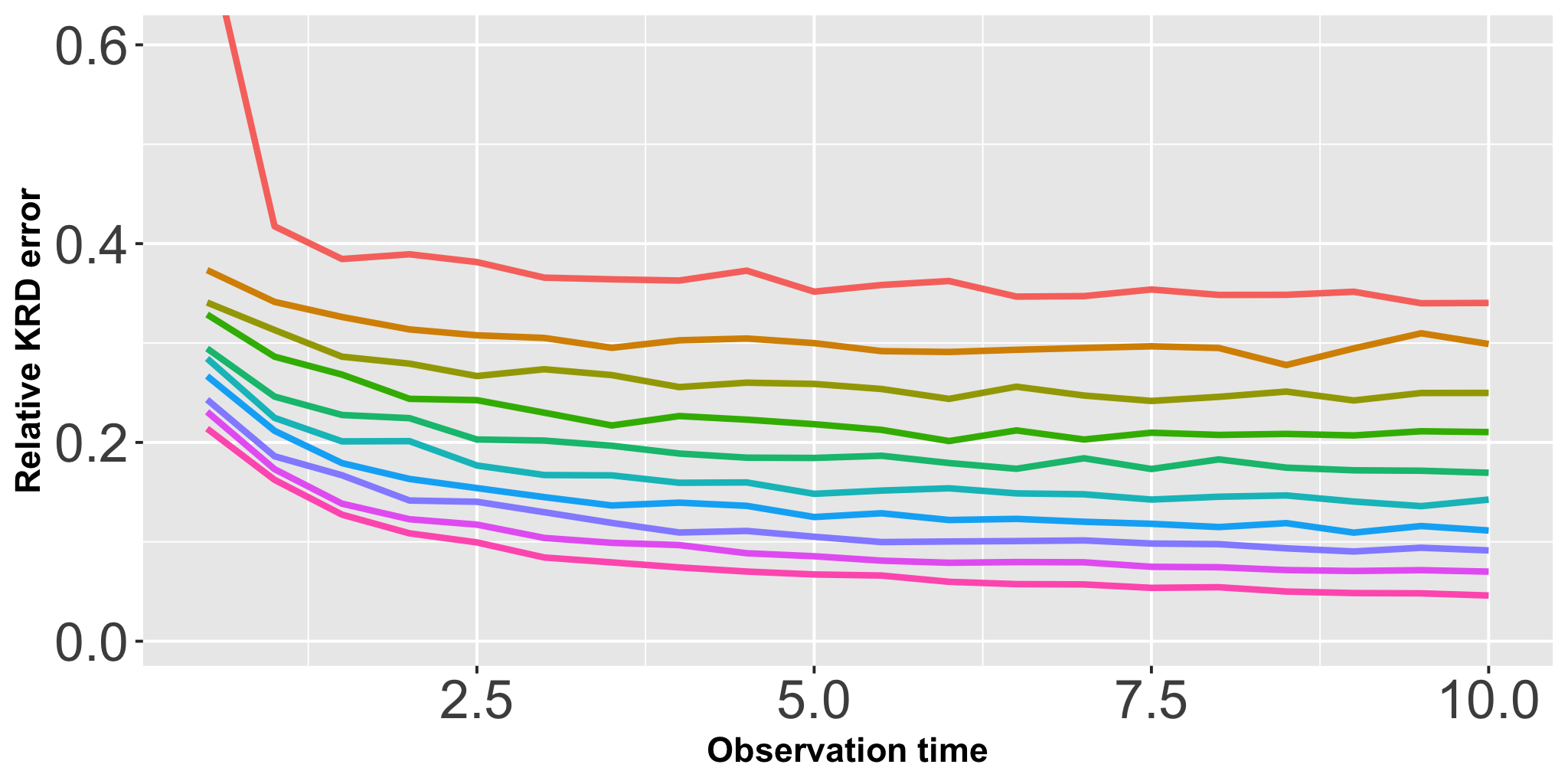} \\
  \end{tabular}
  \caption{As in \Cref{fig:dist_poiPI}, but for the SPI class and $M=65$.}
    \label{fig:dist_poiSPI}
  \end{figure}
  \begin{figure}[H]
    \centering
  \begin{tabular}{cc}
        \includegraphics[width=0.37\textwidth]{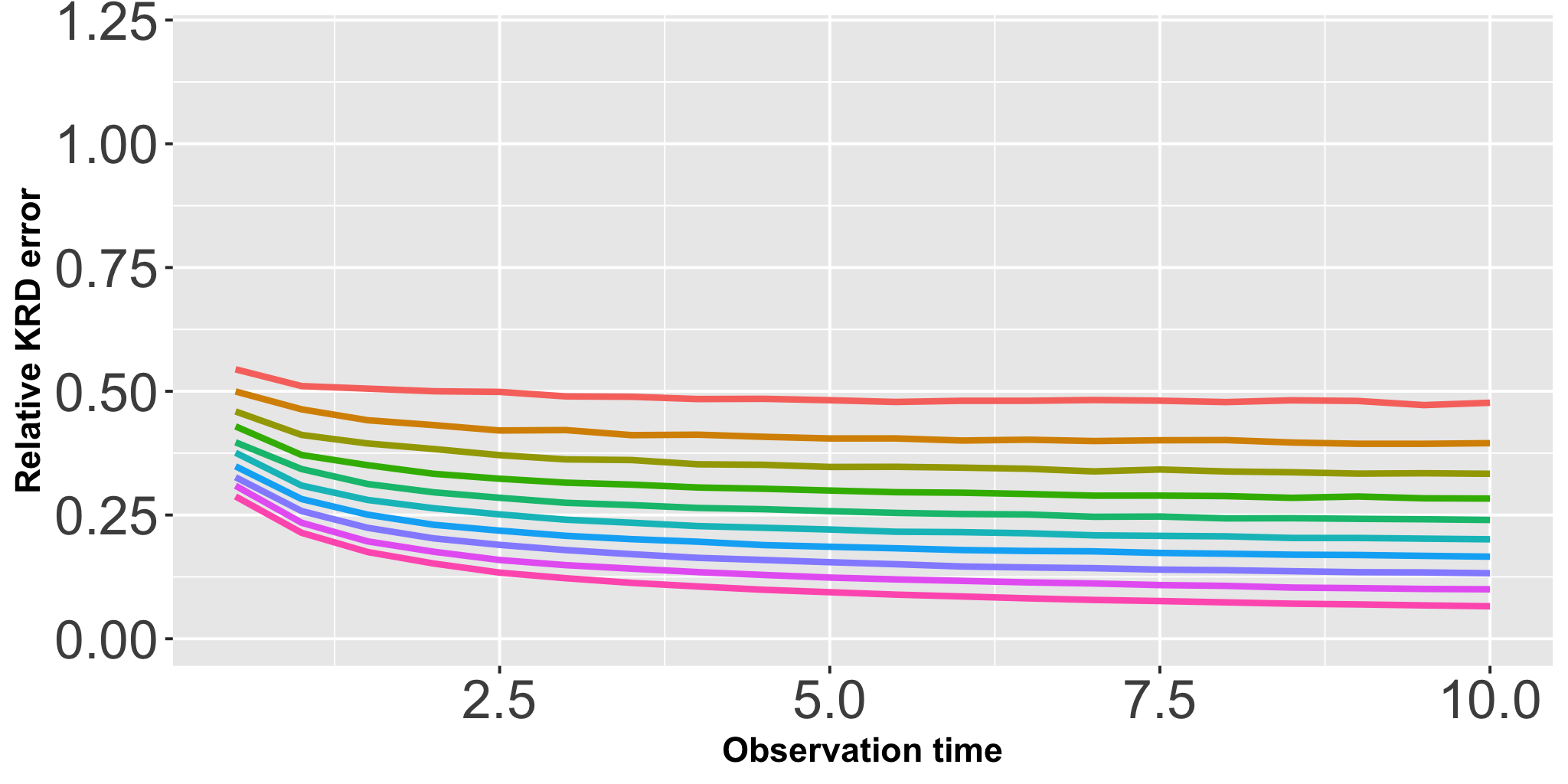} &       \includegraphics[width=0.37\textwidth]{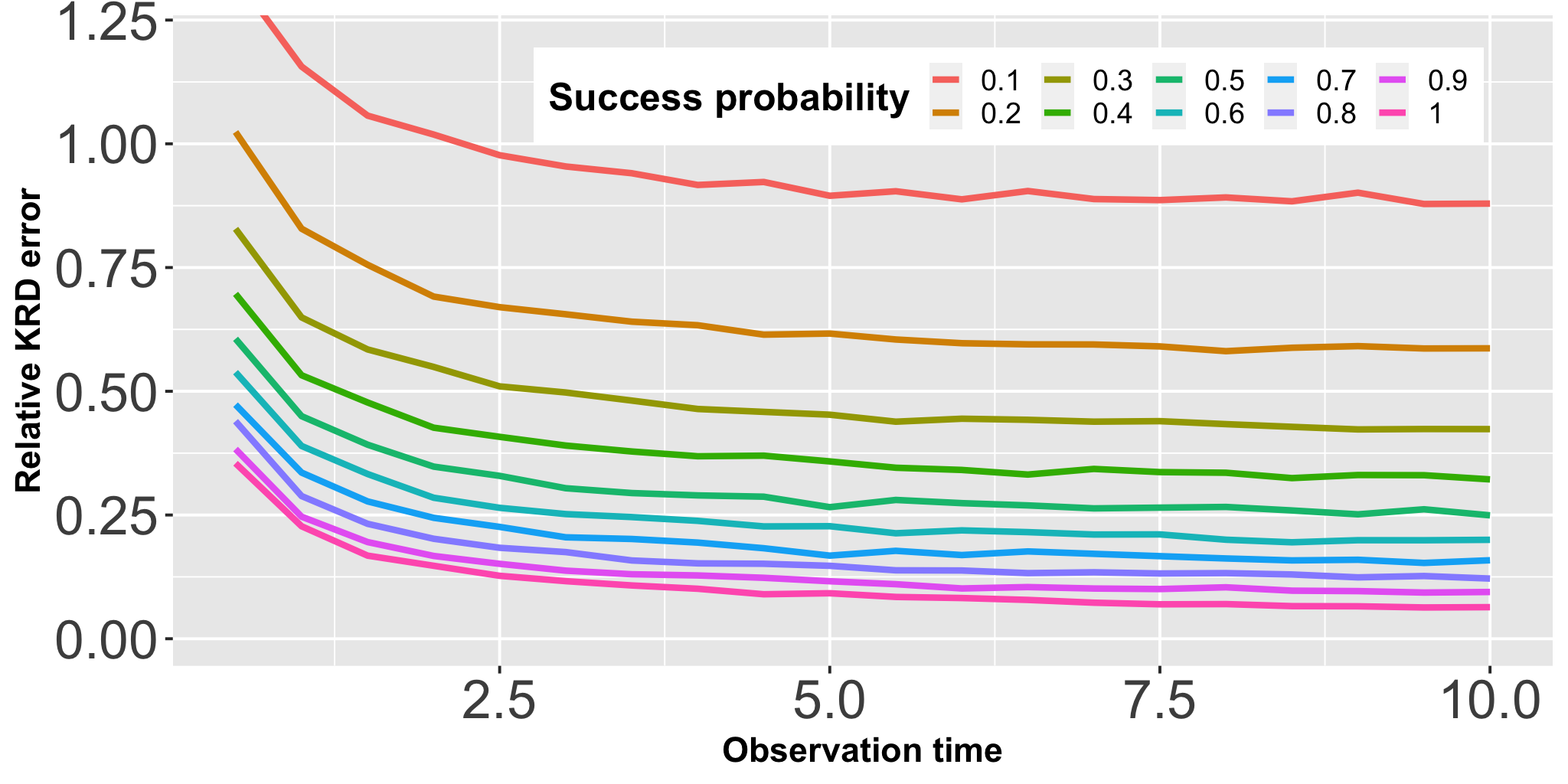} \\
      \includegraphics[width=0.37\textwidth]{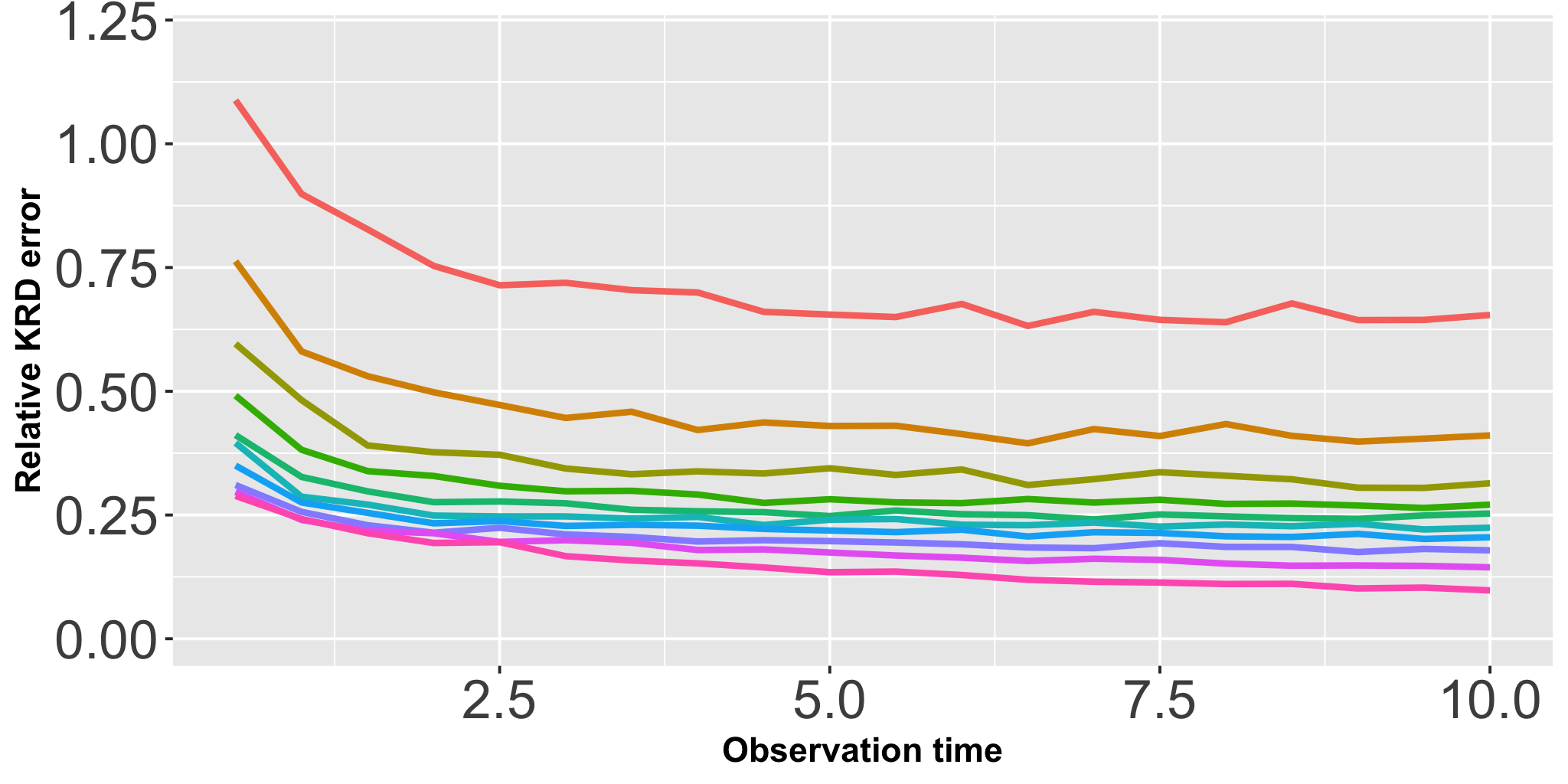} &       \includegraphics[width=0.37\textwidth]{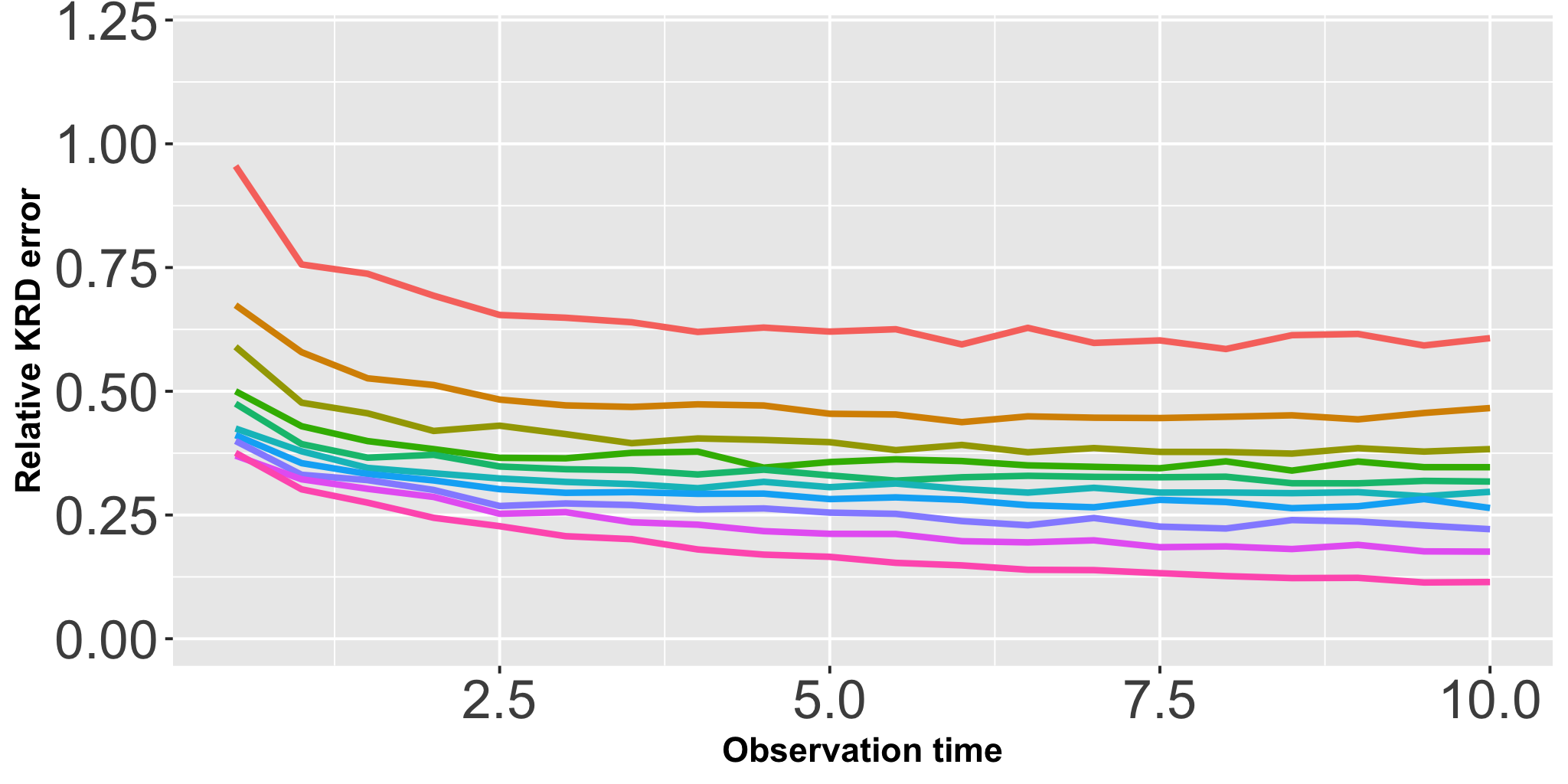} \\
  \end{tabular}
  \caption{As in \Cref{fig:dist_poiPI}, but for the SPIC class and $M=12$.}
    \label{fig:dist_poiSPIC}
  \end{figure}
  
  \begin{figure}[H]
    \centering
   \begin{tabular}{cc}
         \includegraphics[width=0.37\textwidth]{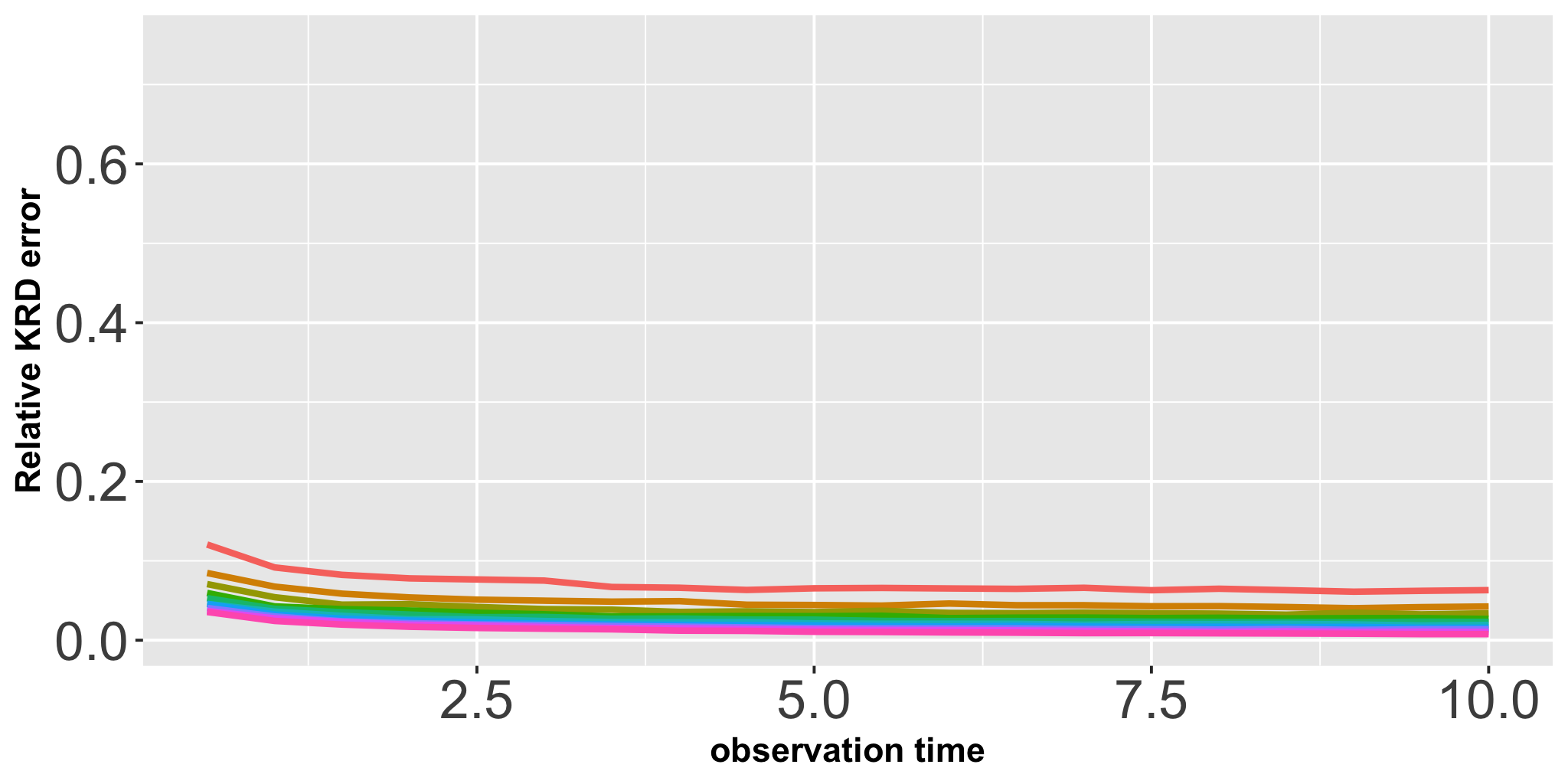} &       \includegraphics[width=0.37\textwidth]{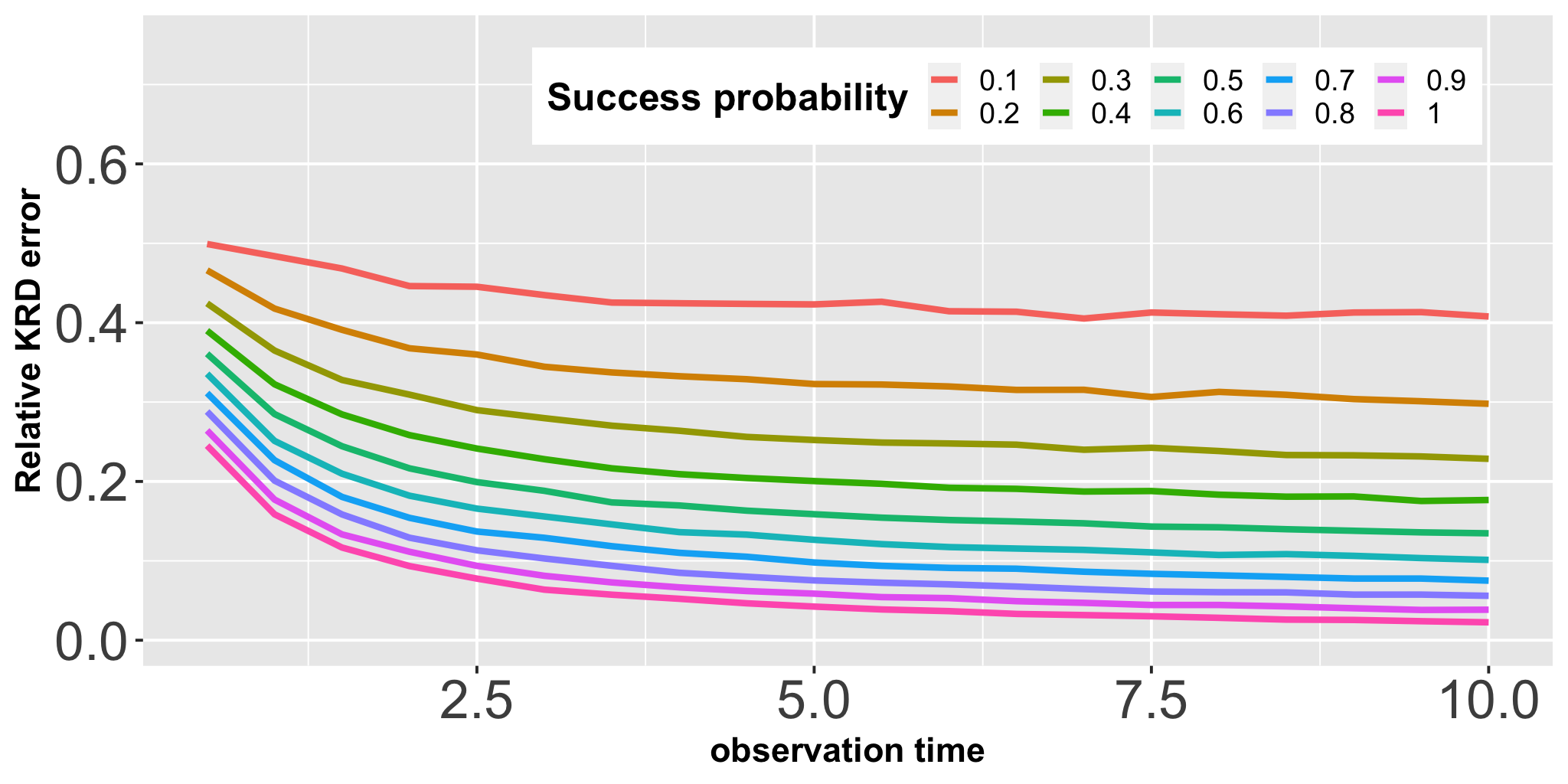} \\
       \includegraphics[width=0.37\textwidth]{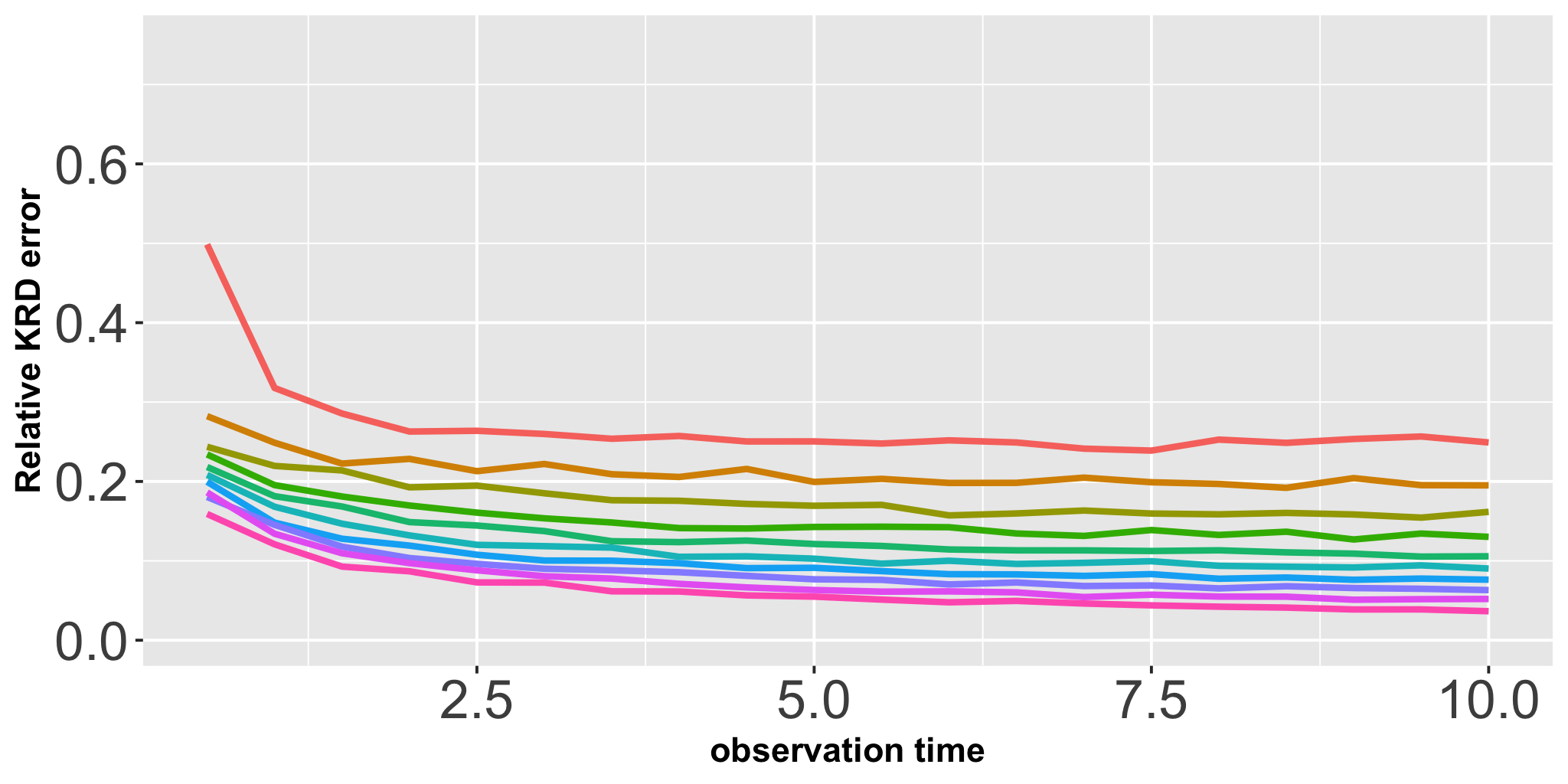} &       \includegraphics[width=0.37\textwidth]{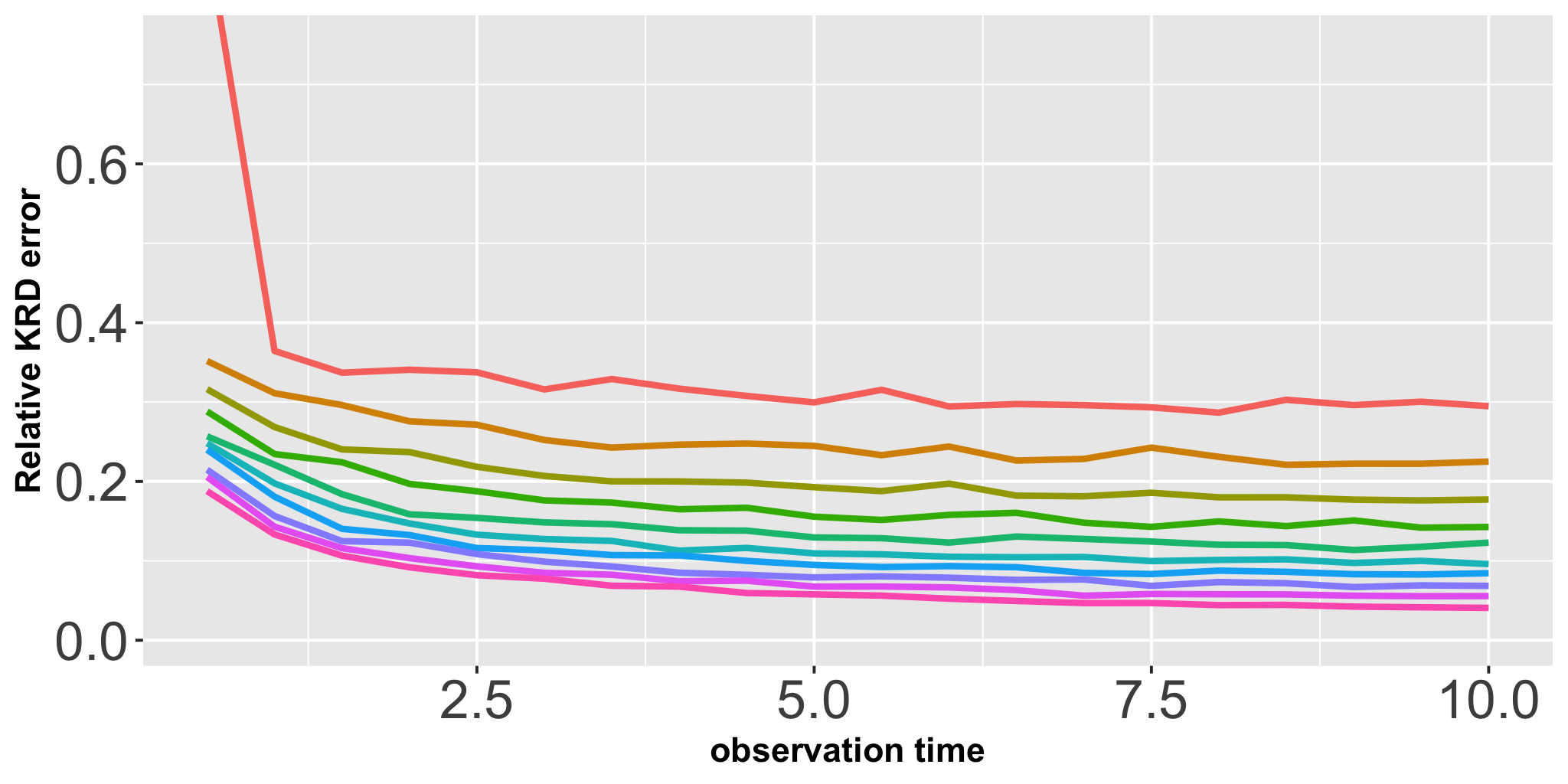} \\
   \end{tabular}
   \caption{As in \Cref{fig:dist_poiPI}, but for the NI class and $M=300$.}
     \label{fig:dist_poiNI}
   \end{figure}
   \begin{figure}[H]
    \centering
   \begin{tabular}{cc}
         \includegraphics[width=0.37\textwidth]{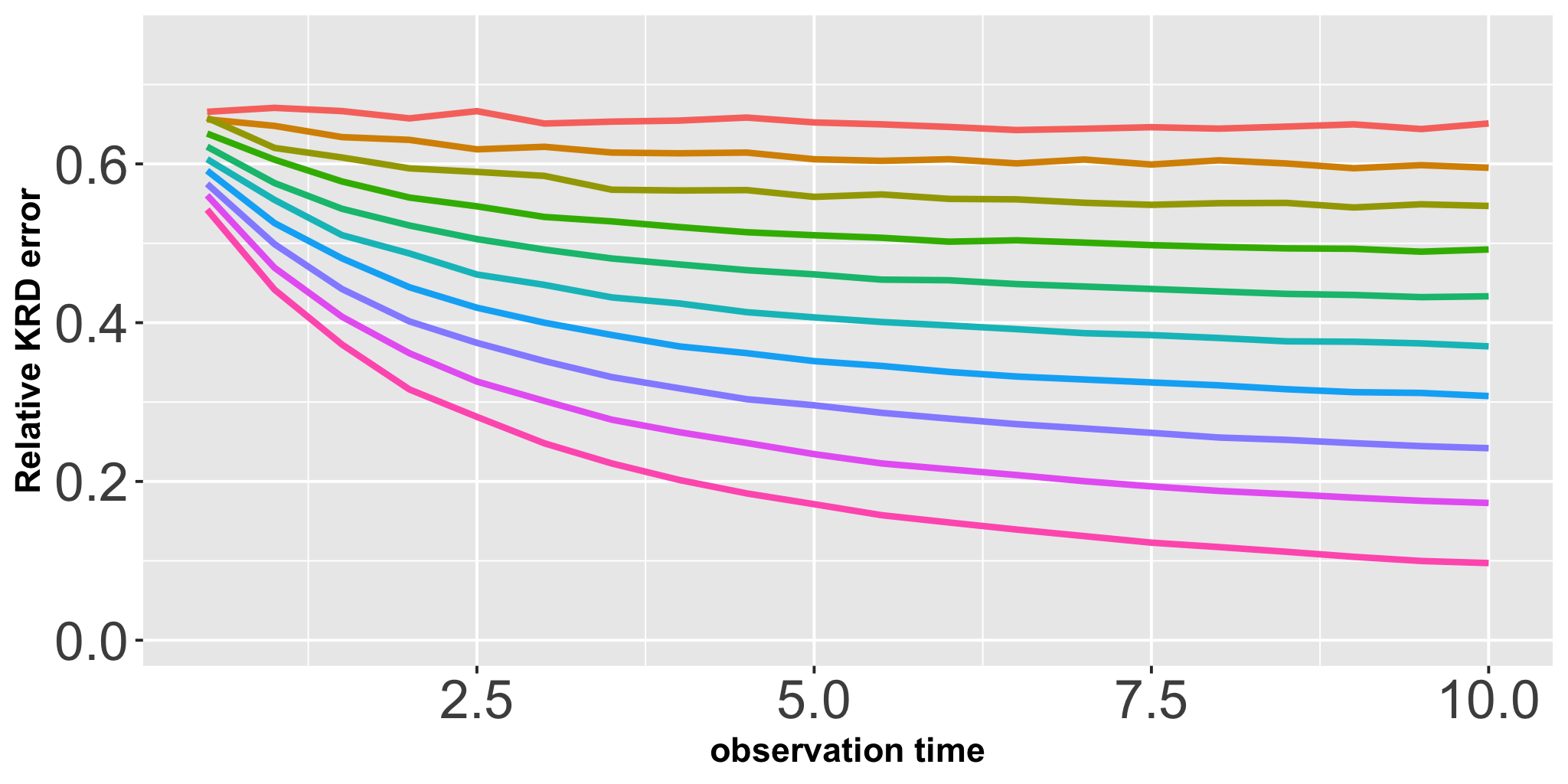} &       \includegraphics[width=0.37\textwidth]{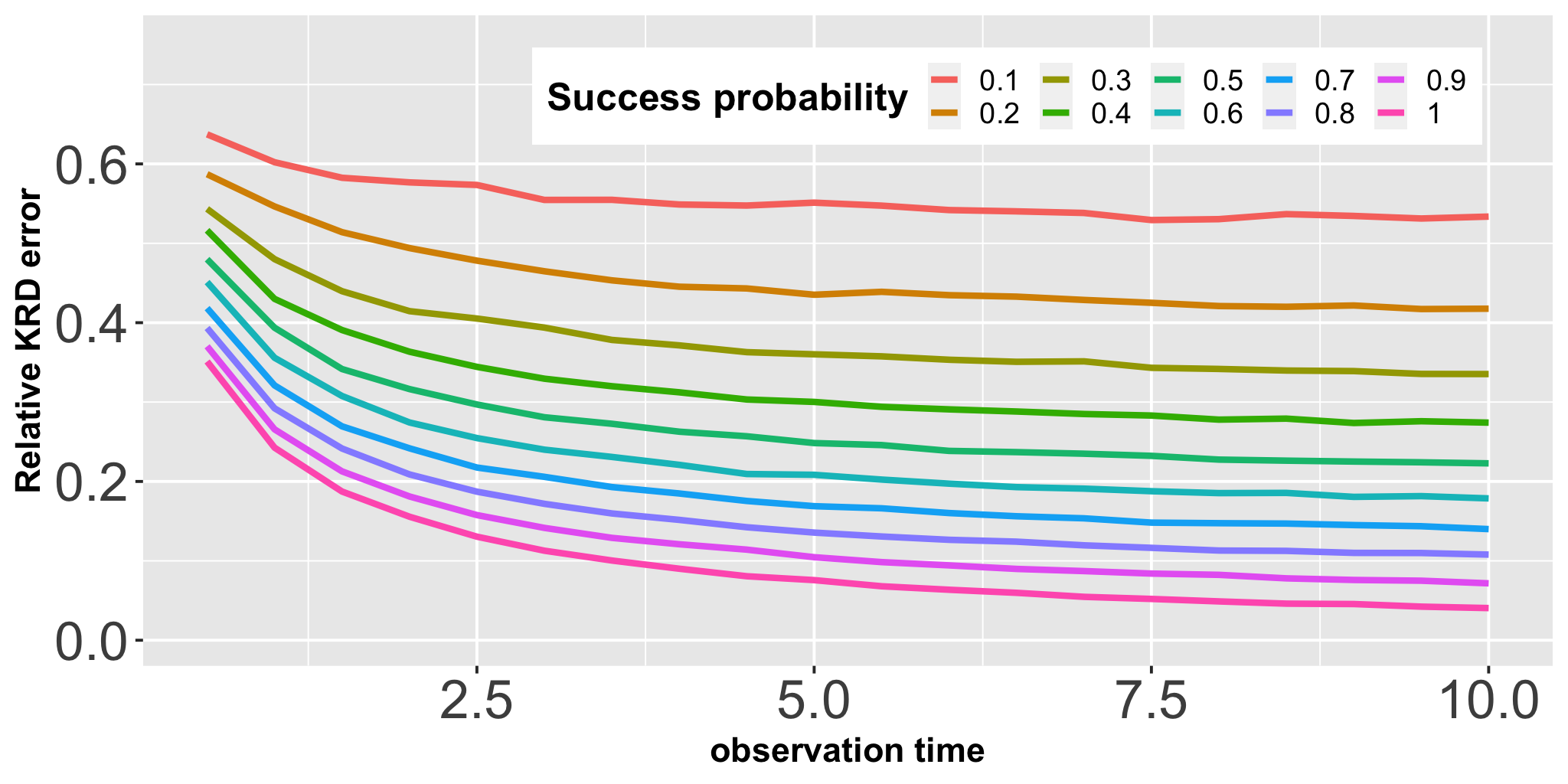} \\
       \includegraphics[width=0.37\textwidth]{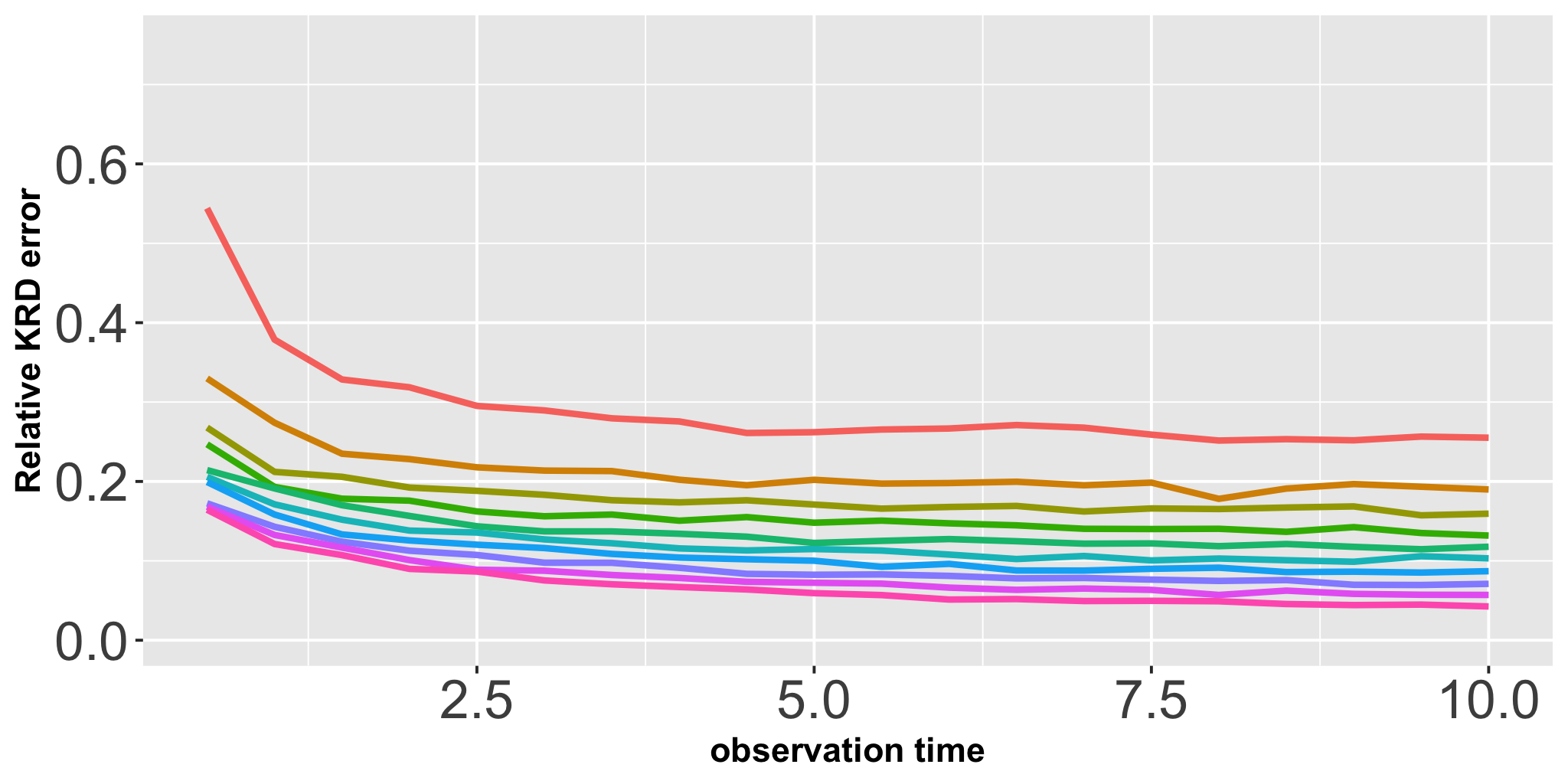} &       \includegraphics[width=0.37\textwidth]{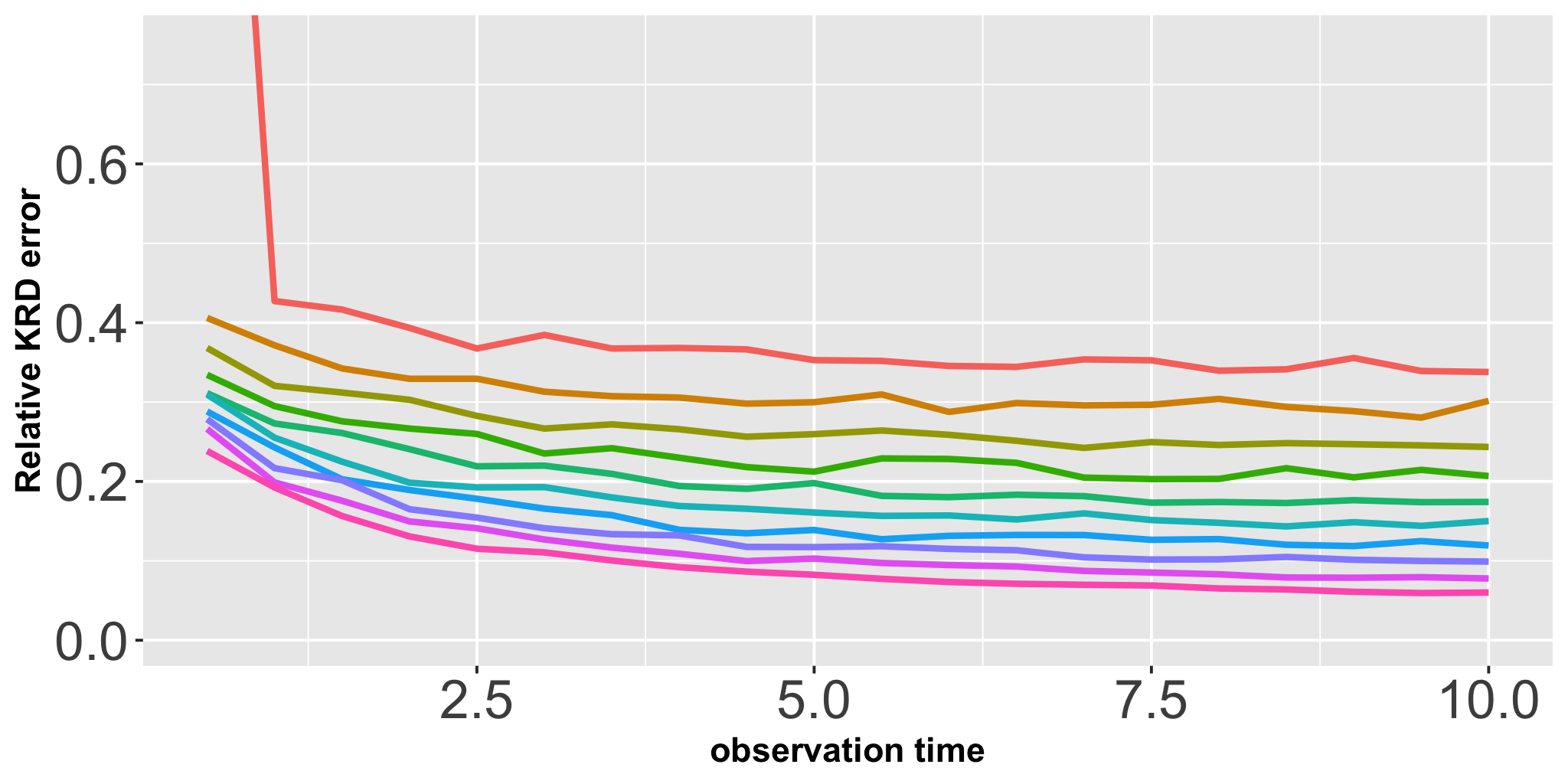} \\
   \end{tabular}
   \caption{As in \Cref{fig:dist_poiPI}, but for the NIG class and $M=17$.}
     \label{fig:dist_poiNIG}
   \end{figure}

  \begin{figure}[H]
  \begin{tabular}{ccc}
  \includegraphics[width=0.3\textwidth]{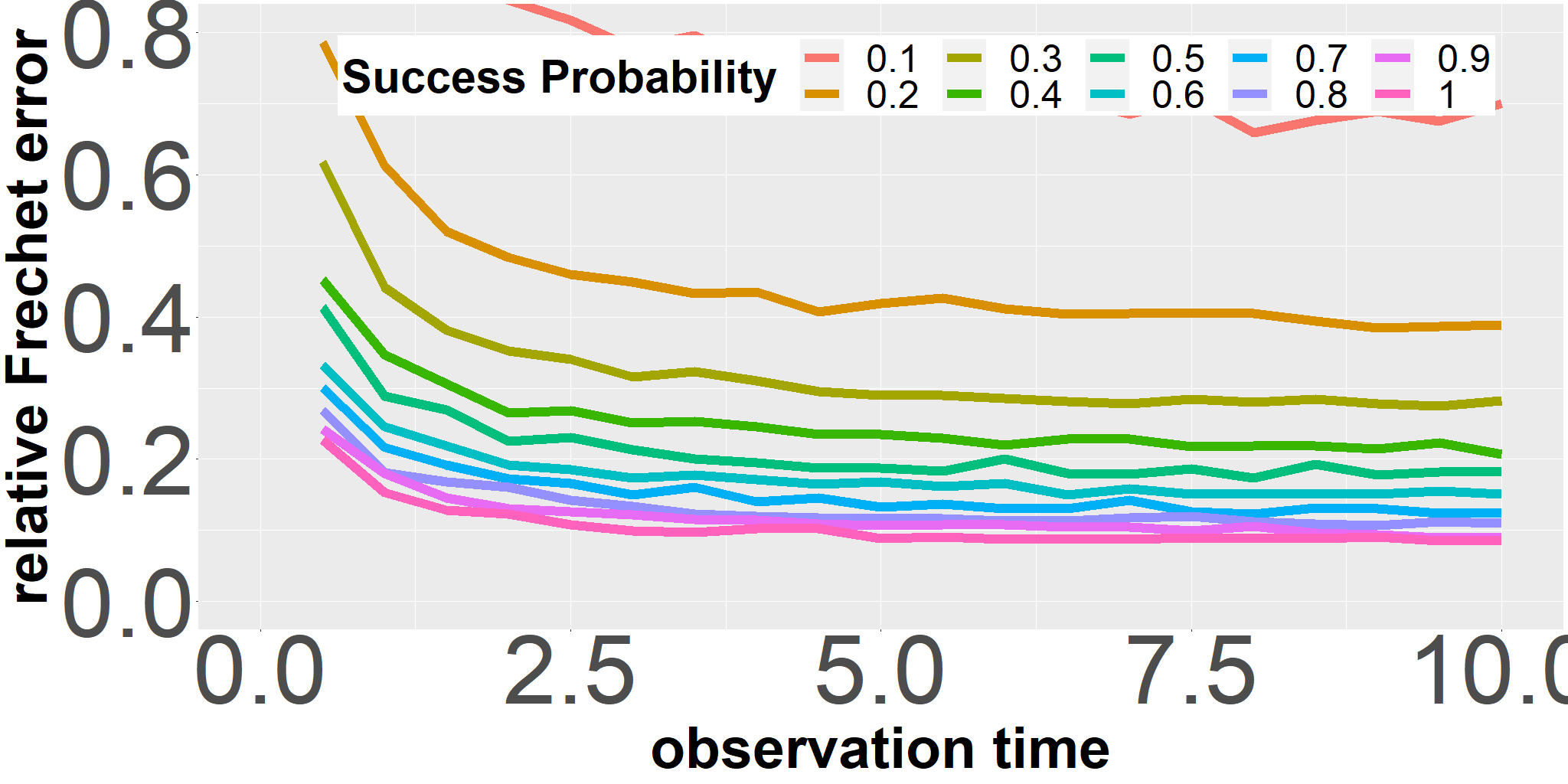} &
      \includegraphics[width=0.3\textwidth]{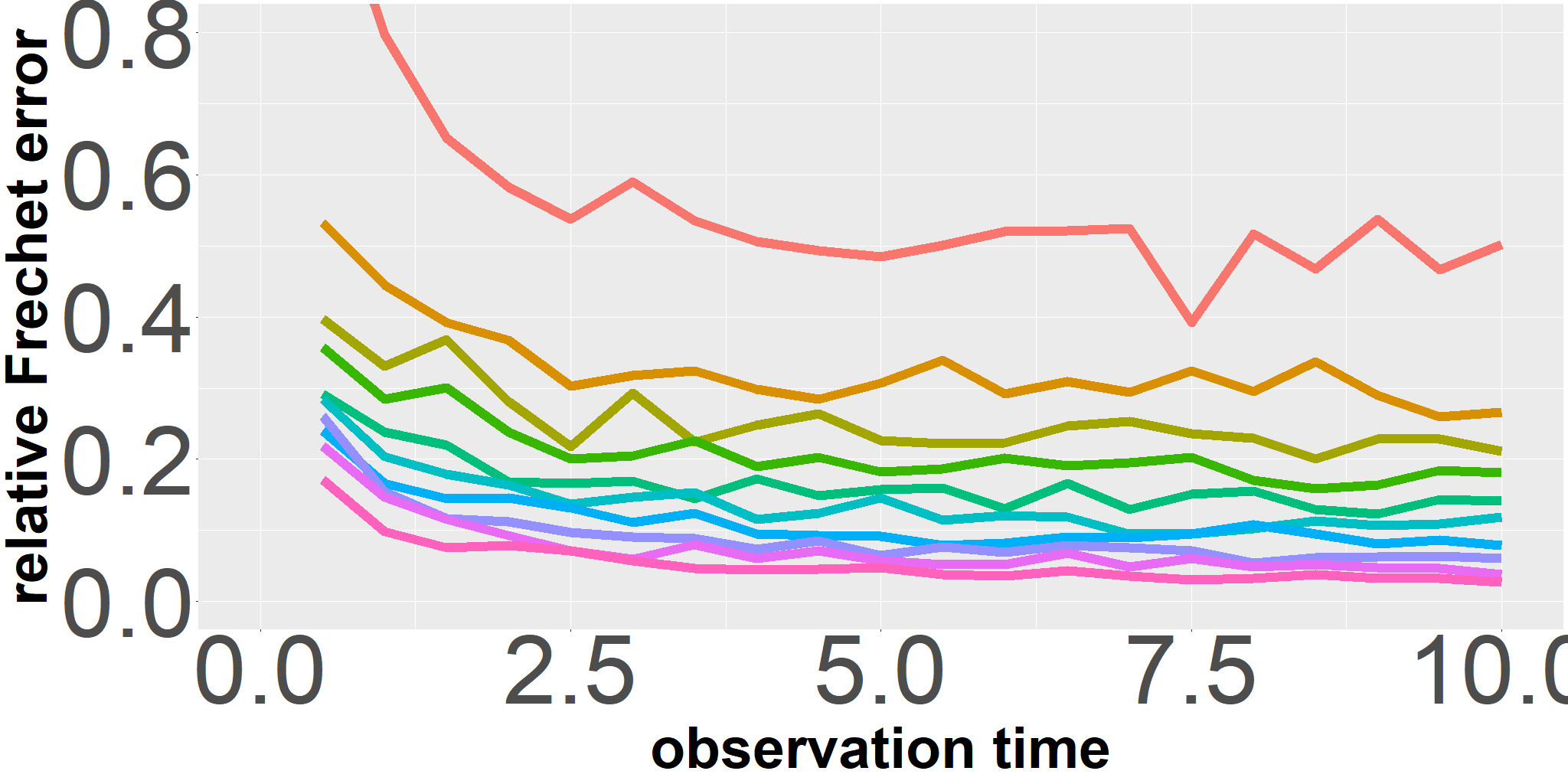} &       \includegraphics[width=0.3\textwidth]{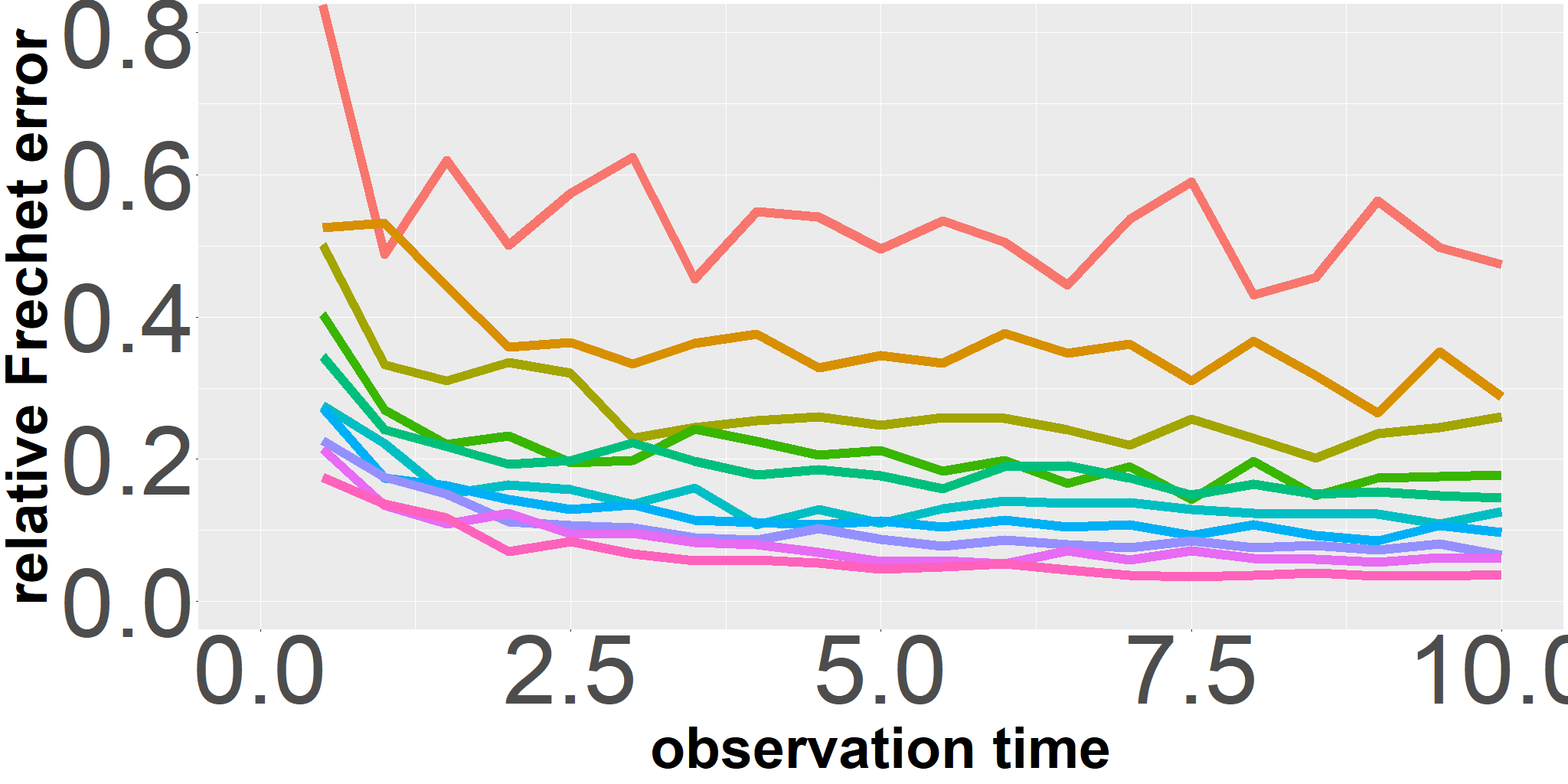} 
  \end{tabular}
  \caption{As in \Cref{fig:bary_poiPIG}, but for the SPI class and $M=65$.}
    \label{fig:bary_poiSPI}
  \end{figure}
  \begin{figure}[H]
  \begin{tabular}{ccc}
  \includegraphics[width=0.3\textwidth]{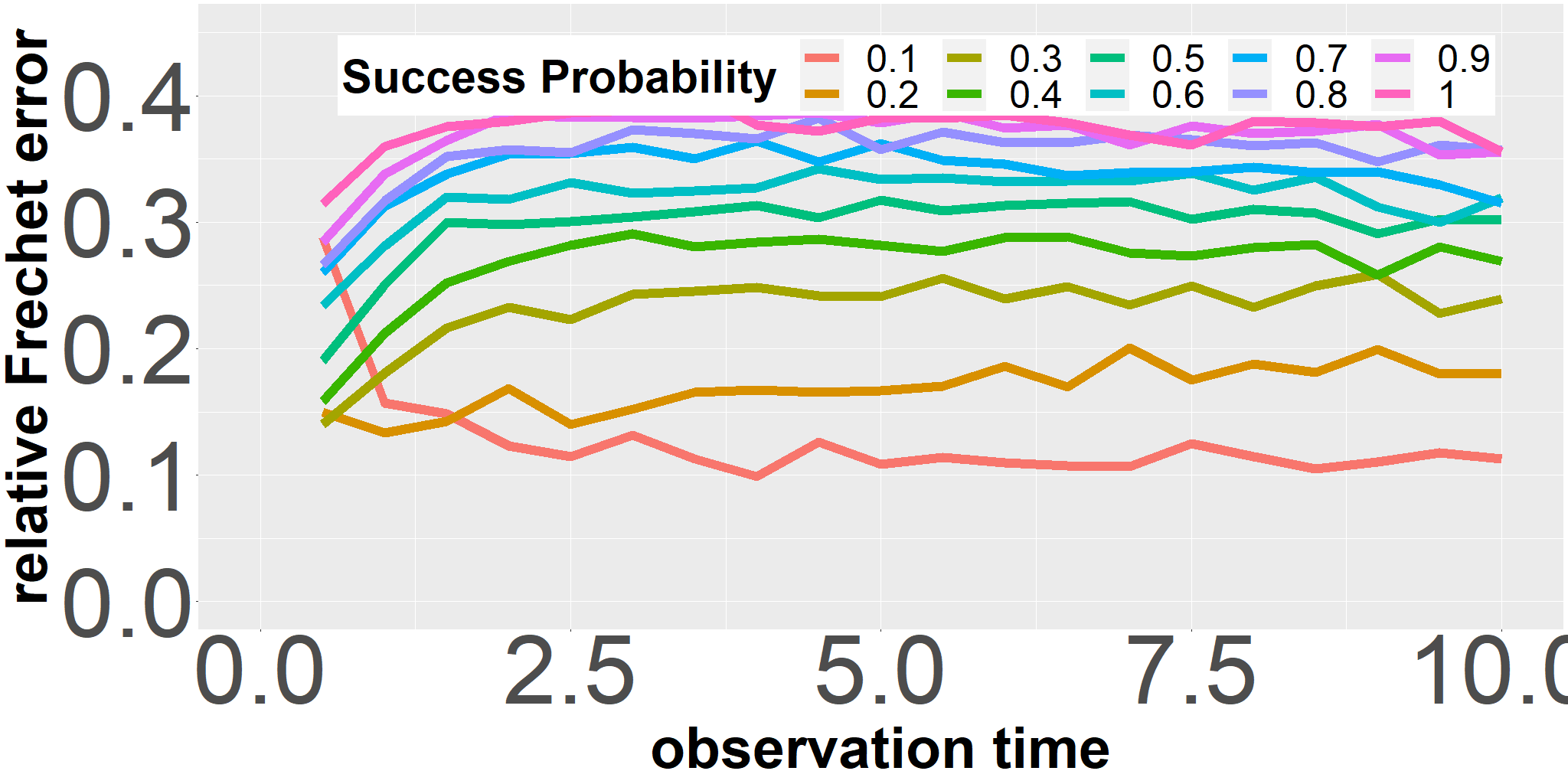} &
      \includegraphics[width=0.3\textwidth]{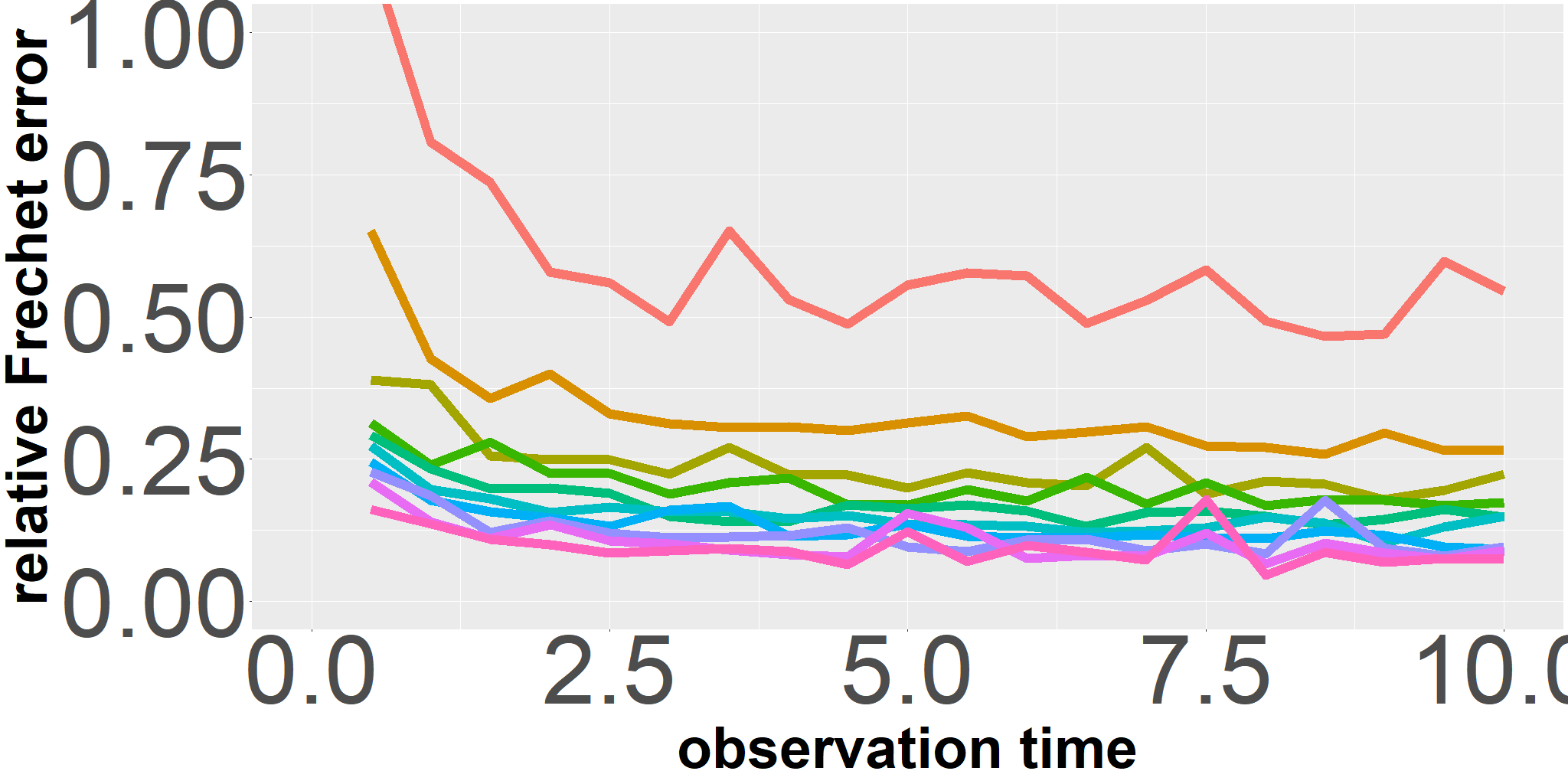} &       \includegraphics[width=0.3\textwidth]{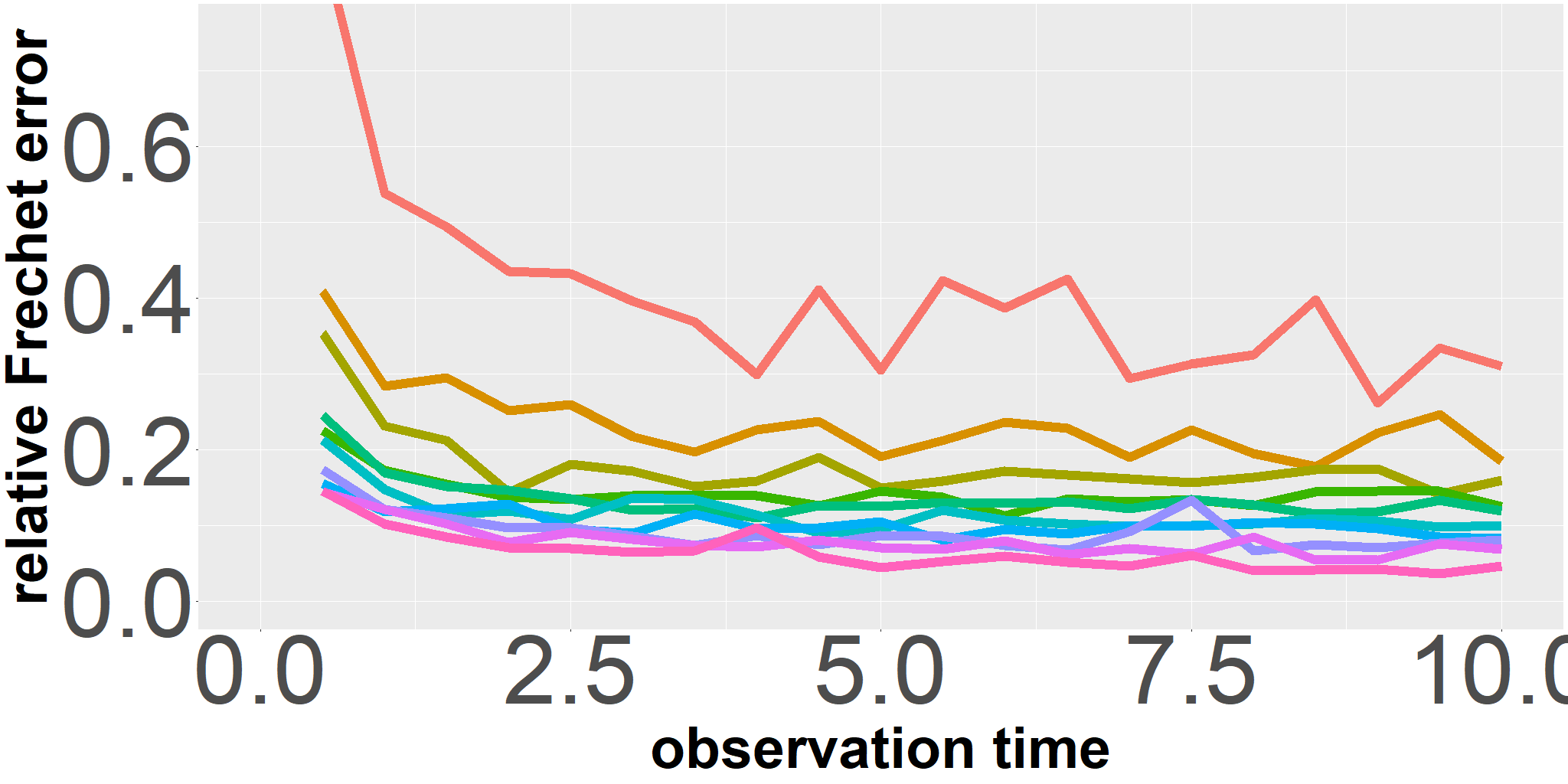} 
  \end{tabular}
  \caption{As in \Cref{fig:bary_poiPIG}, but for the SPIC class and $M=12$.}
    \label{fig:bary_poiSPIC}
  \end{figure}
  
  \begin{figure}[H]
    \begin{tabular}{ccc}
    \includegraphics[width=0.3\textwidth]{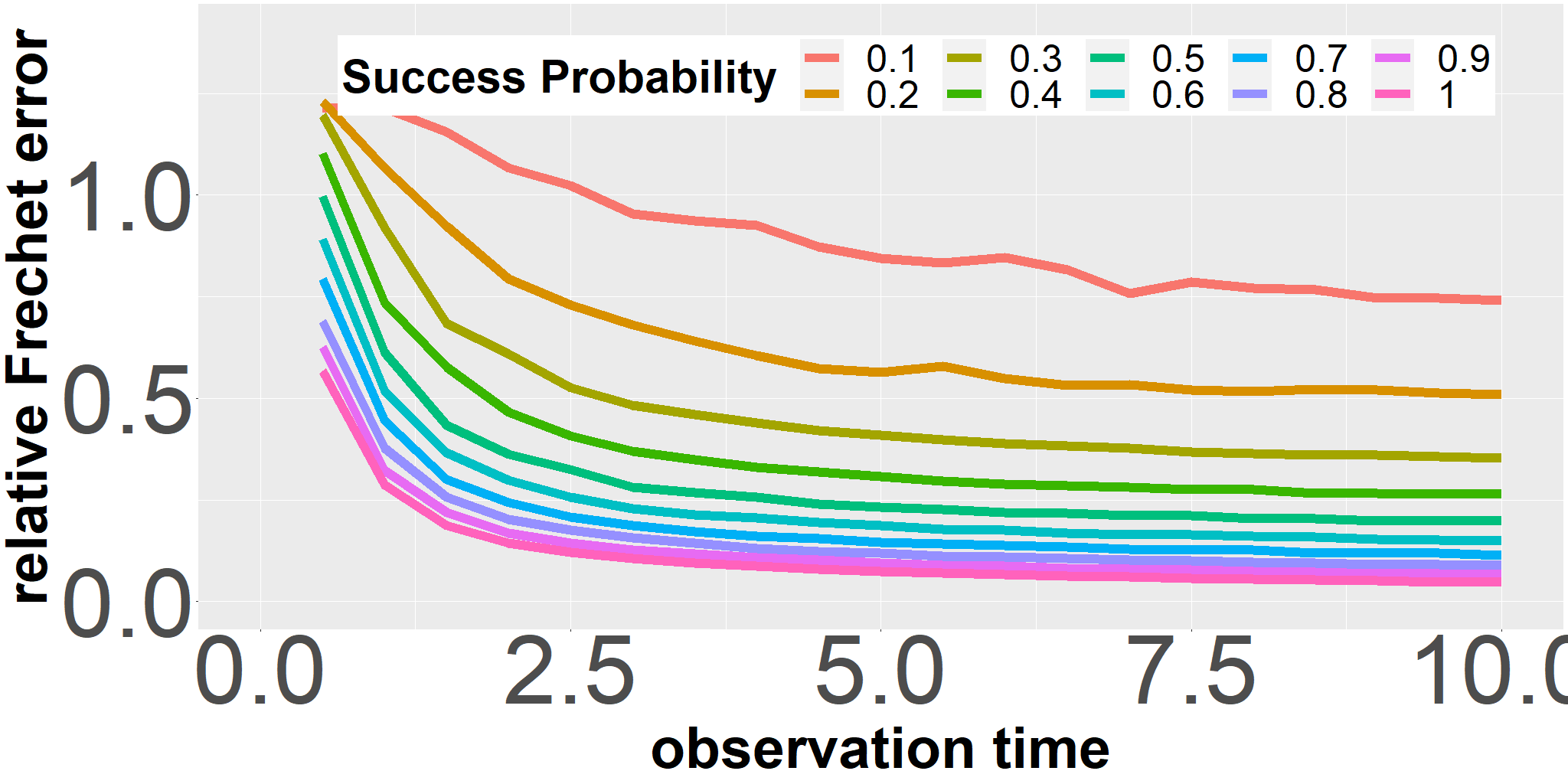} &
        \includegraphics[width=0.3\textwidth]{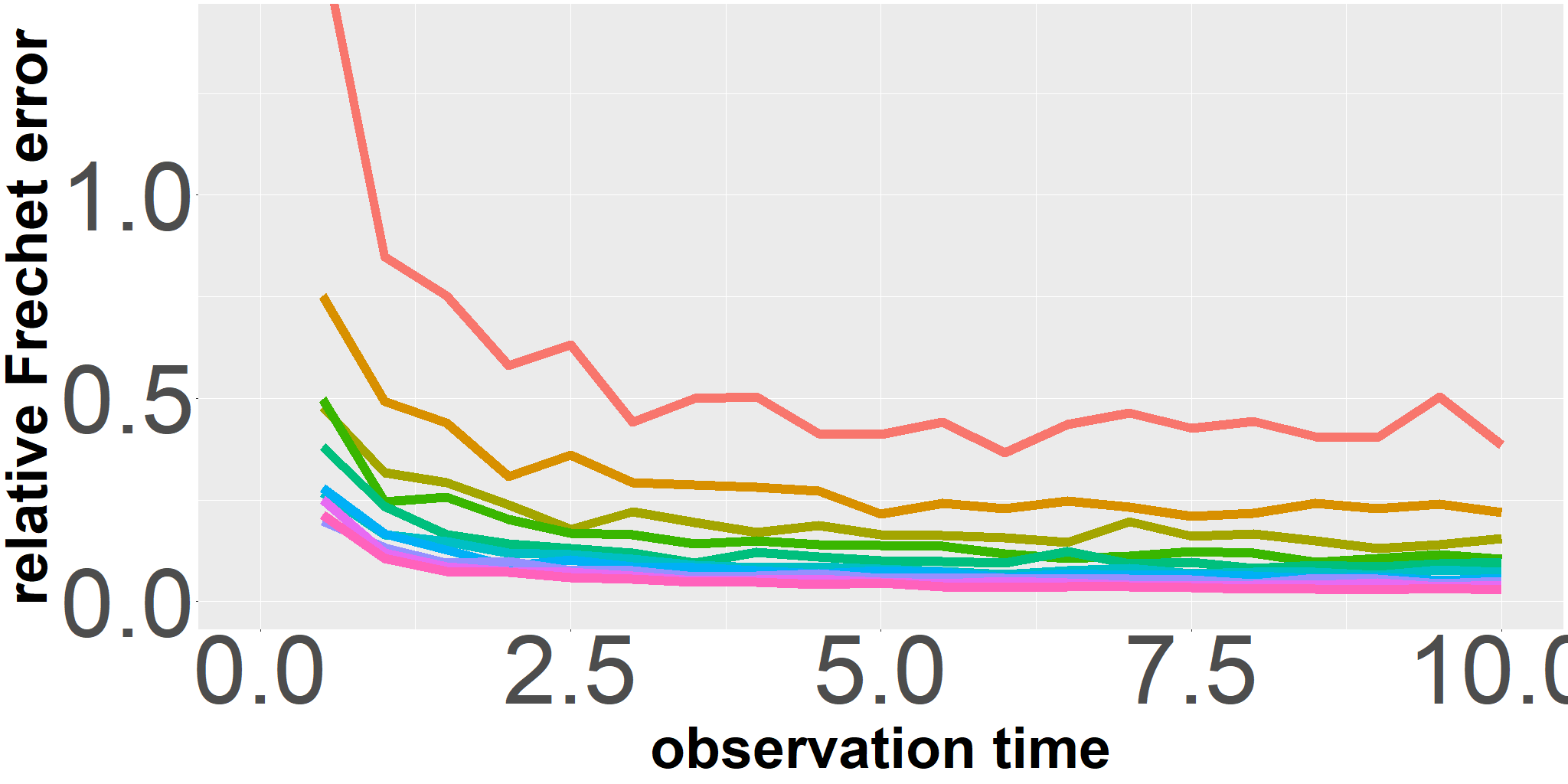} &       \includegraphics[width=0.3\textwidth]{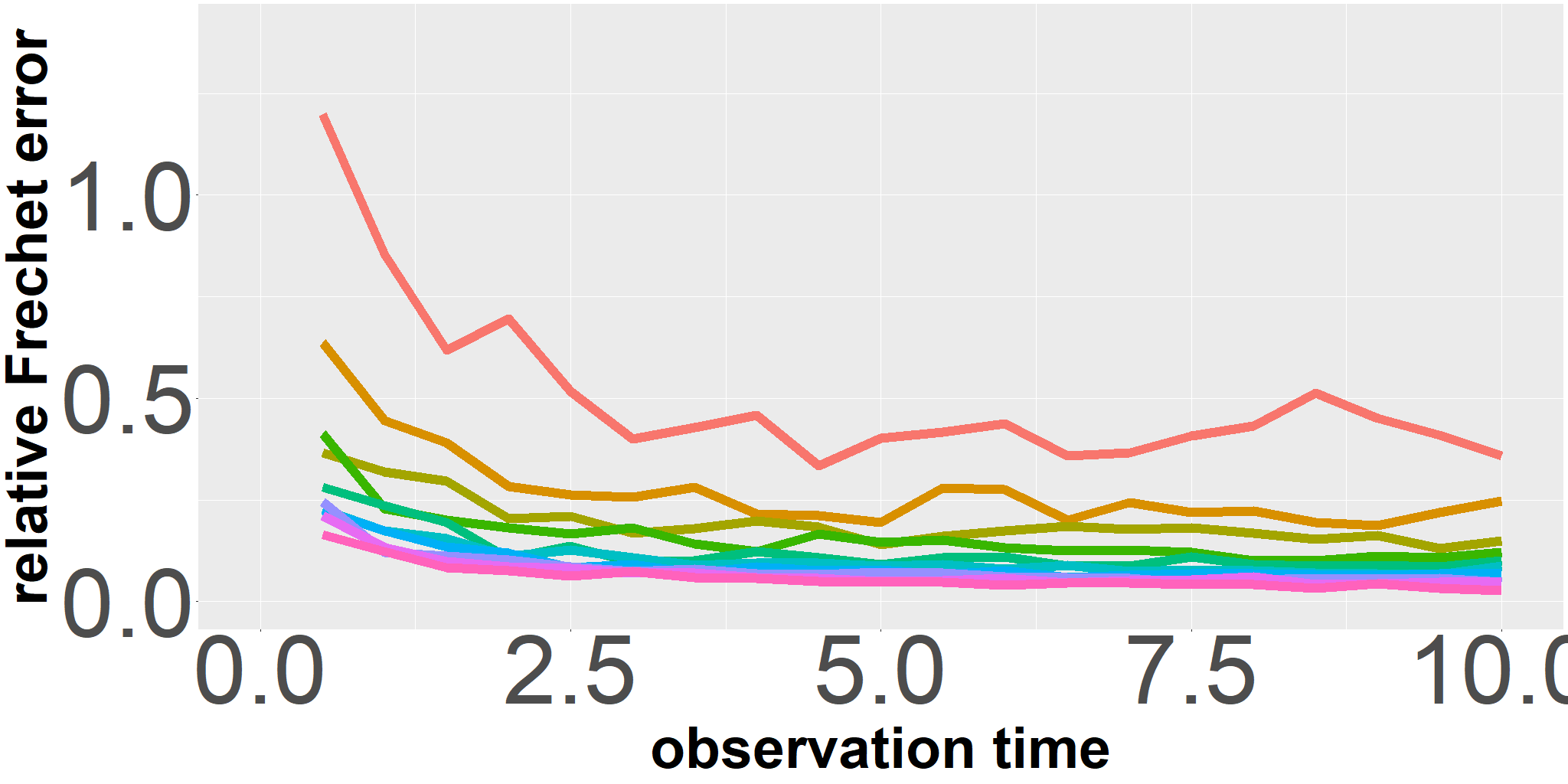} \\
    \end{tabular}
    \caption{  As in \Cref{fig:bary_poiPIG}, but for the NI class and $M=300$.}
      \label{fig:bary_poiNI}
    \end{figure}
    \begin{figure}[H]
    \begin{tabular}{ccc}
       \includegraphics[width=0.3\textwidth]{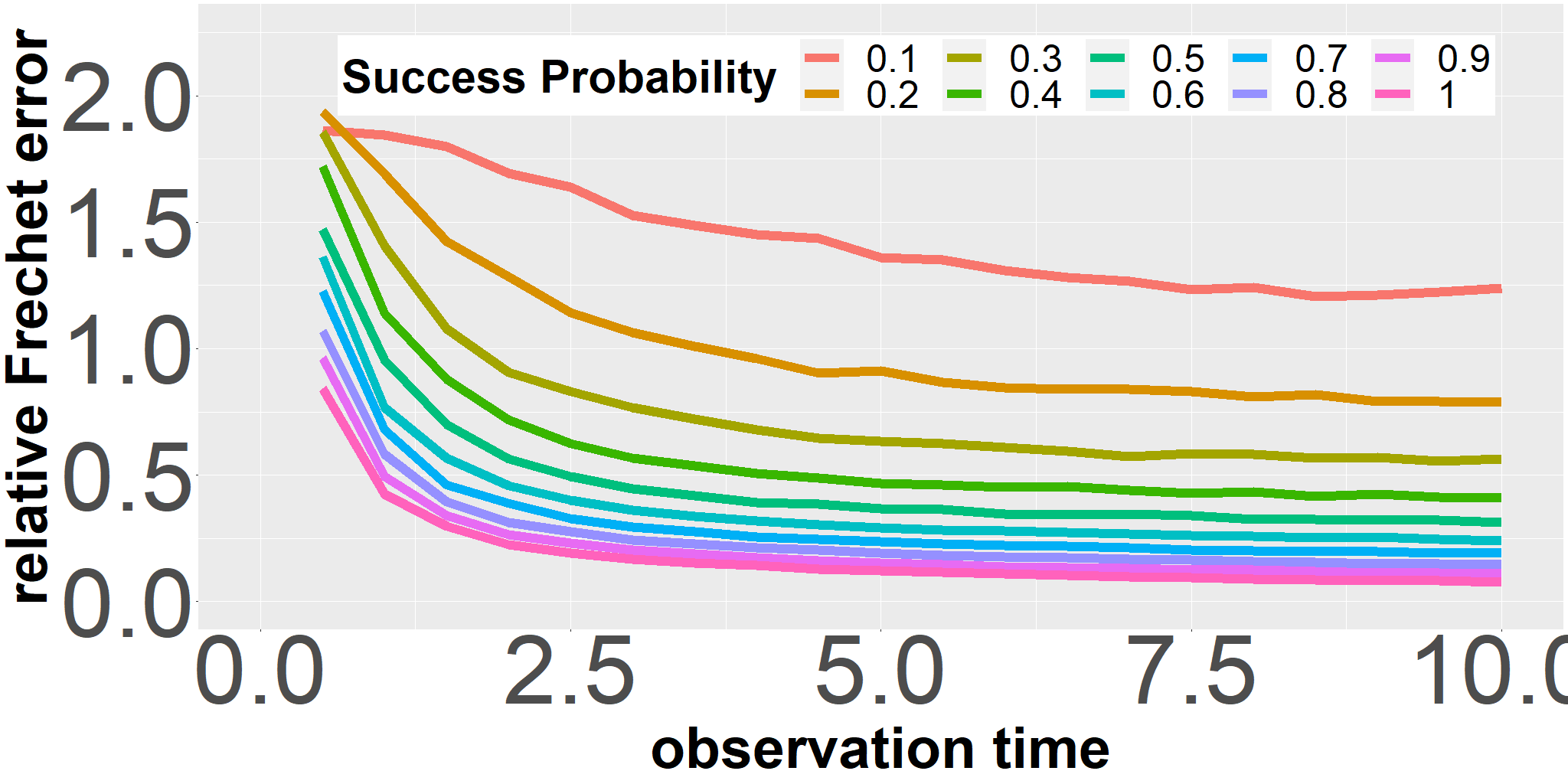} &
        \includegraphics[width=0.3\textwidth]{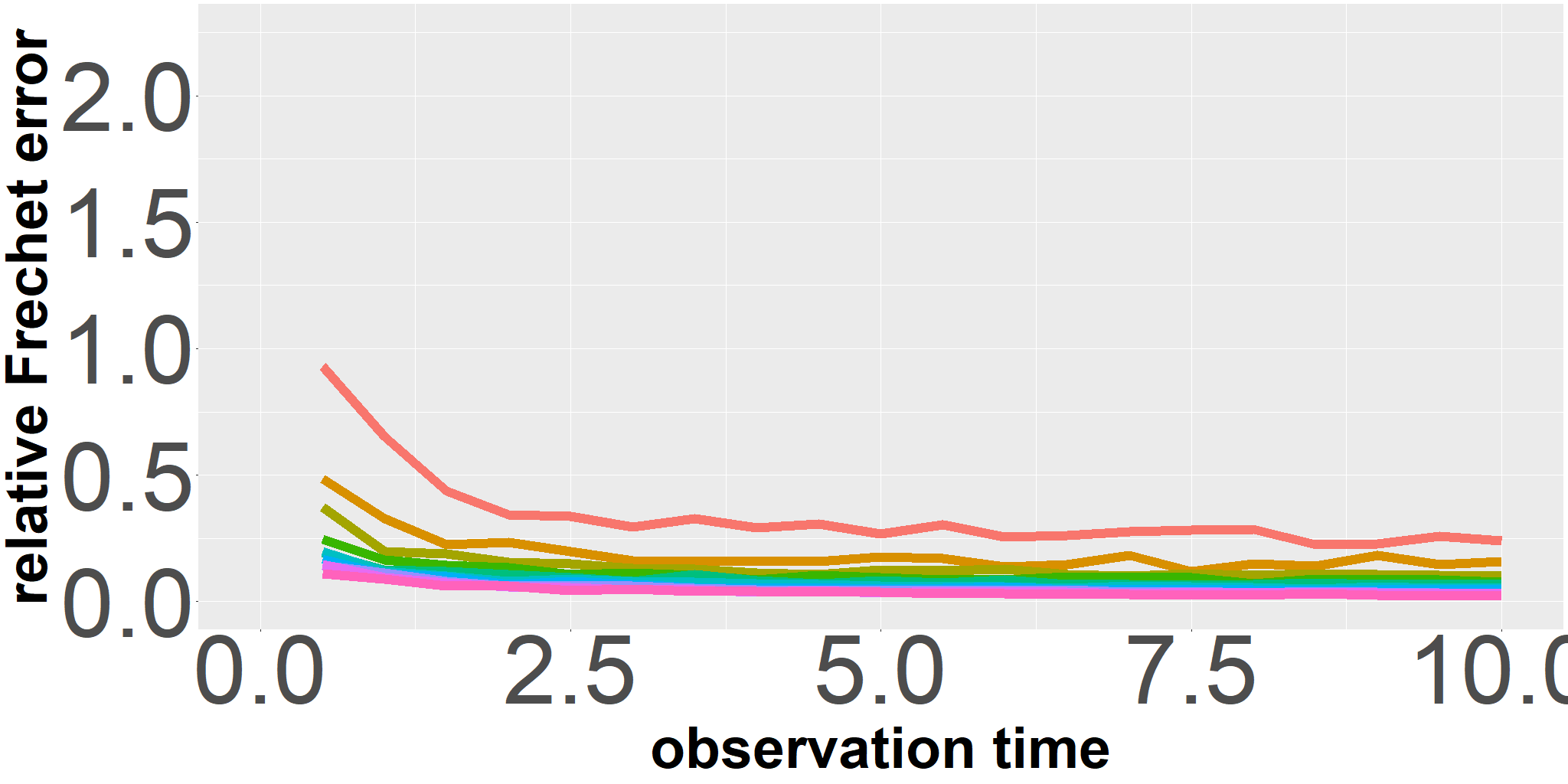} &       \includegraphics[width=0.3\textwidth]{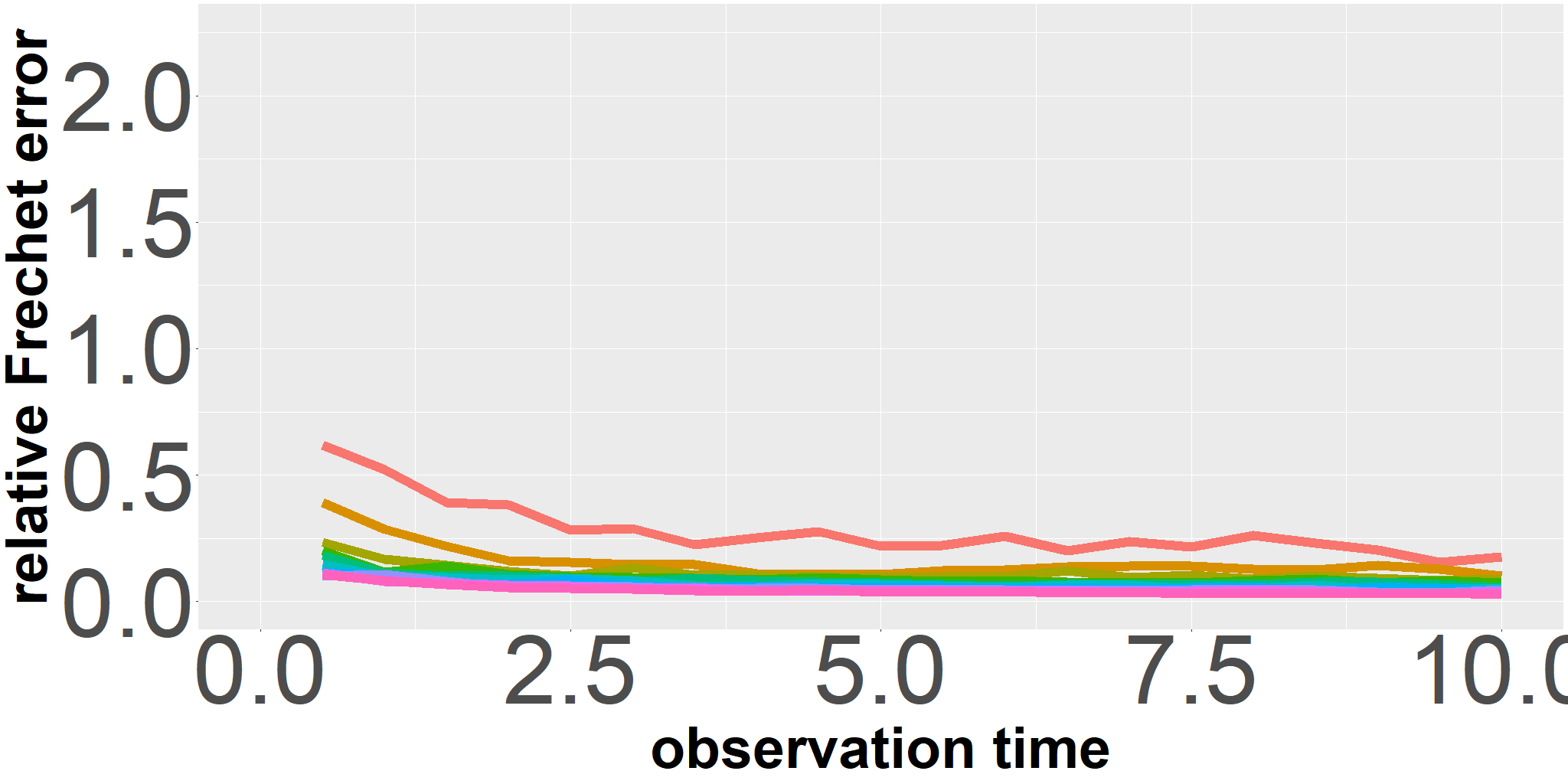} 
    \end{tabular}
    \caption{As in \Cref{fig:bary_poiPIG}, but for the NIG class and $M=17$.}
      \label{fig:bary_poiNIG}
    \end{figure}
  
  \subsection{Simulations for Multinomial Model}\label{app:sim_mult}
  In this subsection, we repeat the simulations from \Cref{sec:sims}  and Appendix \ref{sec:addfigpoi} for the multinomial model. We observe that with increasing sample size $N$, the statistical error mostly decreases with rate $N^{-1/2}$ which is what our theory in Appendix \ref{sec:multi} asserts.
  For the $(p,C)$-KRD the results (in \Cref{fig:mult_dist}) slightly differ from the results in the Poisson model. Notably, for increasing $C$, the error is decreasing. This is explained by the fact that in the multinomial scheme, we do not have to estimate the total intensities of the measures, and it is precisely this estimation error that drives the error for increasing $C$ in the Poisson model. Similarly to the Poisson scheme, we observe a decrease in error for the measure classes with clustered support structures when $C$ surpasses the distance between two individual clusters. \\
  For the $(p,C)$-barycenters under the multinomial sampling model (in \Cref{fig:mult_bary})  there is an initial increase in error for small sample sizes. Specifically, this occurs for $C=0.1$ and the NEC and SPIC classes. This value of $C$ is below the cluster size. This effect is most likely for these measure classes. For increasing $C$ there is a significant reduction in estimation error. In particular, for some classes the error reduces by two orders of magnitude going from $C=0.1$ to $C=10$. Since the total mass intensities of the individual measures do not need to be estimated in this sampling model, we already observe a decrease in error for increasing $C$ for the $(p,C)$-KRD and naturally there is a similar effect for the Fr\'echet functional. 
  \begin{figure}
      \centering
  \begin{tabular}{cc}
        \includegraphics[width=0.37\textwidth]{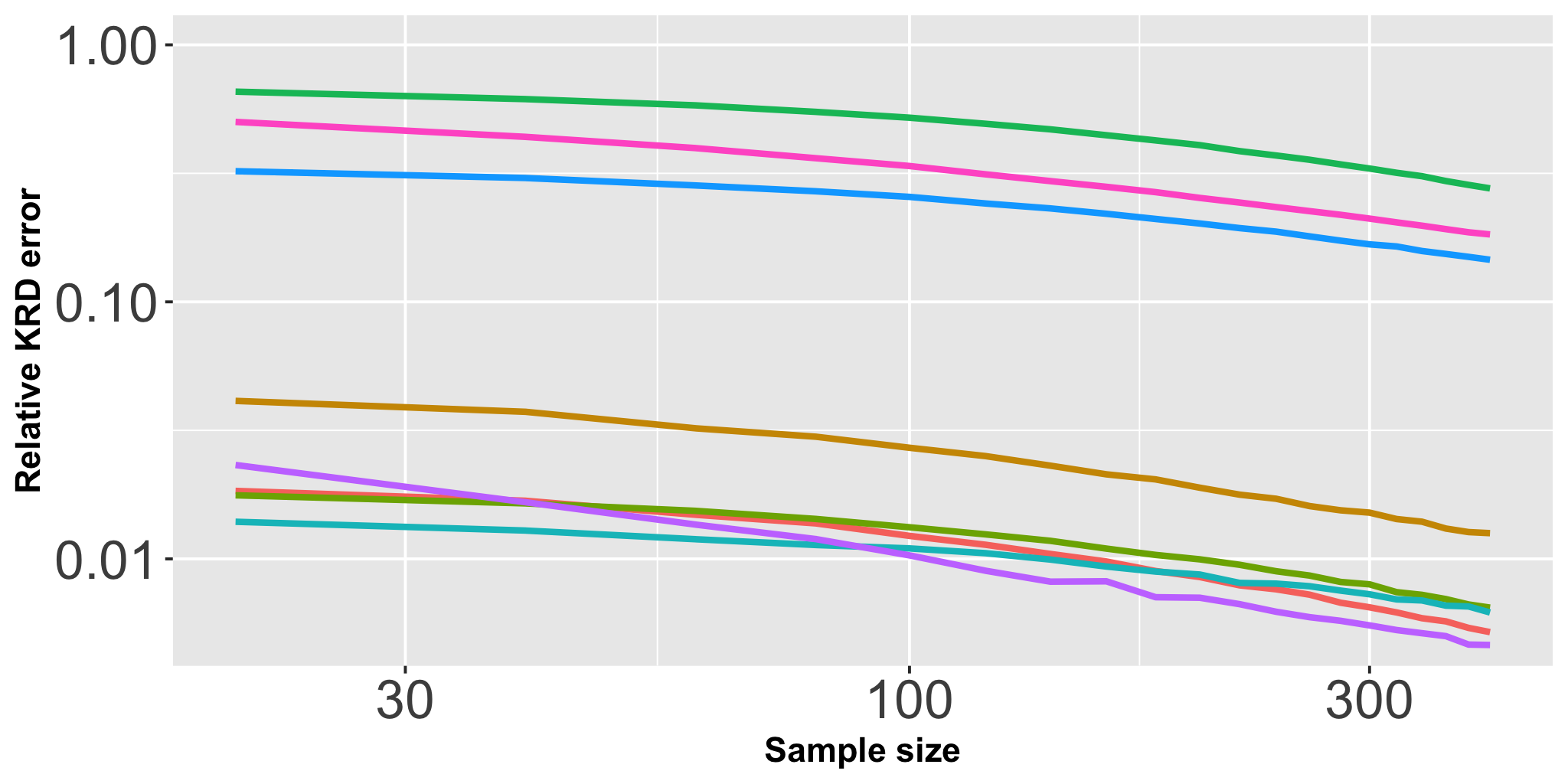} &       \includegraphics[width=0.37\textwidth]{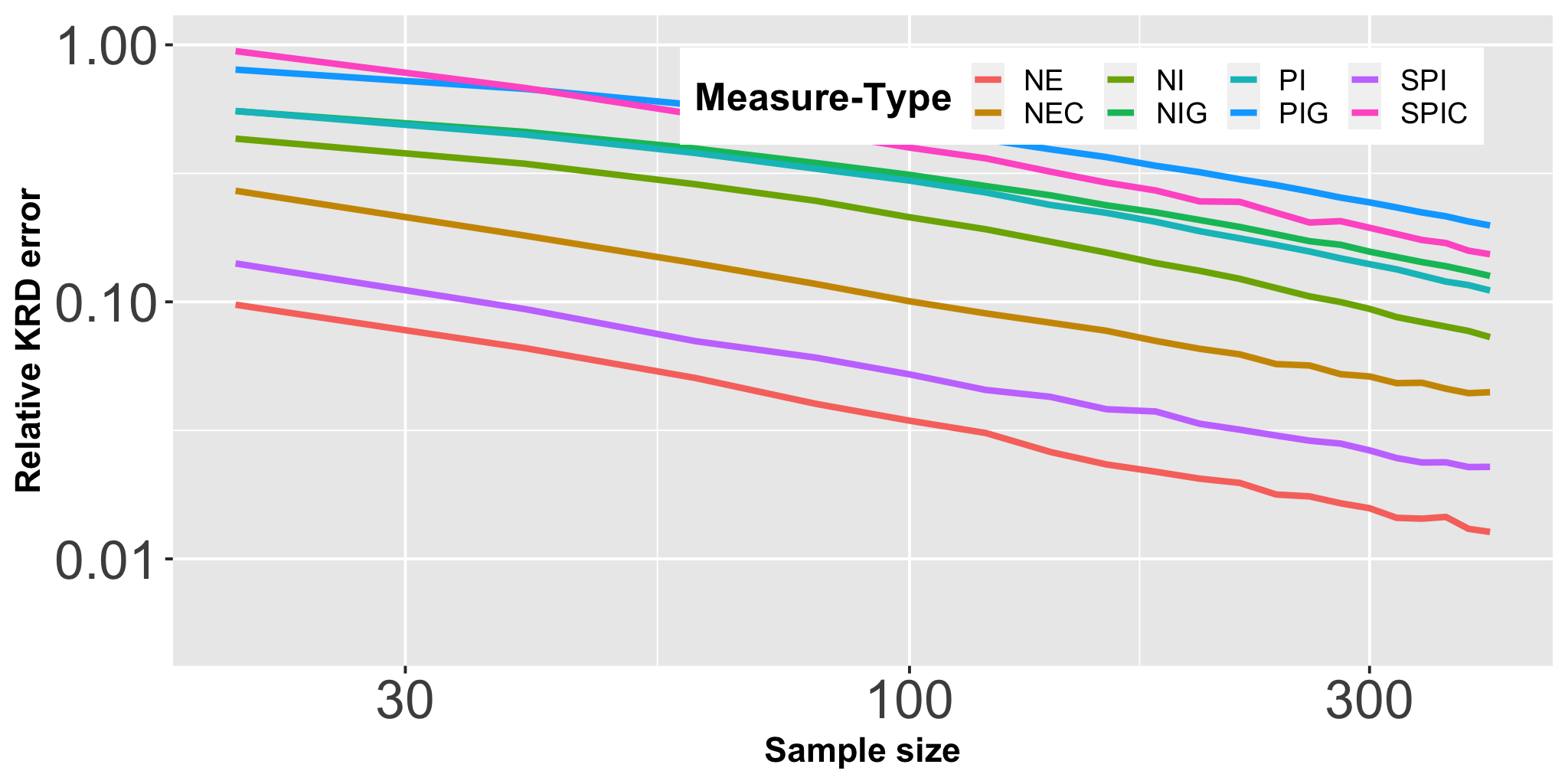} \\
      \includegraphics[width=0.37\textwidth]{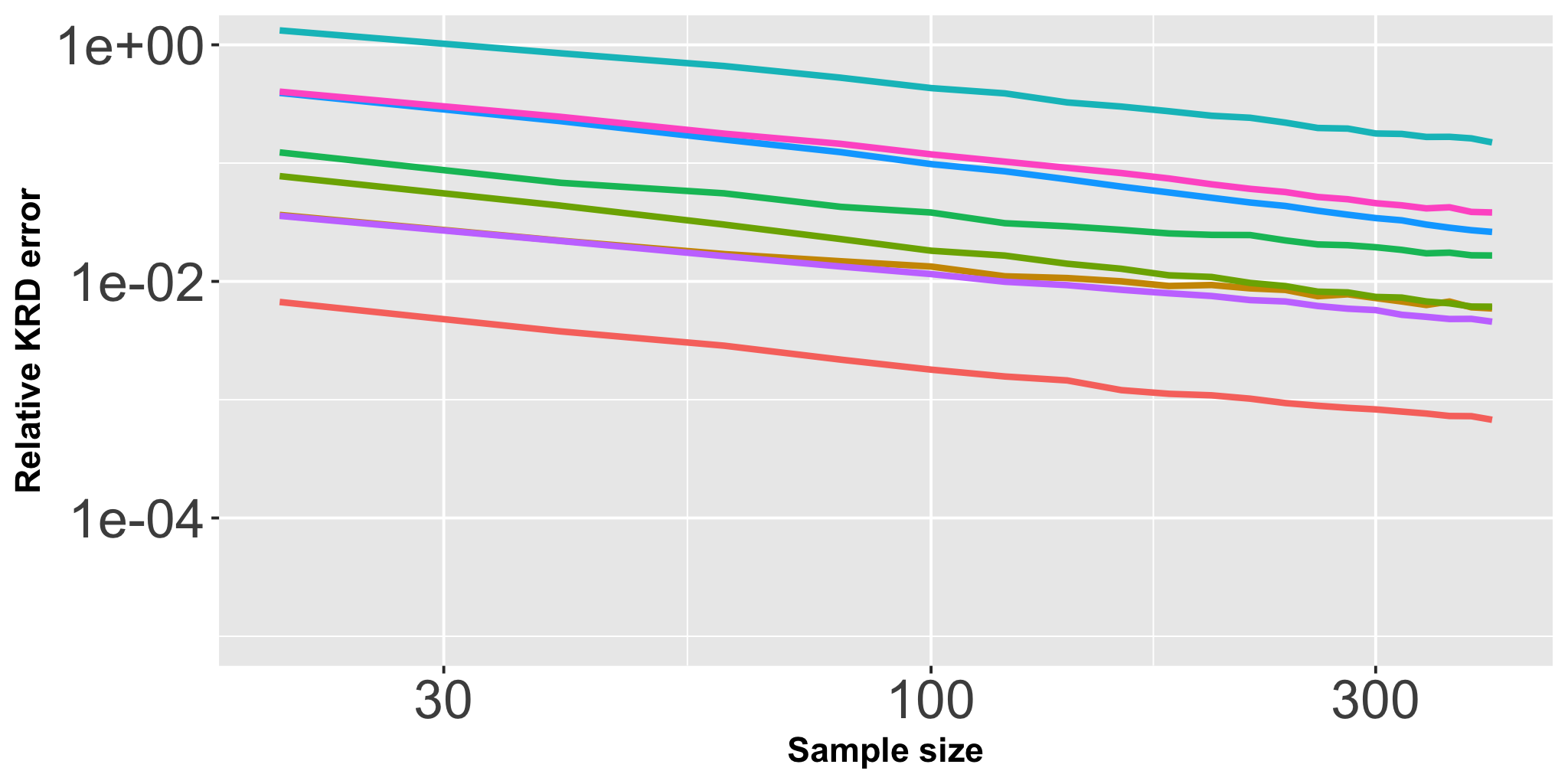} &       \includegraphics[width=0.37\textwidth]{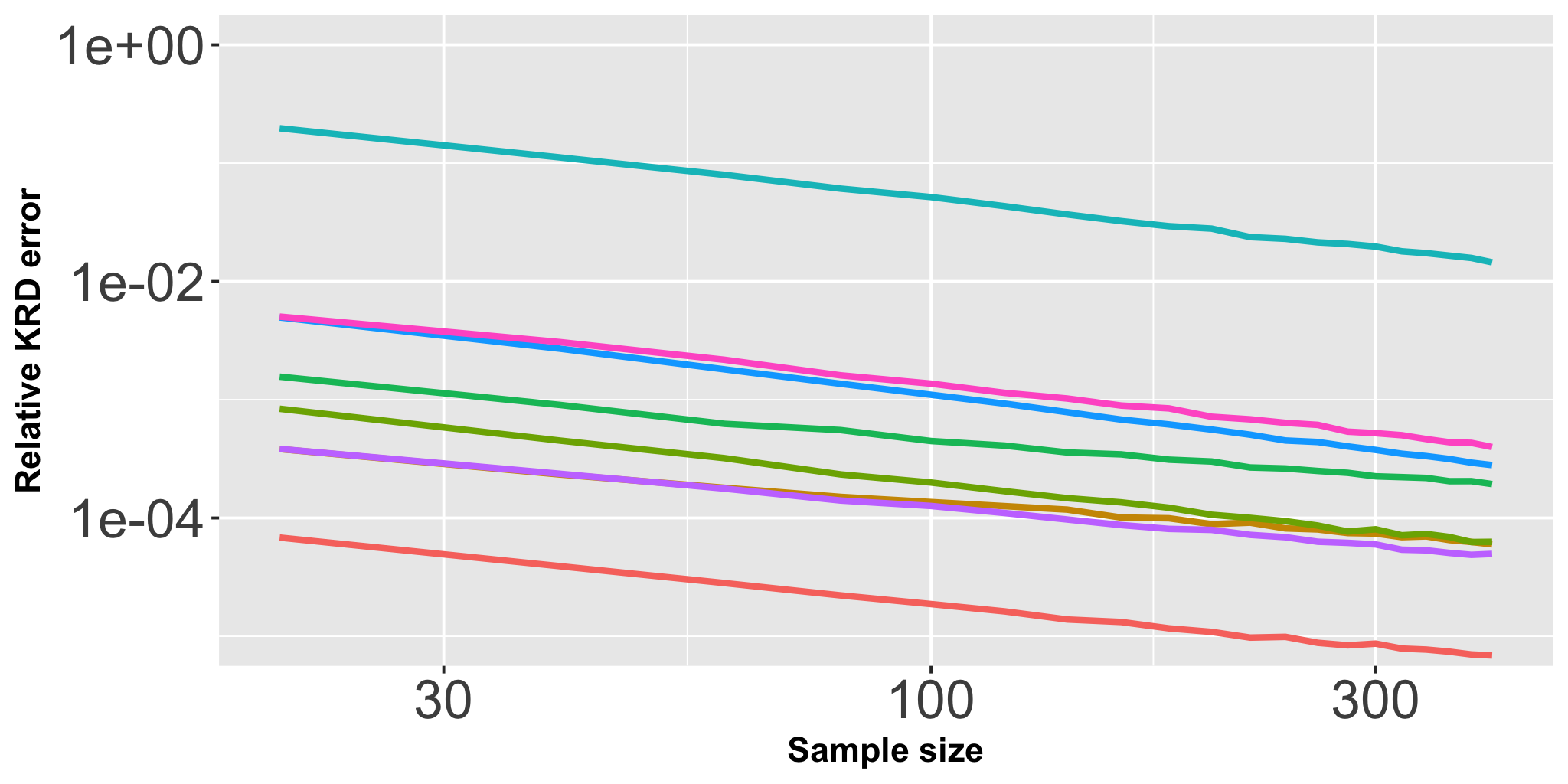} 
  \end{tabular}
  \caption{Log-log-plots of expected relative $(2,C)$-KRD error for two measures in the multinomial model for the classes in \Cref{sec:sims}. For each sampling size $N$ the expectation is estimated from $1000$ independent runs. For each class the parameters are set, such that the measures have on average $300$ support points. From top-left to bottom-right we have $C=0.01,0.1,1,10$, respectively.}
      \label{fig:mult_dist}
  \end{figure}
  \begin{figure}
      \centering
  \begin{tabular}{ccc}
        \includegraphics[width=0.31\textwidth]{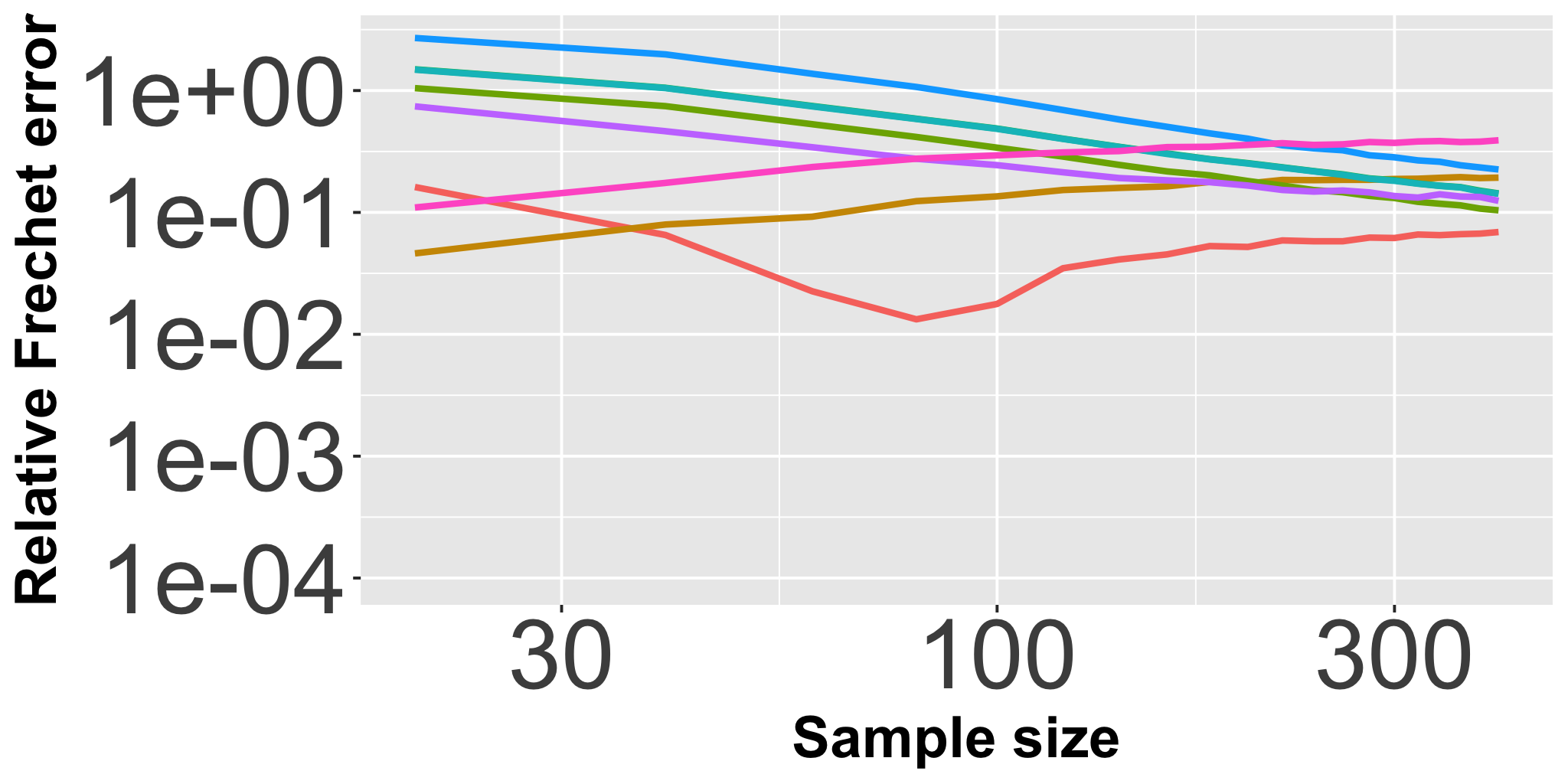} &
      \includegraphics[width=0.31\textwidth]{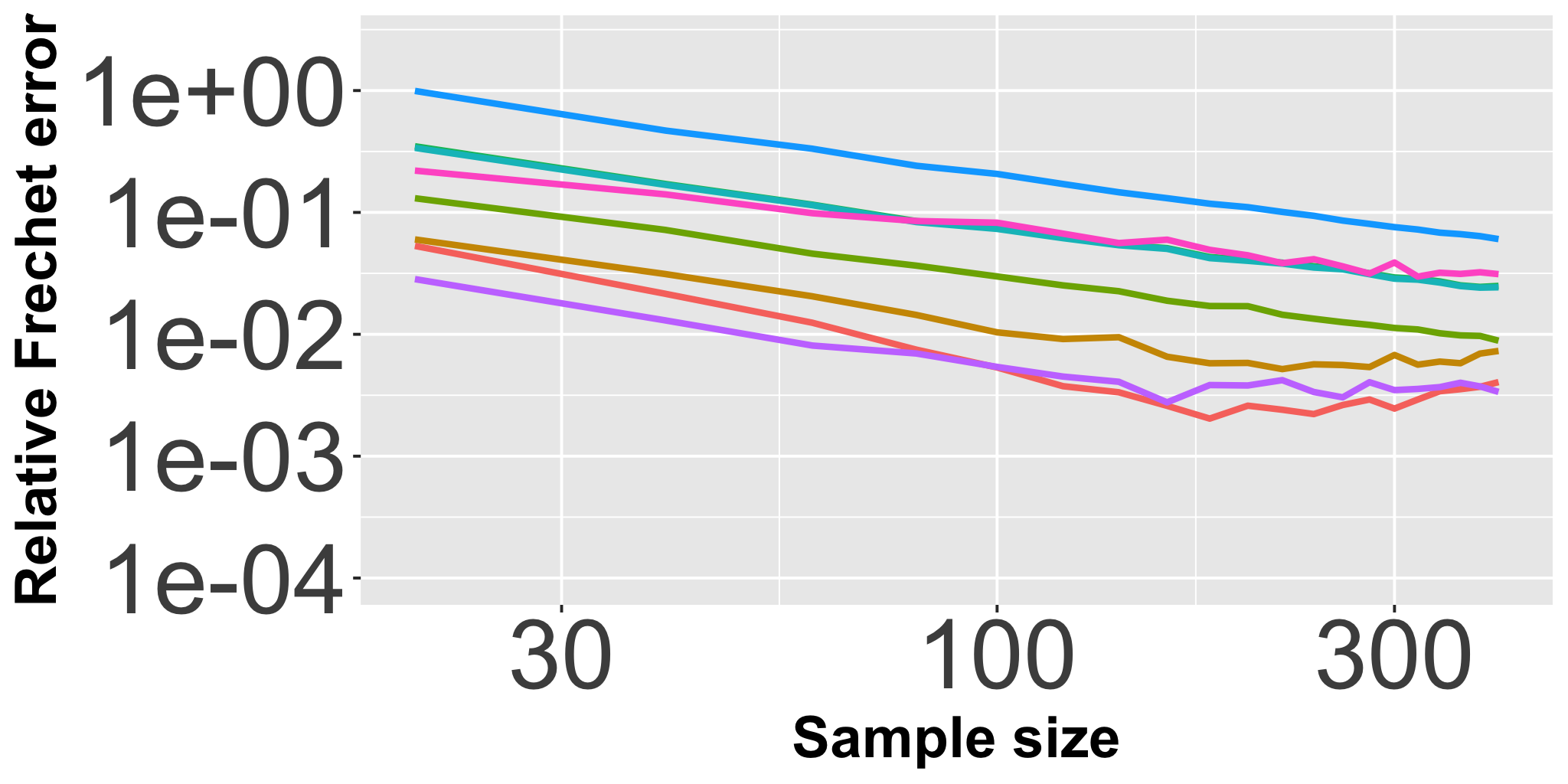} &       \includegraphics[width=0.31\textwidth]{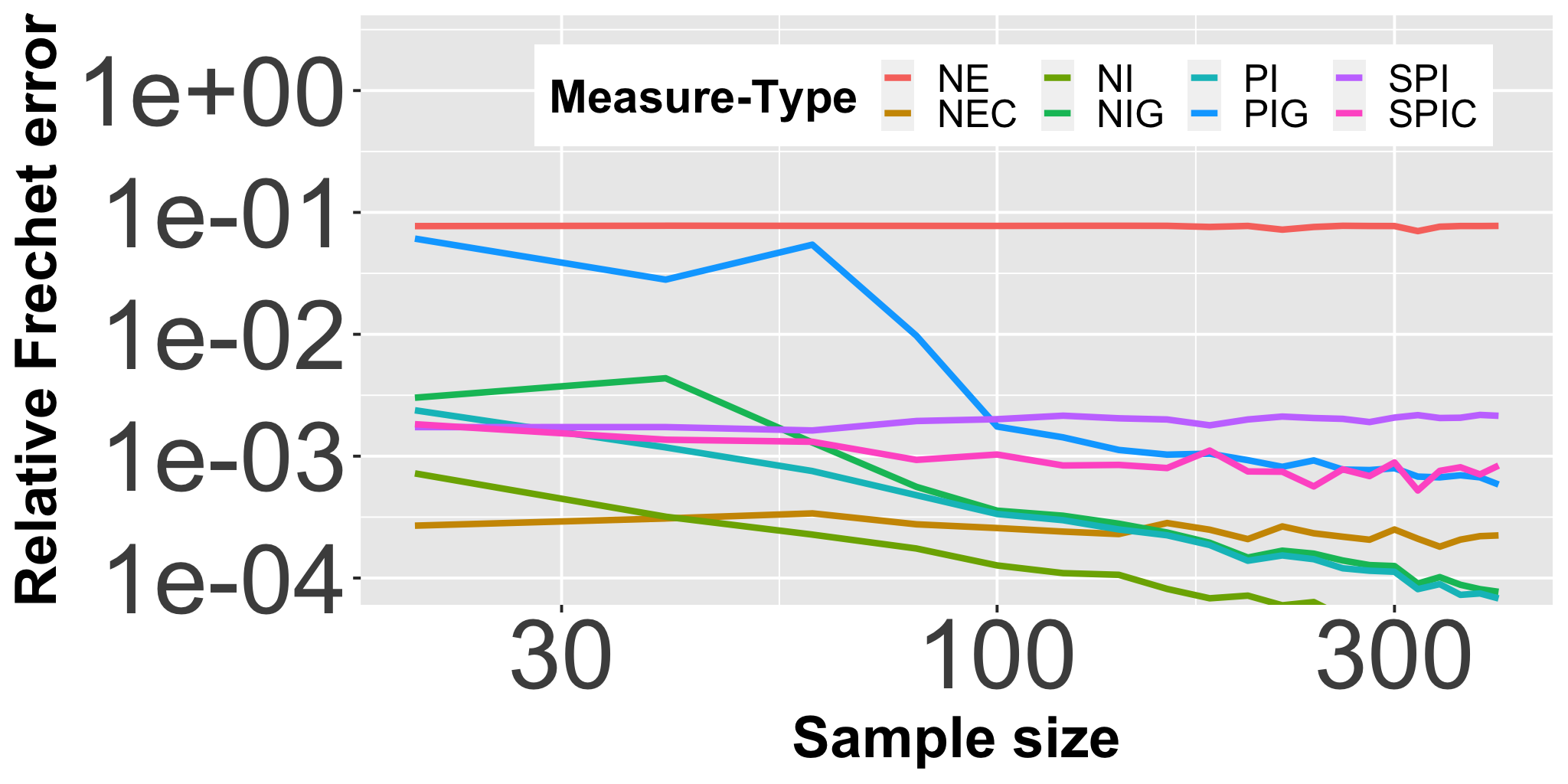} 
  \end{tabular}
  \caption{Log-log-plots of expected relative Fr\'echet error for the $(2,C)$-barycenter for $J=5$ measures from the PI class for the multinomial model with different sample sizes $N$. For each sample size the expectation is estimated from $100$ independent runs. For each class the parameters are set, such that the measures have on average $300$ support points. From left to right we have $C=0.1,1,10$, respectively.}
      \label{fig:mult_bary}
  \end{figure}
  
  \subsection{Simulations for Bernoulli Model}\label{app:sim_ber}
  To construct a reasonable framework for the simulations in the Bernoulli model, define for $s_0\in \mathbb{R}_+$, 
  \begin{align*}
      s_x=\frac{s_0}{\lVert x-(0.5,0.5)^T \rVert_2+s_0}.
  \end{align*}
  Intuitively, the success probability at a given point $x$ is larger, if $x$ is closer to the center of $[0,1]^2$ and smaller if it is further away from the center. Further, for $s_0\rightarrow \infty$ the success probability at each location converges to one. For the simulations, we now consider the error as a function of $s_0$. Note, that in this simulation study only the classes of measures with mass one at each support point are considered in accordance with the Bernoulli model in \eqref{eq:bermeasure}. One notable observation for the empirical $(p,C)$-KRD (in \Cref{fig:iber_dist}) is that the error of the SPIC class is significantly higher than for the NEC class, even though they share the same cluster locations. This can be explained by the fact that, by construction, the measures in the NEC class have a higher proportion of their mass in their central clusters, which is close to $(0.5,0.5)^T$ and thus has a high probability of being observed. This effect also carries over to the $(p,C)$-barycenter (in \Cref{fig:iber_bary}). In general, for the $(p,C)$-KRD the error in this model is increasing in $C$ (which is again explained by the estimation error for the true total mass intensity). However, the effect is less pronounced than in the Poisson model. For the clustered data types a small decrease of error for increasing $C$ over the cluster size can again be noted. Though, also this effect is less significant than in the other models. For the $(p,C)$-barycenter a decrease in error in $C$ can be observed which is consistent with the previous results for the Poisson model and again explained by the increased stability of the total mass intensity of the barycenter compared to the individual ones.
  \begin{figure}
      \centering
  \begin{tabular}{cc}
        \includegraphics[width=0.37\textwidth]{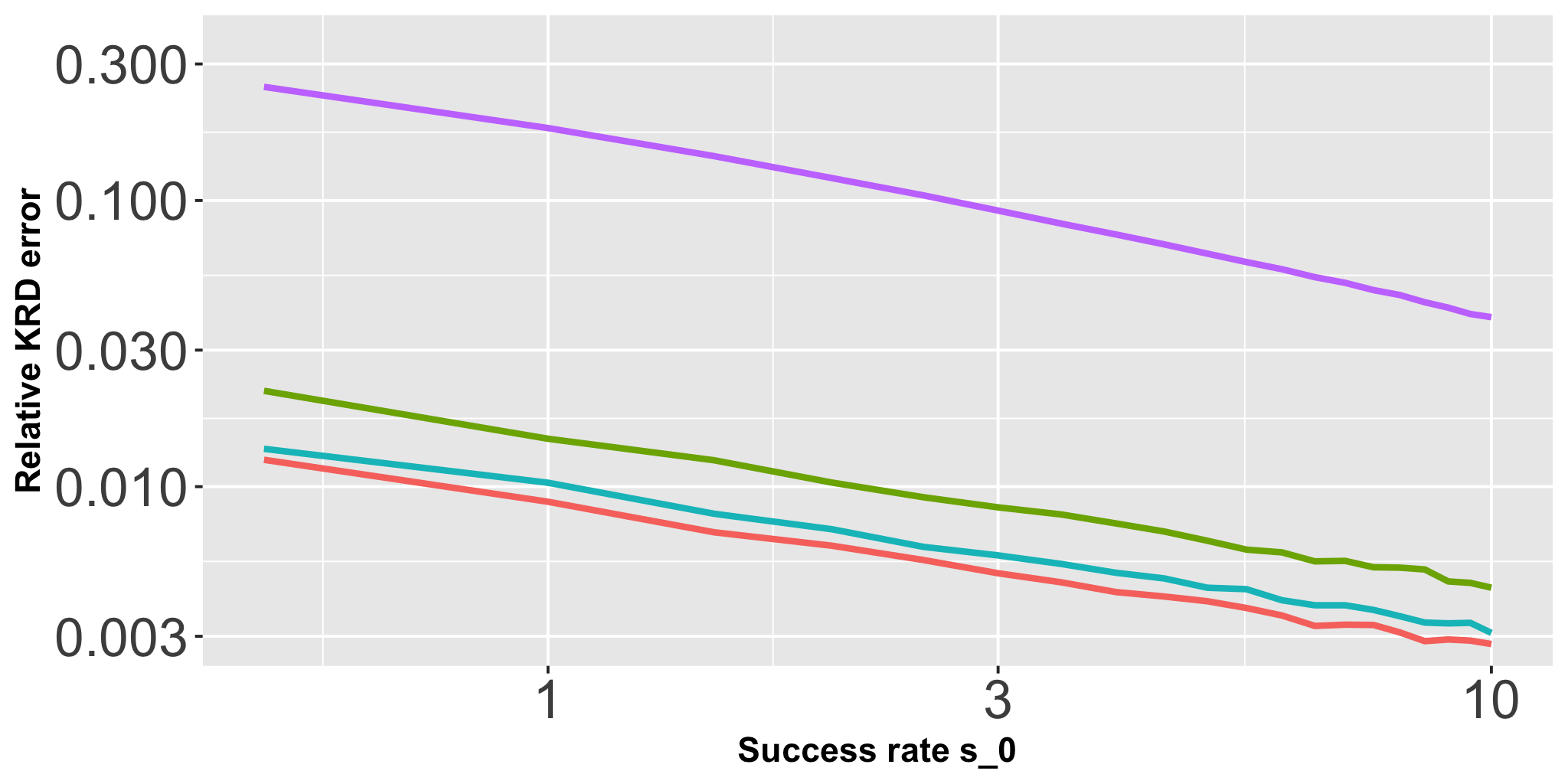} &       \includegraphics[width=0.37\textwidth]{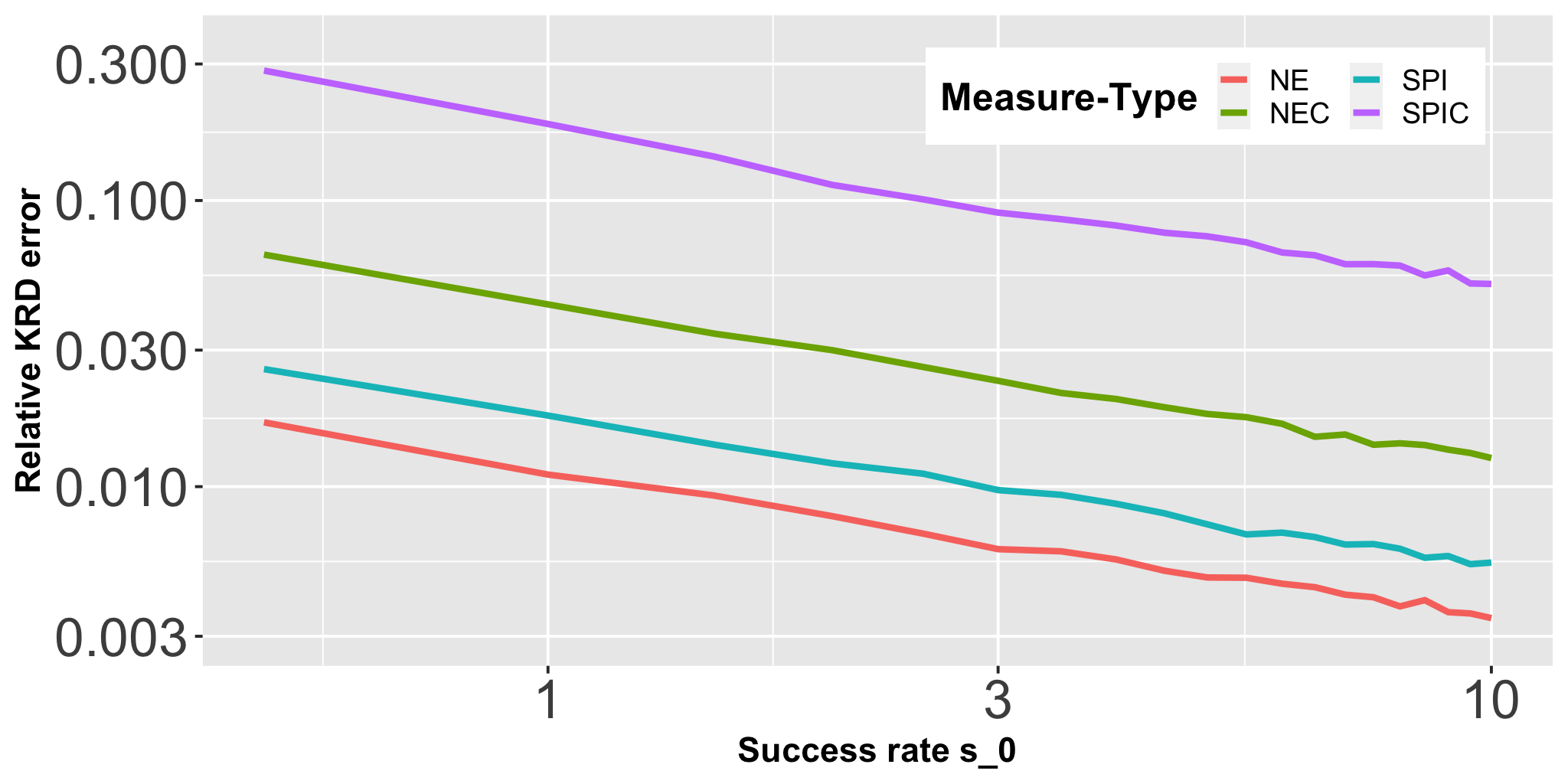} \\
      \includegraphics[width=0.37\textwidth]{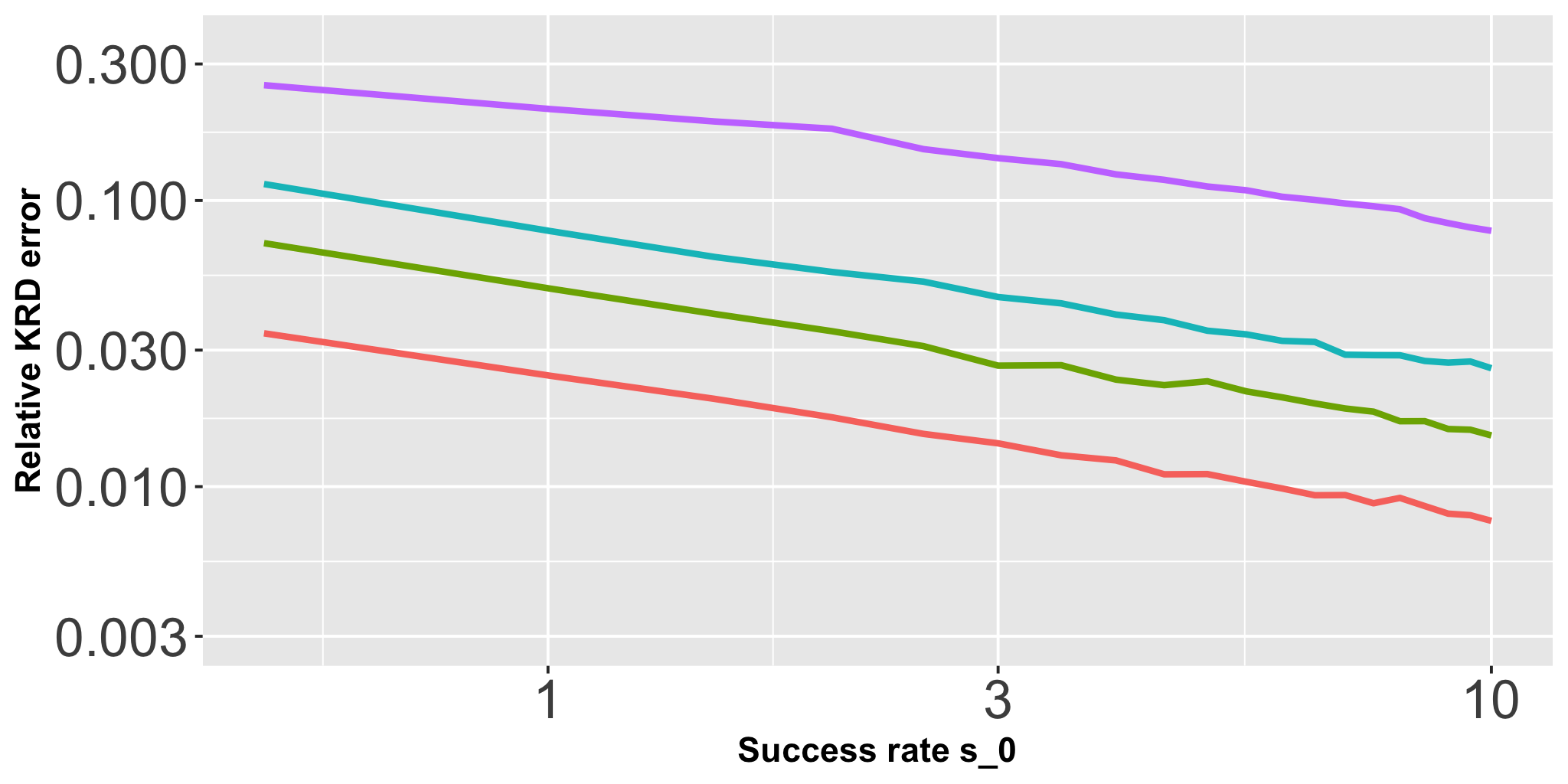} &       \includegraphics[width=0.37\textwidth]{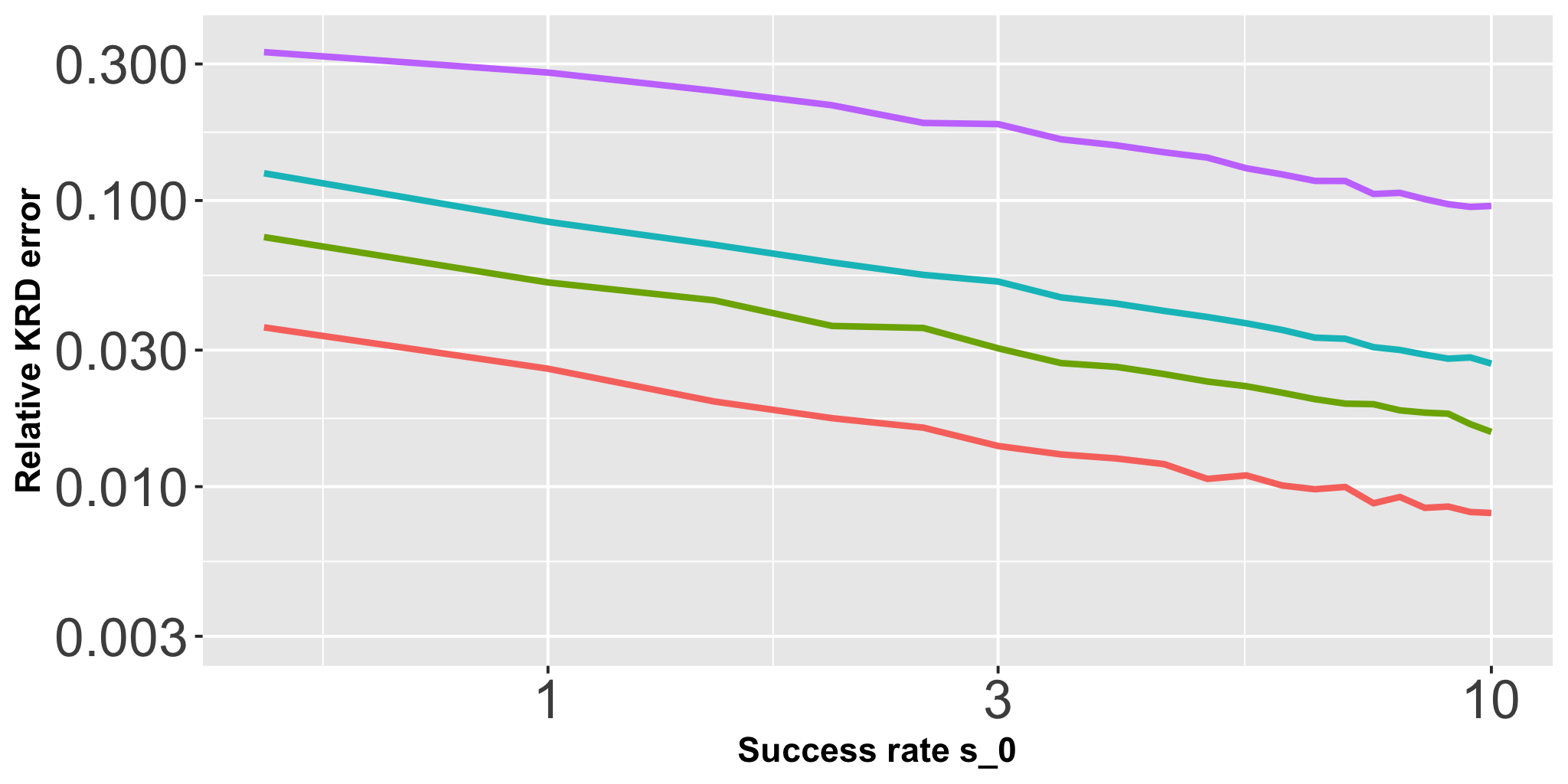} 
  \end{tabular}
  \caption{Log-log-plots of expected relative $(2,C)$-KRD error in terms of success probability $s_0$ for two measures in the Bernoulli Model for the NE, NEC, SPI, and SPIC classes from \Cref{sec:sims} and Appendix \ref{app:additionalSim}. For each success probability $s_0$ the expectation is estimated from $1000$ independent runs. The parameters are chosen such that the population measures have on average $300$ support points. From top-left to bottom-right we have $C=0.01,0.1,1,10$, respectively.}
      \label{fig:iber_dist}
  \end{figure}
  \begin{figure}
      \centering
  \begin{tabular}{ccc}
        \includegraphics[width=0.31\textwidth]{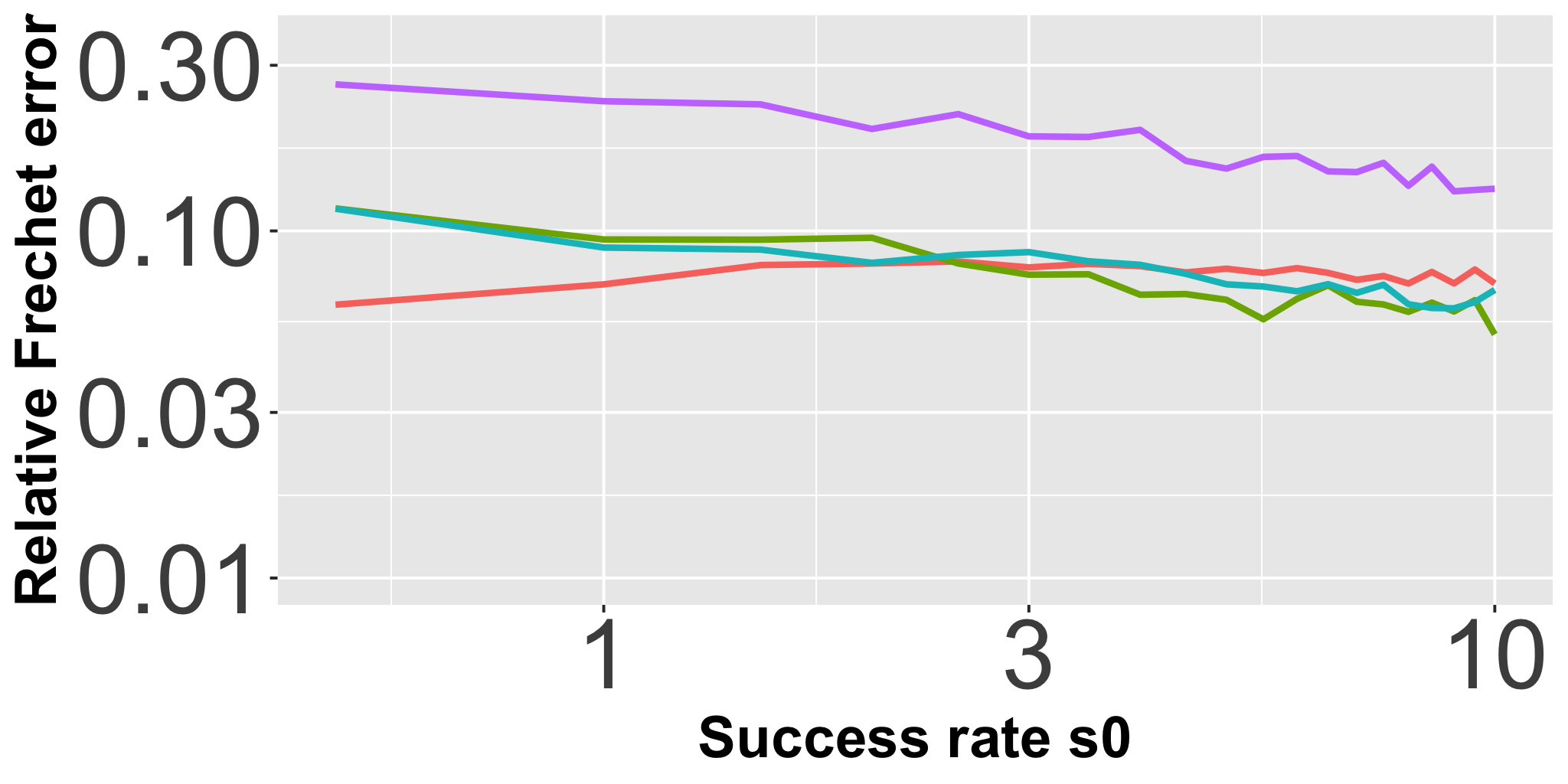} &
      \includegraphics[width=0.31\textwidth]{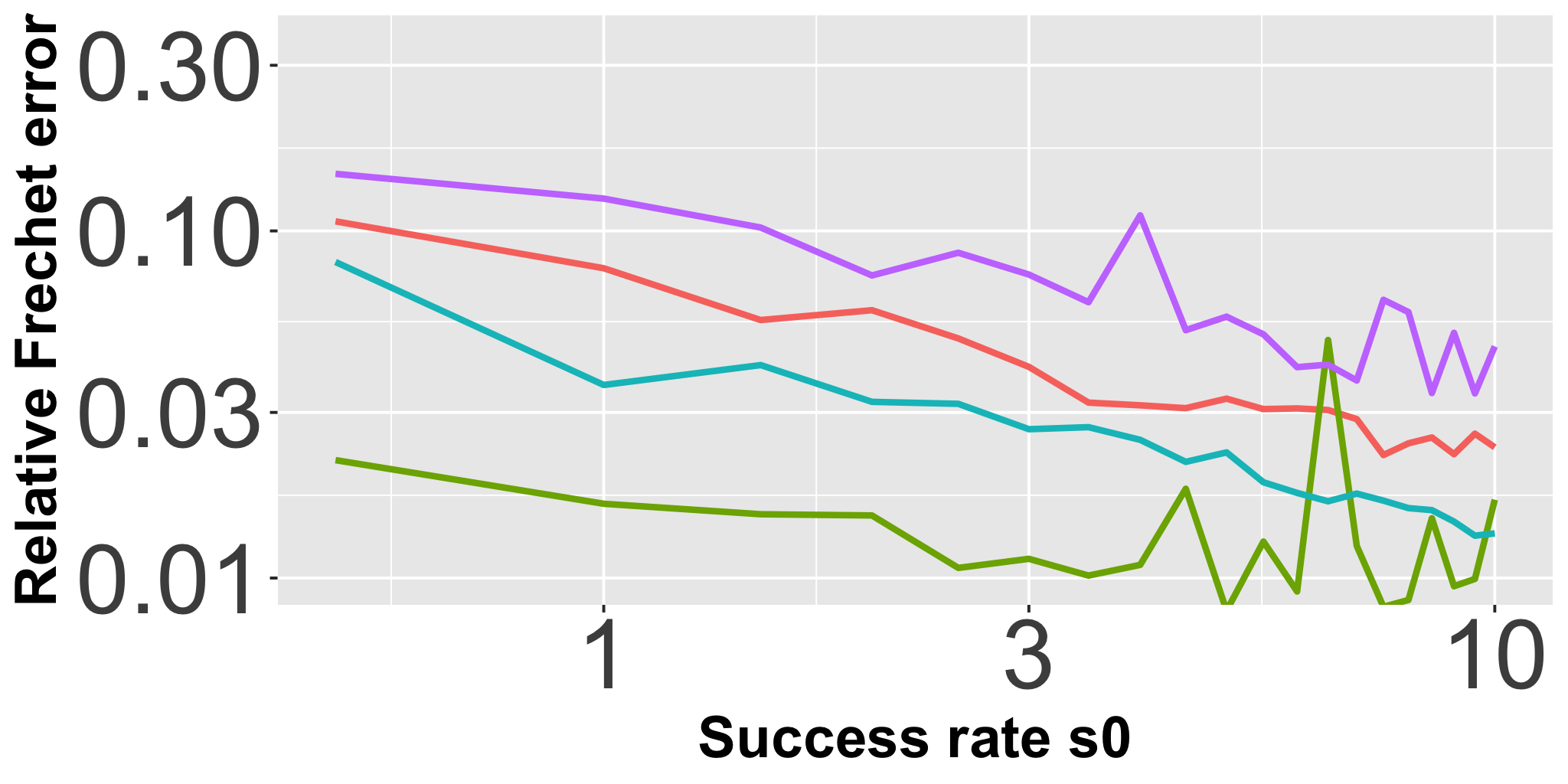} &       \includegraphics[width=0.31\textwidth]{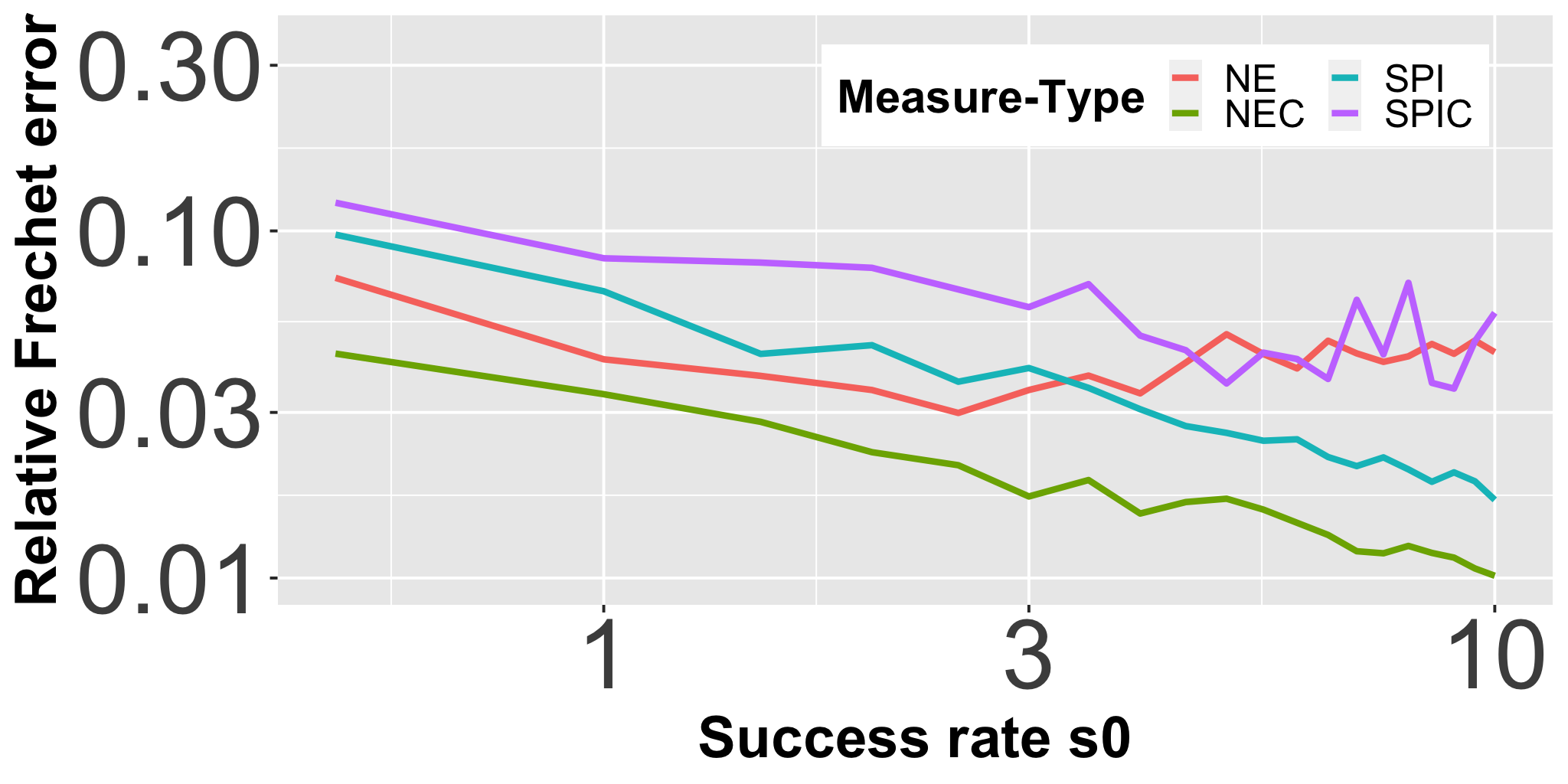} 
  \end{tabular}
  \caption{Expected relative Fr\'echet error  for the $(2,C)$-barycenter in terms of success probability $s_0$ for $J=5$ measures in the Bernoulli Model for the NE, NEC, SPI, and SPIC classes from \Cref{sec:sims} and Appendix \ref{app:additionalSim}. For  each success probability $s_0$ the expectation is estimated from $100$ independent runs. The parameters are set such that the population measures in all classes have on average $300$ support points. From left to right we have $C=0.1,1,10$, respectively.}
      \label{fig:iber_bary}
  \end{figure}

  \end{appendix}


\end{document}